%% file: main.tex
\documentclass[11pt]{amsart}
\usepackage[T1]{fontenc}
\usepackage{amsmath,amssymb,amsthm}
\usepackage{enumitem}
\usepackage[margin=1.15in]{geometry}
\usepackage[colorlinks=true,linkcolor=blue,citecolor=blue,urlcolor=blue]{hyperref}

\numberwithin{equation}{section}
\theoremstyle{plain}
\newtheorem{theorem}{Theorem}[section]
\newtheorem{proposition}[theorem]{Proposition}
\newtheorem{lemma}[theorem]{Lemma}
\newtheorem{corollary}[theorem]{Corollary}
\theoremstyle{definition}

\newtheorem{assumption}[theorem]{Assumption}
\theoremstyle{remark}
\newtheorem{remark}[theorem]{Remark}

\newcommand{\R}{\mathbb{R}}
\newcommand{\C}{\mathbb{C}}
\newcommand{\N}{\mathbb{N}}
\newcommand{\Z}{\mathbb{Z}}
\newcommand{\T}{\mathbb{T}}
\newcommand{\Tr}{\operatorname{Tr}}
\newcommand{\tr}{\operatorname{tr}}
\newcommand{\supp}{\operatorname{supp}}
\newcommand{\dist}{\operatorname{dist}}
\newcommand{\curl}{\operatorname{curl}}
\newcommand{\Div}{\operatorname{div}}
\newcommand{\Ran}{\operatorname{Ran}}

\newcommand{\sinc}{\operatorname{sinc}}
\newcommand{\eps}{\varepsilon}
\newcommand{\Hf}{H_{\mathrm f}}
\newcommand{\ELL}{E_{\mathrm{LL}}}
\newcommand{\rhom}{\rho_{\mathrm m}}
\newcommand{\Efield}{E_{\mathrm{field}}}

\newcommand{\ip}[2]{\langle #1,#2\rangle}
\newcommand{\norm}[1]{\lVert #1\rVert}
\newcommand{\abs}[1]{\lvert #1\rvert}
\newcommand{\doi}[1]{\url{https://doi.org/#1}}
\newcommand{\arxiv}[1]{\href{https://arxiv.org/abs/#1}{arXiv:#1}}

\begin{document}

\title[Self-energy of the Pauli--Fierz model]{The sharp leading asymptotics of the self-energy\\ of the spinless Pauli--Fierz model}

\author{Shinichiro Futakuchi}
\address{Mugen Giken LLC, Tokyo, Japan}
\email{futakuchi0117@gmail.com}
\urladdr{https://orcid.org/0009-0000-5027-0517}
\thanks{Corresponding author: Shinichiro Futakuchi, Mugen Giken LLC, Tokyo, Japan; futakuchi0117@gmail.com; ORCID 0009-0000-5027-0517.}

\subjclass[2020]{Primary 81V10; Secondary 81Q10, 47N50, 35P15}
\keywords{Pauli--Fierz Hamiltonian, non-relativistic QED, self-energy, ultraviolet limit, Born--Oppenheimer approximation, magnetic Neumann Laplacian, local density approximation}

\begin{abstract}
We study the self-energy $E_0(\alpha,\Lambda)$ of the spinless Pauli--Fierz model with a
sharp ultraviolet cutoff $\Lambda$, retaining the vacuum contribution of the $A^2$ term (no
normal ordering); $\alpha=e^2/4\pi$ is the fine structure constant. Lieb and Loss proved lower
and upper bounds of orders $\alpha^{1/2}\Lambda^{3/2}$ and $\alpha^{2/7}\Lambda^{12/7}$ for large
$\Lambda$ and suggested that the upper bound, which comes from product states, gives the
correct order. We prove that, for every fixed $\alpha>0$,
$E_0(\alpha,\Lambda)=(1+o(1))\,c_\alpha\,\alpha^{2/7}\Lambda^{12/7}$, with an explicit
constant $c_\alpha=0.99338\ldots$, as $\Lambda\to\infty$. The same asymptotics hold uniformly in the joint limit
$\Lambda/\alpha\to\infty$, $\alpha^{5/2}\Lambda\to\infty$, provided
$\log(2+\alpha)=o((\Lambda/\alpha)^{1/21})$. The infimum of the energy over product states has the
same leading asymptotics. The upper bound follows from a coherent state bound for the field
energy in a fixed electron density. For the lower bound we combine a periodization of the
field, a blockwise Born--Oppenheimer bound, and a lower bound on the magnetic Neumann
Laplacian on a cube for band-limited fields. The magnetic estimate follows from a
non-concentration bound for the vorticity of band-limited fields satisfying a weighted
gradient condition.
\end{abstract}

\maketitle

\input{sec1_intro}
\input{sec2_strategy}
\input{sec3_model}

\input{sec4_upper}
\input{sec5_periodization}
\input{sec6_blockBO}
\input{sec7_neumann}
\input{sec8_proof}
\input{sec9_remarks}

\appendix
\input{appA_BH}
\input{appB_constants}
\input{appD_counterexample}

\input{backmatter}

\input{refs}
\end{document}

%% file: sec1_intro.tex
\section{Introduction}\label{sec:intro}

\subsection{The problem}
The Pauli--Fierz model \cite{PF38} describes a non-relativistic charged particle
coupled to the quantized radiation field; see \cite{Spo04} for a monograph account
and \cite{Ara18} for the Fock space framework we use. For a single spinless electron
of unit mass and charge $e$, in units $\hbar=c=1$, with a sharp ultraviolet cutoff
$\Lambda>0$ on the photon momenta, the Hamiltonian is
\begin{equation}\label{eq:H-intro}
  H=\tfrac12\bigl(p+e A_\Lambda(x)\bigr)^2+\Hf\qquad\text{on }L^2(\R^3)\otimes\mathcal F,
  \qquad \alpha:=\frac{e^2}{4\pi},
\end{equation}
where $A_\Lambda$ is the quantized vector potential in the Coulomb gauge with the
canonical normalization \eqref{eq:A} below, $\Hf$ is the field energy, and $\alpha$ is
the fine structure constant (physical value $1/137.036$). We are interested in the
self-energy $E_0(\alpha,\Lambda):=\inf\sigma(H)$ as $\Lambda\to\infty$. The term $\frac{e^2}{2}A_\Lambda^2$ is not normal ordered; normal ordering would subtract the constant $\frac{e^2}{2}\langle\Omega,A_\Lambda(0)^2\Omega\rangle=\alpha\Lambda^2/(2\pi)$, which for fixed $\alpha$ dominates the leading term of Theorem~\ref{thm:main}.

First order perturbation theory gives
$E_0\approx\frac{e^2}{2}\langle\Omega,A_\Lambda(0)^2\Omega\rangle=\alpha\Lambda^2/(2\pi)$.
Lieb and Loss \cite{LL00} observed that this is not the correct behaviour for large
$\Lambda$ at fixed $\alpha$, and proved
\begin{equation}\label{eq:LL-bounds}
  C_1\,\alpha^{1/2}\Lambda^{3/2}\ \le\ E_0(\alpha,\Lambda)\ \le\ C_2\,\alpha^{2/7}\Lambda^{12/7}.
\end{equation}
The upper bound comes from product states $\phi\otimes\Psi$. If the electron is
localized on a length scale $R$, each field mode is stiffened by the term
$\frac{e^2}{2}\int\abs{\phi}^2\abs{A}^2$; the resulting increase of the zero-point
energy of the field is of order $e\Lambda^3\int\abs{\phi}$, and balancing this against the
kinetic energy $R^{-2}$ gives $\alpha^{2/7}\Lambda^{12/7}$. In the language of
density functional theory this is a local density approximation. Lieb and Loss
suggested that $\Lambda^{12/7}$ is the correct order. In support they showed
\cite[Theorem~1.2]{LL00} that if the paramagnetic term $p\cdot A$ is dropped from
\eqref{eq:H-intro}, then both bounds are of order $\alpha^{2/7}\Lambda^{12/7}$. They also
noted that, by the upper bound in \eqref{eq:LL-bounds}, perturbation theory can be
valid at most for a range of $\alpha$ shrinking at least like $\Lambda^{-2/5}$.

The remaining question is whether electron--field correlations lower the energy at
leading order. The difficulty is the paramagnetic term, which
vanishes in product states with real $\phi$ but not in general.

\subsection{Main result}
Let $\ELL(\alpha,\Lambda)$ denote the infimum of $\ip{\phi\otimes\Psi}{H\,\phi\otimes\Psi}$
over normalized product states, so that $E_0\le\ELL$. Define
\begin{equation}\label{eq:F1}
  F[1]:=\inf\Bigl\{\tfrac12\norm{\nabla\phi}_2^2+\norm{\phi}_1:\ \phi\in H^1\cap L^1(\R^3),\ \norm{\phi}_2=1\Bigr\}.
\end{equation}
This functional was introduced by Bach and Hach \cite{BH22}; its minimizer is explicit
(Proposition~\ref{prop:F1} in Appendix~\ref{app:constants}) and $F[1]=4.96461575\ldots$. Set
\begin{equation}\label{eq:calpha}
  c_\alpha:=(6\pi^2)^{-4/7}(4\pi)^{2/7}F[1]=0.99338240\ldots,
  \qquad \ell:=\Bigl(\frac{\Lambda}{\pi e^2}\Bigr)^{1/7},\qquad \Xi:=e^2\ell^2 .
\end{equation}
Note that $\Xi=\pi^{-2/7}(e^5\Lambda)^{2/7}$, so $\Xi\to\infty$ if and only if
$\alpha^{5/2}\Lambda\to\infty$, and $\ell\to\infty$ if and only if $\Lambda/\alpha\to\infty$.

\begin{theorem}\label{thm:main}
\begin{enumerate}[label=(\alph*)]
\item (Upper bound.) There is a function $\eps_+$ on $[1,\infty)$ with $\eps_+(\ell)\to0$ as $\ell\to\infty$ such
that for all $\alpha>0$ and $\Lambda>0$ with $\ell\ge1$,
\[
  \frac{E_0(\alpha,\Lambda)}{c_\alpha\alpha^{2/7}\Lambda^{12/7}}\le
  \frac{\ELL(\alpha,\Lambda)}{c_\alpha\alpha^{2/7}\Lambda^{12/7}}\le 1+\eps_+(\ell).
\]
\item (Lower bound.) For every $\eps\in(0,1)$ there are $\ell_0,\Xi_0<\infty$ and $\kappa_0>0$ such that for all
$\alpha>0$ and $\Lambda>0$ with $\ell\ge\ell_0$, $\Xi\ge\Xi_0$ and $\log(2+e^2)\le\kappa_0\ell^{1/3}$,
\[
  \frac{\ELL(\alpha,\Lambda)}{c_\alpha\alpha^{2/7}\Lambda^{12/7}}\ge
  \frac{E_0(\alpha,\Lambda)}{c_\alpha\alpha^{2/7}\Lambda^{12/7}}\ge1-\eps .
\]
\end{enumerate}
In other words, $E_0=(1+o(1))\,c_\alpha\alpha^{2/7}\Lambda^{12/7}$ as $\ell\to\infty$, $\Xi\to\infty$ and
$\log(2+e^2)=o(\ell^{1/3})$.
\end{theorem}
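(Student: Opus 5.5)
The proof splits into the upper bound (a) and the lower bound (b); the abstract already indicates the architecture, and I would follow it.

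\textbf{Upper bound.} I would use a product trial state $\phi\otimes\Psi$ with $\phi$ real, so that the paramagnetic cross term vanishes. For fixed density $\rho=\abs{\phi}^2$, the field sees the quadratic Hamiltonian $\Hf+\frac{e^2}{2}\int\rho\,A_\Lambda^2$. Its ground state energy is bounded above by taking $\Psi$ to be a Gaussian (squeezed/coherent-type) state: mode by mode one minimizes $\frac12(\omega_k+e^2\rho\text{-weight})$-type expressions. A local density approximation is natural here, since $\rho$ varies on a length scale $R\gg\Lambda^{-1}$. Concretely, I would partition $\R^3$ into boxes of side $L$ with $\Lambda^{-1}\ll L\ll R$ and treat $\rho$ as constant on each box. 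The stiffened zero-point energy per unit volume for a constant density $\rho$ is then computed from $\int_{\abs{k}<\Lambda}\frac{dk}{(2\pi)^3}\bigl(\sqrt{\abs{k}^2+e^2\rho}-\abs{k}\bigr)$ summed over two polarizations. In the regime $e^2\rho\gg\Lambda^2$, which is the relevant one after scaling, this equals $(1+o(1))\,\frac{\Lambda^3}{6\pi^2}\,e\sqrt{\rho}$. The field energy is therefore $\approx\frac{e\Lambda^3}{6\pi^2}\int\abs{\phi}$, and this is where $\norm{\phi}_1$ arises. Minimizing $\frac12\norm{\nabla\phi}^2+\frac{e\Lambda^3}{6\pi^2}\norm{\phi}_1$ by scaling $\phi(x)=\lambda^{3/2}\phi_1(\lambda x)$ reduces the problem to $F[1]$ (Proposition~\ref{prop:F1}) and produces $c_\alpha\alpha^{2/7}\Lambda^{12/7}$. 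The scale $\ell$ measures the ratio $R\Lambda$, so the error $\eps_+(\ell)$ comes from localization and boundary terms.

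\textbf{Lower bound.} First I would periodize the field: I would replace $A_\Lambda$ by a field on a torus of side $T\gg R$, at an error controlled by positivity (e.g.\ IMS localization of the electron combined with a bound on the field energy of modes restricted to a box). Next I would split space into blocks of side $L$ and apply a Born--Oppenheimer bound on each block. For each fixed classical band-limited field configuration $a$, the electron energy on the block is bounded below by the magnetic Neumann Laplacian $\frac12(p+ea)^2$ on the cube, plus the diamagnetic stiffening. The field energy is then bounded below by the quadratic form in $a$ evaluated at the electron density, and block-wise Born--Oppenheimer turns this into $\int$ (local energy density in $\rho$) $\ge\frac{e\Lambda^3}{6\pi^2}\int\sqrt{\rho}(1-o(1))$. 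The kinetic energy remains $\frac12\int\abs{\nabla\sqrt\rho}^2$ up to the magnetic error. The final step is a lower bound of the same functional, giving $(1-\eps)c_\alpha\alpha^{2/7}\Lambda^{12/7}$.

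\textbf{Main obstacle.} The hard step is the magnetic Neumann Laplacian bound: one must show that a band-limited $a$ cannot lower the kinetic energy $\frac12(p+ea)^2$ below $\frac12\abs{\nabla\sqrt\rho}^2$ at leading order, i.e.\ that correlations do not help. A naive diamagnetic inequality loses the gradient of $\abs{\phi}$ only in one direction. I would first try to show that band-limitation at scale $\Lambda$, together with the weighted gradient condition coming from the field energy $\Hf$, forces the vorticity $\curl a$ to be spread out: it cannot concentrate on the scale $L$ of a block. Then a gauge change on each cube makes $a$ small in $L^2$ relative to $\nabla$, which yields $\frac12(p+ea)^2\ge\frac12(1-\delta)p_{\mathrm{Neu}}^2-C\delta^{-1}e^2\norm{a}^2_\infty$-type bounds. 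The hypothesis $\log(2+e^2)\le\kappa_0\ell^{1/3}$ presumably controls the loss in this non-concentration estimate, for instance a logarithmic Bernstein-type bound on band-limited fields. I would set $L$ between $\Lambda^{-1}$ and $R$ so that all errors are $o(1)$ once $\ell,\Xi\to\infty$.
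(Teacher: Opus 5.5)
Your outline follows the architecture announced in the abstract, and you correctly guess the role of non-concentration and of the condition on $\log(2+e^2)$. The lower bound, however, lacks the mechanism that actually produces $\beta\norm\psi_1$ with the sharp constant. After the Born--Oppenheimer step you face $\Hf+W(q)$, where $W(q)$ is an infimum over electron states. The electron density therefore depends on the field configuration $q$, and there is no fixed density at which to ``evaluate the quadratic form in $a$''. Pulling the infimum outside goes the wrong way, since $\inf\sigma(\Hf+\inf_\rho V_\rho)\le\inf_\rho\inf\sigma(\Hf+V_\rho)$.

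The paper instead makes everything a multiplication operator in $q$. First it discards the oscillator potential of the modes in the electron box, at the cost of their zero-point energy (Lemma~\ref{lem:vacuum}). This is where $\Xi\to\infty$ enters, and your sketch does not account for it. Then the kinetic energy of the field coordinates of a block is bounded below by a Hardy inequality in field space, $-\frac12\Delta_{S_Q}\ge h_Q^2/(8u'_Q(q))$. Here $u'_Q$ is the gauge-projected field mass in $Q$ and $h_Q\approx\rhom\abs Q$ is the trace of a Gram matrix (Lemma~\ref{lem:hardy}, Corollary~\ref{cor:hardy}). This is where your gauge change belongs: the gradient part costs only $O(s^2)$ of the $\approx\rhom s^3$ modes. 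If the electron energy in $Q$ is at least $\gamma\bar\rho_Q\,u'_Q$, the arithmetic--geometric mean inequality gives $h_Q\sqrt{\gamma\bar\rho_Q/2}$ for every $q$. Cauchy--Schwarz then turns the sum over blocks into $\beta\norm\psi_1$.

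This also shows that your magnetic estimate points the wrong way. That $\frac12(p+ea)^2$ stays above $\frac12\abs{\nabla\abs\psi}^2$ is just the diamagnetic inequality, true for every $a$. A bound of the type $(1-\delta)p_{\mathrm{Neu}}^2-C\delta^{-1}e^2\norm a_\infty^2$, like any Cauchy--Schwarz treatment of the paramagnetic term, costs at least $\delta^{-1}e^2\int\rho\abs a^2$. That wipes out the diamagnetic stiffening, which is the leading term. What is needed is a lower bound by the field: $E_0^N(Q;\nu eB)\ge\nu^2(1-\eta)e^2\min\{u'_Q,u_b\}/\abs Q$ under the gradient condition (Theorem~\ref{thm:neumann}). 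This bound is false without that condition (Appendix~\ref{app:counterexample}), and non-concentration of the vorticity is the input for it.

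Moreover, the kinetic energy has to provide both $T(\abs\psi)$ on the scale $\ell$ and these block energies. A convex split into fractions $1-t$ and $t$ only gives $(1-t)^{3/7}t^{2/7}F_\lambda[\beta]$. The paper avoids this loss with the phase multiplication $u=\psi^\nu/\abs\psi^{\nu-1}$ (Lemma~\ref{lem:phase}), which moves the coupling to $\nu a$ at the price $\nu^{-2}T(\abs\psi)$. It pays for configurations violating the gradient condition with a weighted part of $\Hf$ (Lemma~\ref{lem:weight}).

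Your upper bound is repairable but not rigorous as written. The density $\rho$ couples all band-limited modes, so nothing can be minimized mode by mode or box by box. The paper uses the bound $\Efield\le\frac12\sqrt{2g}\,\Tr(\Pi'\rho\Pi')^{1/2}$ together with a coherent-state estimate of this trace (Proposition~\ref{prop:coherent}).
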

For fixed $\alpha$ this is the statement $E_0=(1+o(1))\,c_\alpha\alpha^{2/7}\Lambda^{12/7}$ as $\Lambda\to\infty$ (Corollary~\ref{cor:limit}(i)); the two parameters $\ell$ and $\Xi$ give the uniform version.

If the parameters of the proof are chosen as powers of $\ell$ and $\Xi$ (for instance $s$, $\nu^2$, $\eta^{-1}$ and $M$
equal to small powers of $\ell$ in Proposition~\ref{prop:relerr}), it gives power rates
(for the lower bound under a condition $\log(2+e^2)\le\ell^{c}$ with a smaller exponent $c$); we do not
optimize them. The upper bound holds with $\eps_+(\ell)\le C\ell^{-1/2}$ (Remark~\ref{rem:upper-rate}).

\begin{corollary}\label{cor:limit}
Let $\Lambda/\alpha\to\infty$ and $\alpha^{5/2}\Lambda\to\infty$, in such a way that
$\log(2+\alpha)=o((\Lambda/\alpha)^{1/21})$. Then
\[
  E_0(\alpha,\Lambda)=\bigl(1+o(1)\bigr)\,c_\alpha\,\alpha^{2/7}\Lambda^{12/7},
  \qquad \frac{E_0(\alpha,\Lambda)}{\ELL(\alpha,\Lambda)}\to1 .
\]
In particular this holds (i) for fixed $\alpha>0$ as $\Lambda\to\infty$; (ii) uniformly in
$\Lambda^{-2/5+\delta}\le\alpha\le\Lambda^{1-\delta}$ for each fixed $\delta\in(0,\frac25)$;
(iii) uniformly in $\alpha\in(0,\alpha_2]$ for each fixed $\alpha_2>0$, provided
$\alpha^{5/2}\Lambda\to\infty$.
\end{corollary}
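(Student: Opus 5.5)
The corollary follows from Theorem~\ref{thm:main} once the hypotheses are rewritten in terms of $\ell$ and $\Xi$. With $e^2=4\pi\alpha$ we have $\ell^7=\Lambda/(\pi e^2)=\Lambda/(4\pi^2\alpha)$. Hence $\Lambda/\alpha\to\infty$ is equivalent to $\ell\to\infty$, and $(\Lambda/\alpha)^{1/21}=(4\pi^2)^{1/21}\ell^{1/3}$. Likewise $\Xi=\pi^{-2/7}(e^5\Lambda)^{2/7}$ with $e^5=(4\pi\alpha)^{5/2}$, so $\Xi\to\infty$ is equivalent to $\alpha^{5/2}\Lambda\to\infty$.

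For the logarithmic condition, note that $2+e^2=2+4\pi\alpha\le 4\pi(2+\alpha)$. Therefore $\log(2+e^2)\le\log(4\pi)+\log(2+\alpha)$. Since $\log(2+\alpha)=o(\ell^{1/3})$ and $\ell\to\infty$, this gives $\log(2+e^2)=o(\ell^{1/3})$.

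Now fix $\eps\in(0,1)$ and let $\ell_0,\Xi_0,\kappa_0$ be as in Theorem~\ref{thm:main}(b). Along the limit, eventually $\ell\ge\ell_0$, $\Xi\ge\Xi_0$ and $\log(2+e^2)\le\kappa_0\ell^{1/3}$. Part (b) then gives
\[
1-\eps\le \frac{E_0}{c_\alpha\alpha^{2/7}\Lambda^{12/7}}\le\frac{\ELL}{c_\alpha\alpha^{2/7}\Lambda^{12/7}},
\]
and part (a) bounds both quotients above by $1+\eps_+(\ell)\to1$. As $\eps$ is arbitrary, both quotients tend to $1$. This yields the asymptotics for $E_0$ and $E_0/\ELL\to1$. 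The statements are uniform because the thresholds depend only on $\eps$.

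It remains to check the special cases (i)--(iii).

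\emph{Case (i).} Here $\alpha$ is fixed, so $\log(2+\alpha)$ is constant, and all conditions hold as $\Lambda\to\infty$.

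\emph{Case (ii).} Assume $\Lambda^{-2/5+\delta}\le\alpha\le\Lambda^{1-\delta}$. Then $\Lambda/\alpha\ge\Lambda^{\delta}$ and $\alpha^{5/2}\Lambda\ge\Lambda^{5\delta/2}$, so both tend to infinity. Moreover $\log(2+\alpha)\le C\log\Lambda$, which is $o(\Lambda^{\delta/21})\subseteq o((\Lambda/\alpha)^{1/21})$.

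\emph{Case (iii).} Assume $\alpha\le\alpha_2$. Then $\log(2+\alpha)$ is bounded, and $\Lambda/\alpha\ge\Lambda/\alpha_2$. It remains to see that $\Lambda\to\infty$. This is forced: $\alpha^{5/2}\Lambda\to\infty$ with $\alpha$ bounded requires $\Lambda\to\infty$.

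There is no real obstacle here. The only care needed is converting between $e$, $\alpha$ and the constants inside the $o(\cdot)$ conditions, which is routine.
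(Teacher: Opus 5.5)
Your proposal is correct and follows the paper's own argument. You rewrite the hypotheses as $\ell\to\infty$, $\Xi\to\infty$ and $\log(2+e^2)=o(\ell^{1/3})$, using $\ell^7=\Lambda/(4\pi^2\alpha)$ and $\log(2+e^2)\le C+\log(2+\alpha)$, apply both parts of Theorem~\ref{thm:main}, and verify cases (i)--(iii) with the same elementary bounds; you also spell out the squeeze between parts (a) and (b) and why $\Lambda\to\infty$ in case (iii), which the paper leaves implicit.
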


\begin{proof}
Theorem~\ref{thm:main} with $\ell\to\infty$ and $\Xi\to\infty$; note that
$\log(2+e^2)\le C+\log(2+\alpha)$ and $\ell^{1/3}=(4\pi^2)^{-1/21}(\Lambda/\alpha)^{1/21}$. In case (ii),
$\ell^7=\Lambda/(4\pi^2\alpha)\ge\Lambda^\delta/(4\pi^2)$,
$\Xi\ge c\,\Lambda^{5\delta/7}$ and $\log(2+e^2)\le C\log\Lambda=o(\ell^{1/3})$; in case (iii), $\log(2+e^2)$ is bounded.
\end{proof}

By the two relations in the proof above, the logarithmic condition $\log(2+e^2)=o(\ell^{1/3})$
follows from $\ell\to\infty$ when $\alpha$ is bounded, and, when $\alpha\to\infty$, it holds if
$\Lambda/[\alpha(\log\alpha)^{21}]\to\infty$. Within $\Lambda/\alpha\to\infty$ and $\alpha^{5/2}\Lambda\to\infty$, a sequence is therefore
not covered by the last sentence of Theorem~\ref{thm:main} if and only if it has a subsequence along which $\alpha\to\infty$ and
$\Lambda/[\alpha(\log\alpha)^{21}]$ stays bounded.

Thus, in the regime of Theorem~\ref{thm:main}, product states (the local density
approximation of Lieb and Loss) give the ground state energy correctly to leading
order, including the constant. The ratio $E_0/\ELL\to1$ concerns energies; no norm approximation
of ground states by product states is asserted. Since $E_0\le\ELL$, Theorem~\ref{thm:main} also determines
the leading asymptotics of $\ELL$ itself; its constant differs from the one stated in
\cite[Theorem~1.1]{BH22} (Appendix~\ref{app:BH}). The condition $\alpha^{5/2}\Lambda\to\infty$ cannot be
dropped: if $\alpha^{5/2}\Lambda\to0$, the vacuum trial state gives
$E_0\le\alpha\Lambda^2/(2\pi)=o(\alpha^{2/7}\Lambda^{12/7})$ (Remark~\ref{rem:window}).
This is the crossover scale identified by Lieb and Loss. We do not describe the
behaviour near the crossover $\alpha^{5/2}\Lambda\asymp1$. As a by-product we obtain the
virial-type partition $\langle\frac12(p+eA)^2\rangle/E_0\to\frac57$ and
$\langle\Hf\rangle/E_0\to\frac27$ in all approximate ground states
(Corollary~\ref{cor:virial}).

For the physical value $\alpha=1/137.036$ the local density approximation is below the
perturbative value only for $\Lambda\gtrsim1.3\times10^{8}$ in units of the electron
mass, far beyond any physically meaningful cutoff. Theorem~\ref{thm:main} is a
mathematical statement about the ultraviolet limit of the model.

\subsection{Background}
The mathematical analysis of the Pauli--Fierz model at fixed ultraviolet cutoff is by
now well developed: self-adjointness for all values of the coupling \cite{Hir02},
existence of ground states \cite{BFS98,BFS99,GLL01}, and, for the translation
invariant model, the fiber decomposition \cite{LLP53,LMS07} and the existence of a ground state at zero
total momentum \cite{HaSi23}. At fixed $\Lambda$ and small $\alpha$ the
self-energy and the renormalized mass have been analysed perturbatively, with rigorous
control of the low order terms \cite{HSe02,CH04,HS05,BCVV08,Che08}. These results
concern the regime below the crossover $\alpha^{5/2}\Lambda\asymp1$ mentioned above. See
also \cite{LL02} for non-perturbative bounds on binding energies and mass
renormalization, which are not restricted to this regime.

The regime $\Lambda\to\infty$ at fixed $\alpha$ is non-perturbative. After \cite{LL00},
Bach and Hach \cite{BH22} studied the product state energy $\ELL$ and reduced its
leading asymptotics to the minimum of $F[\beta]$, where
$F[\beta]=\inf\{\frac12\norm{\nabla\phi}^2+\beta\norm{\phi}_1\}$; this identifies the
leading term of $\ELL$, and hence the candidate for the leading term of $E_0$, with the
minimum of an explicit functional; the leading constant stated in \cite{BH22} differs from
ours by the factor $(2\pi)^{6/7}$ (Appendix~\ref{app:BH}). More recently Bach, Ballesteros and Mlinarzik \cite{BBM26} and Bach
and Herdzik \cite{BHe26} studied the Bogoliubov--Hartree--Fock (BHF) approximation of
the fiber Hamiltonians (at zero total momentum in \cite{BBM26}, at fixed total momentum in \cite{BHe26}); see also \cite{Her25}, and \cite{BBT13} for
an earlier minimization over quasi-free states, as well as the quasi-classical
approach of \cite{BFP23,BFP25}. In \cite{BBM26} it
is shown that the BHF functional with a non-convex term removed, which is convex, has minimum of order $\Lambda^{3/2}$;
the introduction of \cite{BBM26} describes the conjecture of Lieb and Loss as confirmed for $\ELL$ in \cite{BH22}, and
the estimate for the true ground state energy as a longstanding challenge.

Theorem~\ref{thm:main} also gives a lower bound of order $\Lambda^{12/7}$ for the BHF energy, since
the latter is an upper bound for $E_0$; see Remark~\ref{rem:BHF}.

As in the strong-coupling polaron problem, the leading energy is determined by a
variational functional: for the polaron, the ground state energy is asymptotically given by the classical Pekar functional
\cite{Pek54,DV83,LT97}; see \cite{LY58} for early bounds, \cite{Lie77} for the Pekar
minimizer, and \cite{FS21} for the quantum correction. The functional $F[\beta]$ plays
the role of the Pekar functional, with the non-local self-interaction replaced by
$\beta\norm{\phi}_1$. Lieb and Thomas \cite{LT97} gave a simple proof of the polaron
asymptotics by grouping field modes into cells in momentum space; our periodization
principle (Theorem~\ref{thm:periodization}) is an operator inequality version of that
step. The essential difference is that the Pauli--Fierz electron couples to the field
both through the diamagnetic term $e^2A^2$ and through the paramagnetic term
$e\,p\cdot A$, and the latter does not disappear when the field is treated
classically. Here the main additional estimate controls the paramagnetic term: the magnetic
Neumann lower bound of Theorem~\ref{thm:neumann}, which rests on the non-concentration estimate of
Theorem~\ref{thm:sup} (see Section~\ref{sec:strategy}).

\subsection{Scope}
The analysis is restricted to the spinless model with a sharp cutoff and without external
potentials; electrons with spin and smooth cutoffs are discussed in Section~\ref{sec:remarks}. For the
fibers with non-zero total momentum $P$, Remark~\ref{rem:fiberP} shows that, in the regime of Theorem~\ref{thm:main}, the fiber
energy satisfies $E(P)/E_0\to1$ uniformly for $\abs P=o(\alpha^{1/7}\Lambda^{6/7})$; larger momenta are not treated.
The main theorem is stated without rates.

In Appendix~\ref{app:BH} we prove two statements comparing our results with
\cite{BH22} (Propositions~\ref{prop:A1} and~\ref{prop:A2}); the discussion there of
where the factor $(2\pi)^{3/2}$ enters the arguments of \cite{BH22} is our reading and
is not part of our claims.

This paper is the first of a series on the ultraviolet limit of the Pauli--Fierz
model. The configuration-wise block Born--Oppenheimer bound (Proposition~\ref{prop:configBO}),
the periodization principle and the Neumann lower bound (the latter through Proposition~\ref{prop:configBO}) are stated in a form that is used
in the companion paper \cite{SerBC}, which proves a cutoff-independent lower bound on the
effective mass and the breakdown of mass renormalization at fixed physical mass. The present
paper is self-contained.

\subsection{Organization}
After the strategy (Section~\ref{sec:strategy}) and the definitions (Section~\ref{sec:model}), Section~\ref{sec:upper}
proves the upper bound. The lower bound combines the periodization principle (Section~\ref{sec:periodization}),
the configuration-wise block Born--Oppenheimer bound (Section~\ref{sec:blockBO}) and the Neumann lower bound
for band-limited magnetic fields (Section~\ref{sec:neumann}); Section~\ref{sec:proof} assembles them, and Section~\ref{sec:remarks}
contains remarks on extensions. The appendices compare with \cite{BH22} (Appendix~\ref{app:BH}), compute the
leading constant (Appendix~\ref{app:constants}), and show that the gradient condition in the Neumann lower bound
cannot be dropped (Appendix~\ref{app:counterexample}).

%% file: sec2_strategy.tex
\section{Strategy of the proof and heuristics}\label{sec:strategy}

This section is informal; precise statements follow in Sections~\ref{sec:model}--\ref{sec:proof}.
By a unitary scaling (Lemma~\ref{lem:scaling}) we may fix the photon band to
$\abs{k}<\pi$ and put the large parameter into the kinetic energy: the Hamiltonian
becomes $\frac\lambda2(p+eA)^2+\Hf$ with $\lambda=\Lambda/\pi$, and we write
$g:=\lambda e^2/2$ for the coefficient of $\abs{A}^2$.

\subsection{The product state computation}
For a real normalized $\varphi$ and a field state $\Psi$, the paramagnetic cross term
vanishes, and
$\ip{\varphi\otimes\Psi}{H\varphi\otimes\Psi}
=\frac\lambda2\norm{\nabla\varphi}^2+\ip{\Psi}{(\Hf+g\int\varphi^2\abs{A}^2)\Psi}$.
For a fixed density $\rho=\varphi^2$ the field part is a quadratic problem whose ground
state energy is
\begin{equation}\label{eq:Efield-heur}
  \Efield(\rho)=\tfrac12\Tr\bigl[(\omega^2+C_\rho)^{1/2}-\omega\bigr],\qquad C_\rho=2g\,\Pi'\rho\,\Pi',
\end{equation}
where $\Pi'$ is the projection onto transverse fields with momenta in the band. If
the induced stiffness dominates, $2g\rho\gg\omega^2$ where $\rho$ lives, then $C_\rho\gg\omega^2$ on the modes localized there and $\Efield(\rho)\approx\frac12\Tr C_\rho^{1/2}$. (The density itself is small: for the minimizers below $\rho\approx\ell^{-3}$, while $2g\rho\approx e^4\ell^4=\Xi^2$ is large compared with $\omega^2\le\pi^2$.) A
semiclassical count of the eigenvalues of the localized band projection
gives $\Tr(\Pi'\rho\Pi')^{1/2}\approx\rhom\int\sqrt\rho$, with
$\rhom=\pi/3$ the number of transverse modes per unit volume (the local Weyl law). Hence
\[
  \Efield(\rho)\approx\beta\norm{\varphi}_1,\qquad \beta=\frac{\rhom}{2}\sqrt{2g},
\]
and minimizing over $\varphi$ leads to $F_\lambda[\beta]=\inf\{\frac\lambda2\norm{\nabla\varphi}^2+\beta\norm{\varphi}_1\}$,
whose minimizer lives on the length scale $\ell=(\lambda/e^2)^{1/7}$ and whose value is
of order $e^2\ell^5$. Undoing the scaling gives $c_\alpha\alpha^{2/7}\Lambda^{12/7}$. The
upper bound (Section~\ref{sec:upper}) makes this rigorous: $\Efield\le\frac12\Tr C_\rho^{1/2}$ holds without error, and
since the square root is concave, $\Tr(\Pi'\rho\Pi')^{1/2}$ is bounded from above by the integral of the
square root of a lower symbol in coherent states (Proposition~\ref{prop:coherent}). The coefficient $\rhom$
appears as the volume of the transverse part of phase space. In the lower bound the same
local Weyl number appears as the trace $\rhom\abs Q$ of a Gram matrix on each block $Q$
(Corollary~\ref{cor:hardy}). This coincidence is what makes a sharp constant possible.

\subsection{The obstacle in the lower bound}
In a general state the paramagnetic term does not vanish, and the electron can lower its
energy by correlating its phase with the field. For the polaron, Lieb and Thomas
\cite{LT97} handle the corresponding problem by replacing the quantized field with a
classical one at a controlled cost. For the Pauli--Fierz model a classical field $a$
still acts on the electron through the magnetic kinetic energy
$\frac\lambda2\int\abs{(\nabla+iea)\psi}^2$, and the electron can screen a smooth field by
a gauge transformation. The lower bound therefore has to show that, after the quantum
fluctuations of the field are taken into account, the field energy cost of
localizing the electron is still $\beta\norm{\varphi}_1$ to leading order.

\subsection{The four steps}
\emph{Step 1 (localization).} By the IMS localization formula the electron is confined to
a box of side $n_e\approx M\ell$ at a relative cost $O(M^{-2})$ (Lemma~\ref{lem:IMS}).

\emph{Step 2 (periodization).} The continuum of field modes is replaced by the finitely
many modes of a large periodic box $\T_L$, with an error that vanishes as
$L\to\infty$ (Theorem~\ref{thm:periodization}). In the finite mode model the field is a
finite system of harmonic oscillators, and in the Schr\"odinger representation of
the field coordinates $q$ the electron Hamiltonian is a direct integral over $q$. Hence
$h\ge\Hf+W(q)$, where $W(q)$ is the electron ground state energy in the classical field
$A_q$ (a Born--Oppenheimer lower bound).

\emph{Step 3 (configuration-wise block Born--Oppenheimer bound).} Two observations turn this into
the expression of the upper bound. First, for $\nu\in\N$ and $u=\psi^\nu/\abs{\psi}^{\nu-1}$,
\[
  \abs{(\nabla+ia)\psi}^2=\Bigl(1-\frac1{\nu^2}\Bigr)\abs{\nabla\abs{\psi}}^2+\frac1{\nu^2}\abs{(\nabla+i\nu a)u}^2
\]
pointwise, vortices included (Lemma~\ref{lem:phase}). At the price of a fraction
$\nu^{-2}$ of the kinetic energy, the magnetic coupling is transferred to the field
$\nu a$, and on each cube $Q$ of a partition the electron energy is bounded below by
its mass in $Q$ times the Neumann ground state energy of $(\nabla+i\nu a)$ on $Q$.
Second, the kinetic energy of the field coordinates belonging to a block $Q$ is bounded
below by an inverse of the local field mass $u_Q(q)$ (a Hardy inequality in field space,
Lemma~\ref{lem:hardy}): $-\frac12\Delta_{S_Q}\ge h_Q^2/(8u_Q)$, where $h_Q\approx\rhom\abs Q$ is the trace of a Gram
matrix. If the Neumann energy is at least $\gamma\,u_Q/\abs{Q}$, the arithmetic--geometric mean
inequality gives
\[
  \frac{h_Q^2}{8u_Q}+\gamma\,\bar\rho_Q\,u_Q\ \ge\ h_Q\sqrt{\gamma\bar\rho_Q/2}\ \ge\ (1-o(1))\,\beta\int_Q\abs\psi ,
\]
the last step by the Cauchy--Schwarz inequality, and summing over $Q$ gives $\beta\norm{\psi}_1$ for every
configuration $q$ of the field. This is Proposition~\ref{prop:configBO}, deduced from an abstract version
(Theorem~\ref{thm:abstractBO}). The bound loses the zero-point energy of the field modes in the box
(Lemma~\ref{lem:vacuum}), of order $n_e^3\approx M^3\ell^3$; relative to the leading term $e^2\ell^5$ this is
$M^3/(e^2\ell^2)=M^3/\Xi$. This is the origin of the condition $\Xi\to\infty$, that is $\alpha^{5/2}\Lambda\to\infty$,
which cannot be dropped (Remark~\ref{rem:window}).

\emph{Step 4 (Neumann lower bound).} The input needed in Step~3 is a lower bound on the
magnetic Neumann Laplacian on a cube: $E_0^N(Q;\nu e B)$ should be at least
$(1-\eta)$ times the averaged energy of the gauge-projected field, or a uniform
positive floor if the field is large (Theorem~\ref{thm:neumann}). This is the main difficulty.
Such a bound is false in general: a band-limited
field may concentrate its vorticity in a small ball inside $Q$ while being large
outside $Q$ (superoscillation), and then the electron simply avoids the ball
(Appendix~\ref{app:counterexample}). We prove the bound for fields satisfying a weighted
gradient condition, which controls the amplitude outside $Q$. The core is a non-concentration
estimate: under the gradient condition, the vorticity of a band-limited field in $Q$ is controlled by
its gauge-projected $L^2$ norm, with a factor polynomial in the degree $D$ of a Taylor approximation
(Theorem~\ref{thm:sup}); given this, the Neumann bound follows from elementary inequalities. Field
configurations violating the gradient condition are paid for by a small weighted part of the field
energy, whose cost is a local vacuum energy mismatch (Lemma~\ref{lem:weight}).

%% file: sec3_model.tex
\section{The model and the comparison functional}\label{sec:model}

\subsection{The Hamiltonian}
Let $\mathfrak h=L^2(\R^3\times\{1,2\})$ and let $\mathcal F=\mathcal F_{\mathrm b}(\mathfrak h)$ be the
bosonic Fock space with vacuum $\Omega$, annihilation and creation operators
$a(k,r),a^*(k,r)$, and $\Phi(h)=a(h)+a^*(h)$ for $h\in\mathfrak h$ \cite{Ara18} (without the factor $2^{-1/2}$ of the
Segal field in \cite{Ara18}). Inner products are linear in the second variable, the Fourier transform is
$\hat f(k)=\int\mathrm e^{-ik\cdot x}f(x)\,dx$, so that $\norm f^2=(2\pi)^{-3}\norm{\hat f}^2$, and inequalities between unbounded
operators are meant in the sense of quadratic forms. The
polarization vectors $\eps_r(k)$, $r=1,2$, depend only on $\hat k=k/\abs{k}$, and
$\{\hat k,\eps_1(k),\eps_2(k)\}$ is orthonormal (Coulomb gauge). For $\Lambda>0$ the
quantized vector potential is
\begin{equation}\label{eq:A}
  A_\Lambda(x)=\sum_{r=1,2}\int_{\abs{k}\le\Lambda}\frac{\eps_r(k)}{(2\pi)^{3/2}\sqrt{2\abs{k}}}
  \bigl(\mathrm e^{ik\cdot x}a(k,r)+\mathrm e^{-ik\cdot x}a^*(k,r)\bigr)\,dk ,
\end{equation}
which is the continuum version of the normalization \cite[(1.1)]{LL00}, and
$\Hf=d\Gamma(\abs{k})$. The Hamiltonian of a spinless electron of unit mass is
\begin{equation}\label{eq:H}
  H=H_{e,\Lambda}=\tfrac12\bigl(p+eA_\Lambda(x)\bigr)^2+\Hf\qquad\text{on }L^2(\R^3)\otimes\mathcal F,
  \qquad e^2=4\pi\alpha,
\end{equation}
with $p=-i\nabla_x$; $A^2$ is not normal ordered (the sign of $e$ plays no role). We set
\[
  E_0(\alpha,\Lambda):=\inf\sigma(H),\qquad
  \ELL(\alpha,\Lambda):=\inf\bigl\{\ip{\phi\otimes\Psi}{H\,\phi\otimes\Psi}:\ \norm\phi=\norm\Psi=1\bigr\}.
\]
Clearly $E_0\le\ELL$. Spreading out $\phi$ shows
$E_0\le\frac{e^2}{2}\langle\Omega,A_\Lambda(0)^2\Omega\rangle=\frac{e^2\Lambda^2}{8\pi^2}=\frac{\alpha\Lambda^2}{2\pi}$.

\begin{remark}[Definition of $H$]\label{rem:def}
Let $\mathcal D_0:=C_c^\infty(\R^3)\otimes_{\mathrm{alg}}\mathcal F_{\mathrm{fin}}(C_c^\infty)$. The
operators $p_j+eA_{\Lambda,j}$ and $\Hf^{1/2}$ are symmetric on $\mathcal D_0$, hence closable,
and therefore the form
$\mathfrak q(\Psi)=\frac12\norm{(p+eA_\Lambda)\Psi}^2+\norm{\Hf^{1/2}\Psi}^2$ on $\mathcal D_0$ is
closable. We let $H$ be the self-adjoint operator associated with its closure (the
Friedrichs extension). Then $E_0$ is the infimum of $\mathfrak q(\Psi)/\norm\Psi^2$ over
$\mathcal D_0$. All our statements concern this number and do not depend on the
operator domain. Other self-adjoint realizations, for instance the one on
$D(\Delta)\cap D(\Hf)$ constructed in \cite{Hir02}, have $\mathcal D_0$ as a form core and
give the same infimum: by (T2) of Section~\ref{sec:periodization} the form norm of $\mathfrak q$ on $\mathcal D_0$ is
equivalent to that of $\frac12p^2+\Hf+1$, for which $\mathcal D_0$ is a core, and the form of such a realization is
continuous for this norm on its domain.
\end{remark}

\begin{remark}[Schr\"odinger representation of finitely many modes]\label{rem:Qrep}
For the finite mode models of Sections~\ref{sec:periodization}--\ref{sec:proof} we use the
joint spectral representation of the commuting real field operators, in which
$\mathcal F\cong L^2(\R^d,dq)$, $\Hf=\frac12(\abs{p_q}^2+q\cdot\omega^2q)-\frac12\Tr\omega$ and
$A(x)$ acts as multiplication by a classical field $A_q(x)$. The electron operator is
then decomposable, $\int^\oplus h_{\mathrm{el}}[A_q]\,dq$, and consequently
$h_{\mathrm{el}}[A_q]\ge W(q):=\inf\sigma(h_{\mathrm{el}}[A_q])$ as an operator inequality.
In particular the Born--Oppenheimer comparison $h\ge\Hf+W(q)$ holds.
\end{remark}

\begin{remark}[Fibers]\label{rem:fibers}
Theorem~\ref{thm:main} concerns $\inf\sigma(H)$ on the full space. By
\cite[Theorem~4.1(iv) and Proposition~7.1]{LMS07}, $\inf\sigma(H)=E(0)$, the ground
state energy of the fiber Hamiltonian $H(0)$ at total momentum zero; we use this only
to read Theorem~\ref{thm:main} as a statement about $H(0)$.
\end{remark}

\begin{remark}[Smooth versions of band-limited Gaussian fields]\label{rem:smooth}
Let $R(x)=\Phi(f_x)$ with $f_x\in\mathfrak h$ band-limited and $x\mapsto f_x$ smooth. In
the representation of Remark~\ref{rem:Qrep}, expanding $f_x$ in a real orthonormal basis
$\{\xi_m\}$ of $\mathfrak h$ gives $R_q(x)=\sum_mq_mc_m(x)$ with independent standard
Gaussian $q_m$ and $c_m(x):=\overline{\ip{\xi_m}{f_x}}$. On a bounded cube $B$,
$\sum_m\norm{c_m}^2_{H^{j+2}(B)}=\sum_{\abs\gamma\le j+2}\int_B\norm{\partial^\gamma_xf_x}^2dx<\infty$
since $\abs{k^\gamma}\le\pi^{\abs\gamma}$ on the band. Hence, with $\mu$ the Gaussian
measure of the $q_m$, the partial sums form a
Cauchy sequence in $L^2(d\mu;H^{j+2}(B))$, and by the Sobolev embedding
$H^{j+2}(B)\subset C^j(\bar B)$ also in $L^2(d\mu;C^j(\bar B))$; a subsequence converges
almost surely in $C^j(\bar B)$. The limit is a $C^j$ version of $R$, and
$\partial^\gamma R(x)=\Phi(\partial^\gamma f_x)$ almost surely for each $x$.
\end{remark}

\subsection{Scaling to a fixed band}
\begin{lemma}[Scaling]\label{lem:scaling}
For $c,\Lambda,a>0$ let $H_{c,\Lambda}:=\frac c2(p+eA_\Lambda)^2+\Hf$. There is a unitary
$U_a$, mapping product states to product states and $\mathcal D_0$ onto itself, such that
$U_aH_{c,\Lambda}U_a^*=a^{-1}H_{c/a,a\Lambda}$. In particular $E_0$ and $\ELL$ of the two
operators are related by the factor $a^{-1}$.
\end{lemma}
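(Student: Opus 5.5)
The plan is to realize $U_a$ as a tensor product $U_a=V_a\otimes\Gamma(W_a)$ of a dilation of the electron coordinate with the second quantization of a unitary dilation of photon momenta, and to check the transformation rule term by term. Let $(V_a\phi)(x)=a^{-3/2}\phi(x/a)$ on $L^2(\R^3)$ (spatial dilation by $a$). On the one-photon space take $(W_a h)(k,r)=a^{3/2}h(ak,r)$, and set $U_a:=V_a^*\otimes\Gamma(W_a)$ or the appropriate combination of $a$ versus $a^{-1}$, to be fixed by the computation below. Both factors are unitary. $V_a$ maps $C_c^\infty$ onto itself. $\Gamma(W_a)$ maps $\mathcal F_{\mathrm{fin}}(C_c^\infty)$ onto itself and preserves $\Omega$. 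Hence $U_a$ maps $\mathcal D_0$ onto $\mathcal D_0$ and product states $\phi\otimes\Psi$ to product states.

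Next I compute the conjugations. The electron dilation gives $V p V^*=a^{\pm1}p$ and $V x V^*=a^{\mp1}x$. The photon dilation gives $\Gamma(W)a(k,r)\Gamma(W)^*=a^{\pm3/2}a(a^{\pm1}k,r)$ and $\Gamma(W)\Hf\Gamma(W)^*=a^{-1}\Hf$ for the correct orientation, since $\abs{k}$ scales by $a^{-1}$. Substituting into \eqref{eq:A} and changing variables $k\mapsto k/a$, I use three facts. The polarization vectors depend only on $\hat k$, so they are invariant. The measure $dk/\sqrt{\abs k}$ contributes $a^{-3}\cdot a^{1/2}$. The cutoff region $\abs k\le\Lambda$ becomes $\abs k\le a\Lambda$. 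Combining these with the phase $e^{ik\cdot x}$, where $x$ is also rescaled so that $k\cdot x$ is invariant, I expect $U_aA_\Lambda(x)U_a^*=a^{-1}A_{a\Lambda}(x)$ and $U_apU_a^*=a^{-1}p$. Then
\[
U_a\tfrac c2(p+eA_\Lambda)^2U_a^*=\tfrac{c}{2a^2}(p+eA_{a\Lambda})^2=a^{-1}\tfrac{c/a}{2}(p+eA_{a\Lambda})^2,
\]
and $U_a\Hf U_a^*=a^{-1}\Hf$. Together these give $a^{-1}H_{c/a,a\Lambda}$.

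Since all identities hold on the invariant core $\mathcal D_0$, they transfer to the quadratic forms $\mathfrak q$ and hence to the Friedrichs extensions of Remark~\ref{rem:def}. The statements about $E_0$ then follow because $E_0$ is an infimum of $\mathfrak q(\Psi)/\norm\Psi^2$ over $\mathcal D_0$. The statement about $\ELL$ follows because $U_a$ is a bijection on normalized product states.

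The only delicate point is bookkeeping. The powers of $a$ in the field operators must be tracked carefully, including the $a^{3/2}$ normalization of $W_a$ and the $\abs{k}^{-1/2}$ weight. I would fix the signs of the exponents by demanding that the phase $k\cdot x$ be invariant, and then verify that $A$ and $p$ acquire the same factor. If that check fails, the dilation directions have been chosen inconsistently.
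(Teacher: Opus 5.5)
This is the paper's proof. $U_a$ is the tensor product of an electron dilation with the second quantization of a photon-momentum dilation, and your target identities $U_apU_a^*=a^{-1}p$, $U_aA_\Lambda(x)U_a^*=a^{-1}A_{a\Lambda}(x)$ and $U_a\Hf U_a^*=a^{-1}\Hf$ are exactly what it checks on $\mathcal D_0$. Your tentative choice $V_a^*\otimes\Gamma(W_a)$ mixes the two orientations, since your $W_a$ gives $\Gamma(W_a)\Hf\Gamma(W_a)^*=a\Hf$. The consistent choice is $U_a=V_a^*\otimes\Gamma(W_a)^*$, that is, $\psi\mapsto a^{3/2}\psi(a\,\cdot)$ on the electron and $f\mapsto a^{-3/2}f(\cdot/a,r)$ on photons. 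With it, the photon factor sends $A_\Lambda(x)$ to $a^{-1}A_{a\Lambda}(x/a)$, and the electron dilation then replaces $x/a$ by $x$.
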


\begin{proof}
The map $(uf)(k,r)=a^{-3/2}f(k/a,r)$ is unitary on $\mathfrak h$; we have
$u\abs{k}u^*=a^{-1}\abs{k}$ and $\Gamma(u)\Hf\Gamma(u)^*=a^{-1}\Hf$. For the form factor
\[
  f_{\Lambda,x}(k)=\chi_{\abs k\le\Lambda}\,\eps(k)\,\mathrm e^{-ik\cdot x}(2\pi)^{-3/2}(2\abs k)^{-1/2}
\]
one has $uf_{\Lambda,x}=a^{-1}f_{a\Lambda,x/a}$, so
$\Gamma(u)A_\Lambda(x)\Gamma(u)^*=a^{-1}A_{a\Lambda}(x/a)$. With the dilation
$(V_a\psi)(y)=a^{3/2}\psi(ay)$, for which $V_axV_a^*=ay$ and $V_apV_a^*=a^{-1}p$, the
unitary $U_a:=V_a\otimes\Gamma(u)$ gives, on $\mathcal D_0$,
\[
  U_aH_{c,\Lambda}U_a^*=\frac c{2a^2}\bigl(p+eA_{a\Lambda}(y)\bigr)^2+a^{-1}\Hf=a^{-1}H_{c/a,a\Lambda}.
\]
\end{proof}

\emph{Convention.} Taking $a=\pi/\Lambda$ we have $H_{1,\Lambda}\cong\frac\Lambda\pi H_{\lambda,\pi}$
with $\lambda:=\Lambda/\pi$ (in \cite{SerBC} the letter $\lambda$ denotes the ratio of the cutoff to the bare mass; here it is the coefficient of the kinetic energy). In Sections~\ref{sec:upper}--\ref{sec:proof} we work with
the photon band $\abs k<\pi$ (the sphere has measure zero) and the operator
$H^{(\lambda)}:=\frac\lambda2(p+eA_\pi)^2+\Hf$, and we use the parameters
\begin{equation}\label{eq:params}
  \ell:=\Bigl(\frac{\lambda}{e^2}\Bigr)^{1/7}=\Bigl(\frac\Lambda{\pi e^2}\Bigr)^{1/7},\qquad
  g:=\frac{\lambda e^2}{2},\qquad \Xi:=e^2\ell^2,\qquad \lambda=e^2\ell^7 .
\end{equation}
Then $E_0(\alpha,\Lambda)=\frac\Lambda\pi E_0(H^{(\lambda)})$, and similarly for $\ELL$.

\subsection{The comparison functional}
Following \cite[(1.6)--(1.9)]{BH22} (only this definition is taken from \cite{BH22}; none of its results is used in the proof of Theorem~\ref{thm:main}, see Appendix~\ref{app:BH}), let
$\mathcal F_\beta(\phi):=\frac12\norm{\nabla\phi}^2+\beta\norm{\phi}_1$ and
\[
  F[\beta]:=\inf\bigl\{\mathcal F_\beta(\phi):\ \phi=\abs\phi\in H^1\cap L^1(\R^3),\ \norm\phi_2=1\bigr\}.
\]
The scaling $\phi_R=R^{-3/2}\phi(\cdot/R)$ with $R=\beta^{-2/7}$ gives
$F[\beta]=\beta^{4/7}F[1]$, and
\[
  F_\lambda[\beta]:=\inf\Bigl\{\frac\lambda2\norm{\nabla\phi}^2+\beta\norm{\phi}_1\Bigr\}=\lambda F[\beta/\lambda]=\lambda^{3/7}\beta^{4/7}F[1].
\]

Thus $F[1]$ is an infimum, and only this is used in the proofs. Its minimizer is explicit and
$F[1]=4.96461575\ldots$ (Proposition~\ref{prop:F1} in Appendix~\ref{app:constants}); this identification
is used only for the numerical value of $c_\alpha$ and in Remark~\ref{rem:upper-rate}.

\emph{The leading coefficient.} In physical units set $\beta_*:=e\Lambda^3/(6\pi^2)$; in
the working convention set $\beta:=\frac\pi6\,e\sqrt{\lambda}=\frac\rhom2\sqrt{2g}$ with
$\rhom=\pi/3$ (Section~\ref{sec:upper}). Then
\begin{equation}\label{eq:coeff}
  F[1]\beta_*^{4/7}=c_\alpha\,\alpha^{2/7}\Lambda^{12/7}=\frac\Lambda\pi F_\lambda[\beta],\qquad
  F_\lambda[\beta]=c_F\,e^2\ell^5,\quad c_F:=(\pi/6)^{4/7}F[1].
\end{equation}

\begin{remark}[The condition $\alpha^{5/2}\Lambda\to\infty$]\label{rem:window}
If $\alpha^{5/2}\Lambda\to0$, the vacuum trial state gives
\[
  \frac{E_0}{c_\alpha\alpha^{2/7}\Lambda^{12/7}}\le\frac{\alpha\Lambda^2/(2\pi)}{c_\alpha\alpha^{2/7}\Lambda^{12/7}}
  =\frac{(\alpha^{5/2}\Lambda)^{2/7}}{2\pi c_\alpha}\to0,
\]
with $2\pi c_\alpha=6.2416\ldots$, so the non-perturbative leading term does not appear.
The exponent $-\frac25$ in Corollary~\ref{cor:limit}(ii) is this crossover scale, which
was pointed out in \cite[Section~1]{LL00}: perturbation theory can work at most for a
range of $\alpha$ shrinking at least like $\Lambda^{-2/5}$. In our lower bound the condition
$\Xi\to\infty$ enters through the zero-point energy of the field modes in the electron box
(Lemma~\ref{lem:vacuum}), which is of relative size $M^3/\Xi$.
For $\alpha=1/137.036$ the crossover $(\alpha^{5/2}\Lambda)^{2/7}=2\pi c_\alpha$ occurs at
$\Lambda\approx1.3\times10^8$ in units of the electron mass (a numerical value, not a bound).
In \cite{LL00} the coupling is written $\sqrt\alpha\,A$ with $\alpha\approx1/137$ in Gaussian
units. With the normalization \eqref{eq:A}, which is that of \cite[(1.1)]{LL00}, the physical value is $e^2=4\pi/137.036$ (reading the coefficient in \cite{LL00} literally would give $e^2=1/137.036$ and change only numerical values); this is the value
used above.
\end{remark}

%% file: sec4_upper.tex
\section{The field energy in a fixed density and the upper bound}\label{sec:upper}

\subsection{Band projections}
The field geometry $\sharp$ is either $\R^3$ or the torus $\T_L=(\R/L\Z)^3$ with an integer
$L\ge4$. Let $\Pi'$ be the orthogonal projection in $L^2(\sharp;\C^3)$ onto transverse
fields ($P_k=1-\hat k\hat k^{\mathsf T}$) with momenta in the open ball $\abs k<\pi$ (on
$\T_L$, $k\in\frac{2\pi}L\Z^3\setminus\{0\}$), with integral kernel $K(x,y)$, and let
$\rhom:=\tr K(x,x)$. On $\R^3$, $\rhom=2\abs{B_\pi}/(2\pi)^3=\pi/3$ (two transverse modes per
unit volume of phase space); on $\T_L$, $\rhom=2N_L/L^3$ with
$N_L=\#\{m\in\Z^3\setminus\{0\}:\abs m<L/2\}$, and
\begin{equation}\label{eq:rhom-torus}
  \frac\pi3\Bigl(1-\frac{\sqrt3}{L}\Bigr)^3-\frac2{L^3}\ \le\ \rhom^{\T_L}\ \le\ \frac\pi3\Bigl(1+\frac{\sqrt3}{L}\Bigr)^3\qquad(L\ge4),
\end{equation}
since the union of the unit cubes centred at the lattice points contains, and is
contained in, balls of radius $L/2\mp\sqrt3/2$. We also use the points
$\Z^3+(\frac12,\frac12,\frac12)$ and $\sinc_a:=\Pi_{\mathrm{cube}}\delta_a$, where
$\Pi_{\mathrm{cube}}$ is the projection onto the cubic band $[-\pi,\pi)^3$; the functions
$\sinc_a\mathbf e_i$ form an orthonormal basis of the cubic band. On $\T_L$ we take $L$ odd whenever these functions serve as real coordinates (in \eqref{eq:Efield} and in Sections~\ref{sec:blockBO}--\ref{sec:proof}): then $\frac{2\pi}L\Z^3\cap[-\pi,\pi)^3$ is invariant under $\kappa\mapsto-\kappa$ and the $\sinc_a$ are real. Since the torus is used only for all sufficiently large $L$ or in the limit $L\to\infty$, this is no restriction.

\begin{lemma}\label{lem:hansen}
Let $f:[0,\infty)\to[0,\infty)$ be concave with $f(0)=0$, for example $f(t)=\sqrt t$.
(a) For a contraction $K$ of finite rank and bounded $X\ge0$, $\Tr f(K^*XK)\ge\Tr K^*f(X)K$.
(b) In finite dimensions, for fixed $C\ge0$ the map $A\mapsto\Tr[(A+C)^{1/2}-A^{1/2}]$ is
decreasing on $A\ge0$.
\end{lemma}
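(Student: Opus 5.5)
For part (a) I plan to diagonalize the positive operator. Because $K$ has finite rank, $K^*XK$ is a finite-rank positive operator. The key tool is a Jensen-type inequality for concave $f$ with $f(0)=0$ (Hansen's inequality in trace form). Since only traces are compared, an operator version is not needed, and a scalar Jensen argument is enough.

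Let $\{v_j\}$ be an orthonormal eigenbasis of $K^*XK$ on $\overline{\Ran}\,K^*$, a finite-dimensional space, with eigenvalues $\mu_j=\ip{Kv_j}{XKv_j}$. Then $\Tr f(K^*XK)=\sum_j f(\mu_j)$, with the remaining zero eigenvalues contributing $f(0)=0$. In the same basis, $\Tr K^*f(X)K=\sum_j\ip{Kv_j}{f(X)Kv_j}$; vectors in $\ker K$ contribute nothing. Set $t_j=\norm{Kv_j}^2\le1$, so that for $t_j>0$ the vector $w_j=Kv_j/\sqrt{t_j}$ is a unit vector.

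By the spectral theorem and Jensen's inequality for the spectral measure of $X$ in the state $w_j$, concavity gives $\ip{w_j}{f(X)w_j}\le f(\ip{w_j}{Xw_j})$. Multiplying by $t_j$ yields $\ip{Kv_j}{f(X)Kv_j}\le t_jf(\mu_j/t_j)$. A concave $f$ with $f(0)=0$ has $f(s)/s$ nonincreasing, so $t f(s/t)\le f(s)$ for $t\in(0,1]$. Hence $\ip{Kv_j}{f(X)Kv_j}\le f(\mu_j)$, and summing over $j$ gives (a).

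For part (b), I would differentiate along a path. Take $A\le A'$, both $\ge0$, and set $A_s=A+s(A'-A)$. It suffices to show that $\phi(A)=\Tr[(A+C)^{1/2}-A^{1/2}]$ is nonincreasing along $A_s$. The derivative of $\Tr g(A_s)$ is $\Tr[g'(A_s)(A'-A)]$, so
\[
\frac{d}{ds}\phi(A_s)=\Tr\bigl[\bigl((A_s+C)^{-1/2}-A_s^{-1/2}\bigr)\tfrac12(A'-A)\bigr].
\]
Since $t\mapsto -t^{-1/2}$ is operator monotone, $(A_s+C)^{-1/2}\le A_s^{-1/2}$. The difference in the bracket is therefore $\le0$, and its trace against $A'-A\ge0$ is $\le0$.

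The main obstacle in (b) is non-invertibility of $A_s$. I would handle it by replacing $A,A'$ with $A+\delta,A'+\delta$. This makes everything invertible and smooth, and $\phi$ is continuous, so letting $\delta\downarrow0$ recovers the claim.

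An alternative for (b) is the integral representation
\[
t^{1/2}=\pi^{-1}\int_0^\infty\bigl(1-\tfrac{\tau}{t+\tau}\bigr)\tau^{-1/2}d\tau,
\]
which gives
\[
\Tr[(A+C)^{1/2}-A^{1/2}]=\pi^{-1}\int_0^\infty\Tr\bigl[\tau(A+\tau)^{-1}-\tau(A+C+\tau)^{-1}\bigr]\tau^{-1/2}d\tau .
\]
Monotonicity then reduces to showing that $\Tr[(A+\tau)^{-1}-(A+C+\tau)^{-1}]$ decreases in $A$. That holds by convexity of $\Tr$ of the resolvent, and I consider it a fallback if the derivative argument proves awkward.
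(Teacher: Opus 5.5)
Your part (a) is the paper's proof: the same eigenbasis of $K^*XK$, the same $t_j=\norm{Kv_j}^2$ and $w_j$, Jensen's inequality for the spectral measure of $X$, and the inequality $f(tx)\ge tf(x)$, which is your $tf(s/t)\le f(s)$. Your main argument for (b) is also correct and essentially the paper's. The paper differentiates $t\mapsto\Tr(A+tC)^{1/2}$ and uses that $\frac12\Tr[(A+tC)^{-1/2}C]$ decreases in $A$, while you differentiate along the segment from $A$ to $A'$. Both rest on the derivative of the trace, the operator monotonicity of $x\mapsto -x^{-1/2}$, and the regularization $A\mapsto A+\delta$.

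The fallback you sketch, however, does not work, and you should drop it. Convexity of $A\mapsto\Tr(A+\tau)^{-1}$ only says that the increment in the direction $C$ grows as you move along $C$ itself. It says nothing about moving $A$ in another positive direction.

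In fact the claim is false for individual $\tau$. The derivative of $\Tr[(A+\tau)^{-1}-(A+C+\tau)^{-1}]$ in a direction $D\ge0$ is $-\Tr\bigl[D\bigl((A+\tau)^{-2}-(A+C+\tau)^{-2}\bigr)\bigr]$. Monotonicity in every such direction would therefore require $X^{-2}\ge Y^{-2}$ whenever $0<X\le Y$, that is, that $t\mapsto t^{-2}$ be operator decreasing. This fails: take $X=R^{-1}$ and $Y=P^{-1}$ with $0<P\le R$ and $P^2\not\le R^2$, then $C=Y-X$, $\tau$ below the smallest eigenvalue of $X$, and $A=X-\tau$.

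Only the integral over $\tau$ is monotone. The resolvent representation becomes useful only after differentiating, applied to $t^{-1/2}=\pi^{-1}\int_0^\infty(t+\tau)^{-1}\tau^{-1/2}\,d\tau$. There it just reproduces the operator monotonicity you already used.
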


\begin{proof}
(a) Since $f$ is concave and $f(0)=0$, we have $f(tx)\ge tf(x)$ for
$t\in[0,1]$. Let $\{v_n\}$ be an orthonormal basis of eigenvectors of the finite rank operator
$K^*XK$ and $t_n:=\norm{Kv_n}^2\le1$. If $t_n>0$, put $w_n:=Kv_n/\sqrt{t_n}$; Jensen's inequality for
the spectral measure of $X$ in the state $w_n$ gives
$f(\ip{v_n}{K^*XKv_n})=f(t_n\ip{w_n}{Xw_n})\ge t_nf(\ip{w_n}{Xw_n})\ge t_n\ip{w_n}{f(X)w_n}=\ip{v_n}{K^*f(X)Kv_n}$,
and both sides vanish if $t_n=0$. Sum over $n$. (For operator monotone $f$ the operator inequality $f(K^*XK)\ge K^*f(X)K$
is Hansen's inequality \cite{Han80}, see also \cite[Corollary~2.3]{HP03}; only its trace is used.)
(b) For $A>0$,
$\frac{d}{dt}\Tr(A+tC)^{1/2}=\frac12\Tr[(A+tC)^{-1/2}C]$, and $x\mapsto x^{-1/2}$ is operator
decreasing, so the integrand decreases as $A$ increases; integrate over $t\in[0,1]$. The
general case follows by replacing $A$ with $A+\eps$.
\end{proof}

\subsection{The field energy}
For $\rho\ge0$ we consider the ground state energy of the quadratic field problem
\[
  \Efield^\sharp(\rho):=\inf\sigma\Bigl(\Hf^\sharp+g\int\rho(x)\abs{A^\sharp(x)}^2dx\Bigr),\qquad g=\frac{\lambda e^2}{2},
\]
where $A^\sharp$ is the field with band $\abs k<\pi$; on $\T_L$ it has finitely many modes,
and auxiliary uncoupled oscillators may be added (Section~\ref{sec:blockBO}). On $\T_L$ the
normal form of the oscillators gives exactly
\begin{equation}\label{eq:Efield}
  \Efield^{\T_L}(\rho)=\tfrac12\Tr\bigl[(\omega^2+C_\rho)^{1/2}-\omega\bigr],\qquad
  C_\rho=2g\,X^*\Pi'\rho\,\Pi'X,
\end{equation}
where (with the auxiliary oscillators of Section~\ref{sec:finite-model}, on which $C_\rho$ vanishes) $X$ is the unitary from mode coordinates to the cubic band in
$L^2(\T_L;\R^3)$. In the coordinates $q$ of Remark~\ref{rem:Qrep} the field is $A_q=\Pi'Xq$, because the form factor
$(2\abs k)^{-1/2}$ of the field cancels the factor $\sqrt{2\omega}$ in $a+a^*=\sqrt{2\omega}\,q$; hence
$g\int\rho\abs{A_q}^2=\frac12q\cdot C_\rho q$, and together with $\Hf=\frac12(\abs p^2+q\cdot\omega^2q)-\frac12\Tr\omega$ this gives
\eqref{eq:Efield}. Comparing with $A=0$ in Lemma~\ref{lem:hansen}(b),
\begin{equation}\label{eq:Efield-upper}
  \Efield^{\T_L}(\rho)\ \le\ \tfrac12\Tr C_\rho^{1/2}=\tfrac12\sqrt{2g}\,\Tr(\Pi'\rho\Pi')^{1/2}.
\end{equation}
On $\R^3$, Theorem~\ref{thm:periodization}(b) gives
$\Efield^{\R^3}(\rho)=\lim_L\Efield^{\T_L}(\rho)$ for $\rho\ge0$ of bounded support with $\int\rho=1$. The upper bound therefore
reduces to an upper bound on the trace of the square root of the localized band projection
$\Pi'\rho\Pi'$, with the sharp coefficient $\rhom$ of $\norm{\sqrt\rho}_1$. When $2g\rho\gg\omega^2$ on the support of $\rho$ (large induced stiffness, as for the densities of scale $\ell$, where $2g\rho\approx\Xi^2$) the right
side of \eqref{eq:Efield-upper} is also the correct size of $\Efield$, so no constant is lost in
\eqref{eq:Efield-upper}.

\subsection{A coherent state bound for the trace of the square root}
Since $\sqrt\cdot$ is concave, the trace of $B^{1/2}$ is bounded from above by the integral of
the square root of a lower symbol (the concave side of the Berezin--Lieb inequality \cite{Ber72,Lie73,Sim80}). To
keep the symbol localized we first pass to the symmetric operator $\sqrt\rho\,\Pi'\sqrt\rho$, and
we use coherent states whose polarization is adapted to their momentum.

\begin{lemma}[Jensen's inequality for a frame]\label{lem:frame}
Let $(\mathcal A,\mu)$ be a $\sigma$-finite measure space and $\Phi_a\in\mathcal H$ a measurable family with
$\int\abs{\ip{\Phi_a}f}^2d\mu(a)=\norm f^2$ for all $f\in\mathcal H$. Then for every bounded $B\ge0$, in
$[0,\infty]$,
\[
  \Tr B^{1/2}\ \le\ \int\norm{\Phi_a}\,\ip{\Phi_a}{B\Phi_a}^{1/2}\,d\mu(a).
\]
\end{lemma}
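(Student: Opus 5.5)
The idea is to compute $\Tr B^{1/2}$ in an eigenbasis of $B$, expand each eigenvector's weight via the frame identity, and apply Jensen's inequality. This needs $B$ to have discrete spectrum, so I first reduce to that case. If $B$ is not trace class in the square-root sense, the left side may be $+\infty$, and the inequality must still hold in $[0,\infty]$. I treat $\Tr B^{1/2}$ as $\sup\sum_n\ip{v_n}{B^{1/2}v_n}$ over finite orthonormal families $\{v_n\}$. For any finite rank spectral projection $P$ of $B^{1/2}$ (or of $B$), $\Tr PB^{1/2}P$ increases to $\Tr B^{1/2}$. It therefore suffices to bound $\sum_n \lambda_n^{1/2}$ for orthonormal eigenvectors $v_n$ of $B$ with eigenvalues $\lambda_n$, over any finite subfamily. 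If the spectrum is not pure point, I use instead the variational characterization $\Tr B^{1/2}=\sup\{\Tr (PB^{1/2}P)\}$ together with $\Tr PB^{1/2}P\le \Tr (PBP)^{1/2}$. The latter is Lemma~\ref{lem:hansen}(a) applied with $K=P$, $X=B$, $f=\sqrt\cdot$, read in reverse: $\Tr f(PBP)\ge\Tr Pf(B)P$. So it suffices to treat the finite rank operator $PBP$ on $\Ran P$, which has an orthonormal eigenbasis $v_1,\dots,v_N$ with eigenvalues $\lambda_n\ge0$.

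For such an eigenvector $v_n$, the frame identity gives $1=\norm{v_n}^2=\int\abs{\ip{\Phi_a}{v_n}}^2d\mu(a)$. Thus $w_n(a):=\abs{\ip{\Phi_a}{v_n}}^2$ is a probability density. I then write
\[
  \lambda_n^{1/2}=\Bigl(\int w_n(a)\,\lambda_n\,d\mu(a)\Bigr)^{1/2},
\]
and I would like to replace $\lambda_n$ inside the integral by a quantity involving $\ip{\Phi_a}{B\Phi_a}$. The natural route is to set $\hat\Phi_a:=\Phi_a/\norm{\Phi_a}$ whenever $\Phi_a\neq0$. Then
\[
  \sum_n w_n(a)\lambda_n=\norm{\Phi_a}^2\sum_n\abs{\ip{\hat\Phi_a}{v_n}}^2\lambda_n=\norm{\Phi_a}^2\ip{\hat\Phi_a}{PBP\hat\Phi_a},
\]
and the weights $p_n(a):=\abs{\ip{\hat\Phi_a}{v_n}}^2$ satisfy $\sum_np_n(a)\le1$. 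Concavity of the square root, with $\sqrt0=0$, gives $\sum_np_n\sqrt{\lambda_n}\le\sqrt{\sum_np_n\lambda_n}$ whenever $\sum p_n\le1$; this is Jensen's inequality with the missing mass placed at $0$. Hence
\[
  \sum_n\lambda_n^{1/2}=\int\sum_n w_n(a)\lambda_n^{1/2}\,d\mu(a)=\int\norm{\Phi_a}^2\sum_np_n(a)\lambda_n^{1/2}\,d\mu(a)\le\int\norm{\Phi_a}^2\ip{\hat\Phi_a}{PBP\hat\Phi_a}^{1/2}d\mu(a).
\]
The first equality uses Tonelli's theorem, which is legitimate since everything is nonnegative and $\mu$ is $\sigma$-finite. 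The right side equals $\int\norm{\Phi_a}\ip{P\Phi_a}{BP\Phi_a}^{1/2}d\mu(a)$.

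The remaining step is to remove $P$, that is, to show $\ip{P\Phi_a}{BP\Phi_a}\le\ip{\Phi_a}{B\Phi_a}$, and this is the main obstacle: it is false for a general projection $P$. I would avoid it by choosing $P$ to commute with $B$, namely a spectral projection of $B$. Then $\ip{P\Phi}{BP\Phi}=\ip{\Phi}{PBP\Phi}=\ip{\Phi}{B^{1/2}PB^{1/2}\Phi}\le\ip{\Phi}{B\Phi}$. For such $P$, the finite rank condition holds when $P=\chi_{[\delta,\infty)}(B)$ and $B$ is compact. In the non-compact case, I would take $P=\chi_{[\delta,\infty)}(B)$, a commuting but possibly infinite rank projection. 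I would then approximate $\Tr(PB^{1/2})$ by finite rank projections $P'\le P$ built from spectral subspaces of $B$ restricted to bounded intervals, approximating the spectral measure by step functions. Alternatively, if $\Tr\chi_{[\delta,\infty)}(B)=\infty$ for some $\delta>0$, I would show directly that the right side is infinite: applying the bound to the commuting projection $P$ with $BP\ge\delta P$ gives $\int\norm{\Phi_a}\ip{\Phi_a}{B\Phi_a}^{1/2}\ge\sqrt\delta\int\norm{P\Phi_a}^2d\mu=\sqrt\delta\,\Tr P=\infty$, using $\norm{\Phi_a}\ge\norm{P\Phi_a}$ and $\ip{\Phi_a}{B\Phi_a}\ge\delta\norm{P\Phi_a}^2$. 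In the remaining case, every $\chi_{[\delta,\infty)}(B)$ has finite rank, so $B$ is compact, and letting $\delta\downarrow0$ by monotone convergence finishes the argument.
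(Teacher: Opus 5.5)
Your argument is correct once it is pruned to its final form. There are two cases.
\begin{itemize}
\item If some $\chi_{[\delta,\infty)}(B)$ has infinite rank, the right side is infinite by your estimate $\int\norm{P\Phi_a}^2\,d\mu=\Tr P$.
\item Otherwise $B$ is compact. For the commuting finite-rank projections $P_\delta=\chi_{[\delta,\infty)}(B)$, your eigenbasis-plus-Jensen computation, the bound $PBP=B^{1/2}PB^{1/2}\le B$, and monotone convergence as $\delta\downarrow0$ give the claim.
\end{itemize}
The detour through Lemma~\ref{lem:hansen}(a) for general finite-rank $P$ and the step-function approximation are not needed. The former is a dead end, as you note yourself, and for commuting $P$ one has $\Tr(PBP)^{1/2}=\Tr PB^{1/2}P$ exactly anyway.

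The paper takes a shorter route that needs no reduction to compact $B$ and no case distinction. The same Tonelli computation you use for $\Tr P$ works for every bounded $T\ge0$: writing $\Tr T=\sum_n\norm{T^{1/2}e_n}^2$ and applying the frame identity to $f=T^{1/2}e_n$ gives $\Tr T=\int\ip{\Phi_a}{T\Phi_a}\,d\mu(a)$ in $[0,\infty]$. Take $T=B^{1/2}$ and apply the Cauchy--Schwarz inequality to the spectral measure $\nu_a$ of $B$ in the vector $\Phi_a$, whose total mass is $\norm{\Phi_a}^2$. This gives, pointwise in $a$,
\[
\ip{\Phi_a}{B^{1/2}\Phi_a}=\int t^{1/2}\,d\nu_a(t)\le\norm{\Phi_a}\,\ip{\Phi_a}{B\Phi_a}^{1/2},
\]
and integrating in $a$ finishes the proof.

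Your discrete Jensen inequality with the sub-probability weights $p_n(a)$ is exactly this Cauchy--Schwarz step, applied to the spectral measure of $PBP$ instead of $B$. Working with the full spectral measure of $B$ removes the need to discretize, and hence the need to choose and then remove $P$. You already had the key tool, the trace identity, when you computed $\Tr P$; applying it directly to $B^{1/2}$ collapses your argument to two lines.
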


\begin{proof}
For bounded $T\ge0$ and an orthonormal basis $(e_n)$, Tonelli's theorem gives
$\Tr T=\sum_n\norm{T^{1/2}e_n}^2=\int\sum_n\abs{\ip{T^{1/2}\Phi_a}{e_n}}^2d\mu=\int\ip{\Phi_a}{T\Phi_a}d\mu$. Take $T=B^{1/2}$
and apply the Cauchy--Schwarz inequality to the spectral measure $\nu_a$ of $B$ in the vector
$\Phi_a$: $\int t^{1/2}d\nu_a\le\norm{\Phi_a}(\int t\,d\nu_a)^{1/2}$.
\end{proof}

For $t\ge1$ let $\phi_0(y):=(2\pi t^2)^{-3/4}\mathrm e^{-\abs y^2/(4t^2)}$, $\psi_{x,\xi}(y):=\mathrm e^{i\xi\cdot y}\phi_0(y-x)$, and for
$\xi\ne0$ choose measurably a real orthonormal basis $e_1(\xi),e_2(\xi)\perp\xi$, $e_3(\xi):=\hat\xi$. The
family $\Phi_{x,\xi,j}:=\psi_{x,\xi}e_j(\xi)$ with $d\mu=dx\,d\xi/(2\pi)^3\otimes(\text{counting in }j)$ satisfies the
hypothesis of Lemma~\ref{lem:frame} in $L^2(\R^3;\C^3)$, and $\norm{\Phi_{x,\xi,j}}=1$: indeed, by
Plancherel in $x$ (we use $\norm f^2=(2\pi)^{-3}\int\abs{\hat f}^2$),
$\int\abs{\ip{\Phi_{x,\xi,j}}f}^2dx=(2\pi)^{-3}\int\abs{e_j(\xi)\cdot\hat f(k)}^2\abs{\hat\phi_0(k-\xi)}^2dk$, and
$\sum_j\abs{e_j(\xi)\cdot v}^2=\abs v^2$, $(2\pi)^{-3}\int\abs{\hat\phi_0}^2=1$.

\begin{proposition}[Coherent state bound]\label{prop:coherent}
There is an absolute constant $C$ such that the following holds. Let $0\le\varphi\in W^{2,\infty}(\R^3)$
with $\supp\varphi\subset\bar B(0,R)$, $L_1:=\norm{\nabla\varphi}_\infty$, $L_2:=\norm{\Delta\varphi}_\infty$, $\rho:=\varphi^2$, and $t\ge1$.
\begin{enumerate}[label=(\alph*)]
\item On $\R^3$,
\[
  \Tr(\Pi'\rho\Pi')^{1/2}\ \le\ \frac\pi3\Bigl(1+\frac Ct\Bigr)\norm\varphi_1+C(R+t)^3\Bigl(L_1t+L_2+\frac{L_1}t\Bigr).
\]
\item For $L>2R$ view $\rho$ as a function on $\T_L$. Then the limit superior as $L\to\infty$ of
$\Tr(\Pi'_L\rho\Pi'_L)^{1/2}$ satisfies the same bound.
\end{enumerate}
\end{proposition}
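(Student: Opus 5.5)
First I would replace $\Pi'\rho\Pi'$ by the unitarily equivalent-spectrum operator $B:=\varphi\Pi'\varphi$ (for real $\varphi\ge0$, $\Pi'\varphi^2\Pi'=(\varphi\Pi')^*(\varphi\Pi')$ and $B=(\varphi\Pi')(\varphi\Pi')^*$ have the same nonzero eigenvalues). Then I apply Lemma~\ref{lem:frame} with the frame $\Phi_{x,\xi,j}=\psi_{x,\xi}e_j(\xi)$, $\norm{\Phi_{x,\xi,j}}=1$:
\[
  \Tr(\Pi'\rho\Pi')^{1/2}=\Tr B^{1/2}\le\sum_j\int\ip{\varphi\Phi_{x,\xi,j}}{\Pi'\varphi\Phi_{x,\xi,j}}^{1/2}\frac{dx\,d\xi}{(2\pi)^3}.
\]
The heuristic is that $\varphi\Phi_{x,\xi,j}\approx\varphi(x)\Phi_{x,\xi,j}$, and $\ip{\Phi}{\Pi'\Phi}\approx\mathbf 1_{\abs\xi<\pi}\,\mathbf 1_{j\le2}$ because $e_1,e_2$ are transverse and $e_3$ is longitudinal. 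So the main term should be $2\abs{B_\pi}(2\pi)^{-3}\int\varphi=\frac\pi3\norm\varphi_1$.

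To make this precise, write $\varphi\psi_{x,\xi}=\varphi(x)\psi_{x,\xi}+r_{x,\xi}$ with $r_{x,\xi}=(\varphi(\cdot)-\varphi(x))\psi_{x,\xi}$. Then use $\sqrt{a+b}\le\sqrt a+\sqrt{\abs b}$ together with the triangle inequality in the form $\ip{u}{\Pi'u}^{1/2}=\norm{\Pi'u}$ and $\norm{\Pi'(u+r)}\le\norm{\Pi'u}+\norm{\Pi'r}$. This gives
\[
  \Tr\le\sum_j\int\varphi(x)\norm{\Pi'\Phi_{x,\xi,j}}\,\frac{dx\,d\xi}{(2\pi)^3}+\sum_j\int\norm{\Pi' r_{x,\xi}e_j}\,\frac{dx\,d\xi}{(2\pi)^3}.
\]

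\emph{Main term.} In Fourier space, $\norm{\Pi'\Phi_{x,\xi,j}}^2=(2\pi)^{-3}\int_{\abs k<\pi}\abs{P_ke_j(\xi)}^2\abs{\hat\phi_0(k-\xi)}^2dk$, where $\hat\phi_0$ is a Gaussian of width $1/t$ in momentum. There are three regimes:
\begin{itemize}
  \item For $j=3$, $\abs{P_k\hat\xi}^2=\sin^2\angle(k,\xi)\lesssim\abs{k-\xi}^2/\abs\xi^2$. So the square root is $\lesssim (t\abs\xi)^{-1}$ for $\abs\xi\gtrsim1$, while for $\abs\xi\lesssim1$ the $\xi$-volume is bounded.
  \item For $j\le2$ and $\abs\xi>\pi$, the square root decays like a Gaussian tail in $t(\abs\xi-\pi)$.
  \item Otherwise the quantity is at most $1$.
\end{itemize}
Integrating in $\xi$ gives $\int\sum_j\norm{\Pi'\Phi}\,d\xi/(2\pi)^3\le\frac\pi3(1+C/t)$. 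The boundary layer $\abs\xi\in(\pi,\pi+O(1/t))$ contributes $O(1/t)$, and the longitudinal contribution is about $\int_{\abs\xi\le\pi+O(1)}\min(1,(t\abs\xi)^{-1})d\xi=O(1/t)$. One must also truncate the large-$\abs\xi$ region via the Gaussian decay, for both $j\le2$ and $j=3$. Multiplying by $\int\varphi(x)dx$ gives the main term.

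\emph{Remainder (the main obstacle).} The remainder must be integrated over all $\xi$, so I cannot simply bound $\norm{\Pi'r}\le\norm r$, since $\int d\xi$ would diverge. Instead I use the band restriction. The function $\widehat{r_{x,\xi}}$ is concentrated near $\xi$ only up to the smoothness of $\varphi$. Write
\[
  \norm{\Pi'r}\le\norm{\mathbf 1_{\abs k<\pi}\hat r}(2\pi)^{-3/2}\le\frac{\norm{(-\Delta+\abs\xi^2)\,\text{-type weight}\ \cdot r}}{1+\abs\xi^2}
\]
for $\abs\xi\ge2\pi$, by integrating by parts twice in $y$ against $e^{i\xi\cdot y}$. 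The derivatives then fall on $(\varphi(y)-\varphi(x))\phi_0(y-x)$, producing the terms $L_2$, $L_1/t$ and $L_1\abs{y-x}/t^2$. Together with the pointwise bound $\abs{\varphi(y)-\varphi(x)}\le L_1\abs{y-x}$, this yields
\[
  \norm{\Pi' r_{x,\xi}}\lesssim\frac{L_1t+L_2+L_1/t}{1+\abs\xi^2}.
\]
Here $\norm{\abs{y-x}\phi_0}\sim t$, which is the source of the $L_1t$ term. The right side is integrable in $\xi$ only with the decay $\abs\xi^{-2}$ combined with the band cutoff. More carefully, $\hat r(k)$ for $\abs k<\pi$ with $\abs\xi$ large is $\lesssim\abs\xi^{-N}$ using more derivatives, but since only $W^{2,\infty}$ is assumed, I would instead integrate by parts so that the $\xi$-integral of the square is controlled via Plancherel in $\xi$, that is, Cauchy--Schwarz over $\xi$ in a bounded set plus an $\abs\xi^{-4}$ tail. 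This is the delicate step: getting integrability in $\xi$ with only two derivatives of $\varphi$.

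Finally, $r_{x,\xi}$ is essentially supported for $x$ in $B(0,R+O(t))$; the Gaussian tails beyond are handled by the same bound with $\varphi(x)=0$. Integrating over $x$ gives the factor $(R+t)^3$, which proves (a).

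(b) On $\T_L$, I would use the same argument with the periodized frame (coherent states on $\T_L$ with $\xi$ summed over $\frac{2\pi}L\Z^3$), or equivalently note that for $L>2R$ the operator $\varphi\Pi'_L\varphi$ has kernel $\varphi(x)K_L(x-y)\varphi(y)$ converging to the $\R^3$ kernel. The Riemann sums over $\xi$ converge to the integrals, uniformly by the explicit bounds, so the same estimate holds for the $\limsup$.
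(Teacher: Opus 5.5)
Your proposal follows the paper's proof essentially step by step: the same momentum-adapted frame applied to $\varphi\Pi'\varphi$, the splitting $\varphi=\varphi(x)+\chi_x$ with the triangle inequality, the same three-regime bound on the main term, and for the remainder a weighted Cauchy--Schwarz inequality in $\xi$ combined with Plancherel (the paper takes $w(\xi)=(1+\abs\xi^2)^2$, so $\int w^{-1}<\infty$), which reduces everything to $\norm{(1-\Delta)h_x}\le C(L_1t+L_2+L_1/t)$, with Gaussian decay for $\abs x\ge R+t$. Two cautions: the pointwise bound $\lesssim(1+\abs\xi^2)^{-1}$ you write first is not integrable over $\R^3$, so the weighted Cauchy--Schwarz step you end with is necessary rather than a refinement; and in (b) the paper keeps the continuous $\R^3$ frame and only replaces the norm by the Riemann-sum seminorm $\mathcal N_L(h)^2=L^{-3}\sum_{0<\abs\kappa<\pi}\abs{P_\kappa\hat h(\kappa)}^2$, with the same estimates holding uniformly in $L$, whereas convergence of the kernels by itself would not bound $\limsup\Tr B^{1/2}$ from above.
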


\begin{proof}
(a) With $\mathcal T_\varphi:=\Pi'\varphi$ we have $\mathcal T_\varphi\mathcal T_\varphi^*=\Pi'\rho\Pi'$ and $\mathcal T_\varphi^*\mathcal T_\varphi=\varphi\Pi'\varphi$, whose non-zero eigenvalues
coincide with multiplicities, so $\Tr(\Pi'\rho\Pi')^{1/2}=\Tr(\varphi\Pi'\varphi)^{1/2}$. Apply Lemma~\ref{lem:frame} to
$B=\varphi\Pi'\varphi$; the lower symbol is $b_j(x,\xi)=\norm{\Pi'(\varphi\Phi_{x,\xi,j})}^2$. Write
$\varphi=\varphi(x)+\chi_x$; the triangle inequality gives
\[
  b_j^{1/2}\le\varphi(x)\,m_j(\xi)^{1/2}+\norm{\Pi'(\chi_x\Phi_{x,\xi,j})},\qquad
  m_j(\xi):=\norm{\Pi'\Phi_{x,\xi,j}}^2=\int_{\abs k<\pi}\abs{P_ke_j(\xi)}^2\,\hat\gamma(k-\xi)\,dk,
\]
where $\hat\gamma:=(2\pi)^{-3}\abs{\hat\phi_0}^2$ is the centred Gaussian density with covariance
$(4t^2)^{-1}\mathbf 1$; $m_j$ does not depend on $x$. Let $Z$ have density $\hat\gamma$.

\emph{Main term.} For $j=1,2$ we have $m_j\le1$. For $j=3$, $\abs{P_k\hat\xi}=\sin\angle(k,\xi)\le\abs{k-\xi}/\abs\xi$
(the distance of $\xi$ from the line $\R k$), so $m_3\le\mathbb E\abs Z^2/\abs\xi^2=3/(4t^2\abs\xi^2)$. For
$\abs\xi>\pi$ and all $j$, $\abs{\xi+Z}<\pi$ forces $\hat\xi\cdot Z<-(\abs\xi-\pi)$, so
$m_j\le\mathrm e^{-2t^2(\abs\xi-\pi)^2}$. Hence, with $\abs{B_\pi}/(2\pi)^3=\pi/6$,
\[
  \sum_j\int\frac{d\xi}{(2\pi)^3}\,m_j^{1/2}\ \le\ \frac\pi3+\frac{4\pi}{(2\pi)^3}\int_0^\pi\frac{\sqrt3}{2tr}\,r^2dr
  +\frac{3\cdot4\pi}{(2\pi)^3}\int_0^\infty(\pi+\tau)^2\mathrm e^{-t^2\tau^2}d\tau\ \le\ \frac\pi3\Bigl(1+\frac Ct\Bigr),
\]
and the main term contributes at most $\frac\pi3(1+\frac Ct)\int\varphi(x)\,dx$. The two transverse polarizations
give exactly $\rhom=\pi/3$; the longitudinal one is suppressed because its polarization is
adapted to the momentum $\xi$ of the coherent state.

\emph{Error term.} Let $h_x:=\chi_x\phi_0(\cdot-x)$ (a scalar function). Since
$(\chi_x\psi_{x,\xi})^\wedge(k)=\hat h_x(k-\xi)$ and $\abs{P_ke_j}\le1$,
$\norm{\Pi'(\chi_x\Phi_{x,\xi,j})}^2\le n_x(\xi):=(2\pi)^{-3}\int_{\abs k<\pi}\abs{\hat h_x(k-\xi)}^2dk$. With $w(\xi):=(1+\abs\xi^2)^2$,
$\int w^{-1}d\xi=\pi^2$ and $w(k-\eta)\le(1+2\pi^2)^2(1+\abs\eta^2)^2$ for $\abs k<\pi$, the Cauchy--Schwarz
inequality gives
\[
  \int\frac{d\xi}{(2\pi)^3}\,n_x(\xi)^{1/2}\le\Bigl(\frac{\pi^2}{(2\pi)^3}\Bigr)^{1/2}\Bigl(\frac{\abs{B_\pi}}{(2\pi)^3}\Bigr)^{1/2}(1+2\pi^2)\norm{(1-\Delta)h_x}
  =\frac{1+2\pi^2}{\sqrt{48}}\norm{(1-\Delta)h_x}.
\]
Put $z=y-x$. Since $\abs{\chi_x(y)}\le L_1\abs z$, $\nabla\chi_x=\nabla\varphi$, $\Delta\chi_x=\Delta\varphi$,
$\nabla\phi_0=-\frac{z}{2t^2}\phi_0$, $\Delta\phi_0=(\frac{\abs z^2}{4t^4}-\frac3{2t^2})\phi_0$, and $\abs{\phi_0}^2$ is a Gaussian with
variance $t^2$ in each coordinate, the Gaussian moments give
$\norm{(1-\Delta)h_x}\le C(L_1t+L_2+L_1/t)$ for all $x$. If $\abs x\ge R+t$, then $\varphi(x)=0$, the main
term vanishes, $h_x=\varphi\,\phi_0(\cdot-x)$ is supported in $\abs z\ge\delta_x:=\abs x-R$, and
$\abs z^{2m}\mathrm e^{-\abs z^2/(2t^2)}\le\mathrm e^{-\delta_x^2/(4t^2)}\abs z^{2m}\mathrm e^{-\abs z^2/(4t^2)}$ there, so
$\norm{(1-\Delta)h_x}\le C(L_1t+L_2+L_1/t)\mathrm e^{-\delta_x^2/(8t^2)}$. Integrating over $\abs x<R+t$ and
$\abs x\ge R+t$ (where $\int_t^\infty(R+\tau)^2\mathrm e^{-\tau^2/(8t^2)}d\tau\le Ct(R+t)^2$) gives the error term.

(b) For $L>2R$ the support of $\varphi$ lies in a fundamental domain $\mathcal D$ of $\T_L$, and
$\varphi\Pi'_L\varphi$ is unitarily equivalent to the operator $f\mapsto\varphi\,\Pi'_L(\varphi f)$ on $L^2(\R^3;\C^3)$
(zero outside $\mathcal D$). For $h$ supported in $\mathcal D$ one has $\norm{\Pi'_Lh}^2=\mathcal N_L(h)^2$ with the seminorm
$\mathcal N_L(h)^2:=L^{-3}\sum_{\kappa\in\frac{2\pi}L\Z^3,\,0<\abs\kappa<\pi}\abs{P_\kappa\hat h(\kappa)}^2$ on $L^1(\R^3;\C^3)$. Apply
Lemma~\ref{lem:frame} with the same frame; the lower symbol is $\mathcal N_L(\varphi\Phi_{x,\xi,j})^2$, and the triangle
inequality for $\mathcal N_L$ gives the same splitting with $m_j^L(\xi):=\mathcal N_L(\Phi_{x,\xi,j})^2$ and with
$n^L_x(\xi):=L^{-3}\sum_\kappa\abs{\hat h_x(\kappa-\xi)}^2$. As Riemann sums, $m^L_j\to m_j$ pointwise; for $\abs\xi\le\pi+1$,
$m^L_j\le Ct^3$, and for $\abs\xi>\pi+1$, $m^L_j\le Ct^3\mathrm e^{-2t^2(\abs\xi-\pi)^2}$, uniformly in $L\ge4$, so by dominated
convergence $\int(m^L_j)^{1/2}d\xi\to\int m_j^{1/2}d\xi$. In the error term the factor $\abs{B_\pi}/(2\pi)^3$ is replaced
by $L^{-3}\#\{0<\abs\kappa<\pi\}=N_L/L^3\to\pi/6$. Hence the bound of (a) holds in the limit superior.
\end{proof}

\begin{remark}\label{rem:coherent}
For the rescaled functions $\varphi_R:=R^{-3/2}u(\cdot/R)$ with a fixed $0\le u\in W^{2,\infty}$ of bounded support
and $1\le t\le R$, Proposition~\ref{prop:coherent} (with width $t$ and its $R$ replaced by $R$ times the radius of a ball containing $\supp u$) gives
$\Tr(\Pi'\varphi_R^2\Pi')^{1/2}\le\frac\pi3\norm{\varphi_R}_1(1+C/t+C_ut/R)$ with $C_u$ depending only on $u$; with
$t=\lceil R^{1/2}\rceil$ the relative error is $C_uR^{-1/2}$. The loss $C/t$ comes from the smoothing of the
sharp band edge by the coherent states, the loss $C_ut/R$ from the variation of $\varphi$ on their scale $t$.
Since $\Tr\,\Pi'\sqrt\rho\,\Pi'=\rhom\norm{\sqrt\rho}_1$, Lemma~\ref{lem:hansen}(a) (on $\T_L$, with $K=\Pi'$) shows that the coefficient $\rhom$ cannot be improved; the
lower bound of Section~\ref{sec:blockBO} recovers the same coefficient as the trace of a Gram matrix.
\end{remark}

\subsection{Proof of Theorem~\ref{thm:main}(a)}
We work in the convention of Lemma~\ref{lem:scaling}. Let $T(\varphi):=\frac\lambda2\norm{\nabla\varphi}^2$.
For real $\varphi\in H^1$ and any $\Psi$, the cross term
$\ip{\varphi}{(p\cdot A+A\cdot p)\varphi}=\int A\cdot2\,\mathrm{Re}\bigl(\varphi\,\overline{(-i\nabla\varphi)}\bigr)$
vanishes, so
$\ip{\varphi\otimes\Psi}{H^{(\lambda)}\varphi\otimes\Psi}=T(\varphi)+\ip\Psi{(\Hf+g\int\varphi^2\abs A^2)\Psi}$, and taking
the infimum over $\Psi$,
\begin{equation}\label{eq:ELL-MF}
  E_0(H^{(\lambda)})\ \le\ \ELL(H^{(\lambda)})\ \le\ T(\varphi)+\Efield^{\R^3}(\varphi^2).
\end{equation}
Let $\eps>0$. By truncation and mollification there is $0\le u\in C_c^\infty(\R^3)$ with $\norm u_2=1$ and
$\mathcal F_1(u)\le F[1]+\eps$. Let $R:=(\lambda/\beta)^{2/7}=(6/\pi)^{2/7}\ell$ and $\varphi_R:=R^{-3/2}u(\cdot/R)$, so that
$\lambda R^{-2}=\beta R^{3/2}=\lambda^{3/7}\beta^{4/7}$ and $T(\varphi_R)+\beta\norm{\varphi_R}_1=\lambda^{3/7}\beta^{4/7}\mathcal F_1(u)$. By
Theorem~\ref{thm:periodization}(b) (with a box $\mathcal C_e$ containing $\supp\varphi_R$), \eqref{eq:Efield-upper},
$\frac12\sqrt{2g}\,\frac\pi3=\beta$ and Proposition~\ref{prop:coherent}(b) with Remark~\ref{rem:coherent},
\[
  \Efield^{\R^3}(\varphi_R^2)=\lim_L\Efield^{\T_L}(\varphi_R^2)\ \le\ \beta\norm{\varphi_R}_1\bigl(1+C_uR^{-1/2}\bigr)\qquad(R\ge1).
\]
With \eqref{eq:ELL-MF} and \eqref{eq:coeff} this gives
$\ELL/(c_\alpha\alpha^{2/7}\Lambda^{12/7})\le(1+C_uR^{-1/2})(F[1]+\eps)/F[1]$. The right side depends only on $\ell$ and $\eps$,
so
\[
  \eps_+(\ell):=\inf_{u}\,\Bigl(\bigl(1+C_u\,((6/\pi)^{2/7}\ell)^{-1/2}\bigr)\frac{\mathcal F_1(u)}{F[1]}\Bigr)-1
\]
(the infimum over $0\le u\in C_c^\infty$ with $\norm u_2=1$) satisfies $\limsup_{\ell\to\infty}\eps_+(\ell)\le\eps/F[1]$ for every
$\eps>0$, that is, $\eps_+(\ell)\to0$. Returning to physical units by \eqref{eq:coeff} proves
Theorem~\ref{thm:main}(a) for all $\alpha>0$. \qed

\begin{remark}[Rate of the upper bound]\label{rem:upper-rate}
The normalized minimizer $u_*$ of $F[1]$ (Proposition~\ref{prop:F1}) is $C^{1,1}=W^{2,\infty}$ and has bounded
support, so the same argument with $u=u_*$ and $\eps=0$ gives $\eps_+(\ell)\le C\ell^{-1/2}$ for $\ell\ge1$.
\end{remark}

%% file: sec5_periodization.tex
\section{Periodization of the field}\label{sec:periodization}

In this section the electron enters only through the condition (W$\pm$) below, so the
results apply beyond the Pauli--Fierz model.

\subsection{Setting}
\emph{Field on $\R^3$.} Let $\mathcal K=\{k:\abs k<\pi\}$, $\mathfrak h=L^2(\mathcal K;\C^3)$ and
$\omega=\abs k$; the longitudinal component is included as an uncoupled mode. Let
$A_i(x)=\Phi(f^i_x)$ with
$f^i_x(k)_j=(2\pi)^{-3/2}(2\abs k)^{-1/2}\mathrm e^{-ik\cdot x}P_\perp(k)_{ji}$. With respect to the real
structure $(Jh)(k)=\overline{h(-k)}$ the operators $\{\Phi(h):Jh=h\}$ commute, and we use their
joint spectral representation (Remark~\ref{rem:Qrep}), with Gaussian measure $\mu$.

\emph{Electron.} Let $\mathcal C_e$ be the cube of side $n_e$ centred at the origin and
$\mathfrak k$ the electron Hilbert space. To each real band-limited vector field $a$ there
corresponds a self-adjoint operator $h_{\mathrm{el}}[a]\ge0$ on $\mathfrak k$, whose form
domain does not depend on $a$, such that for some constant $C_W$ and all
$\delta\in(0,1]$
\begin{equation}\tag{W$\pm$}\label{eq:W}
  (1-\delta)h_{\mathrm{el}}[a]-\frac{C_W}{\delta}\norm b^2_{\infty,\mathcal C_e}\ \le\ h_{\mathrm{el}}[a+b]
  \ \le\ (1+\delta)h_{\mathrm{el}}[a]+C_W(1+\delta^{-1})\norm b^2_{\infty,\mathcal C_e}.
\end{equation}
Examples: (i) the magnetic operator $\frac\lambda2(p+ea)^2$ with Dirichlet conditions on
$\mathcal C_e$, $\mathfrak k=L^2(\mathcal C_e)$, $C_W=\lambda e^2/2$ (for self-adjoint $Z_1=p+ea$ and
$Z_2=eb$ one has $\pm(Z_1Z_2+Z_2Z_1)\le\delta Z_1^2+\delta^{-1}Z_2^2$, without commutativity);
(ii) $g\int\rho\abs a^2$ for fixed $\rho$ with $\int\rho=1$, $\mathfrak k=\C$, $C_W=g$;
(iii) the ground state energy $W[a]=\inf\sigma(h_{\mathrm{el}}[a])$, $\mathfrak k=\C$, with the
same $C_W$ as in (i) or (ii), by taking infima in the operator inequalities.

\emph{Operators.} Let $h^\infty:=d\Gamma(\omega)\otimes1+\int^\oplus h_{\mathrm{el}}[A_q]\,d\mu(q)$ on
$\mathcal F(\mathfrak h)\otimes\mathfrak k$, defined as a form sum (both terms are non-negative and closed; for the
Pauli--Fierz electron its relation to $H^{(\lambda)}$ is Lemma~\ref{lem:Qform}); we use only lower bounds on it,
valid on its form domain, and the upper bound of Step~4 below uses explicit trial states. For
$L\in\N$ with $L>\sqrt3n_e+2$, the model $h^L$ on $\T_L$ is given by the same expression
with the field replaced by the finitely many modes
$\kappa\in\frac{2\pi}L\Z^3\setminus\{0\}$, $\abs\kappa<\pi$:
$A^L_i(x)=\sum_{\kappa,j}L^{-3/2}(2\abs\kappa)^{-1/2}P_\perp(\kappa)_{ij}(\mathrm e^{i\kappa\cdot x}a_{\kappa j}+\mathrm{h.c.})$.
Uncoupled auxiliary oscillators do not change $\inf\sigma(h^L)$.

\subsection{Statement}
\begin{theorem}[Periodization principle]\label{thm:periodization}
(a) Under \eqref{eq:W},
\[
  \inf\sigma(h^\infty)\ \ge\ \limsup_{L\to\infty}\inf\sigma(h^L).
\]
More precisely, for every $\delta\in(0,\frac12]$ there is $L_0(\delta,n_e,C_W)$ such that
$\inf\sigma(h^\infty)\ge\frac{1-\delta}{1+\delta}\inf\sigma(h^L)-\delta$ for $L\ge L_0$.

(b) For $\rho\ge0$ with $\int\rho=1$ and $\supp\rho\subset\mathcal C_e$,
$\Efield^{\R^3}(\rho)=\lim_{L\to\infty}\Efield^{\T_L}(\rho)$, uniformly in such $\rho$: for
$L\ge L_0(\delta,n_e,g)$,
$\frac{1-\delta}{1+\delta}\Efield^{\T_L}(\rho)-\delta\le\Efield^{\R^3}(\rho)\le\frac{1+\delta}{1-\delta}\Efield^{\T_L}(\rho)+\delta$.
\end{theorem}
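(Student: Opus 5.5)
The plan is to follow Lieb--Thomas: group the continuum modes into momentum cells and compare with the torus modes through a controlled error in the field coupling.

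Fix $\delta$ and $L$, and partition the band $\mathcal K$ (up to a null set) into cubes $\mathcal Q_\kappa$ of side $2\pi/L$ centred at the lattice points $\kappa\in\frac{2\pi}L\Z^3$. Cells meeting the sphere $\abs k=\pi$ and the cell at $\kappa=0$ are treated as remainders.

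On each cell we approximate the form factor $f^i_x(k)$ by its value at $\kappa$. For $x\in\mathcal C_e$ this costs $\abs{k-\kappa}\abs x\lesssim n_e/L$ in the phase $\mathrm e^{-ik\cdot x}$, and $O(1/L)$ in $P_\perp(k)$ and $(2\abs k)^{-1/2}$ away from $k=0$. The approximating field only sees the vectors $\chi_{\mathcal Q_\kappa}$, suitably normalised. We therefore decompose $\mathfrak h=\mathfrak h_L\oplus\mathfrak h_L^\perp$, where $\mathfrak h_L$ is spanned by the normalised cell indicators times polarisation vectors. Then $\mathcal F(\mathfrak h)\cong\mathcal F(\mathfrak h_L)\otimes\mathcal F(\mathfrak h_L^\perp)$.

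We next write $d\Gamma(\omega)\ge d\Gamma(\omega_L)$ with $\omega_L$ the piecewise-constant minorant $\min_{\mathcal Q_\kappa}\abs k$. The cells are chosen so that this does not mix $\mathfrak h_L$ with its complement. On $\mathfrak h_L$ the operator $\omega_L$ differs from $\abs\kappa$ by a factor $1+O(1/(L\abs\kappa))$, and we absorb this by a relative factor.

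Then split $A=A^{(L)}+b$, where $A^{(L)}$ acts only on $\mathcal F(\mathfrak h_L)$ and is unitarily equivalent to $A^L$ on $\T_L$. In the Schrödinger representation $b_q$ is a band-limited Gaussian field. Applying (W$-$) configurationwise gives
\[
h^\infty\ \ge\ (1-\delta)\bigl(h^{L}\otimes1\bigr)-\frac{C_W}{\delta}\norm{b_q}^2_{\infty,\mathcal C_e}+d\Gamma(\omega|_{\mathfrak h_L^\perp}).
\]
The last two terms form a quadratic bosonic operator. The main obstacle is to show it is bounded below by $-\delta$ for large $L$, uniformly, since $\norm{b}_\infty$ is not a quadratic form.

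For that bound, first replace the sup norm on $\mathcal C_e$ by a Sobolev norm $\norm{b}^2_{H^2(\mathcal C_e')}$ using band-limitation, as in Remark~\ref{rem:smooth}. This is a quadratic form $\Phi$-squared in the fields. The Sobolev constant is uniform because $\abs k<\pi$ bounds all derivatives.

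A quadratic field operator $d\Gamma(\omega)-\sum_m\Phi(h_m)^2$ has ground energy $-\frac12\Tr[\omega-(\omega^2-C)_+^{1/2}]$-type, which is at least $-c\sum_m\norm{\omega^{-1/2}h_m}^2$ when $C\ll\omega^2$. This follows from the explicit formula \eqref{eq:Efield}, or from $\Phi(h)^2\le 2\norm{\omega^{-1/2}h}^2(d\Gamma(\omega)+\frac12)$.

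The weight $\sum_m\norm{\omega^{-1/2}h_m}^2$ is controlled by the $L^2$ mass of the form factor differences, of order $(n_e/L)^2$ times an $\int\abs k^{-2}\,dk<\infty$ term, plus the boundary and near-zero cells, whose measure is $O(1/L)$. It therefore tends to $0$, and choosing $L_0$ gives $\ge-\delta$, with the factor $\frac1{1+\delta}$ coming from the $\omega_L$ comparison. Infrared behaviour near $k=0$, where $\omega^{-1}$ is singular, is the delicate point. There I would use that the $k=0$ cell has volume $L^{-3}$ and integrand $\abs k^{-2}$, which is integrable.

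For (b), the lower bound is case (ii) of (a). The upper bound repeats the argument with the roles reversed. Embed the torus ground state, via $\mathcal F(\mathfrak h_L)\cong\mathcal F_L$, tensored with the vacuum on $\mathfrak h_L^\perp$, as a trial state. Use (W$+$), and bound $\langle\Omega,\norm b^2\Omega\rangle$ and the kinetic energy difference $\omega$ versus $\omega_L$ by the same small quantities.

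Uniformity in $\rho$ comes from $\int\rho=1$ and $\supp\rho\subset\mathcal C_e$, which enter only through $C_W=g$ and $n_e$.
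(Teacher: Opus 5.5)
There is a genuine gap at the infrared end and at the sphere. It hides the one step of the paper's proof that your outline does not contain.

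\emph{The frequency comparison is not uniform.} With cubic cells centred at $\kappa$ and only the cell $\kappa=0$ removed, take $\kappa=\frac{2\pi}L\mathbf e_1$: then $\min_{\mathcal Q_\kappa}\abs k=\pi/L=\abs\kappa/2$ for every $L$. So $\omega_L\ge(1-\delta)\abs\kappa$ fails on all cells with $\abs\kappa\lesssim1/(\delta L)$, and there is an $L$-independent number of them. The factor $1+O(1/(L\abs\kappa))$ therefore cannot be absorbed. What you actually obtain is a torus model in which these modes have lowered frequencies, and that operator lies \emph{below} $h^L$, which is the wrong direction.

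\emph{The cell field is not equivalent to $A^L$.} The paper avoids the infrared problem by keeping only the inner modes $\abs\kappa\ge r=L^{-1/2}$, $\abs\kappa+d_L<\pi$, for which $\omega^-_\kappa\ge(1-d_L/r)\abs\kappa$ with $d_L/r\to0$. But then the cell field is unitarily equivalent only to the inner part $A^{L,\mathcal M}$ of the torus field, not to $A^L$. This is already the case with your own choice of putting the cells meeting the sphere into the remainder. Hence your inequality $h^\infty\ge(1-\delta)(h^L\otimes1)-\dots$ is not justified.

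\emph{The missing step is Step~4 of the paper.} You need the upper bound $\inf\sigma(h^L)\le(1+\delta)\inf\sigma(h^{\mathcal M})+\frac\pi2\eps'_L$. It comes from trial states $\Psi\otimes\Omega_{\mathcal O}$ with the outer torus modes (the low modes and the shell near the sphere) in the vacuum, together with (W$+$), (T1)--(T3) and Lemma~\ref{lem:errorkernel}(b). This step, not the $\omega_L$ comparison, is where the denominator $1+\delta$ of the statement comes from; a minorant of $\omega$ can only produce factors $1-\delta$.

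Your part (b) has the same hole. The torus ground state does not live in $\mathcal F(\mathfrak h_L)$, so you must first discard the outer torus modes from below, $\Efield^{\T_L}\ge(1-\delta)E^{\mathcal M}-2\pi g\delta^{-1}C_E(r+d_L)$. On the continuum side, the trial-state estimate $P\omega P\le(1+d_L/r)\abs\kappa$ again needs $\abs\kappa\ge r$.

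A second error is easier to repair. With $h$ the cellwise value at $\kappa$, the remainder $b=f-h$ is not orthogonal to $\mathfrak h_L$, because its cell averages do not vanish. So the operator $d\Gamma(\omega|_{\mathfrak h_L^\perp})-\frac{C_W}\delta\norm{b_q}^2_{\infty,\mathcal C_e}$ that you want to bound below by $-\delta$ is in fact unbounded below: nothing in it controls $-\Phi(v)^2$ for the $\mathfrak h_L$-component $v$ of $b$. You have to apply (T2) against the full $d\Gamma(\omega)$ and absorb $\eps_L\,d\Gamma(\omega)$ before passing to the minorant, as in Steps~2--3 of the paper. Correspondingly, in (b) the $\mathfrak h_L$-component of the remainder acts on $\Psi$. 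The relevant quantity is therefore not $\langle\Omega,\norm b^2\Omega\rangle$ but a bound in terms of $\langle d\Gamma(\omega)\rangle$.

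Finally, the form of (T2) you quote, $\Phi(h)^2\le2\norm{\omega^{-1/2}h}^2(d\Gamma(\omega)+\frac12)$, is false; a coherent state of a single mode violates it. The correct bound is $\Phi(h)^2\le4\norm{\omega^{-1/2}h}^2d\Gamma(\omega)+2\norm h^2$.
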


The proof uses only \eqref{eq:W}, the boundedness of the band, and the infrared
integrability $\int_{\abs k<r}\abs k^{-2}dk=4\pi r$ in three dimensions; the same argument works in all
dimensions $d\ge3$ (for $d=2$ the infrared term diverges, cf.\ Section~\ref{sec:remarks}).

\subsection{The error kernel}
Let $d_L:=\sqrt3\pi/L$, $B_\kappa:=\kappa+[-\frac\pi L,\frac\pi L)^3$ and $r:=L^{-1/2}$ (so $r\ge2d_L$ for
$L\ge119$). The inner modes are $\mathcal M:=\{\kappa:\abs\kappa\ge r,\ \abs\kappa+d_L<\pi\}$ and the
outer modes $\mathcal O$ are the remaining ones. The vectors
$v_{\kappa j}:=\abs{B_\kappa}^{-1/2}\mathbf 1_{B_\kappa}\mathbf e_j$, $\kappa\in\mathcal M$, are orthonormal; let
$\mathfrak h_c$ be their span, $P$ the projection onto it, and $b_{\kappa j}=a(v_{\kappa j})$. The
cell field $\tilde A_i(x):=\Phi(h^i_x)$ with
$h^i_x(k)_j:=(2\pi)^{-3/2}(2\abs\kappa)^{-1/2}\mathrm e^{-i\kappa\cdot x}P_\perp(\kappa)_{ji}$ for $k\in B_\kappa$ is
unitarily equivalent to the inner part $A^{L,\mathcal M}$ of the torus field
($b\leftrightarrow a^L$). The remainder is $R_i(x)=\Phi(\check f^i_x)$, $\check f^i_x:=f^i_x-h^i_x$.

\begin{lemma}[Error kernel]\label{lem:errorkernel}
Let $\rho_0\ge\max(1,\sqrt3n_e/2)$, $\abs x\le\rho_0$, $\abs\gamma\le2$ and
$\varsigma:=r+d_L+d_L^2((1+\rho_0)^2+r^{-1})$. There is an absolute constant $C_E$ such that:
(a) $\norm{\omega^{-1/2}\partial^\gamma_x\check f_x}^2\le C_E\varsigma$ and $\norm{\partial^\gamma_x\check f_x}^2\le\pi C_E\varsigma$;
(b) for the outer torus modes,
$\sum_{\kappa\in\mathcal O,i,j}\abs{\partial^\gamma_xc^i_{x,\kappa j}}^2/\abs\kappa\le C_E(r+d_L)$, where
$c^i_{x,\kappa j}$ are the coefficients of $A^L$.
With $r=L^{-1/2}$, $\varsigma\to0$ as $L\to\infty$.
\end{lemma}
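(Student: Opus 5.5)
We compute directly in momentum space, splitting $\mathcal K$ into the inner cells $B_\kappa$ with $\kappa\in\mathcal M$ and the remaining region $\mathcal K_{\mathrm{out}}:=\mathcal K\setminus\bigcup_{\kappa\in\mathcal M}B_\kappa$. Since $r\ge2d_L$, each inner cell lies in the shell $r/2\le\abs k<\pi$. The region $\mathcal K_{\mathrm{out}}$ is contained in the union of the small ball $\abs k<r+d_L$ and the outer shell $\pi-2d_L\le\abs k<\pi$.

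On $\mathcal K_{\mathrm{out}}$ we have $h_x=0$, so $\check f_x=f_x$ there. In this region we bound $\abs{\partial^\gamma_x f_x(k)}^2\le C\abs k^{2\abs\gamma}/\abs k\le C/\abs k$, because $\abs k<\pi$. The infrared ball then contributes
\[
\int_{\abs k<r+d_L}\abs k^{-2}\,dk=4\pi(r+d_L)
\]
to the weighted norm $\norm{\omega^{-1/2}\cdot}^2$. The outer shell has volume $O(d_L)$ and $\omega$ is bounded below by a constant there, so it contributes $O(d_L)$. Both pieces are absorbed into $C_E\varsigma$.

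On an inner cell, $k\in B_\kappa$ with $\abs{k-\kappa}\le d_L$. Write the difference as
\[
\check f^i_x(k)=(2\pi)^{-3/2}\bigl[F(k)-F(\kappa)\bigr],\qquad F(k):=(2\abs k)^{-1/2}\mathrm e^{-ik\cdot x}P_\perp(k)_{\cdot i},
\]
and similarly for the $\partial_x^\gamma$ derivatives, which only insert polynomial factors $(-ik)^\gamma$ with $\abs k\le\pi$. The main step is a pointwise gradient bound for $F$ and its $x$-derivatives:
\[
\abs{\nabla_kF}\le C\bigl(\abs k^{-3/2}+\abs x\,\abs k^{-1/2}\bigr).
\]
The factor $\abs k^{-3/2}$ comes from $\abs k^{-1/2}$ and from $P_\perp$, whose gradient is $O(1/\abs k)$; the factor $\abs x$ comes from the phase. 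Since $\abs k\ge r/2$ on the segment from $\kappa$ to $k$, the mean value theorem gives
\[
\abs{\check f_x(k)}^2\le Cd_L^2\bigl(\abs k^{-3}+(1+\rho_0)^2\abs k^{-1}\bigr).
\]
Integrating against $\omega^{-1}$ over $r/2\le\abs k<\pi$ gives
\[
\int_{r/2}^{\pi}\bigl(s^{-4}+(1+\rho_0)^2s^{-2}\bigr)s^2\,ds\le C\bigl(r^{-1}+(1+\rho_0)^2\bigr).
\]
This yields the term $d_L^2\bigl((1+\rho_0)^2+r^{-1}\bigr)$ in $\varsigma$ and proves the first half of (a). The second inequality of (a) follows from $\omega\le\pi$ on $\mathcal K$, which gives $\norm{\cdot}^2\le\pi\norm{\omega^{-1/2}\cdot}^2$.

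For (b), the coefficients of $A^L$ satisfy $\abs{\partial^\gamma c^i_{x,\kappa j}}^2\le CL^{-3}\abs\kappa^{-1}$. The outer modes have either $\abs\kappa<r$ or $\pi-d_L\le\abs\kappa<\pi$. We compare the lattice sums with integrals over the cells $B_\kappa$, each of volume $(2\pi/L)^3$. For the small modes this gives
\[
L^{-3}\sum_{0<\abs\kappa<r}\abs\kappa^{-2}\le C\int_{\abs k<2r}\abs k^{-2}\,dk\le Cr,
\]
using $\abs k\le2\abs\kappa$ on $B_\kappa$, since $\abs\kappa\ge2\pi/L\ge d_L$. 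For the modes near $\abs\kappa=\pi$, the shell contains $O(L^3d_L)$ lattice points, so the sum is $O(d_L)$. Finally, with $r=L^{-1/2}$ and $d_L\sim L^{-1}$ we have $d_L^2r^{-1}\sim L^{-3/2}$, so $\varsigma\to0$.

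The main obstacle is keeping the gradient estimate for $P_\perp(k)\abs k^{-1/2}$ uniform near $k=0$ while still obtaining a factor $r^{-1}$ rather than something worse. This is why the inner cells must stay at distance $\ge r/2$ from the origin, i.e. why the condition $r\ge2d_L$ is needed.
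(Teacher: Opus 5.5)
Your proof is correct and follows the paper's argument essentially step for step: on the non-inner region $\check f=f$ and $\omega^{-1}\abs{\partial^\gamma f}^2\le C\abs k^{-2}$; on the inner cells you use the mean value theorem with the gradient bound for $F$; and for (b) you compare the lattice sums over the outer modes with integrals. One small point: your pointwise bound stated in terms of $\abs k$ needs $\abs{k'}\ge\abs k/3$ on the segment from $\kappa$ to $k$ (which follows from $\abs\kappa\ge r\ge2d_L$), which is slightly more than the $\abs{k'}\ge r/2$ you state — this comparability is the real role of the condition $r\ge2d_L$.
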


\begin{proof}
(1) On the non-inner region $\mathcal N\subset\{\abs k<r+d_L\}\cup\{\pi-2d_L\le\abs k<\pi\}$ we have $\check f=f$
and $\omega^{-1}\abs{\partial^\gamma f}^2\le(2\pi)^{-3}\pi^4\abs k^{-2}$, whose integral is
$\le C(r+d_L)$ since $\int_{\abs k<r}\abs k^{-2}dk=4\pi r$. (2) On an inner cell
$\abs{k-\kappa}\le d_L$ and $\abs{k'}\ge\abs k/3$ on the segment joining them; the gradient of
$F(k)=(-ik)^\gamma \mathrm e^{-ik\cdot x}(2\abs k)^{-1/2}P_\perp(k)$ is
$\le C[(1+\rho_0)\abs k^{-1/2}+\abs k^{-3/2}]$, so
$\omega^{-1}\abs{\partial^\gamma\check f}^2\le Cd_L^2[(1+\rho_0)^2\abs k^{-2}+\abs k^{-4}]$; integrate over
$\{r/2\le\abs k<\pi\}$. (3) $L^{-3}\sum_{0<\abs\kappa<r}\abs\kappa^{-2}\le r$ (bound $\abs m^{-2}$ by its
cell average), and the number of lattice points in the boundary shell is $\le CL^3d_L$.
\end{proof}

We use the following standard facts.
(T1) $a^*(h)a(h)\le\norm{\omega^{-1/2}h}^2d\Gamma(\omega)$.
(T2) $\Phi(h)^2\le4\norm{\omega^{-1/2}h}^2d\Gamma(\omega)+2\norm h^2$.
(T3) Sobolev on the cube: $\sup_{\mathcal C_e}\abs u^2\le C_S\sum_{\abs\gamma\le2}\int_{\mathcal C_e}\abs{\partial^\gamma u}^2$ for $u:\mathcal C_e\to\R^3$. Here $C_S$ is
the constant for vector-valued functions, taken as three times the scalar constant, so that it absorbs the sum over
the three components when (T2) is applied componentwise; the factor $40=10\cdot4$ in Steps~2 and~4 below counts the
ten multi-indices $\abs\gamma\le2$ and the factor $4$ of (T2).
(T4) $R$ is a centred Gaussian field with smooth covariance, has a smooth version
(Remark~\ref{rem:smooth}), and for a.e.\ $q$ one has $A_q=\tilde A_q+R_q$ for all $x\in\mathcal C_e$.
(T5) $0\le\omega_1\le\omega_2$ implies $d\Gamma(\omega_1)\le d\Gamma(\omega_2)$.

\subsection{Proof of Theorem~\ref{thm:periodization}}
\emph{(a)} Step 1 (field). Let $\omega^-_\kappa:=\inf_{B_\kappa}\abs k\ge(1-d_L/r)\abs\kappa$. Then
$\omega_-:=\sum_{\mathcal M}\omega^-_\kappa\mathbf 1_{B_\kappa}\le\omega$ commutes with $P$, and by (T5)
$d\Gamma(\omega)\ge(1-d_L/r)\Hf^{\mathcal M}\otimes1$.

Step 2 (electron). By (T4), the left inequality in \eqref{eq:W}, (T3), (T2) and
Lemma~\ref{lem:errorkernel}(a) (ten multi-indices),
$\int^\oplus h_{\mathrm{el}}[A_q]\ge(1-\delta)\int^\oplus h_{\mathrm{el}}[\tilde A_q]-\eps_Ld\Gamma(\omega)-\frac\pi2\eps_L$
with $\eps_L:=40C_EC_WC_S\delta^{-1}n_e^3\varsigma\to0$.

Step 3. For all large $L$ we have $(1-\eps_L)(1-d_L/r)\ge1-\delta$, hence
$h^\infty\ge(1-\delta)h^{\mathcal M}\otimes1-\frac\pi2\eps_L$ with $h^{\mathcal M}:=\Hf^{\mathcal M}+h_{\mathrm{el}}[\tilde A]$.

Step 4 (the torus from above). Write $\mathcal F^L=\mathcal F(\mathcal M)\otimes\mathcal F(\mathcal O)$. By the right
inequality in \eqref{eq:W}, (T3), (T1), (T2) and Lemma~\ref{lem:errorkernel}(b), trial
states $\Psi\otimes\Omega_{\mathcal O}$ give
$\inf\sigma(h^L)\le(1+\delta)\inf\sigma(h^{\mathcal M})+\frac\pi2\eps'_L$ with
$\eps'_L:=40C_EC_WC_S(1+\delta^{-1})n_e^3(r+d_L)\to0$ (for all large $L$, $\eps'_L\le\delta$; note $\Hf^{\mathcal M}\ge0$).

Step 5. Combine Steps 3 and 4 and choose $L_0$ such that $\frac\pi2(\eps_L+\eps'_L)\le\delta$
for $L\ge L_0$. $L_0$ depends only on $(\delta,n_e,C_W)$.

\emph{(b)} Part (a) applied to example (ii) gives the lower inequality, uniformly in
$\rho$ since only $C_W=g$ enters. For the upper inequality let
$E^{\mathcal M}(\rho):=\inf\sigma(\Hf^{\mathcal M}+g\int\rho\abs{\tilde A}^2)$. On the torus,
$g\int\rho\abs{A^{\mathcal M}+A^{\mathcal O}}^2\ge(1-\delta)g\int\rho\abs{A^{\mathcal M}}^2-g\delta^{-1}\int\rho\abs{A^{\mathcal O}}^2$, and
(T1), (T2) and Lemma~\ref{lem:errorkernel}(b) with $\gamma=0$ give
$\Efield^{\T_L}\ge(1-\delta)E^{\mathcal M}-2\pi g\delta^{-1}C_E(r+d_L)$; no supremum over $x$ is needed
for fixed $\rho$. On $\R^3$ use trial states $\Psi\otimes\Omega_\perp$ in
$\mathcal F(\mathfrak h)=\mathcal F(\mathfrak h_c)\otimes\mathcal F(\mathfrak h_c^\perp)$: then
$\ip{\Psi\otimes\Omega_\perp}{d\Gamma(\omega)\Psi\otimes\Omega_\perp}=\ip\Psi{d\Gamma(P\omega P)\Psi}\le(1+d_L/r)\ip\Psi{\Hf^{\mathcal M}\Psi}$,
since $P\omega P$ is diagonal and $\le(1+d_L/r)\abs\kappa$. Moreover
$g\int\rho\abs{\tilde A+R}^2\le(1+\delta)g\int\rho\abs{\tilde A}^2+g(1+\delta^{-1})\int\rho\abs R^2$, and the component of
$\check f^i_x$ in $\mathfrak h_c$ acts on $\Psi$; by (T2) and Lemma~\ref{lem:errorkernel}(a) with $\gamma=0$ (recall $\int\rho=1$ and
$\supp\rho\subset\mathcal C_e$),
$\ip{\Psi\otimes\Omega_\perp}{\int\rho\abs R^2\,\Psi\otimes\Omega_\perp}\le4C_E\varsigma\ip{\Psi\otimes\Omega_\perp}{d\Gamma(\omega)\Psi\otimes\Omega_\perp}+2\pi C_E\varsigma$.
Hence the energy of $\Psi\otimes\Omega_\perp$ is at most
$(1+d_L/r)(1+4gC_E\varsigma(1+\delta^{-1}))\ip\Psi{\Hf^{\mathcal M}\Psi}+(1+\delta)\ip\Psi{g\int\rho\abs{\tilde A}^2\Psi}+2\pi g(1+\delta^{-1})C_E\varsigma$.
For $L$ so large that $(1+d_L/r)(1+4gC_E\varsigma(1+\delta^{-1}))\le1+\delta$, the infimum over $\Psi$ gives
$\Efield^{\R^3}\le(1+\delta)E^{\mathcal M}+2\pi g(1+\delta^{-1})C_E\varsigma$. Combine. \qed

\begin{remark}\label{rem:period-last}
In the proof of the lower bound (Section~\ref{sec:proof}) the Born--Oppenheimer step,
the block decomposition and the Neumann bound are all carried out in the finite mode
model on $\T_L$, and part (a) is used once, at the end. In particular, no
Born--Oppenheimer argument is needed for the continuum field. For the upper bound,
part (b) is used once, for a fixed density (Section~\ref{sec:upper}).
\end{remark}

%% file: sec6_blockBO.tex
\section{A block Born--Oppenheimer lower bound}\label{sec:blockBO}

This section turns the Born--Oppenheimer bound $h\ge\Hf+W(q)$ of Remark~\ref{rem:Qrep} into a lower
bound by $(1-\nu^{-2})T(\abs\psi)+\beta\norm\psi_1$ up to small errors, configuration by configuration of the
field. The input on the electron side is the Neumann lower bound of Section~\ref{sec:neumann}; the result
is Proposition~\ref{prop:configBO}.

\subsection{The finite mode model on the torus}\label{sec:finite-model}
On $\T_L$ ($L$ an odd integer, see Section~\ref{sec:upper}) the field consists of the physical modes (transverse, open
ball $\abs k<\pi$, $\omega=\abs k$) and auxiliary oscillators (longitudinal modes and the
modes in the cube minus the ball), which are uncoupled and have frequency $\pi$ (any frequencies in $(0,\pi]$ depending only on the Fourier mode would do; what is used is that $\omega$ commutes with translations). The coordinates $q\in\R^d$ are the coordinates with respect to the orthonormal
basis $\{\sinc_a\mathbf e_i\}$ of the cubic band (Section~\ref{sec:upper}); $X:\R^d\to L^2(\T_L;\R^3)$ is the
corresponding unitary and $B_q:=\Pi'Xq$ is the physical field. In the representation of
Remark~\ref{rem:Qrep}, $\Hf=\frac12(\abs p^2+q\cdot\omega^2q)-\frac12\Tr\omega$. Since the auxiliary
oscillators are uncoupled, neither $\inf\sigma$ nor $\Efield^{\T_L}$ depends on their
frequencies. They are added because the sinc basis provides coordinates that are
orthogonal block by block, which allows us to apply Lemma~\ref{lem:hardy} to all blocks
simultaneously.

Let $s\in\N$, let $\mathcal C_{\mathrm b}\subset\T_L$ be a box of side $n_{\mathrm b}\in s\N$ with corners in $\Z^3$ (the block box), and
$L\ge\max(n_{\mathrm b}+8s,s^2)$. Let $\mathcal P$ be the partition of $\mathcal C_{\mathrm b}$ into cubes of side $s$,
$S_Q:=\mathrm{span}\{\sinc_a\mathbf e_i:a\in Q\}$ (mutually orthogonal), $S_{\mathrm{in}}:=\bigoplus_QS_Q$ and
$P_{\mathrm{in}}$ the projection onto it. We use the weights
$\mathrm w_Q(z):=(1+\dist(z,Q)/s)^{-4}$, the constant $C_w:=\sup_x\sum_{Q\in\mathcal P}\mathrm w_Q(x)\le43$, and
the weight function $\theta:=\theta_0\sum_Q\mathrm w_Q/C_w$ with $\theta_0\in(0,\frac12]$ (the exponent $4$ is the
smallest integer for which $C_w$ is bounded independently of the block box, since about $j^2$ cubes lie at
distance $\approx js$ from a point; any fixed $\theta_0<1$ would do, and we take $\theta_0=\frac12$ below). The projected
field mass is $u'_Q(q):=\norm{P'_Q\mathbf 1_QB_q}^2_{L^2(Q)}$, where $P'_Q$ is the projection in
$L^2(Q;\C^3)$ onto the orthogonal complement of $\nabla H^1(Q)$. Finally $E_0^N(Q;b)$ denotes
the ground state energy of the operator associated with the form $\int_Q\abs{(\nabla+ib)v}^2$,
$v\in H^1(Q)$ (the magnetic Neumann Laplacian), and $T(\psi)=\frac\lambda2\norm{\nabla\psi}^2$.

\subsection{An abstract lower bound}
Let $d\in\N$, let $\omega^2$ be a positive definite matrix on $\R^d$, and let
$\Hf:=\frac12(\abs p^2+q\cdot\omega^2q)-\frac12\Tr\omega$ on $L^2(\R^d)$, $p=-i\nabla_q$.

\begin{assumption}\label{ass:H}
$\mathcal P$ is a finite index set and $\abs Q>0$ for $Q\in\mathcal P$.
\begin{enumerate}[label=(H\arabic*)]
\item (Block coordinates.) $S_Q\subset\R^d$, $Q\in\mathcal P$, are mutually orthogonal subspaces, and
$P_{\mathrm{in}}$ is the projection onto $S_{\mathrm{in}}:=\bigoplus_QS_Q$.
\item (Gram condition.) For each $Q$ there is a linear map $r_Q$ from $\R^d$ to a real
Hilbert space such that, with $u_Q(q):=\norm{r_Q(q)}^2$, $Y_Q:=r_Q\restriction_{S_Q}$ and
$G'_Q:=Y_Q^*Y_Q$, one has $\norm{G'_Q}\le1$. Set $h_Q:=(\tr G'_Q-2)_+$.
\item (Weight.) There are a matrix $0\le\Theta\le\theta_0\omega^2$ with $\theta_0\in(0,1)$, non-negative
quadratic forms $\mathcal E_Q(q)$ and a constant $c_\theta>0$ with
$\frac12q\cdot\Theta q\ge c_\theta\sum_Q\mathcal E_Q(q)$. Set $\omega_\theta:=(\omega^2-\Theta)^{1/2}$ and
$\mathcal D_\theta:=\frac12\Tr(\omega-\omega_\theta)$.
\item (Electron.) There are a set $\mathfrak S$ of electron states, functions $T\ge0$ and
$m_Q\ge0$ on $\mathfrak S$ with $\sum_Qm_Q\le1$, $\nu\ge1$, and measurable functions $f_Q(q)\ge0$
such that the electron energy $\mathcal W$ (a measurable function of $q$, bounded below)
satisfies
\[
  \mathcal W(q)\ \ge\ \inf_{\varphi\in\mathfrak S}\Bigl[\Bigl(1-\frac1{\nu^2}\Bigr)T(\varphi)+\sum_Qm_Q(\varphi)f_Q(q)\Bigr],
\]
and there are constants $\gamma,u_b,\Upsilon>0$ such that $f_Q(q)\ge\gamma\min\{u_Q(q),u_b\}/\abs Q$
whenever $\mathcal E_Q(q)\le\Upsilon u_Q(q)$, and $c_\theta\Upsilon\ge\gamma/\abs Q$ for all $Q$.
\end{enumerate}
\end{assumption}

\begin{theorem}[Abstract block Born--Oppenheimer bound]\label{thm:abstractBO}
Assume (H1)--(H4), let $\bar\rho_Q:=m_Q/\abs Q$ and
\[
  Y:=\inf_{\varphi\in\mathfrak S}\Bigl[\Bigl(1-\frac1{\nu^2}\Bigr)T(\varphi)+\sum_Qh_Q\sqrt{\tfrac{\gamma\bar\rho_Q(\varphi)}{2}}\Bigr]-\sum_Q\frac{h_Q^2}{2u_b}.
\]
Then (i) $\frac12\abs{P_{\mathrm{in}}p}^2+\mathcal W(q)+\frac12q\cdot\Theta q\ge Y$ as forms on $L^2(\R^d)$;
(ii) $\inf\sigma(\Hf+\mathcal W)\ge Y-\frac12\Tr(P_{\mathrm{in}}\omega P_{\mathrm{in}})-\mathcal D_\theta$.
\end{theorem}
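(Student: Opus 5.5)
The plan is to prove (i) by a Hardy inequality in field space on each block plus a pointwise (in $q$) optimization, and to deduce (ii) from (i) by an exact bound for the quadratic part of $\Hf$ in the remaining directions.

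\emph{Step 1 (Hardy inequality on each block).} Fix $Q$ and write $t:=\tr G'_Q$, $u:=u_Q$. I will show
\[
  \tfrac12\abs{P_{S_Q}p}^2\ \ge\ \frac{h_Q^2}{8u_Q(q)}
\]
as forms. Here $P_{S_Q}$ is the projection onto $S_Q$. The tool is the ground-state-substitution identity: for a vector field $V$ on $\R^d$ with values in $S_Q$,
\[
  \int\abs{\nabla_{S_Q}f}^2\ \ge\ \int(\Div_{S_Q}V-\abs V^2)\abs f^2 .
\]
I take $V=c\nabla_{S_Q}u/u$. Since $u=\norm{r_Q(q)}^2$ is quadratic, $\nabla_{S_Q}u=2Y_Q^*r_Q(q)$ and $\Delta_{S_Q}u=2t$. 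Moreover $\norm{G'_Q}\le1$ gives $\abs{\nabla_{S_Q}u}^2\le4u$. Hence
\[
  \Div V-\abs V^2=\frac{2ct}{u}-(c+c^2)\frac{\abs{\nabla u}^2}{u^2}\ \ge\ \frac{2c(t-2)-4c^2}{u},
\]
and the choice $c=(t-2)_+/4$ gives $h_Q^2/(4u)$. The singular set $\{u=0\}$ is handled by working with $u+\eps$, which only improves the inequalities, and letting $\eps\downarrow0$ by monotone convergence. The $S_Q$ are mutually orthogonal by (H1), so summing over $Q$ gives
\[
  \tfrac12\abs{P_{\mathrm{in}}p}^2=\sum_Q\tfrac12\abs{P_{S_Q}p}^2\ge\sum_Q\frac{h_Q^2}{8u_Q}.
\]

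\emph{Step 2 (pointwise optimization; proof of (i)).} After Step 1 the left side of (i) is bounded below by a multiplication operator. By (H3) and (H4), for each $q$ this operator is at least
\[
  \inf_{\varphi}\Bigl[\bigl(1-\nu^{-2}\bigr)T(\varphi)+\sum_Q\Bigl(\frac{h_Q^2}{8u_Q}+m_Qf_Q+c_\theta\mathcal E_Q\Bigr)\Bigr].
\]
It then suffices to prove, for each $Q$ and each $q$, that with $\bar\rho=\bar\rho_Q(\varphi)$ one has
\[
  \frac{h^2}{8u}+m_Qf_Q+c_\theta\mathcal E_Q\ \ge\ h\sqrt{\gamma\bar\rho/2}-\frac{h^2}{2u_b}.
\]
I split into three cases.
\begin{itemize}
\item If $\mathcal E_Q>\Upsilon u_Q$, then $c_\theta\mathcal E_Q\ge c_\theta\Upsilon u_Q\ge\gamma u_Q/\abs Q\ge\gamma\bar\rho u_Q$, using $m_Q\le1$. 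The AM--GM inequality $\frac{h^2}{8u}+\gamma\bar\rho u\ge h\sqrt{\gamma\bar\rho/2}$ then gives the claim.
\item If $\mathcal E_Q\le\Upsilon u_Q$ and $u_Q\le u_b$, then $m_Qf_Q\ge\gamma\bar\rho u_Q$ and the same AM--GM step applies.
\item If $\mathcal E_Q\le\Upsilon u_Q$ and $u_Q>u_b$, then $m_Qf_Q\ge\gamma\bar\rho u_b$. Dropping $h^2/(8u)$ and using $h\sqrt{\gamma\bar\rho/2}\le\gamma\bar\rho u_b+h^2/(8u_b)$ gives the claim, with room to spare.
\end{itemize}
Summing over $Q$ yields (i).

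\emph{Step 3 (the rest of $\Hf$; proof of (ii)).} Let $P_{\mathrm{out}}:=1-P_{\mathrm{in}}$. Decompose
\[
  \Hf=\tfrac12\abs{P_{\mathrm{in}}p}^2+\tfrac12q\cdot\Theta q+\Bigl[\tfrac12\abs{P_{\mathrm{out}}p}^2+\tfrac12q\cdot\omega_\theta^2q\Bigr]-\tfrac12\Tr\omega .
\]
For the bracket, set $Z:=P_{\mathrm{out}}p+iP_{\mathrm{out}}\omega_\theta q$ (componentwise). Then
\[
  0\le Z^*Z=\abs{P_{\mathrm{out}}p}^2+q\cdot\omega_\theta P_{\mathrm{out}}\omega_\theta q-\Tr(P_{\mathrm{out}}\omega_\theta P_{\mathrm{out}}).
\]
Since $\omega_\theta P_{\mathrm{out}}\omega_\theta\le\omega_\theta^2$, the bracket is at least
\[
  \tfrac12\Tr(P_{\mathrm{out}}\omega_\theta P_{\mathrm{out}})=\tfrac12\Tr\omega_\theta-\tfrac12\Tr(P_{\mathrm{in}}\omega_\theta P_{\mathrm{in}}).
\]
By operator monotonicity of the square root, $\omega_\theta\le\omega$, so the bracket is at least $\frac12\Tr\omega_\theta-\frac12\Tr(P_{\mathrm{in}}\omega P_{\mathrm{in}})$. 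Combining this with (i) for the remaining terms gives (ii).

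\emph{Main obstacle.} I expect the main care to be needed in Step 1. One must verify the exact constant $h_Q^2/8$, which uses the Gram bound $\norm{G'_Q}\le1$, and must justify the Hardy inequality across the zero set of $u_Q$, which may have small codimension. The regularization $u_Q+\eps$ together with a form-core argument should settle the latter.
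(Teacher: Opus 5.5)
Your proof is correct. For (i) it follows the paper: the Hardy inequality in field space on each $S_Q$, then the $\varphi$-independent case distinction and the AM--GM step. Your field $c\nabla_{S_Q}u/(u+\eps)=2cY_Q^*r_Q/(u+\eps)$ is the paper's $tY_Q^*r_Q/(\abs{r_Q}^2+\eps)$ with $t=2c$; you insert $\norm{G'_Q}\le1$ at once instead of optimizing and then using monotonicity of $(T-2x)^2/x$. Your three cases merge the paper's Steps~3 and~4.

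The genuine difference is in (ii). The paper separates the vacuum energy with Lemma~\ref{lem:vacuum}. It uses a Schur complement decomposition of $\omega_\theta^2$ and the exact ground state energy $\frac12\Tr(\omega^2_{oo})^{1/2}$ of the shifted out-oscillator in each fiber of fixed in-coordinates. It then applies the Jensen-type trace inequality of Lemma~\ref{lem:hansen}(a) to pass to $\frac12\Tr(P_{\mathrm{out}}\omega_\theta P_{\mathrm{out}})$.

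You reach that final bound in one line, from positivity of a sum of squares of projected annihilation-type operators. This needs no Schur complement, hence no invertibility of $\omega_\theta^2$ (only $\Theta\le\omega^2$), and no trace inequality. The paper's route gives the sharper intermediate bound $\frac12\Tr(P_{\mathrm{out}}\omega_\theta^2P_{\mathrm{out}})^{1/2}$, but it is discarded right away, so you lose nothing. Your argument also works for any orthogonal splitting, as Proposition~\ref{prop:configBO}(v) requires.

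One sign must be fixed. With $p=-i\nabla_q$, your $Z=P_{\mathrm{out}}(p+i\omega_\theta q)$ gives $\sum_jZ_j^*Z_j=\abs{P_{\mathrm{out}}p}^2+q\cdot\omega_\theta P_{\mathrm{out}}\omega_\theta q+\Tr(P_{\mathrm{out}}\omega_\theta P_{\mathrm{out}})$, which is useless. In one mode, $p-i\omega q$, not $p+i\omega q$, annihilates the Gaussian $\mathrm e^{-\omega q^2/2}$. Take $Z=P_{\mathrm{out}}(p-i\omega_\theta q)$, or equivalently use $\sum_jZ_jZ_j^*\ge0$; then the identity you wrote holds and the rest of Step~3 goes through. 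This includes $\frac12\Tr\omega-\frac12\Tr\omega_\theta=\mathcal D_\theta$, which you used implicitly.
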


The proof is given in Section~\ref{sec:proof-abstractBO}; its Steps~1--3 give the
configuration-wise form used in Proposition~\ref{prop:configBO}. The mechanism is the
arithmetic--geometric mean inequality $\frac{h^2}{8u}+\gamma\bar\rho\,u\ge h\sqrt{\gamma\bar\rho/2}$: the kinetic energy of the
field coordinates of a block (a Hardy inequality, Lemma~\ref{lem:hardy}) is inversely proportional to the
local field mass $u$, while the electron energy (H4) is proportional to it. The minimum over $u$ does
not depend on the field, and summing over the blocks gives $\beta\norm\varphi_1$.

\subsection{Tools}
\begin{lemma}[Separation of the vacuum energy]\label{lem:vacuum}
Let $\R^d=S_{\mathrm{in}}\oplus S_{\mathrm{out}}$ (orthogonal), let $\mathcal V(q)$ be a multiplication operator
bounded below, and let $\omega^2>0$ be any frequency matrix. Then
$\frac12(\abs p^2+q\cdot\omega^2q)-\frac12\Tr\omega+\mathcal V\ge\frac12\abs{p_{\mathrm{in}}}^2+\mathcal V(q)-\frac12\Tr(P_{\mathrm{in}}\omega P_{\mathrm{in}})$.
\end{lemma}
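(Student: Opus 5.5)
We need to prove: $\frac12(|p|^2+q\cdot\omega^2q)-\frac12\Tr\omega+\mathcal V\ge\frac12|p_{\rm in}|^2+\mathcal V-\frac12\Tr(P_{\rm in}\omega P_{\rm in})$.

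The plan is to discard the kinetic energy in the outer directions and bound the harmonic part from below by a vacuum energy on the inner block alone. Since $\mathcal V$ appears on both sides, it cancels, and it suffices to prove the operator inequality
\[
  \tfrac12\bigl(\abs{p_{\mathrm{out}}}^2+q\cdot\omega^2q\bigr)-\tfrac12\Tr\omega\ \ge\ -\tfrac12\Tr(P_{\mathrm{in}}\omega P_{\mathrm{in}}),
\]
where we have written $\abs p^2=\abs{p_{\mathrm{in}}}^2+\abs{p_{\mathrm{out}}}^2$ for the orthogonal splitting $\R^d=S_{\mathrm{in}}\oplus S_{\mathrm{out}}$.

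The left side is a quadratic Hamiltonian that is degenerate in the inner directions, since it has no inner momentum. We therefore regard $q_{\mathrm{in}}$ as a parameter, or equivalently treat $p_{\mathrm{in}}$ as absent. In the Fourier representation in $q_{\mathrm{in}}$ the operator is not decomposable, because $q_{\mathrm{in}}$ acts by multiplication. However, with no $p_{\mathrm{in}}$ term it \emph{is} decomposable as a direct integral over $q_{\mathrm{in}}$. For fixed $q_{\mathrm{in}}$ the fibre operator is a harmonic oscillator in $q_{\mathrm{out}}$ with frequency matrix $\omega_{\mathrm{oo}}^2:=P_{\mathrm{out}}\omega^2P_{\mathrm{out}}$, plus a linear term, plus a quadratic form in $q_{\mathrm{in}}$.

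Completing the square in $q_{\mathrm{out}}$, the fibre ground energy equals $\frac12\Tr(\omega_{\mathrm{oo}}^2)^{1/2}+\frac12 q_{\mathrm{in}}\cdot S q_{\mathrm{in}}$. Here $S$ is the Schur complement of the block matrix $\omega^2$, which is $\ge0$ because $\omega^2>0$. It therefore remains to show
\[
  \Tr\bigl(P_{\mathrm{out}}\omega^2P_{\mathrm{out}}\bigr)^{1/2}\ \ge\ \Tr\omega-\Tr(P_{\mathrm{in}}\omega P_{\mathrm{in}})=\Tr(P_{\mathrm{out}}\omega P_{\mathrm{out}}).
\]
This is Lemma~\ref{lem:hansen}(a) with $f=\sqrt{\cdot}$, $X=\omega^2$ and $K=P_{\mathrm{out}}$, a contraction of finite rank. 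Equivalently, it is Jensen's inequality for the concave square root applied to each eigenvector of $P_{\mathrm{out}}\omega^2P_{\mathrm{out}}$.

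Adding back $\frac12\abs{p_{\mathrm{in}}}^2+\mathcal V\ge$ (both sides) closes the argument. The operator-theoretic points are that the form inequality survives adding $\frac12\abs{p_{\mathrm{in}}}^2+\mathcal V$ on a common core, such as $C_c^\infty$ or Schwartz functions intersected with the form domain of $\mathcal V$, and that the direct integral statement holds in the sense of forms. No step looks hard; the only subtle point is recognising that the correct comparison is the concavity inequality $\Tr(P\omega^2P)^{1/2}\ge\Tr P\omega P$ rather than an exact vacuum energy.
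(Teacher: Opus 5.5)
Your proposal is correct and is essentially the paper's proof. The paper also completes the square in $q_{\mathrm{out}}$ with the Schur complement and uses the fibrewise oscillator bound $\frac12\Tr(\omega_{oo}^2)^{1/2}$, which holds on the whole space because the operator contains no $p_{\mathrm{in}}$. It then concludes with Lemma~\ref{lem:hansen}(a) for $f=\sqrt{\cdot}$, $X=\omega^2$ and $K=P_{\mathrm{out}}$, exactly as you do.
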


\begin{proof}
Decompose $\omega^2$ into blocks; with the Schur complement $\Sigma\ge0$,
$q\cdot\omega^2q=(q_o+Gq_i)\cdot\omega^2_{oo}(q_o+Gq_i)+q_i\cdot\Sigma q_i$. In each fiber of fixed $q_i$
the shifted oscillator is $\ge\frac12\Tr(\omega^2_{oo})^{1/2}$; since it contains no $p_i$, the
fiberwise inequality is an inequality on the whole space. By Lemma~\ref{lem:hansen}(a)
with $f=\sqrt\cdot$ and $K=P_{\mathrm{out}}$, $\Tr(\omega^2_{oo})^{1/2}\ge\Tr(P_{\mathrm{out}}\omega P_{\mathrm{out}})$.
\end{proof}

The loss $\frac12\Tr(P_{\mathrm{in}}\omega P_{\mathrm{in}})$ is the zero-point energy of the field modes in the block box.
It is of order $n_{\mathrm b}^3$ (Proposition~\ref{prop:configBO}(ii)); its role for the condition
$\Xi\to\infty$ is explained in Remark~\ref{rem:window}.

\begin{lemma}[Phase multiplication]\label{lem:phase}
(a) Let $U\subset\R^3$ be open, $a\in L^\infty_{\mathrm{loc}}(U;\R^3)$, $\psi\in H^1_{\mathrm{loc}}(U)$,
$\nu\in\N$, and $u:=F_\nu\circ\psi$ with $F_\nu(z)=z^\nu/\abs z^{\nu-1}$, $F_\nu(0)=0$. Then
$\abs u=\abs\psi$, $u\in H^1_{\mathrm{loc}}$, and almost everywhere
\[
  \abs{(\nabla+ia)\psi}^2=\Bigl(1-\frac1{\nu^2}\Bigr)\abs{\nabla\abs\psi}^2+\frac1{\nu^2}\abs{(\nabla+i\nu a)u}^2 .
\]
(b) Consequently, for $\psi\in H^1_0(\mathcal C_{\mathrm b})$ with $\norm\psi=1$,
\[
  \frac\lambda2\int\abs{(\nabla+ia)\psi}^2\ge\Bigl(1-\frac1{\nu^2}\Bigr)T(\abs\psi)+\frac{\lambda}{2\nu^2}\sum_QE_0^N(Q;\nu a)\,m_Q(\psi),\qquad
  m_Q(\psi):=\int_Q\abs\psi^2 .
\]
\end{lemma}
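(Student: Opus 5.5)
The plan is to prove (a) by a pointwise computation on the open set $\{\psi\neq0\}$ and a separate treatment of the zero set, and then to derive (b) by integrating (a) over each cube $Q$.

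\emph{Part (a).} First we check that $F_\nu$ is Lipschitz on $\C\cong\R^2$ with $\abs{F_\nu(z)}=\abs z$. In polar form $F_\nu(re^{i\vartheta})=re^{i\nu\vartheta}$, so the radial derivative has modulus $1$ and the angular derivative has modulus $\nu$; hence $F_\nu$ is Lipschitz with constant $\nu$. The chain rule for Lipschitz maps composed with Sobolev functions then gives $u\in H^1_{\mathrm{loc}}$. Since $\abs u=\abs\psi$, we also have $\nabla u=0=\nabla\psi$ almost everywhere on $\{\psi=0\}$. On that set both sides of the identity vanish: the left side because $\psi=0$ and $\nabla\psi=0$, the right side because $\nabla\abs\psi=0$, $u=0$ and $\nabla u=0$.

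On $\{\psi\ne0\}$ we write locally $\psi=re^{i\vartheta}$ with $r=\abs\psi$. There is no global phase, but only the gradient of the phase enters, namely $\nabla\vartheta=\mathrm{Im}(\bar\psi\nabla\psi)/r^2$, and this is well defined almost everywhere. In these terms
\[
\abs{(\nabla+ia)\psi}^2=\abs{\nabla r}^2+r^2\abs{\nabla\vartheta+a}^2 .
\]
Since $u=re^{i\nu\vartheta}$, the same formula gives
\[
\abs{(\nabla+i\nu a)u}^2=\abs{\nabla r}^2+\nu^2r^2\abs{\nabla\vartheta+a}^2 .
\]
Dividing the second identity by $\nu^2$ and subtracting yields the claim. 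To make this rigorous without a smooth phase, we compute $\nabla u$ by the chain rule using the real derivative $DF_\nu(\psi)$. In the orthonormal frame $\{\psi/r,\,i\psi/r\}$, this derivative maps the radial direction isometrically and multiplies the angular direction by $\nu$. Applying it to the radial component $\nabla r$ and the angular component $r\nabla\vartheta$ of $\nabla\psi$ gives the second identity above.

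\emph{Part (b).} We extend $\psi$ by zero, so that $\psi\in H^1(\R^3)$, and integrate (a) over $\mathcal C_{\mathrm b}$ with $U$ a neighbourhood of the box. The first term produces $(1-\nu^{-2})T(\abs\psi)$, because $\nabla\abs\psi=0$ outside the box. The second term we split over the cubes of $\mathcal P$. For each $Q$ the restriction $u|_Q$ lies in $H^1(Q)$, so by the variational definition of the Neumann ground state energy
\[
\int_Q\abs{(\nabla+i\nu a)u}^2\ge E_0^N(Q;\nu a)\int_Q\abs u^2=E_0^N(Q;\nu a)\,m_Q(\psi).
\]
Multiplying by $\lambda/(2\nu^2)$ and summing over $Q$ gives (b).

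The only delicate point is the measure-theoretic justification of the chain rule at the zeros of $\psi$. The Lipschitz chain rule, together with $\nabla\psi=0$ almost everywhere on the zero set, handles this, so we do not expect a real obstacle.
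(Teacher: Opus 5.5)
Your proposal is correct and follows the paper's proof essentially step by step. The steps are the same: the Lipschitz chain rule for $F_\nu$; the vanishing of both gradients a.e.\ on $\{\psi=0\}$; the pointwise radial/angular computation $\nabla u=\mathrm e^{i\nu\vartheta}(\nabla\abs\psi+i\nu\,\mathrm{Im}(\bar\psi\nabla\psi)/\abs\psi)$ via $DF_\nu(\psi)$; and then integration with the variational Neumann bound on each cube. One small imprecision: $\{\psi\neq0\}$ is not open for an $H^1$ function, so the local polar form is only heuristic. Your frame computation with $DF_\nu(\psi)$ is the rigorous version, justified as in the paper by approximating $F_\nu$ by $C^1$ Lipschitz maps that agree with it outside $\{\abs z<\eps\}$.
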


\begin{proof}
(a) $F_\nu$ is homogeneous of degree one and globally Lipschitz, and
$F_\nu(z)=z^{(\nu+1)/2}\bar z^{(1-\nu)/2}$ is smooth on $\C\setminus\{0\}$, so the chain rule holds
almost everywhere (approximate $F_\nu$ by $C^1$ Lipschitz functions equal to $F_\nu$ outside $\{\abs z<\eps\}$); on
$\{\psi=0\}$ we have $\nabla\psi=0$ and $\nabla u=0$ a.e.\ (gradients of Sobolev functions vanish a.e.\ on level sets,
\cite[Chapter~6]{LLAna}). At a point where
$\psi(x)=r\mathrm e^{i\vartheta}\ne0$ write $\nabla\psi(x)=\mathrm e^{i\vartheta}(X_\psi+iY_\psi)$ with $X_\psi=\nabla\abs\psi$ and
$Y_\psi=\mathrm{Im}(\bar\psi\nabla\psi)/\abs\psi$. Since $\partial_zF_\nu=\frac{\nu+1}{2}F_\nu/z$ and
$\partial_{\bar z}F_\nu=\frac{1-\nu}{2}F_\nu/\bar z$, we get $\nabla u(x)=\mathrm e^{i\nu\vartheta}(X_\psi+i\nu Y_\psi)$. Hence
$\abs{(\nabla+ia)\psi}^2=X_\psi^2+(Y_\psi+ra)^2$ and $\abs{(\nabla+i\nu a)u}^2=X_\psi^2+\nu^2(Y_\psi+ra)^2$, which gives
the identity pointwise. No continuous branch of the phase is used.
(b) Integrate (a) and use, for the disjoint cubes $Q$,
$\int_Q\abs{(\nabla+i\nu a)u}^2\ge E_0^N(Q;\nu a)\int_Q\abs u^2$, as $u\restriction_Q\in H^1(Q)$.
\end{proof}

The identity holds including the zeros (vortices) of $\psi$, since $\mathrm e^{i\nu\vartheta}$ is single
valued. The magnetic coupling is transferred to the field $\nu a$ at the cost of a fraction
$\nu^{-2}$ of the kinetic energy. Letting $\nu\to\infty$ recovers the diamagnetic inequality
\cite[Theorem~7.21]{LLAna}.

\begin{lemma}[Hardy inequality in field space]\label{lem:hardy}
Let $S\subset\R^d$ be a subspace, $r$ a linear map from $\R^d$ to a real Hilbert space,
$Y:=r\restriction_S$, $G_S:=Y^*Y$ and $u(q):=\norm{r(q)}^2$. If $g_S:=\tr G_S-2\norm{G_S}>0$, then
$-\Delta_S\ge g_S^2/(4\norm{G_S}u(q))$ as forms on $\{\psi\in L^2(\R^d):\nabla_S\psi\in L^2\}$.
\end{lemma}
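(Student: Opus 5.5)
The plan is a ground-state-substitution (vector-field) argument in the directions of $S$, with a regularized logarithmic gradient of $u$. Write $q=q_S+q_\perp$ with $q_S\in S$, so that $r(q)=Yq_S+r(q_\perp)$ and $\nabla_S$ acts only on $q_S$. The elementary inequality is this: for a bounded vector field $F:\R^d\to S$ with bounded $\Div_SF$, expanding $0\le\int\abs{\nabla_S\psi+F\psi}^2$ and integrating by parts, $\int F\cdot\nabla_S\abs\psi^2=-\int(\Div_SF)\abs\psi^2$, gives
\[
  \int\abs{\nabla_S\psi}^2\ \ge\ \int\bigl(\Div_SF-\abs F^2\bigr)\abs\psi^2 .
\]
This holds first for $\psi\in C_c^\infty(\R^d)$ and then, by density in the norm $\norm\psi^2+\norm{\nabla_S\psi}^2$, on the whole form domain $\{\psi:\nabla_S\psi\in L^2\}$. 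Both sides are continuous in that norm because $F$ and $\Div_SF$ are bounded.

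The choice is $F=c\,\nabla_Su/u_\eps$ with $u_\eps:=u+\eps$, $\eps>0$ and $c\ge0$ to be fixed. The needed derivatives of $u$ are
\[
  \nabla_Su=2Y^*r(q),\qquad \Delta_Su=2\tr G_S,\qquad \abs{\nabla_Su}^2=4\norm{Y^*r(q)}^2\le4\norm{G_S}\,u(q),
\]
the last because $\norm{Y^*}^2=\norm{G_S}$. Consequently $\abs F\le 2c\sqrt{\norm{G_S}u}/u_\eps\le c\sqrt{\norm{G_S}/\eps}$, and $\Div_SF$ is bounded as well, so the elementary inequality applies. Next,
\[
  \Div_SF-\abs F^2=c\frac{\Delta_Su}{u_\eps}-(c+c^2)\frac{\abs{\nabla_Su}^2}{u_\eps^2}
  \ \ge\ \frac{2c\tr G_S-4(c+c^2)\norm{G_S}\,u/u_\eps}{u_\eps}
  \ \ge\ \frac{2c\,g_S-4c^2\norm{G_S}}{u_\eps},
\]
where the last step uses $u/u_\eps\le1$ and $c\ge0$.

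Optimizing, take $c=g_S/(4\norm{G_S})>0$. The numerator then equals $g_S^2/(4\norm{G_S})$, so $-\Delta_S\ge g_S^2/(4\norm{G_S}u_\eps)$ as forms. Letting $\eps\downarrow0$, monotone convergence gives the claim (with value $+\infty$ allowed where $u=0$, which is harmless for a lower bound). Note that $\norm{G_S}>0$ because $g_S>0$.

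The only delicate point is the singularity of $1/u$ on $\ker r$, which can have small codimension in $S$ (for instance, $u$ may vanish on all of $S_\perp$-translates of a subspace). This is why one should not insert $F=c\nabla u/u$ directly. The $\eps$-regularization removes the issue, and the inequality $u/u_\eps\le1$ makes the regularized computation lose nothing.
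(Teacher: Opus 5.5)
Your proof is correct and is the paper's own argument. Your field $F=c\,\nabla_Su/u_\eps=2c\,Y^*r/(u+\eps)$ is the paper's $V=tY^*r/(\abs r^2+\eps)$ with $t=2c$, and your choice $c=g_S/(4\norm{G_S})$ is its $t=g_S/(2\norm{G_S})$. The extension from $C_c^\infty$ to the form domain by density and the limit $\eps\downarrow0$ by monotone convergence are the same steps as in the paper.
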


\begin{proof}
For $\eps>0$ let $V:=tY^*r/(\abs r^2+\eps)$. Expanding $0\le\int\abs{\nabla_S\psi+V\psi}^2$ gives, on
$C_c^\infty(\R^d)$,
$-\Delta_S\ge\Div_SV-\abs V^2\ge(t\tr G_S-(2t+t^2)\norm{G_S})/(u+\eps)$; take $t=g_S/(2\norm{G_S})$. The
right side is bounded, hence the inequality extends to the form domain because
$C_c^\infty$ is dense in $\{\psi\in L^2:\nabla_S\psi\in L^2\}$ for the graph norm. Finally let
$\eps\downarrow0$ by monotone convergence (the right side may take the value $+\infty$).
\end{proof}

If $S=\R^n$ and $r$ is the identity, then $\tr G_S-2\norm{G_S}=n-2$ and the lemma is the
classical Hardy inequality $-\Delta\ge(n-2)^2/(4\abs x^2)$, proved here by the usual
ground state substitution; the new feature is the weight given by the Gram matrix.

\emph{Gram matrices.} Let $\sharp=\R^3$ or $\T_L$ and let $Q$ be a cube of side $s\in\N$ with corners in
$\Z^3$. Define $(G_Q)_{(a,i),(b,j)}:=\ip{\mathbf 1_Q\Pi'\sinc_a\mathbf e_i}{\mathbf 1_Q\Pi'\sinc_b\mathbf e_j}$ for
$a,b\in Q\cap(\Z^3+(\frac12,\frac12,\frac12))$, $B_Q:=\Pi'\mathbf 1_Q\Pi'$ and $L_Q:=\Tr(B_Q-B_Q^2)$. On $\R^3$ the kernel of
$\Pi'$ is $K(x-y)$ with $K(z)=(2\pi)^{-3}\int_{\abs k<\pi}\mathrm e^{ik\cdot z}P_\perp(k)\,dk$, and
\begin{equation}\label{eq:CK}
  \norm{K(z)}\ \le\ C_K(1+\abs z)^{-2}\qquad(z\in\R^3),
\end{equation}
with the operator norm of $3\times3$ matrices and an absolute constant $C_K$: $K$ is a combination of the
Fourier transforms of $\mathbf 1_{\abs k<\pi}$ and $\mathbf 1_{\abs k<\pi}\hat k_i\hat k_j$, which decay like $\abs z^{-2}$ (from the sphere)
and $\abs z^{-3}$ (from the singularity of $\hat k\hat k^{\mathsf T}$ at the origin).

\begin{lemma}[Gram trace]\label{lem:gram}
(a) $\Tr B_Q=\rhom\abs Q$ and $L_Q=\norm{\mathbf 1_{Q^c}\Pi'\mathbf 1_Q}_{\mathfrak S_2}^2$.
(b) $0\le\rhom\abs Q-\tr G_Q\le L_Q$.
(c) On $\R^3$, $L_Q\le72\pi C_K^2\,s^2\log(1+s/2)$.
\end{lemma}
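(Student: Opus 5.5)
For (a) I would use that $\Pi'$ is a projection with translation-invariant kernel, so $\Tr B_Q=\Tr(\mathbf 1_Q\Pi'\mathbf 1_Q)=\int_Q\tr K(x,x)\,dx=\rhom\abs Q$; on $\T_L$ the diagonal $\tr K(x,x)=2N_L/L^3$ is constant, and the computation is the same. Writing $P=\Pi'$ and $\chi=\mathbf 1_Q$, we have $B_Q-B_Q^2=P\chi P-P\chi P\chi P=P\chi(1-P)\chi P+P\chi P(1-\chi)P\ldots$. It is cleaner to argue as follows. By cyclicity $\Tr B_Q^2=\Tr(\chi P\chi P)$ and $\Tr B_Q=\Tr(\chi P\chi)=\Tr(\chi P P\chi)$. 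Hence
\[
  L_Q=\Tr(\chi P\chi)-\Tr(\chi P\chi P\chi)=\Tr\bigl(\chi P(1-\chi)P\chi\bigr)=\norm{(1-\chi)P\chi}_{\mathfrak S_2}^2 ,
\]
where $P\chi$ is Hilbert--Schmidt because $\int_Q\tr K(x,x)\,dx<\infty$.

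For (b) I would identify $G_Q$ as the Gram matrix of the vectors $\chi P\sinc_a\mathbf e_i$. Let $Z$ map coefficients to $\sum c_{a,i}\sinc_a\mathbf e_i$; this is an isometry onto $S_Q$. Then $G_Q=Z^*P\chi PZ$, so $\tr G_Q=\Tr(P_{S_Q}B_Q)$, with $P_{S_Q}=ZZ^*$ the projection onto $S_Q$. The lower inequality $\tr G_Q\le\Tr B_Q=\rhom\abs Q$ is then immediate, since $0\le P_{S_Q}\le1$ and $B_Q\ge0$. For the upper inequality I would use that $S_Q$ contains the band-limited part of $\chi$. The functions $\sinc_a\mathbf e_i$ with $a\in\Z^3+(\frac12,\frac12,\frac12)$ form an orthonormal basis of the cubic band. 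Since $P=P\Pi_{\mathrm{cube}}$, we get $\Tr(B_Q)-\tr G_Q=\Tr((1-P_{S_Q})\Pi_{\mathrm{cube}}P\chi P\Pi_{\mathrm{cube}})$. Here $\Pi_{\mathrm{cube}}-P_{S_Q}$ is the projection onto the span of the $\sinc_a\mathbf e_i$ with $a\notin Q$. So the difference equals $\sum_{a\notin Q,i}\norm{\chi P\sinc_a\mathbf e_i}^2$. The key step is to compare this with $L_Q=\norm{(1-\chi)P\chi}^2_{\mathfrak S_2}$, whose square I would expand in the basis as $\sum_{a,i}\norm{(1-\chi)P\chi P\sinc_a\mathbf e_i}^2$, or more usefully use $L_Q=\Tr(P\chi P(1-P\chi P))$.

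I expect the honest inequality to come from operator monotonicity. Set $\Pi_{\mathrm{out}}:=\Pi_{\mathrm{cube}}-P_{S_Q}$. We need $\Tr(\Pi_{\mathrm{out}}B_Q)\le\Tr(B_Q-B_Q^2)$, i.e.\ $\Tr(\Pi_{\mathrm{in}}B_Q)\ge\Tr B_Q^2$ with $\Pi_{\mathrm{in}}=P_{S_Q}$. I would aim for $\Tr(P_{S_Q}B_Q)\ge\Tr(\chi B_Q)=\Tr(\chi P\chi P\chi)=\Tr B_Q^2$, via the sampling identity: $P_{S_Q}$ restricted to band-limited functions acts like multiplication by the indicator of $Q$ at the sampling points. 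Concretely, $\sum_{a\in Q}\abs{\ip{\sinc_a}{f}}^2=\sum_{a\in Q}\abs{f(a)}^2$ for cubic band-limited $f$. The main obstacle is that lattice sampling on $Q$ versus the integral over $Q$ is not a pointwise inequality. I would handle it in the transverse/cubic-band Hilbert space by writing $B_Q=\sum_{a,b}(\ldots)$ and checking $\Tr B_Q^2=\sum_{a\in Q,b}\ldots$ through the diagonal. If this fails, I would bound the difference directly by Cauchy--Schwarz in Hilbert--Schmidt norm by $\norm{(1-\chi)P\chi}_{\mathfrak S_2}^2$.

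For (c), on $\R^3$ I would use (a): $L_Q=\int_{Q^c}\int_Q\norm{K(x-y)}_{\mathrm{HS}}^2\,dy\,dx\le3C_K^2\int_Q\int_{Q^c}(1+\abs{x-y})^{-4}\,dx\,dy$ by \eqref{eq:CK}. For $y\in Q$ at distance $t$ from $\partial Q$, the inner integral is at most $\int_{\abs z\ge t}(1+\abs z)^{-4}dz\le4\pi/(1+t)$. Integrating over $Q$ in layers (the area of each layer is $\le6s^2$, and $t$ ranges over $[0,s/2]$) gives $\le3C_K^2\cdot4\pi\cdot6s^2\log(1+s/2)=72\pi C_K^2s^2\log(1+s/2)$.
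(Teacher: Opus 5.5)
Your arguments for (a) and (c), and for the easy half $\tr G_Q\le\rhom\abs Q$ of (b), are correct and essentially the paper's; in (c) you get the same constant $3\cdot4\pi\cdot6=72\pi$. The gap is the other half of (b), $\rhom\abs Q-\tr G_Q\le L_Q$. You correctly reduce it to $\tr G_Q\ge\Tr B_Q^2$, that is, to
\[
  \sum_{a\in Q}\phi(a)\ \ge\ \int_Q\phi(a)\,da,\qquad \phi(a):=\norm{\mathbf 1_QK(\cdot-a)}^2_{\mathrm{HS}},
\]
with the sum over the sinc points in $Q$. You also note that this is not a pointwise statement, but none of your suggestions proves it:
\begin{itemize}
\item $P_{S_Q}$ does not dominate $\Pi_{\mathrm{cube}}\mathbf 1_Q\Pi_{\mathrm{cube}}$ on the band. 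Test with $\sinc_b$ for a sinc point $b\notin Q$ next to $Q$.
\item The diagonal bound $\ip{\sinc_a\mathbf e_i}{B_Q^2\sinc_a\mathbf e_i}\le\ip{\sinc_a\mathbf e_i}{B_Q\sinc_a\mathbf e_i}$ only gives $\tr G_Q\ge\Tr(P_{S_Q}B_Q^2)$. But $\Tr B_Q^2$ also contains the positive contributions of the sinc points outside $Q$, so the identity $\Tr B_Q^2=\sum_{a\in Q,b}\ldots$ you hint at is false.
\item A Cauchy--Schwarz or decay argument based on \eqref{eq:CK} gives at best $\rhom\abs Q-\tr G_Q\le Cs^2\log(2+s)$ on $\R^3$. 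That would suffice for Corollary~\ref{cor:hardy}(c), but it is not the stated inequality with $L_Q$ and constant $1$.
\end{itemize}

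The missing idea is a Fourier computation in which both sides become integrals against the same non-negative weight. With $\kappa:=k-k'$ one has $\norm{K(z)}^2_{\mathrm{HS}}=(2\pi)^{-6}\iint\mathrm e^{-i\kappa\cdot z}\tr(P_kP_{k'})\,dk\,dk'$ over $\abs k,\abs{k'}<\pi$. Set $F_Q(\kappa):=\int_Q\mathrm e^{i\kappa\cdot x}dx$ and $D_Q(\kappa):=\sum_{a\in Q}\mathrm e^{i\kappa\cdot a}$. Then the sum over the sinc points produces the factor $\overline{F_Q(\kappa)}D_Q(\kappa)$, and the double integral over $Q\times Q$ produces $\abs{F_Q(\kappa)}^2$. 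Since $Q$ is the union of the unit cubes centred at its sinc points, $F_Q=D_Q\hat\chi_1$ with $\hat\chi_1(\kappa)=\prod_i2\sin(\kappa_i/2)/\kappa_i$. Hence
\[
  \tr G_Q-\Tr B_Q^2=(2\pi)^{-6}\iint\tr(P_kP_{k'})\,\abs{D_Q(\kappa)}^2\bigl(\hat\chi_1(\kappa)-\hat\chi_1(\kappa)^2\bigr)\,dk\,dk'\ \ge\ 0 .
\]
The sign holds because $\tr(P_kP_{k'})\ge0$ and $0\le\hat\chi_1\le1$ for $\abs{\kappa_i}\le2\pi$. This range covers every $\kappa$ that occurs, since $\abs k,\abs{k'}<\pi$ (the ball lies in the cubic band). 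On $\T_L$ the integrals are replaced by the corresponding lattice sums. This positivity is what makes sampling on the sinc lattice dominate the integral over $Q$, and it is the step your proposal leaves open.
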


\begin{proof}
(a) The first identity is the diagonal of the kernel. Since $\Pi'=\Pi'(\mathbf 1_Q+\mathbf 1_{Q^c})\Pi'$ on the range
of $\Pi'$, $B_Q-B_Q^2=\Pi'\mathbf 1_Q\Pi'\mathbf 1_{Q^c}\Pi'$, whose trace is $\norm{\mathbf 1_{Q^c}\Pi'\mathbf 1_Q}_{\mathfrak S_2}^2$.
(b) $\tr G_Q=\Tr(\mathbf 1_Q\Pi'P_{S_Q}\Pi'\mathbf 1_Q)\le\Tr(\mathbf 1_Q\Pi'\mathbf 1_Q)=\rhom\abs Q$, where $P_{S_Q}$ is the projection onto
the span of the $\sinc_a\mathbf e_i$, $a\in Q$. For the other inequality note that $\tr G_Q$ is a sum over the
sinc points $a\in Q$ of $\norm{\mathbf 1_QK(\cdot-a)}^2_{\mathrm{HS}}$ and $\norm{\mathbf 1_Q\Pi'\mathbf 1_Q}^2_{\mathfrak S_2}$ is the corresponding
integral over $a\in Q$; the following Fourier computation shows that the sum dominates. Write
$F_Q(\kappa)=\int_Q\mathrm e^{i\kappa\cdot x}dx=D_Q\hat\chi_1$ with $D_Q=\sum_{a\in Q}\mathrm e^{i\kappa\cdot a}$ and
$\hat\chi_1(\kappa)=\prod_i\frac{2\sin(\kappa_i/2)}{\kappa_i}\in[0,1]$ for $\abs{\kappa_i}\le2\pi$ (the Fourier transform of the
indicator of a unit cube centred at $0$). Since $\Pi'\sinc_a\mathbf e_i=K(\cdot-a)\mathbf e_i$ (the ball
lies in the cubic band), Plancherel gives
$\tr G_Q-(\rhom\abs Q-L_Q)=\tr G_Q-\norm{\mathbf 1_Q\Pi'\mathbf 1_Q}_{\mathfrak S_2}^2=\int\!\!\int\tr(P_kP_{k'})\abs{D_Q}^2(\hat\chi_1-\hat\chi_1^2)\ge0$, the integrals
over $\abs k,\abs{k'}<\pi$ with $\kappa=k-k'$ and the measure $(2\pi)^{-6}dk\,dk'$ (on $\T_L$, the corresponding sums).
(c) This is a bound of the type of Landau and Widom \cite{LW80} on the number of eigenvalues of $B_Q$
away from $0$ and $1$. By (a), $L_Q=\int_Qdy\int_{Q^c}dx\,\norm{K(x-y)}^2_{\mathrm{HS}}$ and $\norm M_{\mathrm{HS}}^2\le3\norm M^2$. With $\delta_Q(y):=\dist(y,\partial Q)$,
\eqref{eq:CK} gives $\int_{Q^c}\norm{K(x-y)}^2_{\mathrm{HS}}dx\le3C_K^2\int_{\abs z\ge\delta_Q(y)}(1+\abs z)^{-4}dz\le12\pi C_K^2/(1+\delta_Q(y))$. The set
$\{y\in Q:\delta_Q(y)>t\}$ is a cube of side $s-2t$, so $\int_Q(1+\delta_Q)^{-1}dy\le6s^2\int_0^{s/2}(1+t)^{-1}dt=6s^2\log(1+s/2)$.
\end{proof}

\begin{corollary}[Hardy inequality for the projected field]\label{cor:hardy}
(a) (Trace of the local gauge projection.) Let $s\ge1$, $Q$ a cube of side $s$ in $\R^3$, and
$P_{\nabla,Q}:=1-P'_Q$. Then $\Tr(P_{\nabla,Q}\mathbf 1_Q\Pi'\mathbf 1_QP_{\nabla,Q})\le\min\{\rhom s^3,\,C_D\abs{\partial Q}\}$, with
$\abs{\partial Q}=6s^2$ and an absolute constant $C_D$.
(b) On $\T_L$, let $Y'_Q:=P'_Q\mathbf 1_Q\Pi'X\restriction_{S_Q}$, $G'_Q:=Y'^*_QY'_Q$ and $h_Q:=(\tr G'_Q-2)_+$. Then
$-\Delta_{S_Q}\ge h_Q^2/(4u'_Q(q))$ as forms, and $h_Q\le\tr G_Q\le\rhom^{\T_L}\abs Q$.
(c) There are absolute constants $C_0<C$ and a function $L_{\mathrm G}(s)$ such that for every integer $s\ge1$ and
every cube $Q$ of side $s$ with corners in $\Z^3$: on $\R^3$, $h_Q\ge(1-C_0\log(2+s)/s)_+\rhom\abs Q$ (with $h_Q$
defined as in (b) on $\R^3$); and on $\T_L$ with $L\ge L_{\mathrm G}(s)$ odd, $h_Q\ge(1-C\log(2+s)/s)_+\frac\pi3\abs Q$.
\end{corollary}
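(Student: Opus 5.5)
The plan is to prove the three parts in the order (a), (b), (c). Part (b) is a direct application of Lemma~\ref{lem:hardy}. Part (c) combines (a), Lemma~\ref{lem:gram} and the Gram structure.

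For (a), the bound by $\rhom s^3$ is immediate: $P_{\nabla,Q}$ is a projection, so the trace is at most $\Tr(\mathbf 1_Q\Pi'\mathbf 1_Q)=\Tr B_Q=\rhom\abs Q$ by Lemma~\ref{lem:gram}(a). For the surface bound, I would write $\Tr(P_{\nabla,Q}\mathbf 1_Q\Pi'\mathbf 1_QP_{\nabla,Q})=\norm{\Pi'\mathbf 1_QP_{\nabla,Q}}_{\mathfrak S_2}^2$. Take an orthonormal basis $\nabla\phi_n$ of $\nabla H^1(Q)$, with $\phi_n$ the Neumann eigenfunctions normalized in the Dirichlet seminorm, so that $\nabla\phi_n=\mu_n^{-1/2}\nabla v_n$ with $v_n$ the $L^2$-normalized Neumann eigenfunctions. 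Extend $\nabla\phi_n$ by zero. Its Fourier transform is $i k\widehat{\mathbf 1_Q\phi_n}$ plus a boundary term, since $\int_Q\nabla\phi\,\mathrm e^{-ik x}=ik\int_Q\phi\,\mathrm e^{-ikx}+\int_{\partial Q}\phi\,\mathrm e^{-ikx}\nu\,dS$. The transverse projection $P_k$ kills $ik\hat\phi$, so $\Pi'(\mathbf 1_Q\nabla\phi_n)=\Pi'(\phi_n\nu\,dS)$ is a boundary single layer. Then
\[
\sum_n\norm{\Pi'(\phi_n\nu\,dS)}^2\le\sum_n\int_{\abs k<\pi}\frac{dk}{(2\pi)^3}\abs{\widehat{\phi_n dS}(k)}^2
\]
is controlled by the trace of $\Pi_{\mathrm{ball}}$ restricted to the trace space on $\partial Q$. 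The family $\gamma\phi_n$ is the trace of an $H^1$-seminorm orthonormal family, hence a contraction from $\dot H^1(Q)$ (mod constants) to $L^2(\partial Q)$ composed with an orthonormal basis. The sum is therefore at most $\Tr$ of the band-limited single-layer operator $\gamma^*\Pi\gamma$ on $\partial Q$, whose kernel is bounded by $C$ on the diagonal. That gives $C\abs{\partial Q}$. This contraction/trace step, which needs an extension or trace estimate uniform in $s\ge1$ with the zero mode handled by the Poincaré inequality, is where I expect the main technical care to go.

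For (b), apply Lemma~\ref{lem:hardy} with $S=S_Q$ and $r=P'_Q\mathbf 1_Q\Pi'X$. Then $u=u'_Q$ and $G_S=G'_Q$, and $\norm{G'_Q}\le1$ because $Y'_Q$ is a product of contractions ($X$ is unitary). Hence $g_S\ge\tr G'_Q-2\ge h_Q$, with $\norm{G_S}\le1$, which gives $-\Delta_{S_Q}\ge h_Q^2/(4u'_Q)$; if $h_Q=0$ the statement is trivial. Next, $\tr G'_Q\le\tr G_Q$ because $P'_Q$ is a contraction and $X\restriction_{S_Q}$ maps onto the span of the $\sinc_a\mathbf e_i$. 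Finally, $\tr G_Q\le\rhom^{\T_L}\abs Q$ by Lemma~\ref{lem:gram}(b) on the torus.

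For (c) on $\R^3$, write $\tr G'_Q=\tr G_Q-\Tr(P_{\nabla,Q}Y_QY_Q^*P_{\nabla,Q})\ge\tr G_Q-\Tr(P_{\nabla,Q}\mathbf 1_Q\Pi'\mathbf 1_QP_{\nabla,Q})$, using $Y_QY_Q^*\le\mathbf 1_Q\Pi'\mathbf 1_Q$. Lemma~\ref{lem:gram}(b),(c) and part (a) then give
\[
h_Q\ge\rhom s^3-72\pi C_K^2s^2\log(1+s/2)-6C_Ds^2-2,
\]
which is of the form $(1-C_0\log(2+s)/s)_+\rhom\abs Q$. On $\T_L$ with $s$ fixed, the kernel of $\Pi'_L$ converges locally uniformly to $K$ as $L\to\infty$ (Riemann sums over the band). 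Since $G_Q$, $G'_Q$ and $P'_Q$ live on the fixed cube $Q$ with finitely many sinc points, $\tr G'_Q$ converges to its $\R^3$ value. For $L\ge L_{\mathrm G}(s)$ the loss is at most an extra $s^2$, so a larger constant $C$ absorbs it, using $\rhom\ge$ the torus value from \eqref{eq:rhom-torus} where needed.
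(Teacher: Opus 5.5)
\textbf{The gap is in the surface bound of (a).} Your parts (b) and (c) follow the paper's argument. In (b), your direct use of $g_S\ge\tr G'_Q-2=h_Q$ together with $\norm{G'_Q}\le1$ is fine. In the torus case of (c), add that integer translations make $h^{(L)}_Q$ independent of the position of $Q$, so that $L_{\mathrm G}$ depends only on $s$.

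In (a), the reduction $\Pi'(\mathbf 1_Q\nabla\phi_n)=\Pi'(\phi_n\nu\,dS)$ is correct. The step that fails is ``the traces of a Dirichlet-orthonormal family form a contraction into $L^2(\partial Q)$, so the sum is at most $C\Tr\mathcal K$'', where $\mathcal K$ is the band-limited surface operator. No estimate $\norm{\gamma\phi}^2_{L^2(\partial Q)}\le C\norm{\nabla\phi}^2_{L^2(Q)}$ for mean-zero $\phi$ holds uniformly in $s$. Take $\phi=x_1-s/2$ on $Q=(0,s)^3$. Then $\norm{\nabla\phi}^2=s^3$ and $\norm{\gamma\phi}^2_{L^2(\partial Q)}=\frac56s^4$. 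The boundary mean of this $\phi$ is already zero, so working modulo constants does not help. The scaled trace inequality plus Poincar\'e gives exactly a constant of order $s$, and this example shows the order is sharp.

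Consequently $\Tr(\gamma^*\mathcal K\gamma)\le\norm{\gamma}^2\Tr\mathcal K$ only yields $O(s\cdot s^2)=O(s^3)$. That is no better than the trivial bound $\rhom s^3$. Part (c) then collapses, because it needs $\tr G_Q-\tr G'_Q=o(s^3)$.

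\textbf{What is missing.} The true sum is $O(s^2)$, but seeing this requires separating low and high modes. The large trace norm comes from the few low modes. For the many high modes you must use both the factor $\mu_n^{-1}$ and the band limitation. With a sharp surface measure, $\sum_n\norm{P(\varphi_n\nu\,dS)}^2$ without the weight $\mu_n^{-1}$ even diverges: on the face $x_1=0$ one has $\abs{c_{n_1}(0)}^2=2/s$ for every $n_1$.

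\textbf{How the paper does it.} It smears the layer to width one.
\begin{itemize}
\item With $\chi=\prod_i\min\{1,x_i,s-x_i\}$ one has $\chi h_n\in H^1_0(Q)$, so $\Pi'(\mathbf 1_Q\nabla h_n)=\Pi'((1-\chi)\nabla h_n)-\Pi'(h_n\nabla\chi)$.
\item The first terms sum to at most $\rhom\int_Q(1-\chi)^2\le6\rhom s^2$, by a Hilbert--Schmidt identity.
\item The second terms are split at $\abs n=s$. For low modes it uses $\norm{\varphi_n}_\infty^2\le8/s^3$ and $\sum_{0<\abs n\le s}\abs n^{-2}\le Cs$. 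For high modes it uses $\mu_n^{-1}<\pi^{-2}$ and $\sum_n\norm{P(\varphi_ng)}^2=\frac\pi6\int\abs g^2$, which is finite because $g=\nabla\chi\in L^2$.
\end{itemize}

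\textbf{Possible repairs.} Your sharp-layer route could be saved by an analogous explicit split on each face. Roughly, summing $\mu_n^{-1}$ over the normal index gives about $s^2/\abs{n_\perp}$, and the band limitation cuts off tangential indices beyond about $s$. Alternatively, use a scale-invariant trace into $\dot H^{1/2}(\partial Q)$ with a rank-one correction for the mean. As written, however, the argument does not give $C_D\abs{\partial Q}$.
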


\begin{proof}
(a) Let $Q=(0,s)^3$ and let $\tau_Q$ be the trace. Since $\Tr(\mathbf 1_Q\Pi'\mathbf 1_Q)=\rhom s^3$, $\tau_Q\le\rhom s^3\le2\rhom s^2$
for $s\le2$; let $s\ge2$. The Neumann eigenfunctions
$\varphi_n=\prod_ic_{n_i}(x_i)$, $c_0=s^{-1/2}$, $c_m=\sqrt{2/s}\cos(\pi mx/s)$, with eigenvalues $\mu_n=\pi^2\abs n^2/s^2$, give
the orthonormal basis $\nabla h_n$, $h_n:=\mu_n^{-1/2}\varphi_n$ ($n\in\N_0^3\setminus\{0\}$), of $\nabla H^1(Q)$, so
$\tau_Q=\sum_n\norm{\Pi'(\mathbf 1_Q\nabla h_n)}^2$. Let $\varrho(t):=\min\{1,t,s-t\}$ and $\chi(x):=\prod_i\varrho(x_i)$. Then $\chi h_n\in H^1_0(Q)$,
its extension by zero lies in $H^1(\R^3)$, and transverse projections annihilate gradients, so
$\Pi'(\mathbf 1_Q\nabla h_n)=\Pi'((1-\chi)\nabla h_n)-\Pi'(h_n\nabla\chi)$ and $\tau_Q\le2T_1+2T_2$ with the sums $T_1,T_2$ of the
squared norms of the two terms. First,
$T_1=\norm{\Pi'M_{(1-\chi)\mathbf 1_Q}P_{\nabla,Q}}^2_{\mathfrak S_2}\le\norm{\Pi'M_{(1-\chi)\mathbf 1_Q}}^2_{\mathfrak S_2}=\rhom\int_Q(1-\chi)^2\le\rhom\cdot6s^2=2\pi s^2$,
since $\int\norm{K(z)}^2_{\mathrm{HS}}dz=\tr K(0)=\rhom$ ($\Pi'$ is a projection) and $1-\chi$ vanishes at distance $\ge1$ from
$\partial Q$. Second, let $P$ be the band projection $\mathbf 1_{\abs k<\pi}$ acting on each component, so that $\Pi'\le P$,
and $D_\chi:=\int_Q\abs{\nabla\chi}^2\le6s^2$. Split the sum $T_2\le\sum_n\mu_n^{-1}\norm{P(\varphi_n\nabla\chi)}^2$ at $\abs n=s$. For
$\abs n\le s$ use $\norm{\varphi_n}_\infty^2\le8/s^3$; for $\abs n>s$ use $\mu_n^{-1}<\pi^{-2}$ and
$\sum_{n\in\N_0^3}\norm{P(\varphi_ng)}^2=\norm{PM_g\mathbf 1_Q}^2_{\mathfrak S_2}=\frac\pi6\int\abs g^2$. This gives
\[
  T_2\ \le\ \frac{8D_\chi}{\pi^2s}\,S(s)+\frac{D_\chi}{6\pi},\qquad S(N):=\sum_{0<\abs n\le N}\abs n^{-2}\le C_{\mathrm{lat}}N
\]
for $N\ge1$, with an absolute constant $C_{\mathrm{lat}}$ ($n\in\N_0^3$; at most $Ck^2$ of these $n$ satisfy $k-1<\abs n\le k$, and $\abs n\ge1$).
Hence $\tau_Q\le4\pi s^2+12s^2\bigl[\frac{8C_{\mathrm{lat}}}{\pi^2}+\frac1{6\pi}\bigr]$; with the case $s\le2$ this gives $\tau_Q\le C_D\abs{\partial Q}$
with an absolute constant $C_D$.
(b) Apply Lemma~\ref{lem:hardy} to $r'(q)=P'_Q\mathbf 1_Q\Pi'Xq$. Since $\norm{G'_Q}\le1$ and
$x\mapsto(T-2x)^2/x$ is decreasing on $(0,T/2]$, we have $g'^2/\norm{G'}\ge(T-2)^2$ with $T=\tr G'_Q$.
Moreover $h_Q\le\tr G'_Q\le\tr G_Q\le\rhom\abs Q$ by Lemma~\ref{lem:gram}(b).
(c) On $\R^3$: $\tr G_Q-\tr G'_Q=\Tr((1-P'_Q)Y_QY_Q^*(1-P'_Q))$ with $Y_QY_Q^*\le\mathbf 1_Q\Pi'\mathbf 1_Q$, so by (a),
Lemma~\ref{lem:gram}(b),(c),
$h_Q\ge\rhom s^3-L_Q-6C_Ds^2-2\ge\rhom s^3-C's^2\log(2+s)$, which gives the claim with $C_0:=C'/\rhom$ (for small $s$ the
right side is negative and the claim holds since $h_Q\ge0$).
On $\T_L$ we let $L\to\infty$ with $s$ fixed. Since the ball lies in the cubic band,
$\Pi'_L\sinc^{(L)}_a\mathbf e_i=K_L(\cdot-a)\mathbf e_i$ with $K_L(z)=L^{-3}\sum_{0<\abs\kappa<\pi}\mathrm e^{i\kappa\cdot z}P_\perp(\kappa)$, a Riemann sum
for $K(z)$; the midpoint rule on the cells $\kappa+[-\frac\pi L,\frac\pi L)^3$ gives
$\norm{K_L(z)-K(z)}\le C(L^{-1}+(1+\abs z)^2L^{-2})$ (the spherical shell has volume $O(1/L)$, and the cells near the
origin contribute $O(L^{-3})$), uniformly for $\abs z\le\sqrt3s$. For $L>s$ the cube $Q$ embeds isometrically
into $\T_L$, and $L^2(Q)$, $H^1(Q)$ and $P'_Q$ do not depend on $L$. Hence $\mathbf 1_Q\Pi'_L\sinc^{(L)}_a\mathbf e_i\to\mathbf 1_Q\Pi'\sinc_a\mathbf e_i$
uniformly on $Q$, so the entries of the $3s^3\times3s^3$ matrix $G'^{(L)}_Q$ converge to those of $G'_Q$ on $\R^3$, and
$h^{(L)}_Q\to h_Q$. Integer translations commute with $\Pi'_L$, $\Pi_{\mathrm{cube}}$ and $P'_Q$ (up to relabelling), so
$h^{(L)}_Q$ and $h_Q$ do not depend on the integer cube $Q$. If $(1-C\log(2+s)/s)_+=0$ there is nothing to prove;
otherwise, since $C>C_0$, $(1-C\log(2+s)/s)\frac\pi3s^3<h_Q$, and the convergence gives the bound for all odd
$L\ge L_{\mathrm G}(s)$.
\end{proof}

\begin{lemma}[Vacuum mismatch of the weight]\label{lem:weight}
On $\T_L$ let $0\le\theta\le\theta_0<1$, $\Theta:=X^*\Pi'(-\nabla\cdot\theta\nabla)\Pi'X$ on the physical
modes (so $0\le\Theta\le\theta_0\omega^2$) and $\omega_\theta^2:=\omega^2-\Theta$. Then
\[
  \Hf=\tfrac12\bigl(\abs p^2+q\cdot\omega_\theta^2q\bigr)-\tfrac12\Tr\omega_\theta+\tfrac12\int\theta\abs{\nabla B_q}^2-\mathcal D_\theta,\qquad
  0\le\mathcal D_\theta:=\tfrac12\Tr(\omega-\omega_\theta)\le\frac{\rho'^{L}_{\mathrm m}}{4\sqrt{1-\theta_0}}\int\theta,
\]
where $\rho'^L_{\mathrm m}:=2L^{-3}\sum_{0<\abs\kappa<\pi}\abs\kappa\le\pi\rhom^{\T_L}\le2\pi^2$ for $L\ge4$.
\end{lemma}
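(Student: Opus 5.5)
The statement has three parts, which I will prove in order: the identity for $\Hf$, the fact that $0\le\Theta\le\theta_0\omega^2$ (so that $\omega_\theta$ is well defined), and the bound on $\mathcal D_\theta$.

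\emph{The identity.} Since $B_q=\Pi'Xq$, integration by parts on the torus gives
\[
\tfrac12\int\theta\abs{\nabla B_q}^2=\tfrac12\,q\cdot X^*\Pi'(-\nabla\cdot\theta\nabla)\Pi'Xq=\tfrac12\,q\cdot\Theta q .
\]
Substituting $q\cdot\omega^2q=q\cdot\omega_\theta^2q+q\cdot\Theta q$ into $\Hf=\frac12(\abs p^2+q\cdot\omega^2q)-\frac12\Tr\omega$ and writing $-\frac12\Tr\omega=-\frac12\Tr\omega_\theta-\mathcal D_\theta$ gives the displayed identity. This step is pure bookkeeping.

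\emph{The inequality $0\le\Theta\le\theta_0\omega^2$.} The lower bound $\Theta\ge0$ follows from $\theta\ge0$. For the upper bound, $\theta\le\theta_0$ gives
\[
q\cdot\Theta q\le\theta_0\int\abs{\nabla B_q}^2 .
\]
On the physical modes, $X$ maps onto transverse fields with $\abs\kappa<\pi$, and there $\int\abs{\nabla B_q}^2=q\cdot\omega^2q$ with $\omega=\abs\kappa$. On the auxiliary modes $\Pi'X$ vanishes, so $\Theta=0\le\theta_0\omega^2$ there. Hence $\omega_\theta^2\ge(1-\theta_0)\omega^2>0$.

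\emph{The bound on $\mathcal D_\theta$.} I use the operator concavity of the square root, via the integral representation
\[
\omega-\omega_\theta=\int_0^1\tfrac{d}{dt}\bigl(-(\omega^2-t\Theta)^{1/2}\bigr)\,dt .
\]
Taking traces, $\tfrac{d}{dt}\Tr(\omega^2-t\Theta)^{1/2}=-\frac12\Tr\bigl[(\omega^2-t\Theta)^{-1/2}\Theta\bigr]$, exactly as in Lemma~\ref{lem:hansen}(b). Since $\omega^2-t\Theta\ge(1-\theta_0)\omega^2$ and $x\mapsto x^{-1/2}$ is operator decreasing, $(\omega^2-t\Theta)^{-1/2}\le(1-\theta_0)^{-1/2}\omega^{-1}$. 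Hence
\[
\mathcal D_\theta\le\frac{1}{4\sqrt{1-\theta_0}}\Tr(\omega^{-1}\Theta)=\frac{1}{4\sqrt{1-\theta_0}}\Tr(\omega^{-1/2}\Theta\,\omega^{-1/2}).
\]
The lower bound $\mathcal D_\theta\ge0$ follows from $\omega_\theta\le\omega$, by operator monotonicity of the square root.

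It remains to compute $\Tr(\omega^{-1/2}\Theta\omega^{-1/2})$. Use the physical mode basis: plane waves $L^{-3/2}\mathrm e^{i\kappa\cdot x}\eps_r(\kappa)$ with $0<\abs\kappa<\pi$ and $r=1,2$, so $\omega$ is diagonal with entries $\abs\kappa$. The diagonal matrix element of $\Theta$ is
\[
\int\theta\,\abs{\nabla(L^{-3/2}\mathrm e^{i\kappa x}\eps_r)}^2=L^{-3}\abs\kappa^2\int\theta .
\]
Dividing by $\abs\kappa$ and summing over $\kappa$ and $r$ gives
\[
\Tr(\omega^{-1/2}\Theta\omega^{-1/2})=2L^{-3}\sum_{0<\abs\kappa<\pi}\abs\kappa\int\theta=\rho'^L_{\mathrm m}\int\theta .
\]
One caveat: the real coordinates $q$ use the sinc basis, which is related to this complex basis by a unitary. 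The trace is basis-independent, so I may compute it in the complex basis.

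\emph{Bounds on $\rho'^L_{\mathrm m}$.} Since $\abs\kappa<\pi$, we get $\rho'^L_{\mathrm m}\le\pi\cdot2N_L/L^3=\pi\rhom^{\T_L}$. Then \eqref{eq:rhom-torus} with $L\ge4$ gives $\rhom^{\T_L}\le\frac\pi3(1+\sqrt3/4)^3<2\pi$ (numerically about $2.62\le2\pi$), so $\rho'^L_{\mathrm m}\le2\pi^2$.

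The main point needing care is the trace bound for $\mathcal D_\theta$. Once the integral representation and operator monotonicity are in place it is routine, and the computation of $\rho'^L_{\mathrm m}$ is straightforward.
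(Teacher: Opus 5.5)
Your proof is correct and follows the paper's argument. The paper bounds $\frac12\Tr(\omega-\omega_\theta)\le\frac14\Tr[\Theta\omega_\theta^{-1}]$ by concavity of the square root (the tangent at $\omega_\theta^2$) and then uses $\omega_\theta^{-1}\le(1-\theta_0)^{-1/2}\omega^{-1}$; this amounts to your integral representation with the integrand estimated at $t=1$ rather than for each $t$, and the paper computes $\Tr[\Theta\omega^{-1}]=\rho'^L_{\mathrm m}\int\theta$ in the plane-wave basis exactly as you do. One numerical slip does not affect your conclusion $\rho'^L_{\mathrm m}\le2\pi^2$: $\frac\pi3(1+\sqrt3/4)^3\approx3.08$, not $2.62$, and the paper in fact uses that this is $<\pi$.
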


\begin{proof}
The identity is $q\cdot\Theta q=\int\theta\abs{\nabla B_q}^2$. By concavity of the square root,
$\frac12\Tr(\omega-\omega_\theta)\le\frac14\Tr[\Theta\omega_\theta^{-1}]$, and
$\omega_\theta^{-1}\le(1-\theta_0)^{-1/2}\omega^{-1}$ by operator monotonicity. Finally
$\Tr[\Theta\omega^{-1}]=\int\theta(y)\sum_j\abs{\nabla(\Pi'f_j)(y)}^2/\omega_j\,dy=\rho'^L_{\mathrm m}\int\theta$, the sum
running over the normalized physical modes $f_j$; the bound on $\rho'^L_{\mathrm m}$ follows
from \eqref{eq:rhom-torus}.
\end{proof}

\subsection{The Pauli--Fierz case}
For $\nu\in\N$, $\eta\in(0,1)$, $u_b>0$ and $\Upsilon>0$ consider the lower bound
\begin{equation}\tag{F}\label{eq:F}
  E_0^N\bigl(Q;\nu eB_q\bigr)\ \ge\ \nu^2(1-\eta)\,\frac{e^2\min\{u'_Q(q),u_b\}}{\abs Q}\qquad\text{whenever}\quad\int_{\T_L}\mathrm w_Q\abs{\nabla B_q}^2\le\Upsilon\,u'_Q(q).
\end{equation}
Theorem~\ref{thm:neumann} shows that \eqref{eq:F} holds for all $Q$ and $q$, with $u_b=N_b/(\nu^2e^2)$ and $N_b=c'\eta s/K_V$. Without the
gradient condition \eqref{eq:F} is false (Remark~\ref{rem:why-window}).

\begin{proposition}[Configuration-wise block Born--Oppenheimer bound]\label{prop:configBO}
There are absolute constants $s_0$, $C$ and $c'$ such that the following holds. Let $s\ge s_0$ be an integer,
$\nu\in\N$, $\eta\in(0,1)$, $\theta_0=\frac12$, $\gamma:=(1-\eta)\lambda e^2/2$ and $\Upsilon:=2C_w(1-\eta)\lambda e^2/s^3$, and let
$K_V$, $N_b=c'\eta s/K_V$ and $u_b:=N_b/(\nu^2e^2)$ be as in Theorem~\ref{thm:neumann} for this $\Upsilon$, so that
$K_V\le C(1+s+\log(1+\lambda e^2))^6$. Let $\mathcal C_{\mathrm b}$, $\mathcal P$ and the coordinates $q$ be as in Section~\ref{sec:finite-model}, with $L$ odd and
$L\ge\max(n_{\mathrm b}+8s,s^2,L_{\mathrm G}(s))$, and let $U\subset\mathcal C_{\mathrm b}$ be open. Let
$W_U(q):=\inf\{\frac\lambda2\int\abs{(\nabla+ieB_q)\psi}^2:\psi\in H^1_0(U),\ \norm\psi=1\}$, and for such $\psi$ let
\[
  J_q(\psi):=\Bigl(1-\frac1{\nu^2}\Bigr)T(\abs\psi)+\sum_{Q\in\mathcal P}\Bigl[\frac{h_Q^2}{8u'_Q(q)}+\gamma\,\bar\rho_Q(\psi)\min\{u'_Q(q),u_b\}\Bigr]-\mathcal V_0-\mathcal D_\theta ,
\]
where $\bar\rho_Q(\psi):=m_Q(\psi)/\abs Q$, $h_Q$ is as in Corollary~\ref{cor:hardy}(b) (with $h_Q^2/(8u'_Q):=+\infty$ if $u'_Q(q)=0<h_Q$ and $:=0$ if $h_Q=0$), $\mathcal V_0:=\frac12\Tr(P_{\mathrm{in}}\omega P_{\mathrm{in}})$, and
$\mathcal D_\theta$ is as in Lemma~\ref{lem:weight} with the weight $\theta$ of Section~\ref{sec:finite-model}. Then:
\begin{enumerate}[label=(\roman*)]
\item $\Hf+W_U\ge\mathcal J$ as forms on $L^2(\R^d,dq)$, where $\mathcal J$ is a measurable function of $q$ with
$\mathcal J(q)\ge\inf_\psi J_q(\psi)$ for every $q$;
\item $0\le\mathcal V_0\le\frac{3\sqrt3\pi}2n_{\mathrm b}^3$, $0\le\mathcal D_\theta\le Cn_{\mathrm b}^3$, and
$(1-C\log(2+s)/s)_+\frac\pi3\abs Q\le h_Q\le\pi\abs Q$ for all $Q\in\mathcal P$;
\item for all $q$, all $Q$ and all $x\ge0$,
$\frac{h_Q^2}{8u'_Q}+x\min\{u'_Q,u_b\}\ge h_Q\sqrt{x/2}-\mathbf 1_{\{x>0\}}\frac{h_Q^2}{2u_b}$; moreover $\sum_Qh_Q\sqrt{\gamma\bar\rho_Q(\psi)/2}\ge(1-\eta)^{1/2}(1-C\log(2+s)/s)_+\beta\norm\psi_1$ for every $\psi$, and consequently, for all $q$ and $\psi$,
\[
  J_q(\psi)\ \ge\ \Bigl(1-\frac1{\nu^2}\Bigr)T(\abs\psi)+(1-\eta)^{1/2}\Bigl(1-\frac{C\log(2+s)}s\Bigr)_+\beta\,\norm\psi_1-\mathcal V_0-\mathcal D_\theta-\sum_{Q\cap U\ne\emptyset}\frac{h_Q^2}{2u_b};
\]
\item $u'_Q$, $h_Q$, $\mathcal V_0$ and $\mathcal D_\theta$ do not change if the block box, the blocks, the sinc lattice and the
field are translated jointly by any $\tau\in\R^3$; in particular (i)--(iii) hold for block boxes with
corners in $\tau+\Z^3$ and the sinc points $\tau+\Z^3+(\frac12,\frac12,\frac12)$;
\item (average over translations of the blocks) let $\mu$ be a Borel probability measure on $\R^3$, let
$U\subset\bigcap_{\tau\in\supp\mu}(\mathcal C_{\mathrm b}+\tau)$ be open, and let $J^\tau_q(\psi)$ be defined as $J_q(\psi)$ with the block
box, the blocks and the sinc points translated by $\tau$ (the field and its coordinates $q$ are not
changed). Then $\Hf+W_U\ge\mathcal J_\mu$ as forms, where $\mathcal J_\mu$ is a measurable function with
$\mathcal J_\mu(q)\ge\inf_\psi\int J^\tau_q(\psi)\,d\mu(\tau)$ for every $q$. By (iv), $h_Q$, $\mathcal V_0$ and $\mathcal D_\theta$ in $J^\tau_q$ do not depend
on $\tau$, and (iii) applies to $J^\tau_q$ for each $\tau$.
\end{enumerate}
\end{proposition}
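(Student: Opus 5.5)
The plan is to follow the mechanism of Theorem~\ref{thm:abstractBO}, carried out configuration by configuration, with the Neumann bound \eqref{eq:F} (Theorem~\ref{thm:neumann}) supplying (H4).

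\emph{Step 1: the vacuum and the weight.} First I split off the weight with Lemma~\ref{lem:weight}. This gives
\[
\Hf=\tfrac12(\abs p^2+q\cdot\omega_\theta^2q)-\tfrac12\Tr\omega_\theta+\tfrac12\int\theta\abs{\nabla B_q}^2-\mathcal D_\theta .
\]
I then apply Lemma~\ref{lem:vacuum} with frequency $\omega_\theta$, $S_{\mathrm{in}}=\bigoplus_QS_Q$ and potential $\mathcal V=W_U+\frac12\int\theta\abs{\nabla B_q}^2$. The vacuum loss is $\frac12\Tr(P_{\mathrm{in}}\omega_\theta P_{\mathrm{in}})\le\mathcal V_0$, because $\omega_\theta\le\omega$ by operator monotonicity of the square root. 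The blocks $S_Q$ are mutually orthogonal, so $\frac12\abs{p_{\mathrm{in}}}^2=\sum_Q(-\frac12\Delta_{S_Q})$. Corollary~\ref{cor:hardy}(b) then bounds each term below by $h_Q^2/(8u'_Q(q))$.

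This leaves the multiplication operator $W_U(q)+\frac12\int\theta\abs{\nabla B_q}^2+\sum_Qh_Q^2/(8u'_Q)-\mathcal V_0-\mathcal D_\theta$.

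\emph{Step 2: the electron energy in each configuration.} Fix $q$ and $\psi\in H^1_0(U)$, extended by zero to $\mathcal C_{\mathrm b}$. Lemma~\ref{lem:phase}(b) with $a=eB_q$ gives
\[
\tfrac\lambda2\int\abs{(\nabla+ieB_q)\psi}^2\ge(1-\nu^{-2})T(\abs\psi)+\tfrac{\lambda}{2\nu^2}\sum_QE_0^N(Q;\nu eB_q)\,m_Q(\psi).
\]
Since $\theta=\theta_0\sum_Q\mathrm w_Q/C_w$ with $\theta_0=\frac12$, the weight term equals $\sum_Q\frac1{4C_w}\int\mathrm w_Q\abs{\nabla B_q}^2$. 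I then split the cubes into two cases.

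When $\int\mathrm w_Q\abs{\nabla B_q}^2\le\Upsilon u'_Q$, \eqref{eq:F} gives $\frac\lambda{2\nu^2}E_0^N m_Q\ge\gamma\bar\rho_Q\min\{u'_Q,u_b\}$.

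Otherwise the weighted gradient exceeds $\Upsilon u'_Q$. The block's share of the weight term is then at least $\frac{\Upsilon}{4C_w}u'_Q=\frac{(1-\eta)\lambda e^2}{2s^3}u'_Q=\gamma u'_Q/\abs Q$. Since $\sum_Qm_Q\le1$ gives $\bar\rho_Q\le1/\abs Q$, this is at least $\gamma\bar\rho_Q\min\{u'_Q,u_b\}$, using $E_0^N\ge0$.

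So in every configuration the multiplication operator is at least $\inf_\psi J_q(\psi)$. Taking $\mathcal J$ to be the multiplication operator itself, which is measurable since $W_U$ is an infimum over a countable dense set, proves (i).

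\emph{Step 3: parts (ii)--(v).} For (ii), I compute $\mathcal V_0$ as the sum of the diagonal entries $\ip{\sinc_a\mathbf e_i}{\omega\sinc_a\mathbf e_i}$. Each entry is at most the average of $\omega\le\sqrt3\pi$ over the cube band, and there are $3n_{\mathrm b}^3$ of them, giving $\mathcal V_0\le\frac{3\sqrt3\pi}2n_{\mathrm b}^3$. The bound on $\mathcal D_\theta$ follows from Lemma~\ref{lem:weight}, because $\int\theta\le Cn_{\mathrm b}^3$ once the tails of $\mathrm w_Q$ are summed. The bounds on $h_Q$ come from Corollary~\ref{cor:hardy}(b),(c) together with \eqref{eq:rhom-torus}.

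For (iii), if $u'_Q\le u_b$ the inequality is the arithmetic--geometric mean inequality. If $u'_Q>u_b$, I minimise $h^2/(8u)+xu_b$ over $u\ge u_b$ and compare it with $h\sqrt{x/2}$; the discrepancy is at most $h^2/(2u_b)$. This holds since the AM--GM minimiser $u=h/\sqrt{8x}$ exceeds $u_b$ only when $x$ is small, and then $h\sqrt{x/2}\le h^2/(4u_b)$. Cubes not meeting $U$ have $\bar\rho_Q=0$, so they contribute nothing to the sum. Next, Cauchy--Schwarz gives $\int_Q\abs\psi\le\abs Q^{1/2}m_Q^{1/2}$, so $h_Q\sqrt{\bar\rho_Q}\ge h_Q\abs Q^{-1}\int_Q\abs\psi$. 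With $\sqrt{\gamma/2}\,\frac\pi3=(1-\eta)^{1/2}\beta$, this yields the bound on $\sum_Qh_Q\sqrt{\gamma\bar\rho_Q/2}$.

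Part (iv) holds because translations commute with $\Pi'$, with $\omega$ and with the sinc basis. For (v), I average the inequality of Step 2 over $\tau$: Step 1 holds for each translate with the same $\mathcal V_0$ and $\mathcal D_\theta$, and a $\mu$-average of lower bounds of the form $\ge\inf_\psi J^\tau_q$ is bounded by $\inf_\psi\int J^\tau_q\,d\mu(\tau)$. The one subtlety is that the Hardy bound has to be applied before taking the infimum in $\psi$.

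\emph{Main obstacle.} The hard part is Step 1's joint use of the Hardy bound and the weight. Lemma~\ref{lem:vacuum} has to be applied with the reduced frequency $\omega_\theta$ while still keeping the weight term as a potential, and each $h_Q^2/(8u'_Q)$ term has to be kept attached to the same $q$ as the electron bound. I would handle this by doing all estimates fiberwise in $q_{\mathrm{in}}$ as in the proof of Lemma~\ref{lem:vacuum}.
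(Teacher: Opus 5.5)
Your parts (i)--(iv) are the paper's proof. You verify (H1)--(H4) in the torus model and run the steps of the proof of Theorem~\ref{thm:abstractBO}: the weight splitting, Lemma~\ref{lem:vacuum} with $\omega_\theta$, the block Hardy bounds, the case distinction with $c_\theta=1/(4C_w)$ and $c_\theta\Upsilon=\gamma/\abs Q$, the completed square, and Cauchy--Schwarz.

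Your ``main obstacle'' is not one. After Lemma~\ref{lem:vacuum} everything on the right is a multiplication operator in $q$. Corollary~\ref{cor:hardy}(b) is a form inequality on $L^2(\R^d)$ with a multiplication operator on the right, so all terms are automatically evaluated at the same $q$. No fiberwise argument in $q_{\mathrm{in}}$ is needed. In fact the Hardy step could not be done fiberwise in $q_{\mathrm{in}}$, since it differentiates in the directions of $S_Q\subset S_{\mathrm{in}}$.

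The genuine gap is in (v), whose whole content is that $\inf_\psi$ stands outside the $\tau$-integral. Your mechanism is to obtain, for each translate, a lower bound $\mathcal J^\tau(q)\ge\inf_\psi J^\tau_q(\psi)$ and then average. That yields only $\int\inf_\psi J^\tau_q(\psi)\,d\mu(\tau)$, which is \emph{at most} $\inf_\psi\int J^\tau_q(\psi)\,d\mu(\tau)$, not at least. Your sentence ``a $\mu$-average of lower bounds of the form $\ge\inf_\psi J^\tau_q$ is bounded by $\inf_\psi\int J^\tau_q\,d\mu$'' is true but points in the useless direction.

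The fix is never to minimize over $\psi$ separately for each $\tau$, using that $W_U(q)$ itself does not depend on $\tau$.
\begin{itemize}
\item For each $\tau$, your Step~1 involves only the field: the weight, Lemma~\ref{lem:vacuum}, and the Hardy bounds for the translated blocks. Here $L$ odd makes the translated sinc functions real and in the cubic band, so $S^\tau_Q$ and $\Theta^\tau$ live on the same $\R^d$. This gives the form inequality $\Hf+W_U\ge W_U+\mathcal A_\tau(q)-\mathcal V_0-\mathcal D_\theta$, with $\mathcal A_\tau:=\sum_Q\bigl(h_Q^2/(8u'_{Q+\tau})+c_\theta\mathcal E_{Q+\tau}\bigr)\ge0$.
\item Integrate these inequalities against $\mu$ by Tonelli. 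Joint measurability holds because $u'_{Q+\tau}(q)$ and $\mathcal E_{Q+\tau}(q)$ are continuous in $(\tau,q)$.
\item Separately, fix $\psi\in H^1_0(U)$ and apply Lemma~\ref{lem:phase}(b) with the blocks $Q+\tau$ for every $\tau\in\supp\mu$. This is where $U\subset\bigcap_\tau(\mathcal C_{\mathrm b}+\tau)$ is used. The left side $\frac\lambda2\int\abs{(\nabla+ieB_q)\psi}^2$ is independent of $\tau$, so it dominates the $\mu$-average of the right sides.
\item Add $\int\mathcal A_\tau\,d\mu$ and apply the case distinction for each $\tau$ and $Q$ inside the integral. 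Only then take $\inf_\psi$.
\end{itemize}
So the step that must precede the infimum is the $\tau$-average of the electron bound at fixed $\psi$. It is not the Hardy bound, as you say; that bound does not involve $\psi$ at all.
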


The bound is stated before the minimization over $\psi$, for each configuration $q$ of the field and
each $U$, because this is the form needed when the electron energy is studied with a
$U$-dependent localization \cite{SerBC}; parts (iv) and (v) are used only there. For the self-energy only the
following consequence is used.

\begin{corollary}[Block Born--Oppenheimer bound]\label{thm:blockBO}
In the setting of Proposition~\ref{prop:configBO} let $U=\mathcal C_{\mathrm b}=:\mathcal C_e$ (of side $n_e$),
$H_{\mathrm{BO}}:=\Hf+W_{\mathcal C_e}$ and $\kappa_s:=(1-\eta)^{1/2}(1-C\log(2+s)/s)_+$. Then
\[
  \inf\sigma(H_{\mathrm{BO}})\ \ge\ \min\Bigl(1-\frac1{\nu^2},\kappa_s\Bigr)F_\lambda[\beta]-\Bigl(\frac{3\sqrt3\pi}2+C\Bigr)n_e^3-\frac{\pi^2}{2}\,\frac{\nu^2e^2n_e^3s^3}{N_b}.
\]
\end{corollary}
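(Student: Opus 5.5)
We derive the corollary from Proposition~\ref{prop:configBO}(i) and (iii) with $U=\mathcal C_e$. By (i), $H_{\mathrm{BO}}=\Hf+W_{\mathcal C_e}\ge\mathcal J$ as forms on $L^2(\R^d,dq)$, with $\mathcal J(q)\ge\inf_\psi J_q(\psi)$ for every $q$. Hence $\inf\sigma(H_{\mathrm{BO}})\ge\inf_q\inf_\psi J_q(\psi)$. The whole point is that the lower bound in (iii) does not depend on $q$. So it suffices to bound the right side of (iii) from below, uniformly over normalized $\psi\in H^1_0(\mathcal C_e)$.

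\emph{Kinetic and one-norm terms.} Set $m:=\min(1-\nu^{-2},\kappa_s)$. Since $T\ge0$ and $\beta\norm\psi_1\ge0$, we have
\[
  \Bigl(1-\frac1{\nu^2}\Bigr)T(\abs\psi)+\kappa_s\beta\norm{\psi}_1\ \ge\ m\bigl(T(\abs\psi)+\beta\norm{\abs\psi}_1\bigr).
\]
The function $\abs\psi$ is nonnegative, lies in $H^1$ and in $L^1$ (because it has compact support), and satisfies $\norm{\abs\psi}_2=1$. It is therefore admissible in the definition of $F_\lambda[\beta]$, so the bracket is at least $F_\lambda[\beta]$. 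If $\kappa_s=0$, the bound is trivial in this term.

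\emph{Error terms.} By (ii), $\mathcal V_0\le\frac{3\sqrt3\pi}{2}n_e^3$ and $\mathcal D_\theta\le Cn_e^3$, using $n_{\mathrm b}=n_e$. For the last sum we use $h_Q\le\pi\abs Q=\pi s^3$, also from (ii), so each term satisfies $h_Q^2/(2u_b)\le\pi^2s^6/(2u_b)$. The blocks meeting $U=\mathcal C_e$ are all the blocks of $\mathcal P$, and there are $(n_e/s)^3$ of them. This gives
\[
  \sum_Q\frac{h_Q^2}{2u_b}\ \le\ \frac{\pi^2}{2}\,\frac{n_e^3s^3}{u_b}=\frac{\pi^2}{2}\,\frac{\nu^2e^2n_e^3s^3}{N_b},
\]
using $u_b=N_b/(\nu^2e^2)$. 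Combining the three estimates yields the stated inequality.

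There is no real obstacle here, since all the substantial work sits in Proposition~\ref{prop:configBO}. Two points need care. First, the constant $C$ in $\kappa_s$ must be the one from (iii), and $\mathcal D_\theta\le Cn_e^3$ must be read with the same generic absolute constant. Second, the passage from a $q$-wise lower bound $\mathcal J$ to the bottom of the spectrum requires that the lower bound be uniform in $q$. It is, because the right side of (iii) is independent of $q$ once the infimum over $\psi$ is taken.
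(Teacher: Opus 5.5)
Your proposal is correct and follows the paper's proof. Proposition~\ref{prop:configBO}(i),(iii) give a $q$-independent lower bound, and the errors are controlled by (ii), $\#\mathcal P=(n_e/s)^3$ and $h_Q\le\pi s^3$. The one small difference is the treatment of the main term: you bound $(1-\nu^{-2})T(\abs\psi)+\kappa_s\beta\norm\psi_1$ directly from below by $\min(1-\nu^{-2},\kappa_s)\bigl(T(\abs\psi)+\beta\norm\psi_1\bigr)$, whereas the paper uses the scaling identity $F_{a\lambda}[b\beta]=a^{3/7}b^{4/7}F_\lambda[\beta]\ge\min(a,b)F_\lambda[\beta]$. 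Your step is more elementary and yields the same stated bound.
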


\begin{proof}
By Proposition~\ref{prop:configBO}(i),(iii), $\inf\sigma(H_{\mathrm{BO}})$ is at least
$\inf_\psi[(1-\nu^{-2})T(\abs\psi)+\kappa_s\beta\norm\psi_1]-\mathcal V_0-\mathcal D_\theta-\sum_Qh_Q^2/(2u_b)$. The extension of $\abs\psi$ by zero lies in
$H^1\cap L^1(\R^3)$, and for $a,b\in[0,1]$ the scaling of Section~\ref{sec:model} gives
$F_{a\lambda}[b\beta]=a^{3/7}b^{4/7}F_\lambda[\beta]\ge\min(a,b)F_\lambda[\beta]$. For the errors use (ii), $\#\mathcal P=(n_e/s)^3$ and
$h_Q^2/(2u_b)\le\pi^2s^6\nu^2e^2/(2N_b)$.
\end{proof}

\begin{remark}[Why the lower bound \eqref{eq:F} is assumed only under a gradient condition]\label{rem:why-window}
Without the gradient condition the bound \eqref{eq:F} is false
(Proposition~\ref{prop:counterexample}). Configurations violating the condition are paid for
by the weight $\theta$ in Step~3 below, at the cost $\mathcal D_\theta$ of Lemma~\ref{lem:weight}.
\end{remark}

\subsection{Proof of Theorem~\ref{thm:abstractBO}}\label{sec:proof-abstractBO}
\emph{Step 1 (splitting the oscillator).} Since $\omega_\theta^2=\omega^2-\Theta\ge(1-\theta_0)\omega^2>0$,
we have the identity
$\Hf=\frac12(\abs p^2+q\cdot\omega_\theta^2q)-\frac12\Tr\omega_\theta+\frac12q\cdot\Theta q-\mathcal D_\theta$. Apply
Lemma~\ref{lem:vacuum} with $\omega_\theta$ and $\mathcal V:=\mathcal W+\frac12q\cdot\Theta q$, and use
$\omega_\theta\le\omega$ (operator monotonicity of the square root), so that
$\Tr(P_{\mathrm{in}}\omega_\theta P_{\mathrm{in}})\le\Tr(P_{\mathrm{in}}\omega P_{\mathrm{in}})$:
\[
  \Hf+\mathcal W\ \ge\ \tfrac12\abs{P_{\mathrm{in}}p}^2+\mathcal W(q)+\tfrac12q\cdot\Theta q-\tfrac12\Tr(P_{\mathrm{in}}\omega P_{\mathrm{in}})-\mathcal D_\theta .
\]
Thus (ii) follows from (i). (For the size of $\mathcal D_\theta$ see Lemma~\ref{lem:weight}.)

\emph{Step 2 (Hardy inequality in each block).} By (H2) and Lemma~\ref{lem:hardy} (with
$\norm{G'_Q}\le1$, so $g'^2_Q/\norm{G'_Q}\ge h_Q^2$ as in the proof of
Corollary~\ref{cor:hardy}), applied on the mutually orthogonal $S_Q$,
$\frac12\abs{P_{\mathrm{in}}p}^2\ge\sum_Qh_Q^2/(8u_Q(q))$; the zero set of $u_Q$ is a subspace, outside
which the form inequality holds. Hence the left side of (i) is bounded below by
multiplication by
$\mathcal B(q):=\sum_Q\frac{h_Q^2}{8u_Q}+\mathcal W+c_\theta\sum_Q\mathcal E_Q$.

\emph{Step 3 (case distinction).} By (H4),
$\mathcal B(q)\ge\inf_\varphi[(1-\nu^{-2})T+\sum_Q(\frac{h_Q^2}{8u_Q}+m_Qf_Q+c_\theta\mathcal E_Q)]$. For each $Q$,
$m_Qf_Q+c_\theta\mathcal E_Q\ge\gamma\bar\rho_Q\min(u_Q,u_b)$: if $\mathcal E_Q\le\Upsilon u_Q$, the first term does this
by (H4); otherwise the second term exceeds $c_\theta\Upsilon u_Q\ge\gamma u_Q/\abs Q\ge\gamma\bar\rho_Qu_Q$
(as $m_Q\le1$). The case distinction depends only on $q$ and $Q$, not on $\varphi$, so it may
be used inside the infimum over $\varphi$.

\emph{Step 4 (completing the square).} For each $Q$ and $u=u_Q(q)$: if $u\le u_b$,
$\frac{h^2}{8u}+\gamma\bar\rho u\ge h\sqrt{\gamma\bar\rho/2}$ by the arithmetic--geometric mean inequality; if
$u\ge u_b$, then $\frac{h^2}{8u}+\gamma\bar\rho u_b\ge\gamma\bar\rho u_b\ge h\sqrt{\gamma\bar\rho/2}-\frac{h^2}{2u_b}$, because with
$\tau=\sqrt{\gamma\bar\rho}$ one has
$\tau^2u_b-\frac{h\tau}{\sqrt2}+\frac{h^2}{2u_b}=u_b(\tau-\frac{h}{2\sqrt2u_b})^2+\frac{3h^2}{8u_b}\ge0$. Hence
$\mathcal B(q)\ge Y$ for all $q$, which proves (i). \qed

\subsection{Proof of Proposition~\ref{prop:configBO}}
We verify (H1)--(H4) in the model of Section~\ref{sec:finite-model}. (H1): $S_Q$ are the sinc
coordinates in $Q$, mutually orthogonal. (H2): $r_Q(q):=P'_Q\mathbf 1_Q\Pi'Xq$, $u_Q=u'_Q$, and
$\norm{G'_Q}\le1$ as a composition of contractions. (H3): $\Theta:=X^*\Pi'(-\nabla\cdot\theta\nabla)\Pi'X$ satisfies
$0\le\Theta\le\theta_0\omega^2$ (Lemma~\ref{lem:weight}) and
$\frac12q\cdot\Theta q=\frac12\int\theta\abs{\nabla B_q}^2=\frac{\theta_0}{2C_w}\sum_Q\mathcal E_Q$ with
$\mathcal E_Q(q):=\int_{\T_L}\mathrm w_Q\abs{\nabla B_q}^2$ and $c_\theta=\frac{\theta_0}{2C_w}$. (H4): $\mathcal W=W_U$, which is upper
semicontinuous in $q$ (an infimum of continuous functions) and hence measurable;
$\mathfrak S$ the normalized $\psi\in H^1_0(U)$, extended by zero to $\mathcal C_{\mathrm b}$; $T=T(\abs\psi)$, $m_Q=m_Q(\psi)$ and
$f_Q(q)=\frac{\lambda}{2\nu^2}E_0^N(Q;\nu eB_q)$ (Lemma~\ref{lem:phase}(b)). Since $\frac{\lambda}{2\nu^2}\cdot\nu^2(1-\eta)e^2=\gamma$,
\eqref{eq:F}, which holds by Theorem~\ref{thm:neumann}, is exactly the required lower bound, and
$c_\theta\Upsilon=\gamma/s^3=\gamma/\abs Q$.

(i) Steps~1--3 of Section~\ref{sec:proof-abstractBO} give $\Hf+W_U\ge\mathcal B(q)-\mathcal V_0-\mathcal D_\theta=:\mathcal J(q)$ and
$\mathcal B(q)\ge\inf_\psi J_q(\psi)+\mathcal V_0+\mathcal D_\theta$.
(ii) $\Tr P_{\mathrm{in}}=3n_{\mathrm b}^3$ and $\norm\omega\le\sqrt3\pi$ give the bound on $\mathcal V_0$. By Lemma~\ref{lem:weight},
$\mathcal D_\theta\le C\int\theta\le C\sum_Q\int_{\T_L}\mathrm w_Q\le C(n_{\mathrm b}/s)^3s^3$. The bounds on $h_Q$ are
Corollary~\ref{cor:hardy}(b),(c), with $\rhom^{\T_L}\le\frac\pi3(1+\frac{\sqrt3}4)^3<\pi$ by \eqref{eq:rhom-torus}.
(iii) The inequality is Step~4 of Section~\ref{sec:proof-abstractBO} (trivial if $x=0$). For the consequence take
$x=\gamma\bar\rho_Q(\psi)$ and sum over $Q$; only cubes with $m_Q(\psi)>0$, which meet $U$, carry the error term.
By (ii) and the Cauchy--Schwarz inequality $\abs Q^{1/2}m_Q(\psi)^{1/2}\ge\int_Q\abs\psi$,
\begin{align*}
  \sum_Qh_Q\sqrt{\gamma\bar\rho_Q/2}\ &\ge\ (1-\eta)^{1/2}\Bigl(1-\frac{C\log(2+s)}s\Bigr)_+\frac\pi3\cdot\frac{e\sqrt\lambda}2\sum_Q\abs Q^{1/2}m_Q^{1/2}\\
  &\ge\ (1-\eta)^{1/2}\Bigl(1-\frac{C\log(2+s)}s\Bigr)_+\beta\norm\psi_1 ,
\end{align*}
since $\frac\pi3\cdot\frac{e\sqrt\lambda}{2}=\frac\pi6e\sqrt\lambda=\beta$.
(iv) Joint translations by $\tau$ are unitary on $L^2(\T_L;\C^3)$, commute with $\Pi'$, $\Pi_{\mathrm{cube}}$ and $\omega$, and map
$\sinc_a$, $Q$, $P'_Q$ and $\mathrm w_Q$ to $\sinc_{a+\tau}$, $Q+\tau$, $P'_{Q+\tau}$ and $\mathrm w_{Q+\tau}$.
(v) Since translations commute with $\Pi_{\mathrm{cube}}$ and $L$ is odd, the translated functions $\sinc_{a+\tau}\mathbf e_i$ lie in
the cubic band and are real, so the translated block coordinates are subspaces $S^\tau_Q$ of the same $\R^d$,
and the translated weight gives a matrix $\Theta^\tau$ on the same $\R^d$. For each $\tau$, Steps~1 and~2 of
Section~\ref{sec:proof-abstractBO} (which use only the field, Lemma~\ref{lem:vacuum} being valid for any orthogonal
decomposition) give
$\Hf+W_U\ge W_U+\mathcal A_\tau(q)-\mathcal V_0-\mathcal D_\theta$ as forms, with
$\mathcal A_\tau:=\sum_Q\bigl(\frac{h_Q^2}{8u'_{Q+\tau}}+c_\theta\mathcal E_{Q+\tau}\bigr)\ge0$. Since $B_q$ is a trigonometric polynomial, $u'_{Q+\tau}(q)$ and
$\mathcal E_{Q+\tau}(q)$ are continuous in $(\tau,q)$, so $\mathcal A_\tau(q)$ is jointly measurable with values in $[0,\infty]$, and
integrating the form inequalities against $\mu$ (Tonelli) gives $\Hf+W_U\ge W_U+\int\mathcal A_\tau\,d\mu-\mathcal V_0-\mathcal D_\theta=:\mathcal J_\mu$. For
fixed $\psi$ and $q$, the left side of Lemma~\ref{lem:phase}(b), applied with the blocks $Q+\tau$, does not depend
on $\tau$; averaging over $\tau$ gives
$W_U(q)\ge\inf_\psi\bigl[(1-\nu^{-2})T(\abs\psi)+\int\sum_Qm_{Q+\tau}(\psi)f_{Q+\tau}(q)\,d\mu(\tau)\bigr]$. Hence
$\mathcal J_\mu(q)\ge\inf_\psi\int[(1-\nu^{-2})T(\abs\psi)+\mathcal A_\tau(q)+\sum_Qm_{Q+\tau}f_{Q+\tau}(q)]\,d\mu(\tau)-\mathcal V_0-\mathcal D_\theta$, and the case
distinction of Step~3, applied for each $\tau$ and $Q$, bounds the integrand by $J^\tau_q(\psi)+\mathcal V_0+\mathcal D_\theta$. \qed

%% file: sec7_neumann.tex
\section{A lower bound for the magnetic Neumann Laplacian on a cube}\label{sec:neumann}

This section concerns the magnetic Neumann Laplacian on a single cube and does not involve the
electron model. The proof has two parts. The first, and the only delicate one, is a
non-concentration estimate (Section~\ref{sec:sup}): under the gradient condition the vorticity on
the cube is controlled by the gauge-projected field mass, with a loss polynomial in $s+\log\Upsilon$.
The second (Section~\ref{sec:floor}) is elementary: weak fields are handled by splitting a function
into its mean and its mean-zero part, large fields by a dual test field that removes the vortices
of the ground state, with the Poincar\'e, Sobolev--Poincar\'e and H\"older inequalities and Fourier
series on a reflection torus as the only tools.

\subsection{Statement}
Let $B$ be a real field on $\T_L$ built from the physical modes (a transverse
trigonometric polynomial with frequencies $\abs k<\pi$), let $Q$ be a cube in $\T_L$ with
side $s$, $1\le s\le L$, and centre $c$, and set
$u'_Q:=\norm{P'_Q\mathbf 1_QB}^2$, $a':=P'_Q\mathbf 1_QB$, $\varpi:=\curl B$ and
$\Upsilon_Q(B):=\int_{\T_L}\mathrm w_Q\abs{\nabla B}^2/u'_Q$.

\begin{theorem}[Neumann lower bound]\label{thm:neumann}
Let $C_1\ge2$ be the absolute constant of Theorem~\ref{thm:sup} below. There is an absolute constant $c'>0$ such that the following holds. Let
$\Upsilon>0$, $s\ge1$, $\nu\in\N$, $\eta\in(0,1]$ and $e>0$, and set
\[
  D_\Upsilon:=\bigl\lceil C_1(1+s+\log(1+\Upsilon))\bigr\rceil,\qquad K_V:=17D_\Upsilon^6,\qquad
  N_b:=c'\,\frac{\eta s}{K_V},\qquad u_b:=\frac{N_b}{\nu^2e^2}.
\]
Then every $B$ with $\Upsilon_Q(B)\le\Upsilon$ satisfies
$E_0^N(Q;\nu eB)\ge\nu^2(1-\eta)e^2\min\{u'_Q,u_b\}/\abs Q$. That is, the assumption \eqref{eq:F}
of the configuration-wise block Born--Oppenheimer bound (Proposition~\ref{prop:configBO}) holds with $(\nu,\eta,u_b,\Upsilon)$.
\end{theorem}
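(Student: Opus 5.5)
We first reduce to a statement about gauge-projected fields. Neumann energies on $Q$ are gauge invariant: replacing $b=\nu eB$ by $b-\nabla\chi$ with $\chi\in H^1(Q)$ only multiplies eigenfunctions by $\mathrm e^{i\chi}$. We may therefore replace $\nu e\mathbf 1_QB$ by its projection $\nu e\,a'$ onto the complement of $\nabla H^1(Q)$. The claim then becomes
\[
  \int_Q\abs{(\nabla+i\nu e a')v}^2\ \ge\ (1-\eta)\,\frac{\nu^2e^2\min\{\norm{a'}^2,u_b\}}{\abs Q}\,\norm v^2 .
\]
Write $b:=\nu e a'$ and $N:=\norm b^2=\nu^2e^2u'_Q$. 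The threshold $u_b$ corresponds to $N_b=\nu^2e^2u_b=c'\eta s/K_V$. So it suffices to show
\[
  E_0^N(Q;b)\ge(1-\eta)N/\abs Q \quad\text{if } N\le N_b,\qquad E_0^N(Q;b)\ge(1-\eta)N_b/\abs Q \quad\text{if } N\ge N_b .
\]
The hypothesis $\Upsilon_Q(B)\le\Upsilon$ is scale invariant. The only input it supplies is the non-concentration estimate of Theorem~\ref{thm:sup}, which I will use in the form
\[
  \norm{\curl b}_{L^\infty(Q)}^2\le K_V\,N/s^{5}
\]
(up to the correct power of $s$). In particular $b$ has controlled vorticity and no superoscillating concentration in $Q$.

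\emph{Weak fields} ($N\le N_b$). Let $v$ be normalized and split $v=\bar v+w$ with $\bar v=\abs Q^{-1}\int_Qv$ and $\int_Q w=0$. Expanding,
\[
  \int\abs{(\nabla+ib)v}^2=\int\abs{\nabla w+ibw}^2+2\,\mathrm{Re}\int\overline{ib\bar v}\,(\nabla w+ibw)+\abs{\bar v}^2N .
\]
The key structural fact is that $b\perp\nabla H^1(Q)$, so $\int \bar b\cdot\nabla w=0$ and the dangerous cross term drops out. The remaining cross term is $\int\abs b^2\bar v w$. It is bounded by $\norm b_{L^4}^2\,\abs{\bar v}\,\norm w$. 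In turn $\norm b_{L^4}^2\lesssim (K_VN/s^3)^{1/2}\,\cdot$ (something), using the sup bound: $b$ is a curl-controlled projected field, so by Poincar\'e for divergence-free, normal-trace-zero fields on $Q$, $\norm b_\infty$ is controlled by $s\norm{\curl b}_\infty$ plus an $L^2$ average.

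The Poincar\'e inequality gives $\norm{\nabla w}^2\ge\pi^2s^{-2}\norm w^2$. Together with $\abs{\bar v}^2\abs Q=1-\norm w^2$ and $N\le N_b\ll s^{-?}$, we minimize over the size of $w$. This yields $\ge (1-\norm w^2)N/\abs Q-$ errors of relative size $O(N K_V/s)$, which is at most $\eta$ for $c'$ small. This step requires care with the exponent bookkeeping, but is elementary.

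\emph{Strong fields} ($N\ge N_b$). Here I use a dual test field. For a ground state $v$, take a field $\tilde b$ with $\curl\tilde b=\curl b$ that is small in $L^\infty$, such as the Biot--Savart potential built on the reflection torus of $Q$ by Fourier series, with $\norm{\tilde b}_\infty\lesssim s\norm{\curl b}_\infty$. Then $b-\tilde b$ is curl-free, hence a gradient on the simply connected cube, and is gauged away. So
\[
  E^N_0(Q;b)=E^N_0(Q;\tilde b)\ \ge\ \text{(lower bound via the diamagnetic-type argument)} .
\]
Alternatively, and more to the point, one can argue by monotonicity in the amplitude. Apply the weak-field bound to the rescaled field $tb$ with $t^2N=N_b$; this gives $N_b(1-\eta)/\abs Q$. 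One then needs $E_0^N(Q;b)\ge E_0^N(Q;tb)$ up to error $\eta$. That is not true in general, which is where the dual test field is used. Pair the magnetic form with $\int \mathrm{Im}(\bar v\nabla v+ib\abs v^2)\cdot \phi$ for a test field $\phi$ whose curl is supported where needed. Use H\"older and Sobolev--Poincar\'e, $\norm{v-\bar v}_6\lesssim\norm{\nabla\abs v}$, to show that a normalized $v$ of energy below $N_b/\abs Q$ has nearly constant modulus. Its current $j=\mathrm{Im}(\bar v\nabla v)$ is then nearly a gradient times $\abs v^2$, forcing $\int\abs{b}^2\abs v^2$ to be paid in full.

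I expect this strong-field floor, specifically controlling the vortices of $v$ without any a priori phase, to be the main obstacle. The weak-field step is routine once Theorem~\ref{thm:sup} is available, and $K_V=17D_\Upsilon^6$ enters only through the constant $c'$ in $N_b$.
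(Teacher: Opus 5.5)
Your gauge reduction and the split into weak and strong fields are the paper's, but both halves have a gap.

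\textbf{Weak fields.} The input you quote is not what Theorem~\ref{thm:sup} gives. For $b=\nu ea'$ it gives only $\norm{\curl b}_{L^2(Q)}\le 2K_V\sqrt N/s$, with no sup bound. A sup bound extracted from its proof (via the polynomial inverse inequality) costs further powers of $D_\Upsilon$. Your weak-field estimate also has the wrong shape. Bounding the cross term by $2\abs{\bar v}\norm b_4^2\norm w$ with $\norm b_4^2\le\norm b_\infty\sqrt N$ requires $s^2\norm b_\infty^2\lesssim\eta$. This is quadratic in the vorticity bound, so it closes at best for $N\lesssim\eta s/K_V^2$, not at the stated $N_b=c'\eta s/K_V$. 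The discrepancy is a factor $17D_\Upsilon^6$, not an absolute constant.

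The fix is not to split off the cross term. Pointwise,
$\abs{\bar v}^2N+2\mathrm{Re}\,\overline{\bar v}\int\abs b^2w+\int\abs b^2\abs w^2=\int\abs b^2\abs{\bar v+w}^2\ge(1-\eta)\abs{\bar v}^2N-(\eta^{-1}-1)\int\abs b^2\abs w^2$.
Everything that remains is then controlled by $t=S_*^2\norm b_{L^3(Q)}^2$ and Sobolev--Poincar\'e. Interpolation gives $\norm b_3^2\le\norm b_2\norm b_6\le\sqrt N\cdot 2S_*\norm{\curl b}_{L^2(Q)}$, where the $L^6$ bound comes from reflecting to the torus of side $2s$ (there $\mathcal Rb$ is divergence free with mean zero). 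This is linear in $K_V$ and closes exactly for $N\lesssim\eta s/K_V$; it is Lemmas~\ref{lem:L6} and~\ref{lem:weakfloor}.

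\textbf{Strong fields.} You rightly call this floor the main obstacle, but it is not proved. The heuristic ``nearly constant modulus, hence the current is nearly a gradient, hence $\int\abs b^2\abs v^2$ is paid'' fails as stated. The ground state $\psi$ can vanish on a small set, and $\mathrm{Im}(\bar\psi\nabla\psi)/\abs\psi^2$ is curl free only off that set. No $L^p$ closeness of $\abs\psi$ to its mean excludes a vortex there. The auxiliary $\tilde b$ with small sup norm points the wrong way: a small potential only bounds $E_0^N$ from above (test with constants).

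What is missing, in the notation of Section~\ref{sec:floor}, is a contradiction argument with a dual potential and a vortex-avoiding cutoff. Assume $N>N_1=\eta s/(16S_*^3\kappa)$ and $\mathcal X<c_\star\eta s/\kappa$.
\begin{itemize}
\item Put $\mathbf g_0:=\curl(-\Delta_T)^{-1}\mathcal Ra'$ restricted to $Q$. Then $\curl\mathbf g_0=a'$, and $n\times\mathbf g_0=0$ on $\partial Q$ by the reflection symmetry. Agmon's inequality gives $\norm{\mathbf g_0}_\infty^2\le8C_A\sqrt N\norm\varpi_{L^2(Q)}$, again linear in the vorticity (Lemmas~\ref{lem:agmon} and~\ref{lem:dual}).
\item Small energy forces $\abs{\{\rho<\bar m/2\}}\lesssim\mathcal X^3$, where $\rho=\abs\psi$ and $\bar m$ is its mean. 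You do anticipate this Sobolev--Poincar\'e step. A cutoff $\zeta$ supported there with $\norm{\nabla\zeta}^2\lesssim\mathcal X$ gives $\mathbf G=(1-\zeta)\mathbf g_0$, which vanishes near the vortices.
\item The phase-free identity of Lemma~\ref{lem:current} follows by integrating the curl of $h(\rho^2)\mathrm{Im}(\bar\psi\nabla\psi)$ against $\mathbf G$. That curl is supported where $\rho$ is small, the boundary term dies because $n\times\mathbf G=0$, and no phase is needed.
\item For the current $\mathbf j=\mathrm{Im}(\bar\psi(\nabla+ia')\psi)/\rho^2$ this yields $\ip{\mathbf j}{\curl\mathbf G}=\ip{a'}{\curl\mathbf G}>\frac34N$. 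Hence $\mathcal X>9N/800$, contradicting $\mathcal X<c_\star\eta s/\kappa\le 9N_1/800<9N/800$.
\end{itemize}
Only a fixed fraction of $N$ is recovered, and that suffices for the floor. With $\kappa=2K_V$ from Theorem~\ref{thm:sup} and $c'=c_\star/2$, this gives the theorem.
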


The statement is invariant under multiplication of the field by a constant: with
$N:=\nu^2e^2u'_Q$ and $\mathcal X:=\abs QE_0^N(Q;\nu eB)$ it reads
$\mathcal X\ge(1-\eta)\min(N,N_b)$. The loss $K_V$ is polynomial in $s+\log\Upsilon$; this is what
the proof of Theorem~\ref{thm:main}(b) can afford (Remark~\ref{rem:new}).

\subsection{Non-concentration of band-limited fields}\label{sec:sup}
\begin{lemma}[Reflections and the gauge projection]\label{lem:reflection}
Let $Q=\prod_i(0,s_i)$, $T=\prod_i\R/2s_i\Z$ and $\mathcal G_0=\{\mathrm{diag}(\pm1,\pm1,\pm1)\}$. For
vector fields set $(\mathcal Rf)(x):=Jf(Jx)$ and for pseudovector fields
$(\mathcal R^*f)(x):=\det(J)\,Jf(Jx)$, for $x\in JQ$, $J\in\mathcal G_0$. For $B\in C^1(\bar Q)$:
(a) $\mathcal Ra'\in L^2(T)$ is divergence free, has mean zero, and $\curl\mathcal Ra'=\mathcal R^*\varpi$
(without surface layers);
(b) $\mathcal Ra'=\curl(-\Delta_T)^{-1}\mathcal R^*\varpi$;
(c) $u'_Q=\sum_i\ip{\varpi_i}{L_i^{-1}\varpi_i}_{L^2(Q)}$, where $L_i$ is $-\Delta$ with Neumann conditions on
the faces $x_i\in\{0,s_i\}$ and Dirichlet conditions on the other four faces, and
$L_i\ge2\pi^2/s_{\max}^2$.
\end{lemma}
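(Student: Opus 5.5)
The plan is to make the gauge projection explicit as a Neumann problem, reflect the result to the torus $T$, and read (b) and (c) off Fourier series there.

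\emph{Identifying $a'$.} Since $P'_Q$ is the projection onto the orthogonal complement of $\nabla H^1(Q)$, I will write $a'=B-\nabla\phi$, where $\phi\in H^1(Q)$ solves the weak Neumann problem $\int_Q\nabla\phi\cdot\nabla v=\int_QB\cdot\nabla v$ for all $v\in H^1(Q)$. This is the problem $\Delta\phi=\Div B$ in $Q$ with $\partial_n\phi=B\cdot n$ on $\partial Q$. Because $Q$ is convex and $B\in C^1(\bar Q)$, elliptic regularity gives $\phi\in H^2(Q)$, so $a'\in H^1(Q)$. It satisfies $\Div a'=0$ in $Q$, $a'\cdot n=0$ on $\partial Q$ in the trace sense, and $\curl a'=\curl B=\varpi$ in $Q$.

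\emph{Proof of (a).} Under $\mathcal R$, the component $(\mathcal Ra')_i$ is odd under $x_i\mapsto-x_i$ and even under $x_j\mapsto-x_j$ for $j\ne i$. Hence every component has mean zero on $T$. To rule out surface layers, look at a face $\{x_i=0\}$ (or $\{x_i=s_i\}$, using $2s_i$-periodicity).
\begin{itemize}[label=--]
\item The normal component $a'_i$ has zero trace there, so its odd extension has no jump, and $\Div\mathcal Ra'=0$ distributionally on $T$.
\item The tangential components are extended evenly, so their traces match and $\curl\mathcal Ra'$ has no singular part.
\end{itemize}
Away from the faces the identity $\curl\mathcal Ra'=\mathcal R^*\varpi$ is the chain rule. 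The covariance $\curl(Jf(J\cdot))=\det(J)\,J(\curl f)(J\cdot)$ for $J\in\mathcal G_0$ is exactly why the pseudovector reflection $\mathcal R^*$ appears.

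\emph{Proof of (b).} On $T$ one has $\curl\curl=-\Delta_T+\nabla\Div$. A divergence-free, mean-zero $f\in L^2(T)$ therefore satisfies $f=\curl(-\Delta_T)^{-1}\curl f$; this is checked mode by mode in Fourier series, where $k\times(k\times\hat f)=-\abs k^2\hat f$ for $k\cdot\hat f=0$, $k\ne0$. Apply this with $f=\mathcal Ra'$ and use (a).

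\emph{Proof of (c).} Set $w:=\mathcal R^*\varpi$. First, $\Div w=0$ on $T$ with no layers. Check the parities: under $x_i\mapsto-x_i$ one has $\det J=-1$ and $J$ flips the $i$-th entry, so $w_i$ is even in $x_i$ while $w_j$, $j\ne i$, is odd in $x_i$. The normal component of $w$ is thus extended evenly across each face, so there is no jump. Next, $\Div\curl=0$ and $\curl\curl=-\Delta_T$ on divergence-free fields, so
\[
  \norm{\curl(-\Delta_T)^{-1}w}_{L^2(T)}^2=\ip{w}{(-\Delta_T)^{-1}w}_{L^2(T)} .
\]
Here $w$ has mean zero because each $w_i$ is odd in the two variables $x_j$, $j\ne i$. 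The left side equals $8u'_Q$, since $\mathcal Ra'$ consists of eight isometric copies of $a'$. On the right side, each component $w_i$ lies in the class of functions that are even in $x_i$ and odd in the other two variables. This class is invariant under $-\Delta_T$, and $-\Delta_T$ restricted to it is unitarily equivalent to $L_i$: the basis $\cos(\pi m_ix_i/s_i)\prod_{j\ne i}\sin(\pi m_jx_j/s_j)$ realises Neumann conditions on the $x_i$-faces and Dirichlet conditions on the other four. Again the factor $8$ appears and cancels, giving $u'_Q=\sum_i\ip{\varpi_i}{L_i^{-1}\varpi_i}_{L^2(Q)}$. In the same basis, $m_i\ge0$ and $m_j\ge1$ for $j\ne i$, so the lowest eigenvalue is at least $\pi^2/s_j^2+\pi^2/s_k^2\ge2\pi^2/s_{\max}^2$.

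\emph{Main obstacle.} The delicate point is the absence of surface layers. This requires a trace argument for $a'\cdot n=0$ with $a'\in H^1$, which is where the $H^2$ regularity of the Neumann problem on the convex cube is used, together with careful bookkeeping of the parities of $\mathcal R$ versus $\mathcal R^*$ at the faces. Once those parities are fixed, (b) and (c) are routine Fourier computations.
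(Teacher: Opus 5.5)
Your (b) and (c) coincide with the paper's argument: Fourier series on $T$, and the parity subspaces on which $-\Delta_T$ is unitarily equivalent to $L_i$. The proposal is correct, but (a) takes a different route.

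\textbf{Your route for (a).} You first prove $a'\in H^1(Q)$ with $a'\cdot n=0$, using $H^2$ regularity of the Neumann problem for $\phi$. You then glue the eight reflected copies by matching traces across the faces.

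\textbf{The paper's route for (a).} The paper uses no regularity at all.
\begin{itemize}
\item It tests $\mathcal Ra'$ against smooth fields $F$ on $T$ and uses $\curl F_J=\det(J)\,J(\curl F)(J\cdot)$ for $F_J:=JF(J\cdot)$.
\item It integrates by parts on $Q$ only for the $C^1$ field $B$. The surface terms $\int_{\partial Q}n\cdot(F_J\times B)$ cancel between $J$ and $JJ_1$. This is your evenness of tangential components, applied to $B$ instead of $a'$.
\item The gradient part is handled by $\mathcal R\nabla\chi=\nabla(E\chi)$, where $E\chi\in H^1(T)$ is the even extension of an arbitrary $\chi\in H^1(Q)$.
\item $\Div\mathcal Ra'=0$ is simply $a'\perp\nabla H^1(Q)$ tested with $\psi\circ J$.
\end{itemize}
In the paper, $a'\in H^1(Q)$ is an output rather than an input. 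It follows from (a) and $\norm{\nabla u}=\norm{\curl u}$ for divergence-free, mean-zero $u$, as in Lemma~\ref{lem:L6}.

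\textbf{Cost of your regularity step.} It is heavier than ``$Q$ is convex'' suggests. The convex-domain $H^2$ theorem is for homogeneous Neumann data. Your $\phi$ has inhomogeneous data $B\cdot n$ on six faces, so you also need an $H^2$ lift of these data that is compatible along the edges. Such a lift exists. However, $\phi$ is genuinely not $C^2$ at an edge along which the component of $\varpi$ parallel to the edge is non-zero. A two-dimensional example: $B=(0,x_1)$ at a corner of a square gives $\phi=x_1x_2+\pi^{-1}\mathrm{Re}(z^2\log z)$ with $z=x_1+ix_2$, up to a function smooth near the corner. Its second derivatives are of size $\log\abs z$. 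So this is an extra lemma you would have to cite or prove, and the paper's distributional argument shows it is unnecessary.
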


\begin{proof}
(a) For a test field $F$ let $F_J(y)=JF(Jy)$, so that $\curl F_J=\det(J)\,J(\curl F)(J\cdot)$. In the
integration by parts, the surface terms $\int_{\partial Q}n\cdot(F_J\times B)$ cancel: on the face
$y_1=0$ pair $J$ with $J\,J_1$, $J_1=\mathrm{diag}(-1,1,1)$; the determinants have opposite signs and
$\mathbf e_1\cdot(J_1F_J\times B)=\mathbf e_1\cdot(F_J\times B)$ (similarly for the other faces and the periodic
faces). For the gradient part, $\mathcal R\nabla\chi=\nabla E\chi$ with $E$ the even extension, which is
curl free. Divergence:
$\ip{\mathcal Ra'}{\nabla\psi}=\sum_J\int_Qa'\cdot\nabla(\psi(J\cdot))=0$ since $a'\perp\nabla H^1(Q)$; mean: $\sum_JJ=0$.
(b) A field with zero divergence, curl and mean vanishes (Fourier series).
(c) $\norm{\mathcal Ra'}^2=8u'_Q=\ip{\mathcal R^*\varpi}{(-\Delta_T)^{-1}\mathcal R^*\varpi}$, and on the subspace of the
appropriate parity of the $i$-th component $-\Delta_T$ is unitarily equivalent to $L_i$.
\end{proof}

\emph{Why non-concentration is needed, and why it is hard.} A lower bound of the form
$\mathcal X\ge(1-\eta)\min(N,N_b)$ is false for general band-limited fields: a field may be
superoscillating, with large amplitude outside $Q$ and vorticity concentrated in a small ball
inside $Q$, and then $E_0^N$ is smaller than $\hat\mu$ by an arbitrary factor
(Appendix~\ref{app:counterexample}). There the vorticity concentrates; conversely, a bound on the
vorticity on $Q$ by $u'_Q$ suffices (Proposition~\ref{prop:floor}). The weighted gradient gives one
immediately, $\norm\varpi^2_{L^2(Q)}\le2\Upsilon u'_Q$, but
in Section~\ref{sec:proof} the parameter $\Upsilon$ grows like a power of the cutoff, and a loss
polynomial in $\Upsilon$ is not affordable at fixed coupling. The point of the next theorem is that
the loss is only polynomial in $s+\log\Upsilon$. It uses analyticity: band limitation makes the
field entire, so that on $Q$ the vorticity is a polynomial of degree $D\approx s+\log\Upsilon$ up to
an error $2^{-D}$, and polynomials of fixed degree satisfy an inverse inequality for the operators
$L_i^{-1}$ of Lemma~\ref{lem:reflection}(c). The combination of analyticity of band-limited functions on a
complex neighbourhood with a polynomial-type bound is also the mechanism of Kovrijkine's proof of the
Logvinenko--Sereda theorem \cite{Kov01}; here it is applied to the negative order quadratic form
$\sum_i\ip{\varpi_i}{L_i^{-1}\varpi_i}$, under a weighted gradient condition instead of a global $L^2$ bound.

\begin{theorem}[Non-concentration]\label{thm:sup}
There is an absolute constant $C_1\ge2$ such that the following holds. Let $Q$ be a cube of side $s$ in
$\T_L$, $L\ge s\ge1$, let $B$ be as in Theorem~\ref{thm:neumann} with $\Upsilon_Q(B)\le\Upsilon$ and $u'_Q>0$, and
let $D\ge C_1(1+s+\log(1+\Upsilon))$ be an integer. Then
\[
  \norm\varpi_{L^2(Q)}\ \le\ \frac{34D^6}{s}\,(u'_Q)^{1/2}.
\]
\end{theorem}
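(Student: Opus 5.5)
The plan is to combine Lemma~\ref{lem:reflection}(c) with a polynomial approximation of $\varpi$ on $Q$ and an inverse inequality for polynomials. By translation, and since the statement is invariant under joint translations of $Q$ and $B$, we may take $Q=(0,s)^3$. Lemma~\ref{lem:reflection}(c) gives
\[
  u'_Q=\sum_i\ip{\varpi_i}{L_i^{-1}\varpi_i}_{L^2(Q)}.
\]
So it suffices to show two things: (1) on $Q$, $\varpi$ is $L^2$-close to a vector polynomial $p$ of degree $\le D$, with error $\lesssim s^{-1}(u'_Q)^{1/2}$; (2) polynomials of degree $D$ satisfy the reverse inequality $\norm{p_i}^2\le C D^{12}s^{-2}\ip{p_i}{L_i^{-1}p_i}$. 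Writing $\norm{\cdot}_{-1}$ for the norm $\bigl(\sum_i\ip{\cdot_i}{L_i^{-1}\cdot_i}\bigr)^{1/2}$, the triangle inequality and $L_i^{-1}\le s^2/(2\pi^2)$ then give
\[
  \norm\varpi\le\norm p+\norm{\varpi-p},\qquad \norm p_{-1}\le (u'_Q)^{1/2}+\frac{s}{\sqrt2\pi}\norm{\varpi-p}.
\]
This yields the claim once $\norm{\varpi-p}\le s^{-1}(u'_Q)^{1/2}$. Tracking constants should give $34D^6$.

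\emph{Step 1 (local derivative bounds from the weighted gradient condition).} Fix a Schwartz function $\phi$ with $\hat\phi=1$ on $\abs k\le\pi$ and $\supp\hat\phi\subset\{\abs k<2\pi\}$. Every component of $\nabla B$ is band-limited, so on $\T_L$ it equals $\phi*\nabla B$, with $\phi$ periodized. Then $\partial^\gamma\nabla B=(\partial^\gamma\phi)*\nabla B$. Since $k^\gamma\hat\phi$ has four $k$-derivatives bounded by $C(2\pi)^{\abs\gamma}\abs\gamma^4$, we get
\[
  \abs{\partial^\gamma\phi(z)}\le C(2\pi)^{\abs\gamma}(1+\abs\gamma)^4(1+\abs z)^{-4}.
\]
For $x\in Q$ we have $\mathrm w_Q(y)^{-1}\le(1+\abs{x-y})^4$ because $s\ge1$. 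Cauchy--Schwarz against $\mathrm w_Q$ then gives
\[
  \abs{\partial^\gamma\nabla B(x)}^2\le C(2\pi)^{2\abs\gamma}(1+\abs\gamma)^8\int\mathrm w_Q\abs{\nabla B}^2\le C(2\pi)^{2\abs\gamma}(1+\abs\gamma)^8\,\Upsilon u'_Q .
\]
The integral $\int(1+\abs z)^{-8}(1+\abs z)^4dz$ converges, and the periodization on $\T_L$ with $L\ge s$ only improves it.

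\emph{Step 2 (Taylor approximation).} Let $p$ be the Taylor polynomial of degree $D$ of $\varpi$ at the centre $c$. The remainder on $Q$ involves derivatives of $\varpi$ of order $D+1$, which are derivatives of $\nabla B$ of order $D$, over distances $\le\sqrt3s/2$. By Step~1 and the multinomial sum, $\sup_Q\abs{\varpi-p}$ is at most
\[
  C(1+D)^4\frac{(C's)^{D+1}}{(D+1)!}(\Upsilon u'_Q)^{1/2}\le C\,2^{-D}(\Upsilon u'_Q)^{1/2}
\]
once $D\ge C_1 s$, by Stirling. Multiplying by $s^{3/2}$ and requiring $2^{-D}\le(1+\Upsilon)^{-1/2}s^{-5/2}/C$, which follows from $D\ge C_1(1+s+\log(1+\Upsilon))$, we obtain $\norm{\varpi-p}_{L^2(Q)}\le s^{-1}(u'_Q)^{1/2}$.

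\emph{Step 3 (inverse inequality).} For a scalar polynomial $P$ of degree $\le D$ and a fixed $i$, use the variational bound $\ip P{L_i^{-1}P}\ge\ip Pv^2/\ip v{L_iv}$ with the test function $v=bP$. Here $b=\prod_{j\ne i}x_j(s-x_j)/s^2$ vanishes on the Dirichlet faces, so $v$ lies in the form domain of $L_i$. A one-dimensional weighted polynomial inequality (Legendre/Jacobi expansion or a Remez-type argument) gives $\int bP^2\ge cD^{-4}\norm P^2$. Markov's inequality $\norm{\partial_jP}_{L^2(0,s)}\le CD^2s^{-1}\norm P$, applied to the polynomial $bP$ of degree $D+4$, gives $\ip v{L_iv}=\norm{\nabla(bP)}^2\le CD^4s^{-2}\norm P^2$. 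Hence
\[
  \ip P{L_i^{-1}P}\ge c\,s^2D^{-12}\norm P^2,\qquad\text{that is,}\qquad \norm p\le CD^6s^{-1}\norm p_{-1}.
\]
Inserting this and Step~2 into the inequalities of the first paragraph gives $\norm\varpi\le(CD^6+1+CD^6/\sqrt2\pi)s^{-1}(u'_Q)^{1/2}$.

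I expect the main obstacle to be Step~1 together with the Taylor estimate: one has to get pointwise control of high derivatives on $Q$ from a weighted $L^2$ bound whose weight decays only like $\mathrm w_Q$, with constants growing no faster than $C^D$ times a polynomial in $D$, so that only $\log\Upsilon$ enters $D$. The explicit constant $34$ needs careful but routine bookkeeping in Step~3. If the Jacobi-weight inequality loses more than $D^{-4}$, I would try instead $b=\prod_{j\ne i}\sin(\pi x_j/s)$, which gives the same powers.
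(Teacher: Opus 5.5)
Your proposal is correct, and its skeleton is the paper's. Both arguments use Lemma~\ref{lem:reflection}(c), then a Taylor polynomial of $\varpi$ at the centre whose error band limitation makes negligible once $D\gtrsim s+\log(1+\Upsilon)$. Both then prove an inverse inequality $\ip P{L_i^{-1}P}\gtrsim s^2D^{-12}\norm P^2$ by testing $L_i^{-1}$ against a multiple of $P$ that vanishes on the four Dirichlet faces, and finish with the triangle inequality in $\norm\cdot$ and $\norm\cdot_{-1}$.

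The real difference is the Taylor step. The paper convolves with a kernel $\mathfrak c$ with $\supp\hat{\mathfrak c}\subset\{\abs\xi\le2\pi\}$ and extends $\varpi$ to an entire function. It bounds this extension on the polydisc of radius $s$ by $e^{2\pi\sqrt3s}(1+\sqrt3s)^{10}(\Upsilon u'_Q)^{1/2}$, after replacing $\mathrm w_Q$ on $\T_L$ by $(1+\abs{y-c})^{-8}$ on $\R^3$ through a sum over images. It then applies Cauchy's estimates. You stay on the real side: the Bernstein-type bound $\abs{\partial^\gamma\nabla B}\lesssim(2\pi)^{\abs\gamma}(1+\abs\gamma)^4(\Upsilon u'_Q)^{1/2}$ on $Q$, together with $(Cs)^{D+1}/(D+1)!$, plays the role of $e^{2\pi\sqrt3s}2^{-D}$. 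Your route is more elementary, since it needs no bound on the kernel at complex arguments, and it leads to the same requirement on $D$. The paper's route, with its Lipschitz cutoff of width $s/(8d_n^2)$ in (H), yields explicit constants.

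Some small points:
\begin{itemize}
\item Derivatives of order $D+1$ of $\varpi$ are derivatives of order $D+1$, not $D$, of $\nabla B$; this is harmless.
\item The image terms of the periodized kernel do not ``improve'' your integral. They cost an absolute constant, as in the paper's Step~(i).
\item Your Step~3 genuinely needs the $L^2$ Markov inequality with the sharp power $n^2$ (true, e.g.\ by a Legendre expansion). The paper's (M$_2$) loses a factor $n$; that is harmless there because $8d_n^2$ dominates $\tau_n$ in $\gamma_n$, but in your scheme it would turn $D^6$ into $D^7$.
\item The constant $34$ needs no bookkeeping. Suppose you have $\norm\varpi\le C'D'^6s^{-1}(u'_Q)^{1/2}$ for all integers $D'\ge C_1'(1+s+\log(1+\Upsilon))$, with some $C'>34$. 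Apply this with $D':=\lfloor(34/C')^{1/6}D\rfloor$, which is admissible once $C_1$ is chosen large enough.
\end{itemize}
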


\begin{proof}
(o) (Choice of $D$.) Let $A'_{\mathfrak c}$ be the absolute constant of Step~(ii) below,
$A'_0:=\frac{144\sqrt6}{\pi}A'_{\mathfrak c}$ and $\Gamma_D:=2(2\sqrt3D^3+8D^6)$, so that $\Gamma_D\le17D^6$ for $D\ge2$.
Since $D^6(D+2)^2\le C2^{D/2}$, there is an absolute constant $C_1\ge2$ such that every integer
$D\ge C_1(1+s+\log(1+\Upsilon))$ satisfies
\begin{equation}\label{eq:Dcond}
  2^D\ \ge\ A'_0\,D^6(D+2)^2\,s^{5/2}(1+\sqrt3s)^{10}\exp(2\pi\sqrt3s)\,\Upsilon^{1/2}.
\end{equation}
The factor $\exp(2\pi\sqrt3s)$ is the growth of the entire extension of the field on a polydisc of radius
$s$ (the smoothed band has radius $2\pi$), and $(1+\sqrt3s)^{10}$ comes from the polynomial weight; only
the logarithms of these factors enter $D$.

(i) (Sum over images.) Regard $B$ as a periodic function on $\R^3$ and let
$W(y):=(1+\abs{y-c})^{-8}$. For $y$ in the fundamental domain $c+[-L/2,L/2)^3$ the term $m=0$
satisfies $W\le\mathrm w_Q$, and for $m\ne0$, $\abs{y+mL-c}\ge\abs m_\infty L/2$, so
$\sum_{m\ne0}W(y+mL)\le C(1+L)^{-8}\le C\mathrm w_Q(y)$. Hence
$\int_{\R^3}W\abs{\nabla B}^2\le C_{\mathrm{im}}\int_{\T_L}\mathrm w_Q\abs{\nabla B}^2$ with an absolute constant
$C_{\mathrm{im}}$, independent of $L$ and $s$.

(ii) (Complex neighbourhood.) Let $\hat{\mathfrak c}\in C_c^\infty(\{\abs\xi\le2\pi\})$ be equal to $1$ on
$\{\abs\xi\le\frac32\pi\}$. Then $\partial_jB_k=\mathfrak c*\partial_jB_k$ (convolution on $\R^3$) extends to an
entire function, and integrating by parts with $(1-\Delta_\xi)^3$, applied to $\mathrm e^{i\xi\cdot v}$ only (the factor $\mathrm e^{-\xi\cdot\zeta}$ is kept with $\hat{\mathfrak c}$), gives
$\abs{\mathfrak c(v+i\zeta)}\le C'_{\mathfrak c}\exp(2\pi\abs\zeta)(1+\abs\zeta)^6(1+\abs v)^{-6}$. On the polydisc
$\mathbb D:=\{z\in\C^3:\abs{z_j-c_j}\le s\}$, the Cauchy--Schwarz inequality with weight $W$ and
$\int(1+\abs v)^{-4}dv=\frac{4\pi}{3}$ give
$M_{\mathrm{ext}}:=\max_i\sup_{\mathbb D}\abs{\varpi_i}\le A'_{\mathfrak c}\exp(2\pi\sqrt3s)(1+\sqrt3s)^{10}(\Upsilon u'_Q)^{1/2}$, where
$A'_{\mathfrak c}:=2C'_{\mathfrak c}(4\pi C_{\mathrm{im}}/3)^{1/2}$.

(iii) (Taylor approximation.) Let $P_i$ be the Taylor polynomial of $\varpi_i$ at $c$ of degree
$<D$. Cauchy's estimates give $\sup_Q\abs{\varpi-P}\le6\sqrt3(D+2)^22^{-D}M_{\mathrm{ext}}$, and
$\epsilon:=s^{3/2}\sup_Q\abs{\varpi-P}\ge\norm{\varpi-P}_{L^2(Q)}$.

(iv) (Three polynomial inequalities.) For polynomials of degree $\le n$ in each variable let
$d_n=(n+1)^3$ and $\tau_n=2\sqrt3n^2(n+1)$. Then
(N$_\infty$) $\norm P_\infty\le d_ns^{-3/2}\norm P_2$ (the diagonal of the reproducing kernel of
Legendre polynomials is $\le(n+1)^2/s$ in one variable, and $\abs{L_k}\le1$);
(M$_2$) $\norm{\nabla P}_2\le\tau_ns^{-1}\norm P_2$ (on each line parallel to a coordinate axis, from A.~Markov's inequality
$\norm{q'}_\infty\le2n^2s^{-1}\norm q_\infty$ on an interval of length $s$, see e.g.\ \cite{BE95}, and the
one-variable case $\norm q_\infty^2\le(n+1)^2s^{-1}\norm q_2^2$ of (N$_\infty$));
(H) $\ip{P}{L_i^{-1}P}\ge s^2\gamma_n^{-2}\norm P^2$ with $\gamma_n=2(\tau_n+8d_n^2)$ (take a cutoff $\chi$ of
width $s/(8d_n^2)$ at the four faces on which $L_i$ has Dirichlet conditions, so that $\ip P{\chi P}\ge\frac12\norm P^2$ and
$\norm{\nabla(\chi P)}\le(\tau_n+8d_n^2)\norm P/s$, and use the Cauchy--Schwarz inequality).

(v) (Assembly, $n=D-1$.) By Lemma~\ref{lem:reflection}(c) and $L_i\ge2\pi^2/s^2$,
$u'_Q\le\frac{s^2}{2\pi^2}\norm\varpi^2$, and \eqref{eq:Dcond} gives
$\epsilon\le\norm\varpi/(48D^6)$, where $48D^6\ge2\Gamma_D+2\ge2(1+\Gamma_D/(\sqrt2\pi))$. By (H), $\gamma_{D-1}\le\Gamma_D$ and the triangle inequality,
$u'^{1/2}_Q\ge\frac{s}{\Gamma_D}(\norm\varpi-\epsilon)-\frac{s\epsilon}{\sqrt2\pi}\ge\frac{s}{2\Gamma_D}\norm\varpi\ge\frac{s}{34D^6}\norm\varpi$, which is the claim.
\end{proof}

\subsection{A Neumann floor from the vorticity}\label{sec:floor}
In this subsection $Q$ is a cube of side $s$, $a\in C^1(\bar Q;\R^3)$, $a':=P'_Q\mathbf 1_Qa$,
$\varpi:=\curl a$, $N:=\norm{a'}^2_{L^2(Q)}$, $\hat\mu:=N/\abs Q$ and $\mathcal X:=\abs QE_0^N(Q;a)$. For a real field
$b\in L^3(Q;\R^3)$ let
\[
  E_0^N(Q;b):=\inf\Bigl\{\int_Q\abs{(\nabla+ib)v}^2:\ v\in H^1(Q),\ \norm v_{L^2(Q)}=1\Bigr\},
\]
which is finite since $\norm{bv}_2\le\norm b_3\norm v_6$; for bounded $b$ it is the bottom of the spectrum of
the magnetic Neumann Laplacian. It is gauge invariant: $a-a'=\nabla\chi$ with $\nabla\chi\in L^6(Q)$ ($a$ is bounded, and $a'\in L^6(Q)$ by
Lemma~\ref{lem:L6} below), and $\mathrm e^{i\chi}$ preserves $H^1(Q)$ since $\norm{v\nabla\chi}_2\le\norm v_3\norm{\nabla\chi}_6$, so $E_0^N(Q;a)=E_0^N(Q;a')$. Let
$S_*$ be the scale invariant Sobolev--Poincar\'e constant of cubes,
\begin{equation}\label{eq:Sstar}
  \norm{f-\bar f}_{L^6(Q)}\ \le\ S_*\norm{\nabla f}_{L^2(Q)}\qquad(f\in H^1(Q),\ \bar f:=\abs Q^{-1}\textstyle\int_Qf).
\end{equation}
It holds with the same constant for $\C^m$-valued $f$, since
$\norm f_6^2=\norm{\sum_j\abs{f_j}^2}_3\le\sum_j\norm{f_j}_6^2$ for the real components $f_j$; and
$S_*\ge1/\pi$ (take $f=\cos(\pi x_1/s)$ and use $\norm f_2\le s\norm f_6$).

\begin{lemma}[$L^6$ bound]\label{lem:L6}
$\norm{a'}_{L^6(Q)}\le2S_*\norm\varpi_{L^2(Q)}$ and $\norm{a'}_{L^3(Q)}^2\le\norm{a'}_{L^2(Q)}\norm{a'}_{L^6(Q)}$.
\end{lemma}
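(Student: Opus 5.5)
The second inequality is immediate from H\"older's inequality with exponents chosen by interpolation. Since $\frac13=\frac{1/2}{2}+\frac{1/2}{6}$, we have $\norm{a'}_3\le\norm{a'}_2^{1/2}\norm{a'}_6^{1/2}$; squaring gives the claim. Concretely, write $\abs{a'}^3=\abs{a'}\cdot\abs{a'}^2$ and apply Cauchy--Schwarz to get $\int\abs{a'}^3\le\norm{a'}_2\norm{a'}_4^2$. Then $\norm{a'}_4$ is interpolated between $L^2$ and $L^6$. It is simpler to invoke the log-convexity of $p\mapsto\norm{a'}_p$ in $1/p$ directly.

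For the $L^6$ bound, the plan is to represent $a'$ through the reflection torus of Lemma~\ref{lem:reflection}. Put $F:=\mathcal Ra'$ on $T=\prod_i\R/2s\Z$. By Lemma~\ref{lem:reflection}(a),(b), $F$ is divergence free, has mean zero, and satisfies $\curl F=\mathcal R^*\varpi$. Consequently $\norm{\nabla F}_{L^2(T)}^2=\norm{\curl F}^2+\norm{\Div F}^2=\norm{\mathcal R^*\varpi}^2_{L^2(T)}=8\norm\varpi^2_{L^2(Q)}$; the first equality is the standard Fourier-series identity on the torus. Since $F$ restricted to $Q$ equals $a'$, we get $\norm{\nabla a'}_{L^2(Q)}\le\norm{\nabla F}_{L^2(T)}/\sqrt8\cdot\sqrt8^{\,0}$. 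More precisely, $\int_Q\abs{\nabla a'}^2$ is exactly one eighth of the torus integral, because each reflected copy contributes equally. This gives $\norm{\nabla a'}_{L^2(Q)}=\norm\varpi_{L^2(Q)}$. Some care is needed: $\mathcal R a'$ is merely $L^2$ a priori. However, $\curl$ and $\Div$ of it are in $L^2(T)$ without surface layers, by Lemma~\ref{lem:reflection}(a), so $F\in H^1(T)$ and $a'\in H^1(Q)$.

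Next, apply the Sobolev--Poincar\'e inequality \eqref{eq:Sstar} componentwise to $a'$ (vector-valued, same constant): $\norm{a'-\bar a'}_{L^6(Q)}\le S_*\norm{\nabla a'}_{L^2(Q)}=S_*\norm\varpi_{L^2(Q)}$. It remains to control the mean $\bar a'$, and this is where the factor $2$ comes from. Since $\abs{\bar a'}\,\abs Q^{1/6}=\norm{\bar a'}_{L^6(Q)}$, I would bound the mean by $\abs Q^{-1/2}\norm{a'}_{L^2(Q)}$ via Cauchy--Schwarz and then use Lemma~\ref{lem:reflection}(c): $\norm{a'}_2^2=u'_Q\le\frac{s^2}{2\pi^2}\norm\varpi_2^2$. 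This gives $\norm{\bar a'}_{L^6(Q)}\le\abs Q^{1/6-1/2}\cdot\frac{s}{\sqrt2\pi}\norm\varpi_2=\frac1{\sqrt2\pi}\norm\varpi_2$. Since $S_*\ge1/\pi$, this is at most $S_*\norm\varpi_2$. The triangle inequality then yields $\norm{a'}_6\le2S_*\norm\varpi_2$.

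The main obstacle is justifying $\norm{\nabla a'}_{L^2(Q)}=\norm\varpi_{L^2(Q)}$ rigorously, i.e.\ that the reflected field lies in $H^1(T)$ with $\norm{\nabla F}^2=\norm{\curl F}^2$. I would derive this from Lemma~\ref{lem:reflection}(b): $F=\curl(-\Delta_T)^{-1}\mathcal R^*\varpi$ with $\mathcal R^*\varpi\in L^2(T)$, and $\curl(-\Delta_T)^{-1}\curl^*$ is bounded on Fourier series by $\abs k^{-1}$ multipliers. This immediately gives $\hat F(k)=ik\times\hat w(k)/\abs k^2$ with $\widehat{\mathcal R^*\varpi}=\hat w$ divergence free. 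Then $\abs k^2\abs{\hat F(k)}^2=\abs{\hat w(k)}^2$, which avoids any extension argument.
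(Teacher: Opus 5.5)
Your proof is correct, but it takes a different route from the paper at the Sobolev step. The H\"older part is the same as the paper's.

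\textbf{The paper's route.} It never returns to $Q$ for the gradient. It applies \eqref{eq:Sstar} to $u=\mathcal Ra'$ on the doubled cube $(-s,s)^3$, where $u$ already has mean zero, so no mean term arises. Then the scale invariance of $S_*$, $\norm{\nabla u}^2_{L^2(T)}=8\norm\varpi^2_{L^2(Q)}$ and $\norm u^6_{L^6((-s,s)^3)}=8\norm{a'}^6_{L^6(Q)}$ give exactly $8^{-1/6}\sqrt8\,S_*=2S_*$.

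\textbf{Your route.} You use the reflection only to get $a'\in H^1(Q)$ with $\norm{\nabla a'}_{L^2(Q)}=\norm\varpi_{L^2(Q)}$. This is right: $F\in H^1(T)$ carries no singular gradient on the reflection planes, and $\abs{\nabla F(x)}=\abs{\nabla a'(Jx)}$ on $JQ$ since $J$ is a signed diagonal. You then apply \eqref{eq:Sstar} on $Q$ itself. You pay for the mean with two extra inputs: $\norm{a'}_2^2\le\frac{s^2}{2\pi^2}\norm\varpi_2^2$ from Lemma~\ref{lem:reflection}(c), and $S_*\ge1/\pi$.

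\textbf{What each buys.} Your version gives the slightly better constant $S_*+1/(\sqrt2\pi)\le(1+2^{-1/2})S_*$ and the by-product identity $\norm{\nabla a'}_{L^2(Q)}=\norm\varpi_{L^2(Q)}$. The paper's version is shorter and needs neither the spectral bound on $L_i$ nor the lower bound on $S_*$.

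\textbf{A wording fix.} Your first inequality for $\norm{\nabla a'}_{L^2(Q)}$ is garbled. Simply replace it by the exact one-eighth identity you state right after it.
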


\begin{proof}
Let $u:=\mathcal Ra'$ on $T=(\R/2s\Z)^3$. By Lemma~\ref{lem:reflection}(a), $u$ has zero divergence and
mean, and $\curl u=\mathcal R^*\varpi$; for the Fourier coefficients $\abs k^2\abs{\hat u_k}^2=\abs{k\times\hat u_k}^2$,
so $\norm{\nabla u}^2_{L^2(T)}=\norm{\mathcal R^*\varpi}^2_{L^2(T)}=8\norm\varpi^2_{L^2(Q)}$. The restriction of $u$ to the
fundamental cube $(-s,s)^3$ lies in $H^1$ and has mean zero, so \eqref{eq:Sstar} on this cube gives
$\norm u_{L^6}\le\sqrt8S_*\norm\varpi_{L^2(Q)}$, and $\norm u^6_{L^6((-s,s)^3)}=8\norm{a'}^6_{L^6(Q)}$; hence
$\norm{a'}_6\le8^{-1/6}\sqrt8S_*\norm\varpi=2S_*\norm\varpi$. The second claim is H\"older's inequality.
\end{proof}

\begin{lemma}[Weak fields]\label{lem:weakfloor}
Let $b\in L^3(Q;\R^3)$ be real with $b\perp\nabla H^1(Q)$, $\hat\mu_b:=\norm b_2^2/\abs Q$ and $\eta\in(0,1]$. If
$S_*^2\norm b^2_{L^3(Q)}\le\eta/8$, then $E_0^N(Q;b)\ge(1-\eta)\hat\mu_b$.
\end{lemma}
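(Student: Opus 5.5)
The plan is to split an arbitrary normalized $v\in H^1(Q)$ into its mean and its mean-zero part, $v=\bar v+w$ with $\bar v:=\abs Q^{-1}\int_Qv$ a constant, and to expand the quadratic form. We have $(\nabla+ib)v=\nabla w+ib v$, hence
\[
  \int_Q\abs{(\nabla+ib)v}^2=\norm{\nabla w}^2+\norm{bv}^2+2\,\mathrm{Re}\,\ip{\nabla w}{ibv}.
\]
Write $bv=b\bar v+bw$. The term $\abs{\bar v}^2\norm b_2^2=\abs{\bar v}^2\abs Q\hat\mu_b$ should give the main contribution. Since $\norm v_2^2=\abs Q\abs{\bar v}^2+\norm w_2^2$, it remains to control the deficit $\hat\mu_b\norm w_2^2$ and all cross terms by $\norm{\nabla w}^2$. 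The gauge condition $b\perp\nabla H^1(Q)$ enters here: the cross term $2\mathrm{Re}\,\ip{\nabla w}{ib\bar v}$ equals $2\mathrm{Re}\,(i\bar v\ip{\nabla \bar w}{b})$ up to conjugation conventions, and this vanishes because $\ip{\nabla\bar w}{b}=0$ (apply orthogonality to the real and imaginary parts of $w$, which lie in $H^1$). Likewise the cross term in $\norm{b\bar v+bw}^2$, namely $2\mathrm{Re}\,\int \bar{\bar v}\,\abs b^2 w$, does not vanish. I would instead keep $\norm{bv}^2$ whole and bound it below by $(1-\delta)\abs{\bar v}^2\norm b^2-\delta^{-1}\norm{bw}^2$.

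The remaining terms are $\norm{bw}_2$ and $\abs{\ip{\nabla w}{ibw}}$, which are controlled by H\"older and \eqref{eq:Sstar}. Since $w$ has mean zero, $\norm{bw}_2\le\norm b_3\norm w_6\le S_*\norm b_3\norm{\nabla w}_2$. With $\epsilon:=S_*^2\norm b_3^2\le\eta/8$ this gives $\norm{bw}^2\le\epsilon\norm{\nabla w}^2$ and $2\abs{\ip{\nabla w}{ibw}}\le2\sqrt\epsilon\norm{\nabla w}^2$. For the deficit $\hat\mu_b\norm w_2^2$ I use H\"older on $Q$, $\norm b_2^2\le\abs Q^{1/3}\norm b_3^2$, so $\hat\mu_b\le\abs Q^{-2/3}\norm b_3^2$. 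Poincar\'e with $\norm w_2\le\abs Q^{1/3}\norm w_6\le\abs Q^{1/3}S_*\norm{\nabla w}$ then gives $\hat\mu_b\norm w_2^2\le\epsilon\norm{\nabla w}^2$.

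Assembling, the bound is
\[
  \int\abs{(\nabla+ib)v}^2\ge(1-2\sqrt\epsilon-\delta^{-1}\epsilon)\norm{\nabla w}^2+(1-\delta)\hat\mu_b\abs Q\abs{\bar v}^2.
\]
This should exceed $(1-\eta)\hat\mu_b(\abs Q\abs{\bar v}^2+\norm w^2)$. Take $\delta=\eta$; then the coefficient of $\abs Q\abs{\bar v}^2$ is fine. For the $w$ part we need $1-2\sqrt\epsilon-\epsilon/\eta\ge\epsilon$, which uses $\hat\mu_b\norm w^2\le\epsilon\norm{\nabla w}^2$. With $\epsilon\le\eta/8$ we have $\epsilon/\eta\le1/8$, $2\sqrt\epsilon\le2\sqrt{1/8}\approx0.71$ and $\epsilon\le1/8$, and the total $0.71+0.125+0.125<1$ works.

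The main obstacle is the bookkeeping of constants. I must avoid the $\mathrm{Re}$ cross term between $b\bar v$ and $\nabla w$ by using the gauge orthogonality correctly for complex $w$, and I must use $\delta=\eta$ rather than a fixed $\delta$, so that the loss in the main term is exactly $\eta$.
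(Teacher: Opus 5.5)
Your proposal is correct and follows essentially the paper's argument. You split $v$ into its mean and mean-zero part, kill the constant cross term by $b\perp\nabla H^1(Q)$, control $\norm{bw}_2$ and the remaining cross term by H\"older and the Sobolev--Poincar\'e inequality, and split $\int\abs b^2\abs{\bar v+w}^2$ with weight $\delta=\eta$.

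The only differences are cosmetic. You bound the deficit $\hat\mu_b\norm w_2^2$ directly by H\"older and Sobolev--Poincar\'e, where the paper uses the Neumann gap together with $S_*\ge1/\pi$. You also use the cruder $-\delta^{-1}$ in place of $-(\delta^{-1}-1)$. Neither change affects the conclusion.
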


\begin{proof}
Let $t:=S_*^2\norm b_3^2$ and $V:=\abs b^2$. Write $v\in H^1(Q)$ with $\norm v=1$ as $v=c+f$,
$c$ constant and $\int_Qf=0$, so that $\abs c^2\abs Q=1-\norm f^2$. Then
\[
  \int_Q\abs{(\nabla+ib)v}^2=\norm{\nabla f}^2+\int_QV\abs{c+f}^2+2\,\mathrm{Re}\int_Q\overline{\nabla f}\cdot ib(c+f).
\]
The cross term containing $c$ vanishes, since $b$ is real and orthogonal to $\nabla\mathrm{Re}f$ and
$\nabla\mathrm{Im}f$. By H\"older's inequality and \eqref{eq:Sstar}, the other cross term is at least
$-2\norm b_3\norm f_6\norm{\nabla f}\ge-2\sqrt t\norm{\nabla f}^2$, and $\int V\abs f^2\le\norm b_3^2\norm f_6^2\le t\norm{\nabla f}^2$.
Moreover $\abs{c+f}^2\ge(1-\eta)\abs c^2-(\eta^{-1}-1)\abs f^2$, and, by the Neumann gap
$\norm{\nabla f}^2\ge(\pi^2/s^2)\norm f^2$ and $\hat\mu_bs^2/\pi^2\le\norm b_3^2/\pi^2\le t$ (H\"older and $S_*\ge1/\pi$),
$\abs c^2\norm b_2^2=\hat\mu_b(1-\norm f^2)\ge\hat\mu_b-t\norm{\nabla f}^2$. Altogether
\[
  \int_Q\abs{(\nabla+ib)v}^2\ \ge\ (1-\eta)\hat\mu_b+\Bigl(1-2\sqrt t-\frac t\eta+\eta t\Bigr)\norm{\nabla f}^2,
\]
and for $t\le\eta/8$ the bracket is $\ge1-2/\sqrt8-1/8>0$.
\end{proof}

The proof uses only that the first order (paramagnetic) term vanishes for the constant
part of $v$, because $b\perp\nabla H^1(Q)$; no bound on the supremum of $b$ is needed.

\subsubsection*{Large fields: a dual test field}
Let $\tilde a:=\mathcal Ra'$ on $T=(\R/2s\Z)^3$ (the $j$-th component is odd in $x_j$ and even in the other
variables), $\tilde{\mathbf g}:=\curl(-\Delta_T)^{-1}\tilde a$, $\mathbf g_0:=\tilde{\mathbf g}\restriction_Q$ and $\Phi^*:=\sup_Q\abs{\mathbf g_0}$.

\begin{lemma}[Agmon's inequality on the torus, cf.\ \cite{Agm65}]\label{lem:agmon}
Let $u\in H^2(T;\C^3)$ have mean zero. Then the Fourier series of $u$ converges absolutely, $u$ is
continuous, and
\[
  \norm u^2_{L^\infty(T)}\ \le\ C_A\norm{\nabla u}_{L^2(T)}\norm{\Delta u}_{L^2(T)},
\]
where
\[
  C_A:=\frac{c_3}{4\pi^3},\qquad c_3:=\sup_{X\ge1}\sum_{n\in\Z^3\setminus\{0\}}\frac{X}{\abs n^2(X^2+\abs n^2)}<\infty
\]
do not depend on $s$.
\end{lemma}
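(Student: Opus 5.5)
We prove Agmon's inequality by Fourier series on $T=(\R/2s\Z)^3$, splitting the frequencies at a threshold that we optimize. Write $u(x)=\sum_{k\ne0}\hat u_k\mathrm e^{ik\cdot x}$ with $k\in\frac{\pi}{s}\Z^3\setminus\{0\}$, normalized so that $\norm u^2_{L^2(T)}=\abs T\sum_k\abs{\hat u_k}^2$ with $\abs T=8s^3$. Mean zero means $\hat u_0=0$. Then
\[
\norm{\nabla u}^2=\abs T\sum_k\abs k^2\abs{\hat u_k}^2,\qquad \norm{\Delta u}^2=\abs T\sum_k\abs k^4\abs{\hat u_k}^2 .
\]
For any $X>0$, the Cauchy--Schwarz inequality with the weights $\abs k^2(X^2+\abs k^2)$ gives
\[
\Bigl(\sum_k\abs{\hat u_k}\Bigr)^2\le\Bigl(\sum_k\frac{1}{\abs k^2(X^2+\abs k^2)}\Bigr)\Bigl(\sum_k(X^2\abs k^2+\abs k^4)\abs{\hat u_k}^2\Bigr).
\]
Since $u\in H^2$, the right side is finite. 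This proves absolute convergence of the Fourier series, hence continuity of $u$ and the bound $\norm u_\infty\le\sum_k\abs{\hat u_k}$.

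Next we make the lattice sum scale-free. Substitute $k=\frac{\pi}{s}n$ with $n\in\Z^3\setminus\{0\}$, and write $X=\frac{\pi}{s}Y$. Then
\[
\sum_k\frac{1}{\abs k^2(X^2+\abs k^2)}=\frac{s^4}{\pi^4}\sum_n\frac{1}{\abs n^2(Y^2+\abs n^2)}\le\frac{s^4}{\pi^4}\,\frac{c_3}{Y}
\]
for $Y\ge1$, by the definition of $c_3$. Finiteness of $c_3$ follows by comparing the sum with the integral $\int\frac{dx}{\abs x^2(Y^2+\abs x^2)}\approx 2\pi^2/Y$. For $\abs n\le Y$ the summand is at most $Y^{-2}\abs n^{-2}$, and the shell count gives a contribution of order $Y^{-1}$; for $\abs n>Y$ we use $\abs n^{-4}$, and the tail is again of order $Y^{-1}$. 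The second factor equals $\abs T^{-1}(X^2\norm{\nabla u}^2+\norm{\Delta u}^2)$. Together,
\[
\norm u_\infty^2\le\frac{c_3\,s^4}{\pi^4 Y\cdot 8s^3}\bigl(X^2\norm{\nabla u}^2+\norm{\Delta u}^2\bigr)=\frac{c_3}{8\pi^3X}\bigl(X^2\norm{\nabla u}^2+\norm{\Delta u}^2\bigr).
\]

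Now we optimize over $X$. Since $\hat u_0=0$ and $\abs k\ge\pi/s$ on the support, we have $\norm{\Delta u}\ge\frac{\pi}{s}\norm{\nabla u}$. Hence the choice $X=\norm{\Delta u}/\norm{\nabla u}$ satisfies $X\ge\pi/s$, that is $Y\ge1$, as required by the definition of $c_3$. (If $\nabla u=0$, then $u=0$ and there is nothing to prove.) With this $X$ the bracket equals $2\norm{\nabla u}\norm{\Delta u}\,X$. The result is
\[
\norm u^2_\infty\le\frac{c_3}{4\pi^3}\norm{\nabla u}\norm{\Delta u}=C_A\norm{\nabla u}_{L^2(T)}\norm{\Delta u}_{L^2(T)},
\]
uniformly in $s$, and the $\C^3$-valued case follows in the same way by working with $\abs{\hat u_k}$ as a vector norm. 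The only delicate points are the bookkeeping of the Fourier normalization, which produces the factor $1/(8s^3)$ that cancels the $s^4/\pi^4$ exactly, and checking that the optimal $X$ lands in the range $Y\ge1$ covered by $c_3$. The Neumann-type gap of the mean-zero torus guarantees the latter, so I expect no genuine obstacle.
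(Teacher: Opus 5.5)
Your proposal is correct and follows the paper's proof essentially step by step: the same weighted Cauchy--Schwarz with weights $\mu\abs k^2+\abs k^4$, the rescaling $k=\frac\pi s n$ that cancels $s^4/\pi^4$ against $\abs T=8s^3$, and the choice $\sqrt\mu=\norm{\Delta u}/\norm{\nabla u}$, justified by the gap $\abs k\ge\pi/s$. Your proof of $c_3<\infty$, splitting the sum at $\abs n=X$, is also the paper's argument.
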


\begin{proof}
Write $u=\sum_k\hat u_k\mathrm e^{ik\cdot x}$, $k\in(\pi/s)\Z^3$, $\hat u_0=0$; then $\norm{\nabla u}^2=\abs T\sum\abs k^2\abs{\hat u_k}^2$ and
$\norm{\Delta u}^2=\abs T\sum\abs k^4\abs{\hat u_k}^2$, $\abs T=8s^3$, and $\abs{u(x)}\le\sum_k\abs{\hat u_k}$. For $\mu>0$ the
Cauchy--Schwarz inequality with the weights $w_k:=\mu\abs k^2+\abs k^4$ gives
\[
  \Bigl(\sum_k\abs{\hat u_k}\Bigr)^2\le\frac{\mu\norm{\nabla u}^2+\norm{\Delta u}^2}{8s^3}\cdot\frac{s^4}{\pi^4}\sum_{n\ne0}\frac{1}{\abs n^2(X^2+\abs n^2)},\qquad
  X:=\frac{\sqrt\mu\,s}{\pi}.
\]
Choose $\sqrt\mu:=\norm{\Delta u}/\norm{\nabla u}$ (if $\nabla u=0$ then $u=0$). Since $\abs k\ge\pi/s$ for $k\ne0$,
$\norm{\Delta u}\ge(\pi/s)\norm{\nabla u}$, i.e.\ $X\ge1$, so the last sum is $\le c_3/X$ and the right side is
$\le\frac{c_3}{4\pi^3}\norm{\nabla u}\norm{\Delta u}$. Finally $c_3<\infty$: for $X\ge1$ the terms with $\abs n\le X$ contribute at most
$X^{-1}\sum_{0<\abs n\le X}\abs n^{-2}\le C$, and those with $\abs n>X$ at most $X\sum_{\abs n>X}\abs n^{-4}\le C$.
\end{proof}

\begin{lemma}\label{lem:dual}
(a) $\tilde{\mathbf g}\in H^2(T)\cap C(T)$, $\curl\tilde{\mathbf g}=\tilde a$, $\norm{\nabla\tilde{\mathbf g}}_{L^2(T)}=\sqrt{8N}$ and
$\norm{\Delta\tilde{\mathbf g}}_{L^2(T)}=\sqrt8\norm\varpi_{L^2(Q)}$; hence $\Phi^{*2}\le8C_A\sqrt N\norm\varpi_{L^2(Q)}$.
(b) On each face of $Q$, $n\times\mathbf g_0=0$.
\end{lemma}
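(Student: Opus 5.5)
Everything will be done on the Fourier side of the reflection torus $T=(\R/2s\Z)^3$, with $\tilde{\mathbf g}=\curl(-\Delta_T)^{-1}\tilde a$ and $\tilde a=\mathcal Ra'$.

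\emph{Part (a).} By Lemma~\ref{lem:reflection}(a), $\tilde a\in L^2(T)$ is divergence free, has mean zero, and satisfies $\curl\tilde a=\mathcal R^*\varpi$ without surface layers. Hence $\hat{\tilde a}_0=0$ and $k\cdot\hat{\tilde a}_k=0$, and the Fourier coefficients of $\tilde{\mathbf g}$ are
\[
  \hat{\tilde{\mathbf g}}_k=ik\times\hat{\tilde a}_k/\abs k^2 \qquad(k\neq0),
\]
with mean zero. The identity $k\times(k\times v)=-\abs k^2v$ for $k\cdot v=0$ gives $\curl\tilde{\mathbf g}=\tilde a$. For the gradient, $\abs{k\times\hat{\tilde a}_k}=\abs k\abs{\hat{\tilde a}_k}$, so $\abs k^2\abs{\hat{\tilde{\mathbf g}}_k}^2=\abs{\hat{\tilde a}_k}^2$. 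Therefore
\[
  \norm{\nabla\tilde{\mathbf g}}^2_{L^2(T)}=\norm{\tilde a}^2_{L^2(T)}=8\norm{a'}^2_{L^2(Q)}=8N,
\]
using the eight reflected copies. For the Laplacian, $\abs k^4\abs{\hat{\tilde{\mathbf g}}_k}^2=\abs k^2\abs{\hat{\tilde a}_k}^2=\abs{k\times\hat{\tilde a}_k}^2$, because $\tilde a$ is divergence free. This equals the squared Fourier coefficient of $\curl\tilde a=\mathcal R^*\varpi$, so
\[
  \norm{\Delta\tilde{\mathbf g}}^2_{L^2(T)}=8\norm\varpi^2_{L^2(Q)},
\]
exactly as in the proof of Lemma~\ref{lem:L6}. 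Since $\varpi$ is bounded on $\bar Q$, this is finite, which gives $\tilde{\mathbf g}\in H^2(T)$. Lemma~\ref{lem:agmon} then yields continuity of $\tilde{\mathbf g}$ and
\[
  \Phi^{*2}\le C_A\sqrt{8N}\cdot\sqrt8\norm\varpi_{L^2(Q)}=8C_A\sqrt N\norm\varpi_{L^2(Q)}.
\]

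\emph{Part (b).} The boundary condition should follow from the reflection parities. The $j$-th component of $\tilde a$ is odd in $x_j$ and even in the other variables, and $(-\Delta_T)^{-1}$ commutes with all reflections. So $w:=(-\Delta_T)^{-1}\tilde a$ has the same parities: $\mathcal R$-covariance means $w(Jx)=Jw(x)$ for $J\in\mathcal G_0$. The curl of such a field is a pseudovector field, $\curl w(Jx)=\det(J)J\curl w(x)$. Take $J=J_1=\mathrm{diag}(-1,1,1)$ on the plane $x_1=0$, where $J_1x=x$. There $\tilde{\mathbf g}=-J_1\tilde{\mathbf g}$, so the components $\tilde{\mathbf g}_2,\tilde{\mathbf g}_3$ vanish, which is exactly $n\times\mathbf g_0=0$. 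The faces $x_1=s$ are handled the same way, using the reflection about $x_1=s$, which is $J_1$ composed with the $2s$-periodic translation; this is also a symmetry of $T$. The remaining coordinates are analogous. Since $\tilde{\mathbf g}$ is continuous, these pointwise statements make sense.

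\emph{Main obstacle.} The only delicate point is justifying the parity rule for $\curl$ at the level of $H^2$ functions and continuous traces. I would handle it on Fourier series, where each reflection acts on the indices $k$, and then conclude by continuity. Everything else is bookkeeping of the factor $8$ from the eight copies of $Q$ in $T$.
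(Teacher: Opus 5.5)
Your proof is correct and follows the paper's argument. For (a) you use the Fourier coefficients $ik\times\hat{\tilde a}_k/\abs k^2$, with the divergence-free and mean-zero properties taken from Lemma~\ref{lem:reflection}(a), and then Agmon's inequality for continuity and for the bound on $\Phi^*$; for (b) you use the parity rule $\tilde{\mathbf g}(Jx)=\det(J)\,J\tilde{\mathbf g}(x)$ at the fixed points of $J_1$. Your treatment of the face $x_1=s$ (also fixed by $J_1$ on $T$) and your Fourier-side justification of the parity rule make explicit what the paper leaves implicit.
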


\begin{proof}
(a) $\tilde a$ has zero mean and divergence (Lemma~\ref{lem:reflection}(a)), so
$\hat{\tilde{\mathbf g}}_k=ik\times\hat{\tilde a}_k/\abs k^2$ and $\abs k\abs{\hat{\tilde{\mathbf g}}_k}=\abs{\hat{\tilde a}_k}$; thus
$\curl\tilde{\mathbf g}=-\Delta(-\Delta_T)^{-1}\tilde a=\tilde a$, $\norm{\nabla\tilde{\mathbf g}}=\norm{\tilde a}=\sqrt{8N}$, and
$\norm{\Delta\tilde{\mathbf g}}=\norm{\curl\tilde a}=\norm{\mathcal R^*\varpi}=\sqrt8\norm\varpi$ by Lemma~\ref{lem:reflection}(a). Continuity and
the bound on $\Phi^*$ follow from Lemma~\ref{lem:agmon}, since $\tilde{\mathbf g}$ has mean zero.
(b) From $\tilde a(Jx)=J\tilde a(x)$ we get $\tilde{\mathbf g}(Jx)=\det(J)J\tilde{\mathbf g}(x)$; points of the face $x_1=0$ are fixed
by $J_1$, so the tangential components vanish.
\end{proof}

\begin{lemma}[A phase-free current identity]\label{lem:current}
Let $Q$ be a cube, $\psi\in H^1(Q)$, $\rho:=\abs\psi$, $c>0$ and $\mathcal K_\rho:=\{\rho\ge c\}$. Let $\mathbf G\in H^1(Q;\R^3)\cap L^\infty(Q;\R^3)$
satisfy $n\times\mathbf G=0$ on the faces of $Q$ in the trace sense, and let $\mathbf G$ and $\curl\mathbf G$ vanish a.e.\ outside $\mathcal K_\rho$.
Then $\int_{\mathcal K_\rho}\rho^{-2}\,\mathrm{Im}(\bar\psi\nabla\psi)\cdot\curl\mathbf G=0$.
\end{lemma}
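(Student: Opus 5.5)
The identity is an integration by parts: on $\mathcal K_\rho$ the current density $\rho^{-2}\mathrm{Im}(\bar\psi\nabla\psi)$ is locally the gradient of a phase, and $\curl\mathbf G$ is orthogonal to gradients once the boundary term $n\times\mathbf G$ vanishes. The difficulty is that there is no global phase, and $\mathcal K_\rho$ is only a superlevel set. So instead of a phase I would use a smooth, phase-free substitute. The idea is that $\mathrm{Im}(\bar\psi\nabla\psi)/\abs\psi^2$ is the pullback by $\psi$ of the closed $1$-form $\alpha=\mathrm{Im}(\bar z\,dz)/\abs z^2$ on $\C\setminus\{0\}$.

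\emph{Step 1 (smoothing and closedness).} Take a smooth cutoff $\chi_\eps$ with $\chi_\eps=1$ for $\abs z\ge c$ and $\chi_\eps=0$ for $\abs z\le c-\eps$, and set $\mathbf J_\eps:=\chi_\eps(\psi)\,\mathrm{Im}(\bar\psi\nabla\psi)/\abs\psi^2$. This equals $\psi^*(\chi_\eps\alpha)$. I would first prove the claim for $\psi\in C^\infty(\bar Q)$.
\begin{itemize}
\item Since $\alpha$ is closed, $d(\chi_\eps\alpha)=d\chi_\eps\wedge\alpha$.
\item This $2$-form is a multiple of $d\abs z\wedge d\arg z$, supported in $c-\eps<\abs z<c$.
\item Hence $\curl\mathbf J_\eps$ is supported in $\{c-\eps<\rho<c\}$.
\end{itemize}
Integrating by parts on the cube gives
\[
\int_Q\mathbf J_\eps\cdot\curl\mathbf G=\int_Q\curl\mathbf J_\eps\cdot\mathbf G-\int_{\partial Q}\mathbf J_\eps\cdot(n\times\mathbf G).
\]
The boundary term vanishes because $n\times\mathbf G=0$. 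The volume term vanishes because $\mathbf G=0$ a.e.\ outside $\mathcal K_\rho$, while $\curl\mathbf J_\eps$ lives on $\{\rho<c\}$. Finally, $\mathbf J_\eps=\rho^{-2}\mathrm{Im}(\bar\psi\nabla\psi)$ on $\mathcal K_\rho$, and $\curl\mathbf G=0$ a.e.\ off $\mathcal K_\rho$. Therefore $\int_Q\mathbf J_\eps\cdot\curl\mathbf G$ equals the integral in the claim.

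\emph{Step 2 (approximation for $\psi\in H^1$).} Here I would avoid a pointwise curl of $\mathbf J_\eps$ and work with distributions throughout.
\begin{itemize}
\item The map $F_\eps(z):=\chi_\eps(z)\,\bar z/\abs z^2$ is smooth and bounded by $1/(c-\eps)$.
\item $\mathbf J_\eps=\mathrm{Im}(F_\eps(\psi)\nabla\psi)\in L^2$, with $\abs{\mathbf J_\eps}\le\abs{\nabla\psi}/(c-\eps)$.
\item Approximate $\psi$ by $\psi_n\in C^\infty(\bar Q)$ in $H^1$, with $\psi_n\to\psi$ a.e. Then $\mathbf J_\eps[\psi_n]\to\mathbf J_\eps[\psi]$ in $L^2$ by dominated convergence together with the $L^2$ convergence of $\nabla\psi_n$.
\end{itemize}
The sets on which $\curl\mathbf J_\eps[\psi_n]$ is supported move with $n$, so I cannot simply pass to the limit in Step 1. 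I would use the exact identity differently. Write $\mathbf J_\eps[\psi_n]=\nabla\Theta$ locally where $\rho_n>c-\eps/2$, or, more robustly, compute $\curl\mathbf J_\eps[\psi_n]=\mathrm{Im}(\nabla F_\eps(\psi_n)\times\nabla\psi_n)$, which is bounded in $L^1$. Pair this with the bounded $\mathbf G$. Its support lies in $\{\rho_n<c\}\cap\{\rho_n>c-\eps\}$, and $\mathbf G$ vanishes on $\{\rho<c\}$. So the pairing tends to zero if $\abs{\nabla\psi_n}^2\mathbf 1_{\{\rho_n<c\le\rho\}}\to0$ in $L^1$. That follows from Vitali's theorem, since $\rho_n\to\rho$ a.e.\ and $\abs{\nabla\psi_n}^2$ is uniformly integrable. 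The boundary value $\rho=c$ is harmless: it contributes only on $\{\rho=c\}$, where $\nabla\rho=0$ a.e., but to be safe I would shift $c$ down to $c-\delta$ in the cutoff and let $\delta\to0$.

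\emph{Main obstacle.} The main difficulty is the $H^1$ approximation in Step 2, namely controlling $\curl\mathbf J_\eps$, which is only an $L^1$ object of size $\abs{\nabla\psi}^2$, against $\mathbf G\in L^\infty$ near the level set $\{\rho=c\}$. The boundary term also has to be justified when $\mathbf G$ is only in $H^1$. For this I would approximate $\mathbf G$ by smooth fields with $n\times\mathbf G=0$, which is possible for $H_0(\curl)$-type spaces on a cube. Alternatively, I would directly use that $\int\nabla\Theta\cdot\curl\mathbf G=0$ for $\Theta\in H^1$ whenever $n\times\mathbf G=0$.
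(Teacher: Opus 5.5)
Your proposal is correct and is essentially the paper's proof. There too the current is multiplied by a smooth phase-free cutoff, $W=h(\rho^2)\,\mathrm{Im}(\bar\psi\nabla\psi)$ with $h(t)=\phi(t)/t$, whose curl $2\phi'(\rho^2)\nabla\psi_1\times\nabla\psi_2$ lives below the level $c$, and one integrates by parts against smooth approximants $\psi^{(n)}$, the boundary term vanishing by $n\times\mathbf G=0$ (since $W^{(n)}\in C^\infty(\bar Q)$, the $H^1$ trace of $\mathbf G$ suffices and no $H_0(\curl)$ density is needed).

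The only difference is the limit $n\to\infty$. The paper passes directly to the $L^1$ limit $\curl W^{(n)}\to2\phi'(\rho^2)\nabla\psi_1\times\nabla\psi_2$, whose product with $\mathbf G$ vanishes a.e. Your cruder bound by $\abs{\nabla\psi_n}^2\mathbf 1_{\{\rho_n<c\le\rho\}}$ genuinely fails on $\{\rho=c\}$: take $\psi=c\,\mathrm e^{ix_1}$ and $\psi_n=(c-\frac1n)\mathrm e^{ix_1}$. So your shift of the threshold to $c-\delta$ is necessary rather than precautionary. With any fixed $\delta\in(0,\eps)$ it closes the argument, and no limit $\delta\to0$ is needed.
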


\begin{proof}
The integrand is integrable, as $\abs{\mathrm{Im}(\bar\psi\nabla\psi)}/\rho^2\le\abs{\nabla\psi}/c$ on $\mathcal K_\rho$. Let $\phi\in C^\infty([0,\infty))$
with $\phi=0$ on $[0,c^2/4]$ and $\phi=1$ on $[c^2,\infty)$, let $h(t):=\phi(t)/t$, and
$W:=h(\rho^2)\,\mathrm{Im}(\bar\psi\nabla\psi)$. Since $\sup_t\sqrt t\,h(t)<\infty$, $W\in L^2(Q)$, and $W=\mathrm{Im}(\bar\psi\nabla\psi)/\rho^2$ on
$\mathcal K_\rho$; as $\curl\mathbf G=0$ a.e.\ outside $\mathcal K_\rho$, it suffices to show $\int_QW\cdot\curl\mathbf G=0$. Write $\psi=\psi_1+i\psi_2$
with real $\psi_j$. For smooth $\psi$ one computes
$\curl W=2h'(\rho^2)\rho^2\nabla\psi_1\times\nabla\psi_2+2h(\rho^2)\nabla\psi_1\times\nabla\psi_2=2\phi'(\rho^2)\nabla\psi_1\times\nabla\psi_2$. Approximate
$\psi$ in $H^1(Q)$ and almost everywhere by $\psi^{(n)}\in C^\infty(\bar Q)$ and integrate by parts; the boundary term
$\int_{\partial Q}(W^{(n)}\times\mathbf G)\cdot n=-\int_{\partial Q}W^{(n)}\cdot(n\times\mathbf G)$ vanishes. As $n\to\infty$, $W^{(n)}\to W$ in $L^2$
(bounded, a.e.\ convergent coefficients times $L^2$-convergent gradients) and $\curl W^{(n)}\to2\phi'(\rho^2)\nabla\psi_1\times\nabla\psi_2$ in $L^1$,
while $\mathbf G\in L^\infty$. Hence $\int_QW\cdot\curl\mathbf G=\int_Q2\phi'(\rho^2)\,\mathbf G\cdot(\nabla\psi_1\times\nabla\psi_2)=0$, because
$\phi'(\rho^2)=0$ on $\mathcal K_\rho$ and $\mathbf G=0$ a.e.\ outside $\mathcal K_\rho$.
\end{proof}

Neither a continuous phase of $\psi$ nor its continuity is used; in the application below the vortices of
$\psi$ lie in the region where $\mathbf G$ vanishes.

\begin{proposition}[Neumann floor from the vorticity]\label{prop:floor}
There is an absolute constant $c_\star>0$ such that the following holds. If $\kappa\ge1$ and
$\norm\varpi_{L^2(Q)}\le\kappa\sqrt N/s$, then for every $\eta\in(0,1]$
\[
  \mathcal X\ \ge\ (1-\eta)\min\{N,\ c_\star\eta s/\kappa\}.
\]
\end{proposition}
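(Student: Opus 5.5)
The proof splits according to the size of $N$ relative to a threshold $N_0:=c_0\eta s/\kappa$, with $c_0$ an absolute constant fixed at the end. Weak fields go through Lemma~\ref{lem:weakfloor}. Strong fields use the dual test field $\mathbf g_0$ of Lemma~\ref{lem:dual} together with the current identity of Lemma~\ref{lem:current}. Throughout I use the gauge invariance $E_0^N(Q;a)=E_0^N(Q;a')$ and work with $b=a'$, which is real, lies in $L^3$, and is orthogonal to $\nabla H^1(Q)$.

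\emph{Weak fields.} Lemma~\ref{lem:L6} and the hypothesis on the vorticity give
\[
\norm{a'}_3^2\le\sqrt N\cdot2S_*\norm\varpi_2\le 2S_*\kappa N/s .
\]
Hence $S_*^2\norm{a'}_3^2\le\eta/8$ as soon as $N\le\eta s/(16S_*^3\kappa)$. In that regime Lemma~\ref{lem:weakfloor} with $b=a'$ gives $E_0^N\ge(1-\eta)N/\abs Q$, that is $\mathcal X\ge(1-\eta)N$. I choose $c_0\le 1/(16S_*^3)$, so that every $N\le N_0$ is covered.

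\emph{Strong fields, $N>N_0$.} Here I must show $\mathcal X\ge c_\star\eta s/\kappa$. I argue by contradiction and assume $\mathcal X<\delta N_0$ for a small absolute $\delta$. Take a normalized near-minimizer $\psi$ and set $\rho=\abs\psi$. Away from the zeros of $\psi$ one has the pointwise decomposition
\[
\abs{(\nabla+ia')\psi}^2=\abs{\nabla\rho}^2+\rho^2\abs{v+a'}^2,\qquad v:=\mathrm{Im}(\bar\psi\nabla\psi)/\rho^2 .
\]
So $\norm{\nabla\rho}^2\le\mathcal X/\abs Q$ is small. Since the Neumann gap is $\pi^2/s^2$ and $\mathcal X/\abs Q\ll s^{-2}$ when $\delta$ is small, Sobolev--Poincar\'e \eqref{eq:Sstar} makes $\rho$ close to the constant $\abs Q^{-1/2}$ in $L^2$ and $L^6$.

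Let $\chi:=\phi(\abs Q\rho^2)$ with a cutoff $\phi$ that vanishes on $[0,1/16]$ and equals $1$ on $[1/4,\infty)$. Put $\mathbf G:=\chi\,\mathbf g_0$. By Lemma~\ref{lem:dual}, $\mathbf g_0$ is bounded with vanishing tangential trace, and $\curl\mathbf g_0=a'$ on $Q$. So $\mathbf G$ satisfies the hypotheses of Lemma~\ref{lem:current} with $c^2=1/(16\abs Q)$, and that lemma gives $\int\rho^2 v\cdot\curl\mathbf G\,\rho^{-2}=0$. Pairing then yields
\[
\int a'\cdot\curl\mathbf G=\int(v+a')\cdot\curl\mathbf G .
\]
The left side equals $\int\chi\abs{a'}^2+\int a'\cdot(\nabla\chi\times\mathbf g_0)$. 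Its main part is close to $N$, because $\chi=1$ except on the small set where $\rho^2<1/(4\abs Q)$. The right side is bounded by
\[
\Bigl(\int\rho^2\abs{v+a'}^2\Bigr)^{1/2}\bigl(\norm{\curl\mathbf G/\rho}_2\bigr)\lesssim(\mathcal X/\abs Q)^{1/2}\abs Q^{1/2}\bigl(\sqrt N+\Phi^*\norm{\nabla\chi}_2\bigr).
\]
Lemma~\ref{lem:dual}(a) gives $\Phi^{*2}\le8C_A\kappa N/s$, and $\norm{\nabla\chi}_2^2\lesssim\abs Q\norm{\nabla\rho}^2\lesssim\mathcal X$.

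\emph{Main obstacle.} The hard step is bounding the error terms that involve $\nabla\chi$ and the exceptional set $\{\chi<1\}$. These are $\int\abs{a'}\abs{\nabla\chi}\Phi^*$ and $\int_{\chi<1}\abs{a'}^2$. For the first I plan Cauchy--Schwarz with the $\Phi^*$ bound, which gives $\lesssim\sqrt N\sqrt{\kappa N/s}\sqrt{\mathcal X}$. For the second I plan the $L^3$ bound of Lemma~\ref{lem:L6} combined with the measure of $\{\rho^2<1/(4\abs Q)\}$, which is $\lesssim\abs Q\,s^2\mathcal X/\abs Q$ by the Poincar\'e inequality. Each error term is then $\le N/4$ once $\mathcal X\le\delta s/\kappa$ and $\delta$ is small; this is where the factor $s/\kappa$ in the floor comes from.

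Putting the pieces together gives $N/2\lesssim\sqrt{\mathcal X}\,(\sqrt N+\sqrt{\kappa N\mathcal X/s})$. This forces $\mathcal X\gtrsim N$ or $\mathcal X\gtrsim s/\kappa$, and either alternative contradicts $\mathcal X<\delta N_0$ with $N>N_0$. Taking $c_\star\le\delta c_0$ and absorbing $\eta$ then gives the claim.
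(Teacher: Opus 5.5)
\paragraph{Gap: the exceptional-set term loses powers of $\kappa$.} Your overall scheme is the paper's. It splits at $N\le\eta s/(16S_*^3\kappa)$, handled by Lemma~\ref{lem:L6} and Lemma~\ref{lem:weakfloor}. For strong fields it cuts off the dual field $\mathbf g_0$ where $\rho$ is small, applies Lemma~\ref{lem:current}, and then uses Cauchy--Schwarz against the current. Your treatment of the term $\int a'\cdot(\nabla\chi\times\mathbf g_0)$ is correct, and working with a near-minimizer is fine. The gap is in the second error term, $\int_{\{\chi<1\}}\abs{a'}^2$, and it concerns exactly the $\kappa$-dependence the proposition is about.

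With your Poincar\'e-based measure bound $\abs{\{\chi<1\}}\lesssim s^2\mathcal X$ and $\norm{a'}_3^2\le2S_*\kappa N/s$, H\"older gives
\[
  \int_{\{\chi<1\}}\abs{a'}^2\ \le\ \norm{a'}_3^2\,\abs{\{\chi<1\}}^{1/3}\ \lesssim\ \kappa N\Bigl(\frac{\mathcal X}{s}\Bigr)^{1/3}.
\]
Under $\mathcal X\le\delta s/\kappa$ this is only $\lesssim\delta^{1/3}\kappa^{2/3}N$, so it is not $\le N/4$ uniformly in $\kappa\ge1$. Using $\norm{a'}_6^2\le4S_*^2\kappa^2N/s^2$ with $\abs{\{\chi<1\}}^{2/3}$ instead gives $\delta^{2/3}\kappa^{4/3}N$, which is worse. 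As written, your argument yields only a floor of order $s/\kappa^3$, not $c_\star\eta s/\kappa$. Your claim ``each error term is $\le N/4$ once $\mathcal X\le\delta s/\kappa$'' therefore fails for large $\kappa$. That is precisely the regime of the application, where $\kappa=2K_V$ (Section~\ref{sec:proof-neumann}).

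\paragraph{The fix.} You need a measure bound that is cubic in $\mathcal X$. Apply Chebyshev in $L^6$ with the Sobolev--Poincar\'e inequality \eqref{eq:Sstar}, rather than in $L^2$ with Poincar\'e. Since $\abs Q\bar m^2\ge1-\mathcal X/(\pi^2s)$ is close to $1$, one has $\abs{\rho-\bar m}\gtrsim\abs Q^{-1/2}$ on $\{\chi<1\}$. Hence
\[
  \abs{\{\chi<1\}}\ \lesssim\ \abs Q^{3}\norm{\rho-\bar m}_6^6\ \le\ \abs Q^3S_*^6\norm{\nabla\rho}^6\ \le\ S_*^6\mathcal X^3 .
\]
Then $\int_{\{\chi<1\}}\abs{a'}^2\le\norm{a'}_6^2\abs{\{\chi<1\}}^{2/3}\lesssim S_*^6\kappa^2N\mathcal X^2/s^2$. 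Alternatively, using $\norm{a'}_3$ gives $\lesssim S_*^3\kappa N\mathcal X/s$. Either bound is small exactly when $\kappa\mathcal X/s$ is small, and this is step~(2) of the paper's proof. You already observe that $\rho$ is close to a constant in $L^6$; that is the estimate to use here. With this replacement the rest of your argument goes through, with $\mathcal X$ replaced by $\mathcal X+\epsilon$ for the near-minimizer.
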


\begin{proof}
We show this with $c_\star:=\min\{9/(12800S_*^3),\,1/(36864C_A)\}$.

\emph{Weak fields.} If $N\le N_1:=\eta s/(16S_*^3\kappa)$, Lemma~\ref{lem:L6} gives
$\norm{a'}_3^2\le\sqrt N\cdot2S_*\kappa\sqrt N/s$, so $S_*^2\norm{a'}_3^2\le2S_*^3\kappa N/s\le\eta/8$, and
Lemma~\ref{lem:weakfloor} with $b=a'$ and gauge invariance give $\mathcal X\ge(1-\eta)N$.

\emph{Large fields.} Let $N>N_1$ and suppose $\mathcal X<c_\star\eta s/\kappa$. As $\eta\le1\le\kappa$, $9/12800<1/128$ and
$S_*\ge1/\pi$, this gives $\mathcal X<9N_1/800<9N/800$, $\mathcal X<s/(128S_*^3\kappa)$, $\mathcal X<s/(36864C_A\kappa)$ and
$\mathcal X<9\pi^3s/12800<\pi^2s/2$. By gauge invariance $E:=E_0^N(Q;a')=\mathcal X/\abs Q$; let $\psi$ be
a normalized ground state for $a'$ (it exists by the direct method: since $\norm{a'v}_2\le\norm{a'}_6\norm v_3$ with $a'\in L^6(Q)$ by Lemma~\ref{lem:L6}, minimizing sequences are bounded in $H^1(Q)$,
which embeds compactly in $L^2(Q)$ and $L^3(Q)$), and let $\rho=\abs\psi\in H^1(Q)$.
(1) The diamagnetic inequality $\abs{\nabla\rho}\le\abs{(\nabla+ia')\psi}$ a.e.\ \cite[Theorem~7.21]{LLAna}
and the Poincar\'e inequality $\norm{\rho-\bar m}^2\le\frac{s^2}{\pi^2}\norm{\nabla\rho}^2$ give $\abs Q\bar m^2>\frac12$, where $\bar m$
is the mean of $\rho$.
(2) $\abs{\{\rho<\bar m/2\}}<512S_*^6\mathcal X^3$: on this set $\abs{\rho-\bar m}>\bar m/2$, so
Chebyshev's inequality, \eqref{eq:Sstar}, $\norm{\nabla\rho}^2\le\mathcal X/\abs Q$ (by (1)) and $\bar m^{-6}<8\abs Q^3$ give the bound
$2^6S_*^6\norm{\nabla\rho}^6\bar m^{-6}<512S_*^6\mathcal X^3$.
(3) The cutoff $\zeta:=\min\{1,(6(\frac12-\rho/\bar m))_+\}$ satisfies $\norm{\nabla\zeta}^2<72\mathcal X$:
$\nabla\zeta=-6\bar m^{-1}\nabla\rho$ a.e.\ on $\{\bar m/3<\rho<\bar m/2\}$ and $\nabla\zeta=0$ a.e.\ elsewhere, so
$\norm{\nabla\zeta}^2\le36\bar m^{-2}\norm{\nabla\rho}^2<72\abs Q\cdot\mathcal X/\abs Q=72\mathcal X$ by (1).
(4) On $\mathcal U:=\{\rho>\bar m/4\}$ define the current $v:=\mathrm{Im}(\bar\psi(\nabla+ia')\psi)/\rho^2$; then
$E\ge\int_{\mathcal U}\rho^2\abs v^2>\norm v^2_{L^2(\mathcal U)}/(32\abs Q)$.
(5) Let $\mathbf G:=(1-\zeta)\mathbf g_0\in H^1(Q)\cap L^\infty(Q)$ and $\mathcal K_\rho:=\{\rho\ge\bar m/3\}\subset\mathcal U$. Then $\mathbf G$ and
$\curl\mathbf G$ vanish a.e.\ outside $\mathcal K_\rho$ (as $\zeta=1$ there), and $n\times\mathbf G=0$ on the faces in the trace sense
(Lemma~\ref{lem:dual}(b)). Since $a'$ is real, $v-a'=\mathrm{Im}(\bar\psi\nabla\psi)/\rho^2$ on $\mathcal U$, so Lemma~\ref{lem:current} with
$c=\bar m/3$ gives $\ip v{\curl\mathbf G}=\ip{a'}{\curl\mathbf G}$.
(6) On $Q$, $\curl\mathbf g_0=a'$ (Lemma~\ref{lem:dual}(a)), so $\curl\mathbf G=a'-\mathbf r$ with
$\mathbf r:=\zeta a'+\nabla\zeta\times\mathbf g_0$. Since $0\le\zeta\le1$ vanishes outside $A:=\{\rho<\bar m/2\}$, H\"older's
inequality, Lemma~\ref{lem:L6} and (2) give
$\norm{\zeta a'}^2\le\norm{a'}_6^2\abs A^{2/3}<\frac{4S_*^2\kappa^2N}{s^2}\cdot64S_*^4\mathcal X^2<\frac N{64}$, as $\mathcal X<s/(128S_*^3\kappa)$.
Lemma~\ref{lem:dual}(a) gives $\Phi^{*2}\le8C_A\kappa N/s$, so (3) and $\mathcal X<s/(36864C_A\kappa)$ give
$\norm{\nabla\zeta\times\mathbf g_0}^2\le\Phi^{*2}\norm{\nabla\zeta}^2<72\Phi^{*2}\mathcal X<N/64$. Hence
$\norm{\mathbf r}<\sqrt N/4$, so $\ip{a'}{\curl\mathbf G}\ge N-\sqrt N\norm{\mathbf r}>\frac34N$ and
$\norm{\curl\mathbf G}<\frac54\sqrt N$. As $\curl\mathbf G$ is supported in $\mathcal K_\rho\subset\mathcal U$, (5) and the
Cauchy--Schwarz inequality give
$\norm v_{L^2(\mathcal U)}\ge\ip v{\curl\mathbf G}/\norm{\curl\mathbf G}>\frac35\sqrt N$, and (4) gives
$\mathcal X>\norm v^2_{L^2(\mathcal U)}/32>9N/800$, a contradiction. Hence $\mathcal X\ge c_\star\eta s/\kappa$.
\end{proof}

\subsection{Proof of Theorem~\ref{thm:neumann}}\label{sec:proof-neumann}
We take $c':=c_\star/2$ with $c_\star$ as in
Proposition~\ref{prop:floor}. Fix $B$ with $\Upsilon_Q(B)\le\Upsilon$ and $u'_Q>0$ (if $u'_Q=0$ both sides vanish).
Theorem~\ref{thm:sup} with $D=D_\Upsilon$ gives $\norm{\curl B}_{L^2(Q)}\le34D_\Upsilon^6s^{-1}(u'_Q)^{1/2}=2K_Vs^{-1}(u'_Q)^{1/2}$. For
$a=\nu eB$ the projected field is $\nu ea'$, $N=\nu^2e^2u'_Q$, and $\norm{\curl a}_{L^2(Q)}\le2K_V\sqrt N/s$.
Proposition~\ref{prop:floor} with $\kappa=2K_V\ge1$ gives
$\mathcal X\ge(1-\eta)\min(N,c_\star\eta s/(2K_V))=(1-\eta)\min(N,N_b)$. \qed

%% file: sec8_proof.tex
\section{Proof of the lower bound}\label{sec:proof}

\subsection{Localization}
Let $n_e>0$, $\mathcal C_e=(-\frac{n_e}2,\frac{n_e}2)^3$, and let $h^\infty_{n_e}$ be the operator of
Section~\ref{sec:periodization} with the electron confined to $\mathcal C_e$ by Dirichlet conditions,
$h_{\mathrm{el}}[a]=\frac\lambda2(p+ea)^2$, $\mathfrak k=L^2(\mathcal C_e)$. The first lemma relates it to the form $\mathfrak q$ of
Remark~\ref{rem:def} for $H^{(\lambda)}$ (band $\abs k<\pi$, kinetic coefficient $\lambda$).

\begin{lemma}[The band model in the Schr\"odinger representation]\label{lem:Qform}
For all $\Psi\in C_c^\infty(\mathcal C_e)\otimes_{\mathrm{alg}}\mathcal F_{\mathrm{fin}}(C_c^\infty)$, extended by zero in the electron variable,
\[
  \mathfrak q(\Psi)\ \ge\ \inf\sigma(h^\infty_{n_e})\,\norm\Psi^2 .
\]
\end{lemma}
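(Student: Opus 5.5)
The plan is to identify the Fock space $\mathcal F$ of the band model with $L^2(d\mu)$. We use the joint spectral representation of the commuting real field operators $\{\Phi(h):Jh=h\}$ (Remark~\ref{rem:Qrep} and the setting of Section~\ref{sec:periodization}). In this representation $A_\pi(x)$ acts, for each $x$, as multiplication by the Gaussian random field $A_q(x)$. By Remark~\ref{rem:smooth} this field has a version that is $C^1$ in $x$ on $\bar{\mathcal C}_e$ for $\mu$-a.e.\ $q$.

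Take $\Psi=\sum_j\phi_j\otimes F_j$ with $\phi_j\in C_c^\infty(\mathcal C_e)$ and $F_j\in\mathcal F_{\mathrm{fin}}(C_c^\infty)$. Its image is the function $\Psi(x,q)=\sum_j\phi_j(x)F_j(q)$, where each $F_j$ is a polynomial in finitely many Gaussian coordinates. In particular $F_j\in L^p(d\mu)$ for all $p<\infty$.

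First, check pointwise in $q$ the identity
\[
(p_k+eA_{\pi,k}(x))\Psi=(-i\partial_k+eA_{q,k}(x))\Psi(x,q).
\]
For each fixed $x$ the operator $A(x)$ is multiplication, and finite-particle vectors lie in its domain. The $x$-measurability and the joint version are handled as in Remark~\ref{rem:smooth}. Then
\[
\tfrac12\norm{(p+eA)\Psi}^2=\int d\mu(q)\,\tfrac12\int_{\mathcal C_e}\abs{(\nabla+ieA_q)\Psi(\cdot,q)}^2dx,
\]
by Fubini. This needs $\int\!\!\int\abs{A_q(x)}^2\abs{\Psi}^2<\infty$, which holds by Gaussian moments of polynomial functions: $\mathbb E\abs{A_q(x)}^2$ is bounded uniformly in $x$, and Hölder's inequality handles the products with $F_j$.

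For a.e.\ $q$ we have $\Psi(\cdot,q)\in C_c^\infty(\mathcal C_e)\subset H^1_0(\mathcal C_e)$. Multiplied by $\lambda$, the integrand is therefore $\ge\ip{\Psi(\cdot,q)}{h_{\mathrm{el}}[A_q]\Psi(\cdot,q)}$, the Dirichlet form. Together with $\norm{\Hf^{1/2}\Psi}^2=\ip\Psi{d\Gamma(\omega)\Psi}$, this shows that $\mathfrak q(\Psi)$ equals the form of $h^\infty_{n_e}=d\Gamma(\omega)\otimes1+\int^\oplus h_{\mathrm{el}}[A_q]\,d\mu$ evaluated at $\Psi$. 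Here we use that the working convention has kinetic coefficient $\lambda$, so that $\mathfrak q$ is the form of $H^{(\lambda)}$.

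In particular $\Psi$ lies in the form domain of the form sum $h^\infty_{n_e}$. Both summands are finite on $\Psi$: the field part because of the finite particle number and $C_c^\infty$ wave functions, and the electron part by the computation above. Hence $\mathfrak q(\Psi)\ge\inf\sigma(h^\infty_{n_e})\norm\Psi^2$ by the variational principle for closed forms.

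A minor point concerns the longitudinal components and polarization conventions. The field on $\mathfrak h=L^2(\mathcal K;\C^3)$ includes the longitudinal modes as uncoupled oscillators, and the original $\mathfrak h$ uses $\eps_r(k)$. These are related by the unitary that maps $\sum_r\eps_r(k)a(k,r)$ to the transverse part of a $\C^3$-valued mode. Under this unitary $A$ and $\Hf$ correspond exactly, and the vacuum tensor factor on the longitudinal modes contributes zero energy. So it suffices to embed $\Psi$ as $\Psi\otimes\Omega_{\mathrm{long}}$. The norm is preserved, and $d\Gamma(\omega)$ on the longitudinal part gives $0$.

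The main obstacle is the rigorous justification that $A(x)$ acts as multiplication by the smooth version $A_q(x)$ jointly in $(x,q)$, so that the pointwise identity and Fubini apply. For this I would use the $L^2(d\mu;C^1(\bar{\mathcal C}_e))$ convergence of Remark~\ref{rem:smooth}. It lets one pass from finite sums of modes, where everything is explicit, to the limit, with dominated convergence controlled by $\mathbb E\sup_x\abs{A_q}^4<\infty$.
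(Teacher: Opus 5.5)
Your route is the paper's: the isometry that absorbs the polarization vectors and adds a longitudinal vacuum, the Wiener--It\^o--Segal representation of the band field, the $C^1$ version of Remark~\ref{rem:smooth}, Fubini, the fact that $\Psi(\cdot,q)\in H^1_0(\mathcal C_e)$ for every $q$, and the variational principle. However, you omit the photon modes with $\abs k\ge\pi$, and as a result several of your identities fail for general $\Psi$.

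The form $\mathfrak q$ belongs to $H^{(\lambda)}$ on $L^2(\R^3)\otimes\mathcal F_{\mathrm b}(L^2(\R^3\times\{1,2\}))$. Only $A_\pi$ is band-limited. By contrast, $\Hf=d\Gamma(\abs k)$ runs over all momenta, and the photon wave functions in $\mathcal F_{\mathrm{fin}}(C_c^\infty)$ are arbitrary $C_c^\infty$ functions, in general supported partly in $\abs k\ge\pi$. Such components are not vectors of the band Fock space $\mathcal F(\mathfrak h)$, $\mathfrak h=L^2(\mathcal K;\C^3)$, on which $h^\infty_{n_e}$ acts. Three of your claims therefore break down for general $\Psi$:
\begin{itemize}
\item the image of $\Psi$ is not a function $\sum_j\phi_j(x)F_j(q)$ in $L^2(\mathcal C_e)\otimes L^2(d\mu)$;
\item $\norm{\Hf^{1/2}\Psi}^2$ is not $\ip\Psi{d\Gamma(\omega)\Psi}$ with the band $d\Gamma(\omega)$;
\item $\mathfrak q(\Psi)$ is not the form of $h^\infty_{n_e}$ evaluated at $\Psi$.
\end{itemize}
You cannot restrict to band-only vectors either. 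The lemma is applied in the proof of Lemma~\ref{lem:IMS} to the translated vectors $J_y\Psi$ with arbitrary $\Psi\in\mathcal D_0$.

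The repair is step (i) of the paper's proof.
\begin{enumerate}
\item Split $L^2(\R^3\times\{1,2\})=\mathfrak h_{<\pi}\oplus\mathfrak h_{\ge\pi}$. Then $\mathcal F=\mathcal F_{\mathrm b}(\mathfrak h_{<\pi})\otimes\mathcal F_{\mathrm b}(\mathfrak h_{\ge\pi})$, $\Hf=d\Gamma(\omega_{<\pi})\otimes1+1\otimes d\Gamma(\omega_{\ge\pi})$, and $A_\pi$ acts on the first factor only.
\item Write each $f_i=f_i\mathbf 1_{\abs k<\pi}+f_i\mathbf 1_{\abs k\ge\pi}$ in $a^*(f_1)\cdots a^*(f_n)\Omega$. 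This expresses $\Psi=\sum_k\varphi_k\otimes F_k\otimes G_k$ with finite particle vectors $F_k$, $G_k$ of the two factors.
\item Run your argument with $\Psi_q:=\sum_k\varphi_k\,(U\Gamma(\iota)F_k)(q)\,G_k\in H^1_0(\mathcal C_e)\otimes\mathcal F_{\mathrm b}(\mathfrak h_{\ge\pi})$. It shows that $\mathfrak q(\Psi)$ is the closed form of $h^\infty_{n_e}\otimes1+1\otimes d\Gamma(\omega_{\ge\pi})$ at the image of $\Psi$.
\item Since $d\Gamma(\omega_{\ge\pi})\ge0$, this operator is bounded below by $\inf\sigma(h^\infty_{n_e})$.
\end{enumerate}
With this step added your proof is complete.

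The dominated-convergence argument you plan for joint measurability is more than is needed. The $C^1$ version agrees with $A(x)$ almost surely for each fixed $x$, and Fubini then gives the identity directly.
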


\begin{proof}
(i) Let $\mathfrak h_{<\pi}:=L^2(\{\abs k<\pi\}\times\{1,2\})$ and $\mathfrak h_{\ge\pi}$ its orthogonal complement, so that
$\mathcal F=\mathcal F_{\mathrm b}(\mathfrak h_{<\pi})\otimes\mathcal F_{\mathrm b}(\mathfrak h_{\ge\pi})$ and
$\Hf=d\Gamma(\omega_{<\pi})\otimes1+1\otimes d\Gamma(\omega_{\ge\pi})$; the field $A_\pi$ acts on the first factor only.
(ii) The map $(\iota f)(k):=\sum_r\eps_r(k)f(k,r)$ is an isometry from $\mathfrak h_{<\pi}$ onto the transverse fields in the
space $\mathfrak h=L^2(\mathcal K;\C^3)$ of Section~\ref{sec:periodization}, and $\Gamma(\iota)$ maps $\mathcal F_{\mathrm b}(\mathfrak h_{<\pi})$ isometrically
onto the transverse Fock space tensored with the longitudinal vacuum. Since $\iota^*\iota=1$, $\omega$ commutes with $P_\perp$
and $P_\perp=\sum_r\eps_r\eps_r^{\mathsf T}$, $\Gamma(\iota)$ intertwines $d\Gamma(\omega)$ and intertwines $A_{\pi,i}(x)$ with $A_i(x)=\Phi(f^i_x)$ on
finite particle vectors. (iii) Since $Jf^i_x=f^i_x$, under the Wiener--It\^o--Segal isomorphism $U$ of the commuting family
$\{\Phi(h):Jh=h\}$ (see \cite[Theorems~5.53 and 5.54, Section~5.27]{Ara18}) the operator $A_i(x)$ becomes multiplication by the Gaussian field $A_{i,q}(x)$, which has
a $C^1$ version on $\bar{\mathcal C}_e$ (Remark~\ref{rem:smooth}). (iv) Every $\Psi\in C_c^\infty(\mathcal C_e)\otimes_{\mathrm{alg}}\mathcal F_{\mathrm{fin}}(C_c^\infty)$ is a
finite sum $\sum_k\varphi_k\otimes F_k\otimes G_k$ with $\varphi_k\in C_c^\infty(\mathcal C_e)$ and finite particle vectors $F_k$, $G_k$ of the two
factors. Let $\Psi_q:=\sum_k\varphi_k\,(U\Gamma(\iota)F_k)(q)\,G_k$. By (ii), (iii) and Fubini's theorem,
$\frac\lambda2\norm{(p+eA_\pi)\Psi}^2=\int d\mu(q)\,\frac\lambda2\norm{(p+eA_q)\Psi_q}^2$, and for each $q$ the vector $\Psi_q$ lies in
$H^1_0(\mathcal C_e)\otimes\mathcal F_{\mathrm b}(\mathfrak h_{\ge\pi})$. Hence $\mathfrak q(\Psi)$ is the value at the image of $\Psi$ of the closed form of
$h^\infty_{n_e}\otimes1+1\otimes d\Gamma(\omega_{\ge\pi})$, which is at least $\inf\sigma(h^\infty_{n_e})\norm\Psi^2$.
\end{proof}

\begin{lemma}[IMS localization]\label{lem:IMS}
In the setting above,
\[
  E_0(H^{(\lambda)})\ \ge\ \inf\sigma(h^\infty_{n_e})-\frac{3\pi^2\lambda}{n_e^2}.
\]
\end{lemma}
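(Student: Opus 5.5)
We use the standard IMS localization formula with a partition of unity made of translates of a single cutoff, averaged over translations, and then apply Lemma~\ref{lem:Qform} to each localized piece. Let $\chi\in C_c^\infty(\mathcal C_e)$ be a product $\chi(x)=\prod_{i=1}^3\chi_1(x_i)$ with $\chi_1(t)=(2/n_e)^{1/2}\cos(\pi t/n_e)$ on $(-\frac{n_e}2,\frac{n_e}2)$. Strictly speaking this is Lipschitz rather than smooth, so we mollify slightly, or work with $H^1_0$ trial functions and use that the form domain of $h^\infty_{n_e}$ contains such localized states. We normalize so that $\int_{\R^3}\chi(x-y)^2\,dy=1$ for every $x$. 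With $\chi_y:=\chi(\cdot-y)$, the IMS formula for the magnetic operator reads, in the sense of forms on $\mathcal D_0$,
\[
  \tfrac\lambda2(p+eA_\pi)^2=\int\chi_y\,\tfrac\lambda2(p+eA_\pi)^2\,\chi_y\,dy-\tfrac\lambda2\int\abs{\nabla\chi_y}^2dy .
\]
This is the operator identity $\int\chi_y Z^2\chi_y=\tfrac12\int(\chi_y^2Z^2+Z^2\chi_y^2)+\int\abs{[Z,\chi_y]}^2$ for $Z=p_j+eA_{\pi,j}$, and $[Z,\chi_y]=-i\partial_j\chi_y$ commutes with the field because $A$ is a multiplication operator in $x$. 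The field energy $\Hf$ commutes with $\chi_y$, so the same decomposition holds for $\mathfrak q$.

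\textbf{Step 1: the localization error.} The error is the constant $\frac\lambda2\int\abs{\nabla\chi}^2dx$. For the product cosine, $\int\abs{\nabla\chi}^2=3\int\abs{\chi_1'}^2=3\pi^2/n_e^2$ per unit normalization, which gives at most $\frac{3\pi^2\lambda}{2n_e^2}$. This is within the stated $\frac{3\pi^2\lambda}{n_e^2}$. The factor $2$ of slack absorbs the mollification, or a cruder cutoff if convenient.

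\textbf{Step 2: the localized pieces.} For $\Psi\in\mathcal D_0$, each $\chi_y\Psi$ lies in $C_c^\infty(\mathcal C_e+y)\otimes_{\mathrm{alg}}\mathcal F_{\mathrm{fin}}(C_c^\infty)$ after mollification. We translate by $-y$ using the unitary $\mathrm e^{-iy\cdot P_{\mathrm{tot}}}$ with $P_{\mathrm{tot}}=p+d\Gamma(k)$. This leaves $H^{(\lambda)}$ invariant, since $A_\pi(x+y)$ is conjugate to $A_\pi(x)$ under $\Gamma(\mathrm e^{-ik\cdot y})$, and it maps $\mathcal D_0$ to itself. It moves the support into $\mathcal C_e$. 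Lemma~\ref{lem:Qform} then gives $\mathfrak q(\chi_y\Psi)\ge\inf\sigma(h^\infty_{n_e})\norm{\chi_y\Psi}^2$. Integrating in $y$ and using $\int\norm{\chi_y\Psi}^2dy=\norm\Psi^2$ yields $\mathfrak q(\Psi)\ge(\inf\sigma(h^\infty_{n_e})-\frac{3\pi^2\lambda}{n_e^2})\norm\Psi^2$. Taking the infimum over $\mathcal D_0$ gives the claim by Remark~\ref{rem:def}.

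\textbf{Main obstacle.} The main point to check is that the localized and translated vectors remain in the class to which Lemma~\ref{lem:Qform} applies; smoothness of the cutoff is needed for this, hence the mollification. A second point is the justification of the IMS identity on $\mathcal D_0$ with the unbounded field. On $\mathcal D_0$ both sides are finite sums of well-defined terms, and $A_\pi(x)$ maps finite particle vectors smoothly in $x$, so a direct expansion suffices.
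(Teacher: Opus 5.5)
Your proposal is correct and follows the paper's proof. Both use a translated product-cosine cutoff, smoothed inside $\mathcal C_e$, with the factor $2$ of slack in $\norm{\nabla J}^2\le 6\pi^2/n_e^2$ absorbing the smoothing; then the IMS identity averaged over $y$, the joint translation by $p+P_{\mathrm f}$ to move each piece into $\mathcal C_e$, Lemma~\ref{lem:Qform}, and the form-core property of $\mathcal D_0$. The only cosmetic difference is that you obtain the IMS identity from the double-commutator formula, whereas the paper expands $\norm{(p+eA)(J_y\Psi)}^2$ and removes the cross term using $\int J_y\nabla J_y\,dy=0$.
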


This is the IMS localization formula (Ismagilov, Morgan, Simon, Sigal) \cite[Section~3.1, Theorem~3.2]{CFKS87}.

\begin{proof}
Fix a real $j\in C_c^\infty((-\frac12,\frac12))$ with $\int j^2=1$ and $\norm{j'}^2\le2\pi^2$ (a smooth approximation of $\sqrt2\cos(\pi t)$, for which $\norm{j'}^2=\pi^2$), and let $J(x):=n_e^{-3/2}\prod_ij(x_i/n_e)$. Then
$J\in C_c^\infty(\mathcal C_e)$, $\int J^2=1$ and $\norm{\nabla J}^2\le6\pi^2/n_e^2$. With $J_y:=J(\cdot-y)$,
$\int J_y(x)^2dy=1$ and $\int\abs{\nabla J_y(x)}^2dy=\norm{\nabla J}^2$ for every $x$. For $\Psi\in\mathcal D_0$,
$J_y\Psi\in\mathcal D_0$, $J_y$ commutes with the field, and
$(p+eA)(J_y\Psi)=J_y(p+eA)\Psi-i(\nabla J_y)\Psi$. Expanding pointwise, the cross term
$2\,\mathrm{Re}\ip{J_y(p+eA)\Psi}{-i(\nabla J_y)\Psi}$ integrates to zero in $y$, since
$\int J_y\nabla J_y\,dy=\frac12\nabla_x\int J_y^2\,dy=0$ (Fubini). Hence, with
$\ip{J_y\Psi}{HJ_y\Psi}$ meaning the form value,
\[
  \ip\Psi{H\Psi}=\int\ip{J_y\Psi}{HJ_y\Psi}\,dy-\frac\lambda2\norm{\nabla J}^2\norm\Psi^2 .
\]
The electron support of $J_y\Psi$ is a compact subset of $y+\mathcal C_e$. The joint
translation $\mathrm e^{iy\cdot(p+P_{\mathrm f})}$ of electron and field commutes with $H$ and moves it into
$C_c^\infty(\mathcal C_e)\otimes_{\mathrm{alg}}\mathcal F_{\mathrm{fin}}(C_c^\infty)$ (as $\mathrm e^{iy\cdot P_{\mathrm f}}$ multiplies each photon wave function by $\mathrm e^{iy\cdot k}$); hence, by Lemma~\ref{lem:Qform},
$\ip{J_y\Psi}{HJ_y\Psi}\ge\inf\sigma(h^\infty_{n_e})\norm{J_y\Psi}^2$, and $\int\norm{J_y\Psi}^2dy=\norm\Psi^2$. Since
$\mathcal D_0$ is a form core (Remark~\ref{rem:def}), the lemma follows.
\end{proof}

\subsection{Proof of Theorem~\ref{thm:main}(b)}
\begin{proposition}[Relative error]\label{prop:relerr}
There are absolute constants $s_0$ and $C$ such that the following holds. Let $\lambda=e^2\ell^7$, $\Xi=e^2\ell^2$,
let $s\ge s_0$ be an integer, $\nu\in\N$, $\eta\in(0,1)$ and $M\ge1$ with $s\le M\ell$, let
$n_e:=2s\lceil M\ell/(2s)\rceil\in[M\ell,3M\ell]$, and let $K_V$ be as in Proposition~\ref{prop:configBO}, so that
$K_V\le C(1+s+\log(1+\lambda e^2))^6$. If $\nu^{-2}+\eta+C\log(2+s)/s\le1$, then
\[
  1-\frac{E_0(H^{(\lambda)})}{F_\lambda[\beta]}\ \le\ \frac1{\nu^2}+\eta+\frac{C\log(2+s)}{s}+C\Bigl(\frac{M^3}{\Xi}+\frac1{M^2}+\frac{\nu^2M^3s^2K_V}{\eta\,\ell^2}\Bigr).
\]
\end{proposition}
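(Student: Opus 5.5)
The plan is to chain the three lower bound ingredients in the order localization, periodization, block Born--Oppenheimer, and then divide every error by $F_\lambda[\beta]=c_Fe^2\ell^5$ from \eqref{eq:coeff}.

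\emph{Localization.} Lemma~\ref{lem:IMS} with $n_e\ge M\ell$ gives
\[
  E_0(H^{(\lambda)})\ge\inf\sigma(h^\infty_{n_e})-\frac{3\pi^2\lambda}{n_e^2}.
\]
Since $\lambda/n_e^2\le e^2\ell^7/(M^2\ell^2)=e^2\ell^5/M^2$, this costs a relative error $C/M^2$.

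\emph{Periodization.} Next apply Theorem~\ref{thm:periodization}(a) to example (i), with $C_W=\lambda e^2/2$ and $\mathcal C_e$ of side $n_e$. For every $\delta\in(0,\frac12]$ and all large odd $L$,
\[
  \inf\sigma(h^\infty_{n_e})\ge\frac{1-\delta}{1+\delta}\inf\sigma(h^L)-\delta .
\]
On $\T_L$, Remark~\ref{rem:Qrep} gives the Born--Oppenheimer inequality $h^L\ge\Hf+W_{\mathcal C_e}(q)$, where $W_{\mathcal C_e}$ is the Dirichlet ground state energy in the classical field $B_q$. Adding the uncoupled auxiliary oscillators changes nothing. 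The choice $n_e\in2s\N$ makes $\mathcal C_e$ a block box with corners in $\Z^3$. Indeed, $n_e/2$ is a multiple of $s$, so the corners $\pm n_e/2$ are integers. Hence Corollary~\ref{thm:blockBO} applies for $L$ odd and $L\ge\max(n_e+8s,s^2,L_{\mathrm G}(s))$. Since $\delta$ is arbitrary and the block bound does not depend on $L$, letting $\delta\to0$ leaves
\[
  E_0\ge\min\Bigl(1-\frac1{\nu^2},\kappa_s\Bigr)F_\lambda[\beta]-Cn_e^3-\frac{\pi^2}2\frac{\nu^2e^2n_e^3s^3}{N_b}-\frac{3\pi^2\lambda}{n_e^2},
\]
with $\kappa_s=(1-\eta)^{1/2}(1-C\log(2+s)/s)_+$.

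\emph{Error terms.} Take $N_b=c'\eta s/K_V$. Using $n_e\le3M\ell$, the relative sizes of the three errors are as follows.

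The zero-point term: $Cn_e^3/(e^2\ell^5)\le CM^3\ell^3/(e^2\ell^5)=CM^3/\Xi$.

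The $u_b$ term:
\[
  \frac{\nu^2e^2n_e^3s^3K_V}{c'\eta s\,e^2\ell^5}\le\frac{C\nu^2M^3s^2K_V}{\eta\,\ell^2}.
\]

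The IMS term: $C/M^2$, as computed above.

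\emph{The leading coefficient.} Use $\sqrt{1-\eta}\ge1-\eta$ and $(1-x)_+\ge1-x$. Together with the hypothesis $\nu^{-2}+\eta+C\log(2+s)/s\le1$, which keeps the factors nonnegative, this gives
\[
  \min\Bigl(1-\frac1{\nu^2},\kappa_s\Bigr)\ge1-\frac1{\nu^2}-\eta-\frac{C\log(2+s)}{s}.
\]
The bound $K_V\le C(1+s+\log(1+\lambda e^2))^6$ is inherited from Theorem~\ref{thm:neumann}. There $\Upsilon=2C_w(1-\eta)\lambda e^2/s^3\le C\lambda e^2$, so $\log(1+\Upsilon)\le C+\log(1+\lambda e^2)$.

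\emph{Expected obstacle.} I expect the main difficulty to be bookkeeping rather than new analysis: checking that every hypothesis of Proposition~\ref{prop:configBO} is met. That means $s\ge s_0$, $L$ odd and large, and the block box equal to the electron box. It also means checking that the constants coming from the limit $L\to\infty$ are uniform, since they depend only on $(\delta,n_e,C_W)$.
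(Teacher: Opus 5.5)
Your proposal is correct and follows essentially the same route as the paper's proof. The paper likewise combines Lemma~\ref{lem:IMS} with Theorem~\ref{thm:periodization}(a) along odd $L$ (you send $\delta\to0$, the paper takes the limit superior, which is equivalent), applies Remark~\ref{rem:Qrep} and Corollary~\ref{thm:blockBO} to the box $\mathcal C_e$ with integer corners, and divides the three errors by $c_Fe^2\ell^5$ using $n_e\le3M\ell$, $N_b=c'\eta s/K_V$, $(1-\eta)^{1/2}\ge1-\eta$ and $(1-x)(1-y)\ge1-x-y$.
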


\begin{proof}
Step 1 (localization and periodization). Lemma~\ref{lem:IMS} and Theorem~\ref{thm:periodization}(a) (example (i),
$C_W=\lambda e^2/2$) give $E_0(H^{(\lambda)})\ge\limsup_L\inf\sigma(h^L)-\frac{3\pi^2\lambda}{n_e^2}$ along odd $L$, since the
quantitative form of Theorem~\ref{thm:periodization}(a) holds for every $L\ge L_0$.

Step 2 (block Born--Oppenheimer). Since $n_e\in2s\N$, the box $\mathcal C_e$ has corners in $\Z^3$ and side in $s\N$.
On $\T_L$ with $L$ odd and $L\ge\max(n_e+8s,s^2,L_{\mathrm G}(s))$, Remark~\ref{rem:Qrep} gives
$h^L=\Hf+\int^\oplus h_{\mathrm{el}}[B_q]\ge H_{\mathrm{BO}}$, and Corollary~\ref{thm:blockBO} gives
\[
  \inf\sigma(h^L)\ \ge\ \min\Bigl(1-\frac1{\nu^2},\ (1-\eta)^{1/2}\Bigl(1-\frac{C\log(2+s)}{s}\Bigr)_+\Bigr)F_\lambda[\beta]
  -\Bigl(\frac{3\sqrt3\pi}{2}+C\Bigr)n_e^3-\frac{\pi^2}{2}\,\frac{\nu^2e^2n_e^3s^3}{N_b},
\]
which does not depend on $L$.

Step 3 (relative size). With $F_\lambda[\beta]=c_Fe^2\ell^5$ and $N_b=c'\eta s/K_V$,
\[
  \frac{3\pi^2\lambda}{n_e^2\,c_Fe^2\ell^5}\le\frac{3\pi^2}{c_FM^2},\qquad \frac{n_e^3}{c_Fe^2\ell^5}\le\frac{27M^3}{c_F\Xi},\qquad
  \frac{\nu^2e^2n_e^3s^3}{N_b\,c_Fe^2\ell^5}\le\frac{27\nu^2M^3s^3}{c_FN_b\ell^2}\le\frac{27\nu^2M^3s^2K_V}{c_Fc'\eta\ell^2};
\]
finally $(1-\eta)^{1/2}\ge1-\eta$ and $(1-x)(1-y)\ge1-x-y$.
\end{proof}

Every term on the right of Proposition~\ref{prop:relerr} is small in a suitable order of limits: first
the parameters $\nu,\eta,s,M$ are fixed so that the first four terms are small, then $\Xi$ and $\ell$ are taken
large. The only term in which the coupling enters beyond $\Xi$ is the last one, through
$\log(1+\lambda e^2)=\log(1+e^4\ell^7)$, and this is the origin of the condition on $\log(2+e^2)$ in
Theorem~\ref{thm:main}(b).
The order of the choices and what each quantity may depend on are listed in Table~\ref{tab:choices}.
\begin{table}[htbp]
\caption{Order of the choices in the proof of Theorem~\ref{thm:main}(b). Each quantity may depend only on those in earlier rows.}\label{tab:choices}
\centering
\renewcommand{\arraystretch}{1.2}
\small\begin{tabular}{p{0.4cm}p{4.6cm}p{4.0cm}p{4.4cm}}
\hline
 & quantity & depends on & where\\
\hline
0 & $C_1$, $c_\star$, $c'$, $C_w$, $\theta_0=\frac12$, $s_0$, $C$ & nothing (absolute) & Sections~\ref{sec:blockBO}--\ref{sec:neumann}, Proposition~\ref{prop:relerr}\\
1 & $\eps\in(0,1)$ & given & Theorem~\ref{thm:main}(b)\\
2 & $s\ge s_0$, $\nu$, $\eta$, $M$ & $\eps$ & proof of Theorem~\ref{thm:main}(b)\\
3 & $\Xi_0$, $\kappa_0$, $\ell_0\ (\ge s/M)$ & $\eps$, through step 2 & proof of Theorem~\ref{thm:main}(b)\\
4 & $\alpha,\Lambda$ (hence $e,\lambda,\ell,\Xi$) & given; $\ell\ge\ell_0$, $\Xi\ge\Xi_0$, $\log(2+e^2)\le\kappa_0\ell^{1/3}$ & Theorem~\ref{thm:main}(b)\\
5 & $n_e$, $\Upsilon$, $K_V$, $N_b$, $u_b$ & determined by steps 2 and 4, not chosen & Propositions~\ref{prop:configBO}, \ref{prop:relerr}, Theorem~\ref{thm:neumann}\\
6 & odd $L\to\infty$, $\delta\downarrow0$ & all of the above & Step~1 of the proof of Proposition~\ref{prop:relerr}; not in the bound\\
\hline
\end{tabular}
\end{table}
The gradient condition of Theorems~\ref{thm:neumann} and~\ref{thm:sup} is paid as follows. Blocks with
$\int\mathrm w_Q\abs{\nabla B_q}^2\le\Upsilon u'_Q(q)$ use Theorem~\ref{thm:neumann}; its loss $K_V$ enters, through $N_b$, the
last term of Proposition~\ref{prop:relerr}. For the other blocks the weight term $c_\theta\mathcal E_Q\ge c_\theta\Upsilon u'_Q=\gamma u'_Q/\abs Q$
replaces the electron bound (Step~3 of Section~\ref{sec:proof-abstractBO}); its cost $\mathcal D_\theta\le Cn_e^3$ (Lemma~\ref{lem:weight},
Proposition~\ref{prop:configBO}(ii)) is part of the term $(\frac{3\sqrt3\pi}2+C)n_e^3$ in Step~2 of the proof of Proposition~\ref{prop:relerr}, that is, of $CM^3/\Xi$. For fixed $\alpha$
the conditions of step 4 hold for all large $\Lambda$ (Corollary~\ref{cor:limit}(i)).

\begin{proof}[Proof of Theorem~\ref{thm:main}(b)]
Let $\eps\in(0,1)$. Fix an integer $s\ge s_0$, $\nu\in\N$, $\eta\in(0,1)$ and $M\ge1$ such that
$\nu^{-2}+\eta+C\log(2+s)/s+CM^{-2}<\eps/2$; these depend only on $\eps$. Let $\Xi_0:=8CM^3/\eps$, so that
$CM^3/\Xi<\eps/4$ for $\Xi\ge\Xi_0$. Since $1+e^4\ell^7\le(2+e^2)^2(1+\ell)^7$,
\[
  K_V\ \le\ C\bigl(1+s+7\log(1+\ell)+2\log(2+e^2)\bigr)^6\ \le\ C_\eps\bigl((\log(2+\ell))^6+(\log(2+e^2))^6\bigr),
\]
so if $\log(2+e^2)\le\kappa_0\ell^{1/3}$, the last term of Proposition~\ref{prop:relerr} is at most
$C'_\eps((\log(2+\ell))^6\ell^{-2}+\kappa_0^6)$, which is $<\eps/4$ for $\kappa_0=\kappa_0(\eps)$ small and $\ell\ge\ell_0(\eps)$ large (with also
$s\le M\ell$). Then $E_0(H^{(\lambda)})\ge(1-\eps)F_\lambda[\beta]$. Return to physical units by Lemma~\ref{lem:scaling} and
\eqref{eq:coeff}; finally $\ELL\ge E_0$.
\end{proof}

\subsection{Partition of the energy}
\begin{corollary}[Virial-type partition]\label{cor:virial}
Let $\mathcal T:=\frac12(p+eA_\Lambda)^2$. In the limit of Corollary~\ref{cor:limit}, let $\Psi$ be normalized states in the
form domain of $H$ with $\ip\Psi{H\Psi}\le(1+\eps')E_0$ and $\eps'\to0$. Then
$\ip\Psi{\mathcal T\Psi}/E_0\to\frac57$ and $\ip\Psi{\Hf\Psi}/E_0\to\frac27$.
\end{corollary}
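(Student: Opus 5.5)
The idea is a scaling (virial) argument in the coupling of the two terms, using that the leading asymptotics of Theorem~\ref{thm:main} hold uniformly. Consider the deformed Hamiltonians $H_t:=t\,\mathcal T+\Hf$ for $t$ in a fixed neighbourhood of $1$, say $t\in[\frac12,2]$.

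\textbf{Step 1: asymptotics of $E_0(H_t)$.} In the working convention of Lemma~\ref{lem:scaling}, $H_t$ is the operator $H_{c,\Lambda}$ with $c=t$, so it corresponds to $\lambda\to t\lambda$ at fixed $e$. Both bounds of Theorem~\ref{thm:main} are statements about $F_\lambda[\beta]$, where $\beta=\frac\pi6e\sqrt\lambda$. The upper bound (Section~\ref{sec:upper}) and Proposition~\ref{prop:relerr} apply verbatim with $\lambda$ replaced by $t\lambda$, since they only use $\lambda$ and $e$. The parameters change as follows: $\ell\to t^{1/7}\ell$ and $\Xi\to t^{2/7}\Xi$, and the conditions on $\log(2+e^2)$ are unaffected up to constants. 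Hence, uniformly for $t\in[\frac12,2]$ in the limit of Corollary~\ref{cor:limit},
\[
  E_0(H_t)=(1+o(1))\,F_{t\lambda}[\tfrac\pi6 e\sqrt{t\lambda}]\cdot\tfrac\Lambda\pi .
\]
By \eqref{eq:coeff}, $F_{t\lambda}[\beta_t]=(t\lambda)^{3/7}\beta_t^{4/7}F[1]$ with $\beta_t=t^{1/2}\beta$. This equals $t^{3/7+2/7}F_\lambda[\beta]=t^{5/7}F_\lambda[\beta]$, so $E_0(H_t)=(1+o(1))\,t^{5/7}E_0$.

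\textbf{Step 2: convexity squeeze.} For any normalized $\Psi$ in the form domain, $\ip\Psi{H_t\Psi}\ge E_0(H_t)$. Write $x:=\ip\Psi{\mathcal T\Psi}/E_0$ and $y:=\ip\Psi{\Hf\Psi}/E_0$; these are nonnegative with $x+y\le1+\eps'$. Then for each fixed $t\in[\frac12,2]$ we get $tx+y\ge(1+o(1))t^{5/7}$. Subtracting $x+y\le1+\eps'$ gives $(t-1)x\ge t^{5/7}-1-o(1)$.

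\begin{itemize}
\item Taking $t>1$ and dividing gives $\liminf x\ge(t^{5/7}-1)/(t-1)$.
\item Taking $t<1$ gives $\limsup x\le(1-t^{5/7})/(1-t)$.
\end{itemize}

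Letting $t\to1$ after the limit, both sides tend to $\frac57$. Hence $x\to\frac57$, and $y\to\frac27$ follows from $x+y\to1$, using $y\ge1-x-o(1)$ from $H\ge E_0$ and the upper bound $x+y\le1+\eps'$.

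\textbf{Main obstacle.} The only delicate point is Step 1: the error functions in Theorem~\ref{thm:main} must be uniform under $\lambda\to t\lambda$ for fixed $t$. I would check that Proposition~\ref{prop:relerr} is stated for arbitrary $\lambda=e^2\ell^7$, so that replacing $\ell$ by $t^{1/7}\ell$ only changes constants. The IMS localization, periodization and Neumann bound take the kinetic coefficient as an arbitrary parameter, so this should go through. It is enough to do this for a countable set of $t$ accumulating at $1$.
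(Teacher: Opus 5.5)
Your proposal is correct and follows the paper's argument: deform to $H_c=c\,\mathcal T+\Hf$, show $\inf\sigma(H_c)=(1+o(1))\,c^{5/7}E_0$, and squeeze $\langle\mathcal T\rangle/E_0$ between the choices $c=1\pm\delta$. The only difference is in Step~1. The paper uses Lemma~\ref{lem:scaling} with $a=c$, i.e.\ $\inf\sigma(H_c)=c^{-1}E_0(\alpha,c\Lambda)$, and then applies Corollary~\ref{cor:limit} at $(\alpha,c\Lambda)$, so your ``main obstacle'' of rechecking Proposition~\ref{prop:relerr} under $\lambda\to t\lambda$ is unnecessary.
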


\begin{proof}
For $c>0$ let $H_c:=c\mathcal T+\Hf$; for $c\in[\frac12,\frac32]$ its form domain is that of $H$. By
Lemma~\ref{lem:scaling} with $a=c$, $\inf\sigma(H_c)=c^{-1}E_0(\alpha,c\Lambda)$. The hypotheses of Corollary~\ref{cor:limit}
hold for $(\alpha,c\Lambda)$ whenever they hold for $(\alpha,\Lambda)$ ($\ell$ and $\Xi$ change by the factors $c^{1/7}$ and $c^{2/7}$), so
$\inf\sigma(H_c)=(1+o(1))c^{5/7}E_0(\alpha,\Lambda)$ for each fixed $c\in[\frac12,\frac32]$. Fix $\delta\in(0,\frac12]$. Since
$\inf\sigma(H_c)\le\ip\Psi{H_c\Psi}\le(1+\eps')E_0+(c-1)\langle\mathcal T\rangle$, the choices $c=1\pm\delta$ give
$\pm\delta\langle\mathcal T\rangle\ge[(1\pm\delta)^{5/7}-1-o(1)-\eps']E_0$, with $E_0>0$ by \eqref{eq:LL-bounds}. Hence
$\limsup\abs{\langle\mathcal T\rangle/E_0-\frac57}\le\delta^{-1}\max_\pm\abs{(1\pm\delta)^{5/7}-1\mp\frac57\delta}\le C\delta$, and $\delta\downarrow0$ gives the claim
for $\mathcal T$. Finally $\langle\Hf\rangle=\langle H\rangle-\langle\mathcal T\rangle$.
\end{proof}

This is the standard scaling argument; the existence of a ground state of the fiber at zero
momentum \cite{HaSi23} is not needed.

\begin{remark}[Where the new input lies]\label{rem:new}
Lieb and Loss \cite{LL00} left open the order of the lower bound. In our proof the new input
for the lower bound lies in two places: the combination of phase multiplication and the
Hardy inequality in field space (Proposition~\ref{prop:configBO}), and the Neumann lower bound for
band-limited fields under a gradient condition (Theorem~\ref{thm:neumann}). The price of the
latter is the polylogarithmic factor $K_V$, which appears as the last term in Proposition~\ref{prop:relerr}.
It must grow at most like a power of $\log(1+\lambda e^2)$: a factor $K_V\gtrsim(\lambda e^2)^{c}$ would make the last term
diverge already for fixed $\alpha$ if $c>\frac27$.
\end{remark}

%% file: sec9_remarks.tex
\section{Remarks and extensions}\label{sec:remarks}

Remarks~\ref{rem:fiberP} and~\ref{rem:BHF} are consequences of Theorem~\ref{thm:main}; the
discussion in the other subsections is heuristic.

\subsection{Dimension}
For the same model on $\R^d$, the local density approximation at scale $R$ balances the
kinetic energy $R^{-2}$ against $\beta_d\norm\phi_1\propto e\Lambda^dR^{d/2}$, which gives
\[
  E_{\mathrm{LDA}}\asymp e^{\frac{4}{d+4}}\Lambda^{\frac{4d}{d+4}},\qquad
  E_{\mathrm{vac}}=\frac{e^2}{2}\langle\Omega,A(0)^2\Omega\rangle\asymp e^2\Lambda^{d-1},\qquad
  \frac{E_{\mathrm{vac}}}{E_{\mathrm{LDA}}}\asymp e^{\frac{2(d+2)}{d+4}}\Lambda^{\frac{d^2-d-4}{d+4}} .
\]
The exponent $\frac{d^2-d-4}{d+4}$ equals $-\frac13$ for $d=2$, $+\frac27$ for $d=3$ and $+1$ for $d=4$.
For $d=2$ the vacuum trial state beats the local density approximation as $\Lambda\to\infty$,
so the exponent $\frac{12}{7}$ of Lieb and Loss is a phenomenon of dimension $d\ge3$. On the
side of the proof, the infrared bound $\int_{\abs k<r}\abs k^{-2}d^dk$ in
Theorem~\ref{thm:periodization} is finite only for $d\ge3$. For $d\ge4$, Sections~\ref{sec:upper}--\ref{sec:blockBO}
should carry over, but not verbatim: the weight $(1+\abs\xi^2)^2$ in the proof of Proposition~\ref{prop:coherent} and the exponent $4$ of the weights $\mathrm w_Q$ of Section~\ref{sec:finite-model} are chosen for $d=3$ and must be raised (above $d/2$ and above $d$, respectively), at the price of more regularity of $\varphi$ and of different constants in Section~\ref{sec:neumann}; moreover the curl and reflection
constructions of Section~\ref{sec:neumann} (Lemmas~\ref{lem:reflection} and~\ref{lem:dual}) would have to
be rewritten in terms of differential forms; we do not pursue this.

\subsection{Smooth cutoffs}
For a cutoff $\chi(\abs k/\Lambda)$ the exact computation for a uniform density suggests the leading
coefficient $\beta\langle\chi\rangle$ with $\langle\chi\rangle:=3\int_0^\infty\chi(t)t^2dt$. The lower bound of
Section~\ref{sec:blockBO} uses that $\Pi'$ is a projection: with $\chi(\abs k/\Lambda)\Pi'$ in place of $\Pi'$ the trace of the
Gram matrix of a block is about $\langle\chi^2\rangle\rhom\abs Q$, which is smaller than $\langle\chi\rangle\rhom\abs Q$ unless $\chi$ is an
indicator function, so a smooth cutoff would need a different argument; our main theorem is restricted to the sharp cutoff.

\subsection{Non-zero total momentum}
\begin{remark}\label{rem:fiberP}
Let $H\cong\int^\oplus H(P)\,dP$ be the fiber decomposition \cite{LLP53} (see \cite[Theorem~2.3]{LMS07} for the Pauli--Fierz model), $E(P):=\inf\sigma(H(P))$, so
that $E(0)=E_0$ \cite{LMS07}. Then $E(P)\ge\inf\sigma(H)=E(0)$ by the direct integral and the
continuity of $E(P)$ \cite[Theorem~4.1(i)]{LMS07}, and $E(P)\le E(0)+\frac{P^2}{2}$ by
\cite[Theorem~4.1(ii)]{LMS07}. Together with Theorem~\ref{thm:main} this gives $1\le E(P)/E_0\le1+P^2/(2E_0)$, hence $E(P)/E_0\to1$
uniformly for $\abs P=o(\alpha^{1/7}\Lambda^{6/7})$; we do not pursue fibers with larger total momentum here.
\end{remark}

\subsection{The Bogoliubov--Hartree--Fock energy}
\begin{remark}\label{rem:BHF}
The BHF functionals of \cite{BBM26,BHe26} are variational over quasi-free states of the
fiber at a fixed total momentum $P$ ($P=0$ in \cite{BBM26}), so $E_{\mathrm{BHF}}\ge E(P)\ge E(0)=E_0$ by their definition, Remark~\ref{rem:fiberP} and
\cite{LMS07}. Theorem~\ref{thm:main} therefore gives
$E_{\mathrm{BHF}}\ge(1-o(1))\,c_\alpha\alpha^{2/7}\Lambda^{12/7}$.
A matching upper bound $E_{\mathrm{BHF}}\lesssim\Lambda^{12/7}$ is not obtained here. Compared with the minimum of order $\Lambda^{3/2}$ of the convex
functional $\mathcal G$ obtained by removing a non-convex term \cite[Theorem~1.9]{BBM26}, the removed non-convex part
must contribute at least $c\,\alpha^{2/7}\Lambda^{12/7}$ at the minimizer of $\mathcal G$, for large $\Lambda$.

Normalization: in \cite[(1.30)--(1.34)]{BBM26} the Hamiltonian is
$\frac12(i\nabla_x+g_{\mathrm{BBM}}A(x))^2+H_{\mathrm{ph}}$ with $A(x)=a^*(G_x)+a(G_x)$,
$G(k)=\abs k^{-1/2}\eps(k,\pm)$. Comparing with \eqref{eq:A}, $e=\sqrt2(2\pi)^{3/2}g_{\mathrm{BBM}}$, that is
$e^2=16\pi^3g_{\mathrm{BBM}}^2$ and $\alpha=4\pi^2g_{\mathrm{BBM}}^2$ (the sign of $A$ is changed by $\Gamma(-1)$).
\cite{BBM26} fixes an infrared cutoff $\sigma>0$ independently of $\Lambda$. The passage to the
model $H^\sigma$ with photon momenta $\sigma\le\abs k\le\Lambda$ needs one trial state: for
$\Psi\otimes\Omega_{<\sigma}$, with $\Psi$ in a core of $H^\sigma$ and $\Omega_{<\sigma}$ the vacuum of the low modes (the
modes with $\abs k>\Lambda$ are also in the vacuum and do not contribute; the product lies in the
form domain $D(\abs p)\cap D(\Hf^{1/2})$), the cross terms vanish since
$\ip\Omega{A_{<\sigma}(x)\Omega}=0$, and
$E_0(\alpha,\Lambda)\le\inf\sigma(H^\sigma)+\frac{e^2}{2}\ip\Omega{A_{<\sigma}(0)^2\Omega}=\inf\sigma(H^\sigma)+\frac{e^2\sigma^2}{8\pi^2}$.
The fiber energy $E^\sigma(P):=\inf\sigma(H^\sigma(P))$ is an infimum of quadratic functions of $P$,
hence upper semicontinuous, so $\inf\sigma(H^\sigma(0))\ge\mathop{\mathrm{ess\,inf}}_PE^\sigma(P)=\inf\sigma(H^\sigma)$.
The fiber Hamiltonians \cite[(1.40), (1.54)]{BBM26} lack the factor $\frac12$ of (1.30), but
$(P_{\mathrm f}+g_{\mathrm{BBM}}A(0))^2+H_{\mathrm{ph}}\ge\frac12(P_{\mathrm f}+g_{\mathrm{BBM}}A(0))^2+H_{\mathrm{ph}}$, so with either
reading $E_{\mathrm{BHF}}\ge\inf\sigma(H^\sigma(0))$. Consequently, for fixed $\sigma$ and $\Lambda\to\infty$
(Corollary~\ref{cor:limit}(i)),
$E_{\mathrm{BHF}}\ge(1-o(1))\,c_\alpha\alpha^{2/7}\Lambda^{12/7}-\frac{e^2\sigma^2}{8\pi^2}$ for the model of \cite{BBM26} itself.
\end{remark}

\subsection{Spin}
For spin $\frac12$, $\frac12(\sigma\cdot(p+eA))^2=\frac12(p+eA)^2+\frac e2\sigma\cdot B$. Lieb and Loss state
\cite[Theorem~1.7]{LL00} that the ground state energy with this kinetic energy is at most
$C\alpha^{1/2}\Lambda^{3/2}$ (the proof is referred there to a forthcoming paper). If so, the Zeeman
term lowers the energy below the order $\Lambda^{12/7}$ of the spinless model at fixed $\alpha$, and
the leading term of Theorem~\ref{thm:main} does not describe electrons with spin. In the proof,
condition \eqref{eq:W} and the electron part of Proposition~\ref{prop:configBO} (Lemma~\ref{lem:phase}
concerns the magnetic form only) would change. We make no claim about this case.

%% file: appA_BH.tex
\section{Comparison with the work of Bach and Hach}\label{app:BH}

We compare with \cite{BH22}; all equation numbers refer to the published version. The
arXiv version \cite{BH20} has the same equations, numbered by section in Roman numerals
(for example (6.5), (7.10), (1.10) correspond to (VI.5), (VII.10), (I.10)), except that the
main statement (I.10) and its restatement (VIII.3) have the denominator $F_1\alpha^{2/7}\Lambda^{12/7}$ without the factor
$(4/(9\pi))^{2/7}$ of (1.10), while the proof, (VIII.11) and (VIII.16), uses the denominator $F[\beta(\alpha,\Lambda)]=(4/(9\pi))^{2/7}F_1\alpha^{2/7}\Lambda^{12/7}$. The
coefficient $\beta(\alpha,\Lambda)$ and the normalization of the field are the same in both versions. The abstract of the published version \cite{BH22} states the result in the form of (I.10), $\abs{\ELL/(F_1\alpha^{2/7}\Lambda^{12/7})-1}\le C\alpha^{4/105}\Lambda^{-4/105}$, without the factor $(4/(9\pi))^{2/7}$ of (1.10) and (8.3). In both versions Theorems~6.1 and~7.2 are stated for all $\alpha>0$, while the proof of the upper bound in (1.10) assumes $\alpha\in(0,1]$ (after (8.11)).
Our arguments in the main text use only the definition of the functional $\mathcal F_\beta$ from
\cite{BH22}, not its theorems.

\subsection{Normalization}
The vector potential of \cite[(2.3)]{BH22} is
\[
  A_{\mathrm{BH}}=(2\pi)^{-3/2}\int\eps\abs k^{-1/2}(a^*\mathrm e^{-ikx}+a\mathrm e^{ikx})\,dk=\sqrt2A
\]
with $A$ as in \eqref{eq:A},
and the coupling is $\sqrt{\alpha_{\mathrm{BH}}}A_{\mathrm{BH}}$. Hence the Hamiltonian of \cite{BH22} with
coupling $\alpha_{\mathrm{BH}}$ is our $H$ with $e^2=2\alpha_{\mathrm{BH}}$, and
\[
  E^{\mathrm{BH}}_{\mathrm{LL}}(\alpha_{\mathrm{BH}},\Lambda)=\ELL\big|_{e^2=2\alpha_{\mathrm{BH}}},\qquad
  E^{\mathrm{BH}}_0(\alpha_{\mathrm{BH}},\Lambda)=E_0\big|_{e^2=2\alpha_{\mathrm{BH}}}
\]
(the sign of $A$ is changed by $\Gamma(-1)$). In this normalization our leading coefficient is
$\beta_*=\frac{\sqrt{2\alpha_{\mathrm{BH}}}\Lambda^3}{6\pi^2}=0.0239\sqrt{\alpha_{\mathrm{BH}}}\Lambda^3$.
The conventions are compared in Table~\ref{tab:BH-conventions}.
\begin{table}[htbp]
\caption{Conventions of this paper and of \cite{BH22} (equation numbers of \cite{BH22}).}\label{tab:BH-conventions}
\centering
\renewcommand{\arraystretch}{1.2}
\small\begin{tabular}{p{2.5cm}p{5.0cm}p{5.8cm}}
\hline
 & this paper & \cite{BH22}\\
\hline
Fourier transform & $\hat f(k)=\int\mathrm e^{-ik\cdot x}f(x)\,dx$ (Section~\ref{sec:model}) & $(2\pi)^{-3/2}\int\mathrm e^{-ik\cdot x}f(x)\,dx$, (3.30)\\
CCR & $[a(k,r),a^*(k',r')]=\delta_{rr'}\delta(k-k')$ & the same, (2.4)--(2.5)\\
vector potential & $A_\Lambda$, \eqref{eq:A}, kernel $(2\pi)^{-3/2}(2\abs k)^{-1/2}$ & $A_{\mathrm{BH}}=\sqrt2A_\Lambda$, (2.3), kernel $(2\pi)^{-3/2}\abs k^{-1/2}$\\
Hamiltonian & $\frac12(p+eA_\Lambda)^2+\Hf$ & $\frac12(i\nabla+\sqrt{\alpha_{\mathrm{BH}}}A_{\mathrm{BH}})^2+\Hf$, (2.1); thus $e^2=2\alpha_{\mathrm{BH}}$\\
$\Theta$ & --- & $\alpha_{\mathrm{BH}}P_C\chi_{\sigma,\Lambda}M_{\phi^2}\chi_{\sigma,\Lambda}P_C$ ($M$: multiplication in position space) by (1.17) and (3.22)--(3.30)\\
$X$ & --- & $X(A)=\Tr(\sqrt{\abs k^2+A}-\abs k)$, (1.19)\\
trace density & $\Tr(PM_\phi P)=\frac{\Lambda^3}{3\pi^2}\norm\phi_1$, $P=\chi_{0,\Lambda}P_C$ & printed as $\frac{8\pi}{3}\Lambda^3\norm\phi_1$ in (6.6)\\
coefficient of $\norm\phi_1$ & $\beta_*=\frac{\sqrt{2\alpha_{\mathrm{BH}}}}{6\pi^2}\Lambda^3$ & $\beta(\alpha,\Lambda)=\sqrt{4\alpha_{\mathrm{BH}}/(9\pi)}\,\Lambda^3=(2\pi)^{3/2}\beta_*$, (5.6)\\
\hline
\end{tabular}
\end{table}
The energies of \cite{BH22} are infima over (product) states in the form domain $H^1(\R^3)\otimes D(N_{\mathrm{ph}}^{1/2})$ of \cite[(2.1)]{BH22}, ours over the form domain of Remark~\ref{rem:def}. The infima coincide: $\mathcal D_0\subset H^1(\R^3)\otimes D(N_{\mathrm{ph}}^{1/2})$, this domain lies in our form domain since $\Hf\le\Lambda N_{\mathrm{ph}}$ on the band $\abs k<\Lambda$, and by (T2) of Section~\ref{sec:periodization} the form, also restricted to product states $\phi\otimes\Psi$, is continuous in the form norm of $\frac12p^2+\Hf+1$, for which $C_c^\infty(\R^3)$ and $\mathcal F_{\mathrm{fin}}(C_c^\infty)$ are cores; hence both infima equal the infimum over $\mathcal D_0$ (for $\ELL$, over products of vectors from these cores).

\subsection{The leading coefficient}
When quoting \cite{BH22} we write $\alpha$ for its coupling constant $\alpha_{\mathrm{BH}}$ (in this subsection and in Proposition~\ref{prop:A2}); this $\alpha$ is not the fine structure constant of \eqref{eq:H-intro}. \cite{BH22} sets $\beta(\alpha,\Lambda):=\sqrt{4\alpha/(9\pi)}\,\Lambda^3=0.376\sqrt\alpha\Lambda^3$ (after (1.6), and (5.6)) and
states the main result, Theorem~1.1, (1.10), in terms of $F_{\beta(\alpha,\Lambda)}$. This is
$(2\pi)^{3/2}=15.75$ times the coefficient of the previous subsection. Our coefficient is
confirmed independently by the exact computation for a uniform density: in a box of
volume $V$ with $\rho\equiv1/V$, each mode has frequency $\sqrt{\omega_j^2+e^2/V}$; for
$e^2/V\gg\Lambda^2$ the field energy is $\approx\frac12N\sqrt{e^2/V}$ with
$N=\rhom V$ modes and $\rhom=2\cdot\frac{4\pi}{3}\Lambda^3(2\pi)^{-3}=\Lambda^3/(3\pi^2)$, in agreement with the
coefficient of Proposition~\ref{prop:coherent} and of the Gram trace in Corollary~\ref{cor:hardy}.

\emph{Our reading of the discrepancy} (see Section~\ref{sec:intro}, Scope). The identity
$(\hat\phi*)^*(\hat\phi*)=\mathcal F\phi^2\mathcal F^*$ before \cite[(6.5)]{BH22} acquires a factor $(2\pi)^3$ on the right
for the unitary Fourier transform, and the trace
``$2(\mathrm{Vol}[B(0,\Lambda)]-\mathrm{Vol}[B(0,\sigma)])\norm\phi_1$'' in \cite[(6.6)]{BH22} needs the phase space factor
$(2\pi)^{-3}$. Together these give $(2\pi)^{3/2}$, and the final step (7.10) has the same
feature. The definition of the operator $\Theta_{\phi,\alpha}$ in \cite[(1.17), (3.22)--(3.29)]{BH22}, with the
unitary Fourier convention (3.30), is $\alpha P_C\chi_{\sigma,\Lambda}M_{\phi^2}\chi_{\sigma,\Lambda}P_C$ ($M$ denoting multiplication in
position space) and is correct: the kernel $(2\pi)^{-3/2}\widehat{\phi^2}(k-k')$ in (3.26) is exactly the
kernel of $M_{\phi^2}$, and this definition gives $\Tr(PM_\phi P)=\frac{\Lambda^3}{3\pi^2}\norm\phi_1$ and the
coefficient $\frac{\sqrt{2\alpha}}{6\pi^2}\Lambda^3$. The expression
$\frac{\alpha}{(2\pi)^3}P_C\chi_{\sigma,\Lambda}\phi(x)^2\chi_{\sigma,\Lambda}P_C$ in (1.20), with $\phi(x):=\mathcal F\phi\mathcal F^*$ as in Section~6
of \cite{BH22}, has an extra factor $(2\pi)^{-3}$, inherited from the identity before (6.5).
Inequality (7.9) is written with (1.20); since the true $\Theta$ is $(2\pi)^3$ times larger and $X$ is
monotone, the lower bound obtained from (7.9) is correct but weaker. What makes (7.8)
fail is the bound on the trace $\Tr[P_C\chi\phi(x)\chi P_C]$ in (7.10), with the same missing
$(2\pi)^{-3}$ as in (6.6). These places are collected in Table~\ref{tab:BH-places}.
\begin{table}[htbp]
\caption{Places in \cite{BH22} affected by the factor $(2\pi)^{3/2}$, as printed and with the factors of the normalization above (published numbering, $\alpha=\alpha_{\mathrm{BH}}$).}\label{tab:BH-places}
\centering
\renewcommand{\arraystretch}{1.2}
\small\begin{tabular}{p{2.3cm}p{5.4cm}p{5.6cm}}
\hline
place in \cite{BH22} & as printed & with the factors of the Normalization\\
\hline
before (6.5) & $(\hat\phi*)^*(\hat\phi*)=\mathcal F\phi^2\mathcal F^*$ & $(\hat\phi*)^*(\hat\phi*)=(2\pi)^3\mathcal F\phi^2\mathcal F^*$\\
(1.20), (7.9) & $\Theta_{\phi,\alpha}=\frac{\alpha}{(2\pi)^3}P_C\chi_{\sigma,\Lambda}\phi(x)^2\chi_{\sigma,\Lambda}P_C$ & $\Theta_{\phi,\alpha}=\alpha P_C\chi_{\sigma,\Lambda}\phi(x)^2\chi_{\sigma,\Lambda}P_C$, as in (1.17), (3.22)--(3.30)\\
(6.6), (7.10) & $\Tr[P_C\chi_{\sigma,\Lambda}\phi(x)\chi_{\sigma,\Lambda}P_C]=\frac{8\pi}{3}(\Lambda^3-\sigma^3)\norm\phi_1$ & $\Tr[P_C\chi_{\sigma,\Lambda}\phi(x)\chi_{\sigma,\Lambda}P_C]=\frac{\Lambda^3-\sigma^3}{3\pi^2}\norm\phi_1$\\
main term of (6.3), (7.8) & $\sqrt{\frac{4\alpha}{9\pi}}(\Lambda^3-\sigma^3)\norm\phi_1$ & $\frac{\sqrt{2\alpha}}{6\pi^2}(\Lambda^3-\sigma^3)\norm\phi_1$ (for (7.8) Proposition~\ref{prop:A2}; for (6.3) the trace $\Tr(PM_\phi P)$ above; the error terms of (6.3) and (7.8) are not discussed)\\
(1.10), (8.3) & denominator $(\frac{4}{9\pi})^{2/7}F_1\alpha^{2/7}\Lambda^{12/7}$ & leading term $(\frac{\sqrt2}{6\pi^2})^{4/7}F_1\alpha^{2/7}\Lambda^{12/7}$ for fixed $\alpha$ (Corollary~\ref{cor:limit} with $e^2=2\alpha$)\\
\hline
\end{tabular}
\end{table}

\subsection{Two statements}
\begin{proposition}\label{prop:A1}
For the model of \cite{BH22} without infrared cutoff ($\sigma=0$) and fixed $\alpha_{\mathrm{BH}}>0$,
\[
  \limsup_{\Lambda\to\infty}\frac{E^{\mathrm{BH}}_{\mathrm{LL}}(\alpha_{\mathrm{BH}},\Lambda)}{F_1\,\alpha_{\mathrm{BH}}^{2/7}\Lambda^{12/7}}\ \le\
  \Bigl(\frac{\sqrt2}{6\pi^2}\Bigr)^{4/7}=0.118353\ldots\ <\ \Bigl(\frac{4}{9\pi}\Bigr)^{2/7}=0.571918\ldots
\]
Hence the lower inequality in \cite[Theorem~1.1, (1.10)]{BH22} does not hold as stated for large
$\Lambda$: the upper limit is at most $0.20694$ times the leading term of (1.10).
\end{proposition}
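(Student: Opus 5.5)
The plan is to reduce Proposition~\ref{prop:A1} to the upper bound of Theorem~\ref{thm:main}(a) via the dictionary of the Normalization subsection. By that subsection, $E^{\mathrm{BH}}_{\mathrm{LL}}(\alpha_{\mathrm{BH}},\Lambda)=\ELL(\alpha,\Lambda)$ with our fine structure constant $\alpha=e^2/(4\pi)$ and $e^2=2\alpha_{\mathrm{BH}}$. The only point to check here is that the infima over product states coincide for the two form domains, which was verified at the end of that subsection. For fixed $\alpha_{\mathrm{BH}}$, $e$ is fixed, so $\ell=(\Lambda/(\pi e^2))^{1/7}\to\infty$ as $\Lambda\to\infty$. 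Theorem~\ref{thm:main}(a) then gives
\[
  \limsup_{\Lambda\to\infty}\frac{\ELL(\alpha,\Lambda)}{c_\alpha\alpha^{2/7}\Lambda^{12/7}}\le1 .
\]
Only the upper bound is needed, so no condition on $\Xi$ enters.

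Next I would rewrite the leading term in terms of $\alpha_{\mathrm{BH}}$. By \eqref{eq:coeff}, $c_\alpha\alpha^{2/7}\Lambda^{12/7}=F[1]\beta_*^{4/7}$ with $\beta_*=e\Lambda^3/(6\pi^2)=\sqrt{2\alpha_{\mathrm{BH}}}\Lambda^3/(6\pi^2)$. Therefore
\[
  F[1]\beta_*^{4/7}=\Bigl(\frac{\sqrt2}{6\pi^2}\Bigr)^{4/7}F_1\,\alpha_{\mathrm{BH}}^{2/7}\Lambda^{12/7},
\]
identifying $F_1=F[1]$, the constant of \cite{BH22}. The point to confirm is that the functional $F_1$ of \cite{BH22} is the same infimum \eqref{eq:F1}; this follows since the definition is taken from \cite[(1.6)--(1.9)]{BH22}. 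Dividing gives the claimed $\limsup$ bound $(\sqrt2/(6\pi^2))^{4/7}$.

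It remains to do the numerics. One computes $\sqrt2/(6\pi^2)=0.023884\ldots$, whose $4/7$ power is $0.11835\ldots$, and $(4/(9\pi))^{2/7}=(0.14147\ldots)^{2/7}=0.5719\ldots$; the ratio is $0.20694\ldots$, which equals $(2\pi)^{-6/7}$. Consistently, the ratio of the coefficients is $\beta_*/\beta(\alpha,\Lambda)=(2\pi)^{-3/2}$, and raising it to the power $4/7$ gives $(2\pi)^{-6/7}$.

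The final sentence of the proposition follows because the lower inequality of (1.10) would force the $\liminf$ of the same ratio to be at least $(4/(9\pi))^{2/7}$ up to an error that vanishes as $\Lambda\to\infty$ at fixed $\alpha_{\mathrm{BH}}$. That contradicts the strict inequality above. The main care point is bookkeeping rather than analysis: the factor $\sqrt2$ in $A_{\mathrm{BH}}=\sqrt2A$, and which denominator (1.10) actually uses (the published version includes $(4/(9\pi))^{2/7}$). I would state this explicitly.
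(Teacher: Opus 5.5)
Your proposal is correct and follows the paper's proof. Both apply Theorem~\ref{thm:main}(a) through the dictionary $e^2=2\alpha_{\mathrm{BH}}$ and rewrite the leading term (you via $F[1]\beta_*^{4/7}$ from \eqref{eq:coeff}, the paper via $(6\pi^2)^{-4/7}F[1](2\alpha_{\mathrm{BH}})^{2/7}\Lambda^{12/7}$, which is the same identity), using only the upper bound and $F_1=F[1]$; one trivial slip is $\sqrt2/(6\pi^2)=0.023881\ldots$ rather than $0.023884\ldots$, though your values $0.11835\ldots$ and $(2\pi)^{-6/7}=0.20694\ldots$ are right.
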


\begin{proof}
By the normalization above and Theorem~\ref{thm:main}(a),
\[
  \ELL|_{e^2=2\alpha_{\mathrm{BH}}}\le(1+o(1))(6\pi^2)^{-4/7}F[1](2\alpha_{\mathrm{BH}})^{2/7}\Lambda^{12/7},
\]
and
$(6\pi^2)^{-4/7}2^{2/7}=(\sqrt2/(6\pi^2))^{4/7}$; moreover $F_1=F[1]$ (same definition). Only the upper
bound is used.
\end{proof}

\begin{proposition}\label{prop:A2}
The inequality \cite[Theorem~7.2, (7.8)]{BH22},
\[
  \tfrac12X(2\Theta_{\phi,\alpha})\ \ge\ \sqrt{\tfrac{4\alpha}{9\pi}}\bigl(\Lambda^3-\sigma^3\bigr)\norm\phi_1-\frac{C\alpha^{1/4}\Lambda^{1/2}(L\Lambda+1)^3}{L^{3/2}}\sqrt{\norm\phi_1},
\]
fails for $\sigma=0$, any fixed $0\le\phi\in C^1_c$ with $\phi\not\equiv0$ supported in $B(0,L)$, any fixed
sufficiently large $\Lambda$, and $\alpha\to\infty$. On the other hand, with $P$ as below,
\[
  \tfrac12X(2\Theta_{\phi,\alpha})\ \ge\ \frac{\sqrt{2\alpha}}{6\pi^2}\bigl(\Lambda^3-\sigma^3\bigr)\norm\phi_1-\Lambda^{1/2}(2\alpha)^{1/4}\Tr(PM_\phi P)^{1/2};
\]
we do not discuss whether the last term can be bounded by the error term of (7.8).
\end{proposition}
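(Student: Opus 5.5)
The plan is to work throughout with the operator $\Theta_{\phi,\alpha}=\alpha PM_{\phi^2}P$ of \cite[(1.17), (3.22)--(3.30)]{BH22}, where $P=\chi_{\sigma,\Lambda}P_C$ with $\sigma=0$ and $M$ denotes multiplication in position space. The (smaller) expression printed in \cite[(1.20)]{BH22} is handled by the monotonicity of $X$ in Lemma~\ref{lem:hansen}(b). We write $C:=2\Theta_{\phi,\alpha}$ and $\omega=\abs k$. On $\Ran P^\perp$ we have $C=0$, so $X(C)=\Tr[(\omega^2+C)^{1/2}-\omega]$ reduces to a trace on $\Ran P$, on which $\omega\le\Lambda$.

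\emph{Failure of (7.8).} The first step is the upper bound $X(C)\le\Tr C^{1/2}=\sqrt{2\alpha}\,\Tr(PM_{\phi^2}P)^{1/2}$. This is Lemma~\ref{lem:hansen}(b), comparing $\omega^2$ with $0$ after a finite-dimensional approximation, exactly as in \eqref{eq:Efield-upper}. Next we bound $\Tr(PM_{\phi^2}P)^{1/2}$ from above by Proposition~\ref{prop:coherent}(a). After scaling the band from $\pi$ to $\Lambda$ (Lemma~\ref{lem:scaling}), this gives
\[
  \Tr(PM_{\phi^2}P)^{1/2}\le\frac{\Lambda^3}{3\pi^2}\bigl(1+r(\Lambda)\bigr)\norm\phi_1+E_\phi(\Lambda),
\]
where $r(\Lambda)\to0$ as $\Lambda\to\infty$ for fixed $\phi\in C^1_c$, and $E_\phi(\Lambda)$ does not depend on $\alpha$. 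Here we use that $\phi\in W^{1,\infty}$ with compact support; if $W^{2,\infty}$ is needed we mollify $\phi$, which costs only an $o(1)$ relative error. Consequently
\[
  \tfrac12X(C)\le\frac{\sqrt{2\alpha}}{6\pi^2}\Lambda^3\bigl(1+r(\Lambda)\bigr)\norm\phi_1+\tfrac12\sqrt{2\alpha}\,E_\phi(\Lambda).
\]
The right side of (7.8) equals $\sqrt{4\alpha/(9\pi)}\,\Lambda^3\norm\phi_1-O(\alpha^{1/4})$. Since $\sqrt{4/(9\pi)}=0.376>\sqrt2/(6\pi^2)=0.0239$, we can fix $\Lambda$ so large that the $\sqrt\alpha$-coefficient on the right of (7.8) strictly exceeds ours (the $\alpha$-independent term $E_\phi$ also has to be beaten, which is where "sufficiently large $\Lambda$" enters). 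The gap is then of order $\sqrt\alpha$, which dominates the $\alpha^{1/4}$ error, so (7.8) fails as $\alpha\to\infty$. The main care needed here is to verify that the scaled coherent-state error is indeed relatively small at fixed $\phi$ and large $\Lambda$.

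\emph{The correct lower bound.} The argument uses a scalar inequality followed by spectral calculus. First, Lemma~\ref{lem:hansen}(b) with $\omega^2\le\Lambda^2$ on $\Ran P$ gives
\[
  X(C)\ge\Tr\bigl[(\Lambda^2+C)^{1/2}-\Lambda\bigr]_{\Ran P}=\sum_j\bigl(\sqrt{\Lambda^2+c_j}-\Lambda\bigr),
\]
where $c_j$ are the eigenvalues of $C$. Second, the elementary inequality $\sqrt{\Lambda^2+c}-\Lambda\ge\sqrt c-\sqrt\Lambda\,c^{1/4}$ holds for all $c\ge0$. To see this, consider the two cases $c\ge\Lambda^2$, where $\Lambda\le\sqrt\Lambda\,c^{1/4}$, and $c<\Lambda^2$, where $\sqrt c\le\sqrt\Lambda\,c^{1/4}$. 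It yields
\[
  X(C)\ge\Tr C^{1/2}-\Lambda^{1/2}\Tr C^{1/4}.
\]
For the main term, Lemma~\ref{lem:hansen}(a) with $f=\sqrt\cdot$, $K=P$ and $X=M_{\phi^2}$ gives $\Tr(PM_{\phi^2}P)^{1/2}\ge\Tr(PM_\phi P)$. The latter trace equals $\frac{\Lambda^3}{3\pi^2}\norm\phi_1$ by the diagonal of the kernel (as in Lemma~\ref{lem:gram}(a)). Halving produces the main term $\frac{\sqrt{2\alpha}}{6\pi^2}\Lambda^3\norm\phi_1$.

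The error term is $\frac12\Lambda^{1/2}(2\alpha)^{1/4}\Tr(PM_{\phi^2}P)^{1/4}$. The remaining obstacle is to identify it with the stated $\Lambda^{1/2}(2\alpha)^{1/4}\Tr(PM_\phi P)^{1/2}$. We do not currently see how to do this: the natural operator inequality $(PM_\phi P)^2\le PM_{\phi^2}P$ points the wrong way for this comparison. The first attempt would be to redo the scalar step with the split adapted to $C^{1/2}$ compared against $\sqrt{2\alpha}\,PM_\phi P$, working directly with $\Tr[(\Lambda^2+C)^{1/2}-\Lambda]$ in terms of $(PM_\phi P)$. If that fails, the fallback is to state the error term in the form $\Lambda^{1/2}\Tr C^{1/4}$, which is what the argument above actually gives.
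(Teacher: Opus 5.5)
Your argument for the failure of (7.8) is essentially the paper's. The second displayed inequality, however, is not proved. You arrive at the error term $\frac12\Lambda^{1/2}(2\alpha)^{1/4}\Tr(PM_{\phi^2}P)^{1/4}$ and, as you note, cannot convert it into $\Lambda^{1/2}(2\alpha)^{1/4}\Tr(PM_\phi P)^{1/2}$. No such comparison is available once the scalar step has been carried out.

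The missing idea is to use $(PM_\phi P)^2\le PM_{\phi^2}P$ \emph{before} the scalar step, through the monotonicity of $X$ in its argument. The map $C\mapsto\Tr[(\omega^2+C)^{1/2}-\omega]$ is increasing because $t\mapsto t^{1/2}$ is operator monotone; this is not Lemma~\ref{lem:hansen}(b), which concerns monotonicity in $\omega^2$. Since $PM_{\phi^2}P-(PM_\phi P)^2=PM_\phi(1-P)M_\phi P\ge0$, you get $X(2\Theta_{\phi,\alpha})\ge X(C')$ with $C':=2\alpha(PM_\phi P)^2$. This $C'$ also vanishes on $\Ran P^\perp$, so your own scalar inequality applies to it:
\[
  X(C')\ \ge\ \Tr C'^{1/2}-\Lambda^{1/2}\Tr C'^{1/4}=\sqrt{2\alpha}\,\Tr(PM_\phi P)-\Lambda^{1/2}(2\alpha)^{1/4}\Tr(PM_\phi P)^{1/2}.
\]
The equality holds because $C'^{1/2}=\sqrt{2\alpha}\,PM_\phi P$ and $C'^{1/4}=(2\alpha)^{1/4}(PM_\phi P)^{1/2}$ exactly. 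Halving and using $\Tr(PM_\phi P)=\frac{\Lambda^3}{3\pi^2}\norm\phi_1$ gives the claim, even with an extra factor $\frac12$ on the error term. The main term now comes out as an identity, so Lemma~\ref{lem:hansen}(a) is not needed. This is the paper's proof, which quotes the scalar step as \cite[Lemma~7.1, (7.1)]{BH22} with $p=\frac12$. So the operator inequality you mention points the right way; it only points the wrong way if you try to compare the two error terms after the fact.

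Two smaller points concern the first half.
\begin{itemize}
\item Proposition~\ref{prop:coherent} needs $W^{2,\infty}$, and mollifying $\phi$ at an $o(1)$ relative cost does not by itself give an upper bound. Mollification is not monotone, and you have no continuity estimate for $\rho\mapsto\Tr(P\rho P)^{1/2}$. Instead, take a smooth majorant $\phi'\ge\phi$ with $\norm{\phi'}_1\le\norm\phi_1+\eps$, for instance a mollification of $\phi$ plus its modulus of continuity at the mollification scale, times a cutoff. Then $\Tr(PM_{\phi^2}P)^{1/2}\le\Tr(PM_{\phi'^2}P)^{1/2}$ follows from $PM_{\phi^2}P\le PM_{\phi'^2}P$ and the min-max principle; this is what the paper does.
\item Deriving $X(C)\le\Tr C^{1/2}$ from Lemma~\ref{lem:hansen}(b) is immediate on $\T_L$. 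On $\R^3$ with $\sigma=0$, however, $\Ran P$ is infinite-dimensional, so the approximation for the non-commuting $\omega$ and $C$ would have to be supplied. The same applies to your first lower-bound step. The paper avoids this for the upper bound via the identity $X(A)=\Tr[A^{1/2}((k^2+A)^{1/2}+\abs k)^{-1}A^{1/2}]$ of \cite[Lemma~5.2]{BH22} together with $(k^2+A)^{1/2}+\abs k\ge A^{1/2}$.
\end{itemize}
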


\begin{proof}
Let $P:=\chi_{0,\Lambda}P_C$ be the projection onto transverse fields with momenta in the ball of
radius $\Lambda$, so that $\Theta=\alpha PM_{\phi^2}P$ (previous subsection).

(1) $X(A)\le\Tr A^{1/2}$. Indeed, by \cite[Lemma~5.2, (5.21)]{BH22},
$X(A)=\Tr[A^{1/2}((k^2+A)^{1/2}+\abs k)^{-1}A^{1/2}]$, and $B:=(k^2+A)^{1/2}+\abs k\ge A^{1/2}$ by operator
monotonicity of the square root. For $\eps>0$, $(B+\eps)^{-1}\le(A^{1/2}+\eps)^{-1}$, hence
$A^{1/2}(B+\eps)^{-1}A^{1/2}\le A^{1/2}$; let $\eps\downarrow0$ (monotone limit) and take traces.

(2) $\Tr(2\Theta)^{1/2}=\sqrt{2\alpha}\Tr(PM_{\phi^2}P)^{1/2}$. First, if $0\le\rho\le\rho'$ are bounded multiplication
operators and $P\rho'P$ is compact, then $0\le P\rho P\le P\rho'P$, so by the min-max principle the eigenvalues
satisfy $\mu_j(P\rho P)\le\mu_j(P\rho'P)$ and $\Tr(P\rho P)^{1/2}\le\Tr(P\rho'P)^{1/2}$. Given $\eps>0$, choose
$0\le\phi'\in C_c^\infty$ with $\phi'\ge\phi$ and $\norm{\phi'}_1\le\norm\phi_1+\eps$ (for instance
$\phi'=(\phi*j_\delta+\omega_\phi(\delta))\chi$ with a mollifier $j_\delta$, the modulus of continuity $\omega_\phi$ of $\phi$, a cutoff
$0\le\chi\le1$ equal to $1$ near $\supp\phi$, and $\delta$ small). Scale $x\mapsto(\Lambda/\pi)x$ to the band $\pi$: with
$R:=\Lambda/\pi$ and $\varphi_R:=\phi'(\cdot/R)$ (Proposition~\ref{prop:coherent} is homogeneous of degree one in $\varphi$),
Proposition~\ref{prop:coherent}(a) with a fixed width $t$ such that $1+C/t\le\frac32$ gives
$\Tr(PM_{\phi^2}P)^{1/2}\le\Tr(\Pi'\varphi_R^2\Pi')^{1/2}\le\frac32\cdot\frac\pi3R^3\norm{\phi'}_1+O(R^2)$, since the error term is
$O((R+t)^3R^{-1})$ for fixed $\phi'$ and $t$. As $\frac\pi3R^3=\frac{\Lambda^3}{3\pi^2}$, for small $\eps$ and large $\Lambda$,
$\Tr(PM_{\phi^2}P)^{1/2}\le2\frac{\Lambda^3}{3\pi^2}\norm\phi_1$, and for all
$\alpha>0$, $\frac12X(2\Theta)\le\frac{2\sqrt2}{6\pi^2}\sqrt\alpha\Lambda^3\norm\phi_1\approx0.0478\sqrt\alpha\Lambda^3\norm\phi_1$. The leading
term on the right of (7.8) is $0.376\sqrt\alpha\Lambda^3\norm\phi_1$, and the error term is of order
$\alpha^{1/4}$; this is a contradiction as $\alpha\to\infty$.

(3) Since $PM_{\phi^2}P-(PM_\phi P)^2=PM_\phi(1-P)M_\phi P\ge0$ and $X$ is monotone, \cite[Lemma~7.1,
(7.1)]{BH22} with $p=\frac12$ gives
$\frac12X(2\Theta)\ge\frac{\sqrt{2\alpha}}{2}\Tr(PM_\phi P)-\Lambda^{1/2}(2\alpha)^{1/4}\Tr(PM_\phi P)^{1/2}$, and
\[
  \Tr(PM_\phi P)=2\cdot\frac{4\pi}{3}\Lambda^3(2\pi)^{-3}\norm\phi_1=\frac{\Lambda^3}{3\pi^2}\norm\phi_1
\]
(with $\Lambda^3-\sigma^3$ in place of $\Lambda^3$ if $\sigma>0$).
\end{proof}

\begin{remark}
Proposition~\ref{prop:A1} compares Theorem~1.1 of \cite{BH22} with our upper bound only;
Proposition~\ref{prop:A2} examines Theorem~7.2 of \cite{BH22} directly; it uses
Proposition~\ref{prop:coherent} but not Theorem~\ref{thm:main}. The asymptotics of \cite{BH22}
have been quoted in later work in the form $\ELL\sim F_1\alpha^{2/7}\Lambda^{12/7}$ of (I.10) of
\cite{BH20} and of the abstract of \cite{BH22}, that is, without the factor $(4/(9\pi))^{2/7}$, for example in \cite[(1.7)]{BBM26}
and \cite[(I.12)]{BHe26} (where the constant is denoted by $F$).
\end{remark}

\subsection{Relation of the results}
\cite[Theorem~1.1]{BH22} states the asymptotics of the product state energy $\ELL$, with the
power rate $\Lambda^{-4/105}$ and the coefficient $\beta(\alpha,\Lambda)$ of (5.6) there; by
Proposition~\ref{prop:A1} its lower inequality cannot hold with this coefficient, while its
upper inequality is consistent with Theorem~\ref{thm:main}(a). Corollary~\ref{cor:limit} gives the asymptotics of both $\ELL$ and the true ground
state energy $E_0$ (with the coefficient of the first subsection); the exponent $\frac{12}{7}$ and
the form $F[\beta]$ of the leading term agree with \cite{BH22}. What is new is the lower bound for
$E_0$, and hence for $\ELL$, with the coefficient $\beta_*$: the leading asymptotics of both
energies, $E_0/\ELL\to1$, and the uniformity in the window $\alpha^{5/2}\Lambda\to\infty$.

%% file: appB_constants.tex
\section{The leading constant}\label{app:constants}

\subsection{The minimizer of the comparison functional}
\begin{proposition}[Minimizer and constant]\label{prop:F1}
Let $x_1\in(\pi,\frac32\pi)$ be the first positive root of $\tan x=x$.
\begin{enumerate}[label=(\alph*)]
\item $F[1]$ is attained, by the case of equality in Nash's inequality \cite{CL93} (see the
paragraph after the proposition); moreover, by \cite[Theorems~7--9]{Hac19}, the symmetric decreasing
rearrangement of a minimizer is unique. Up to the scaling $\phi\mapsto t^{3/2}\phi(t\,\cdot)$ and
normalization the minimizer obtained from Nash's inequality is
\[
  u(x)=\Bigl(1-\frac{\sin\abs x}{\abs x\cos x_1}\Bigr)\mathbf 1_{\abs x\le x_1}.
\]
The function $u$ is non-negative, $C^1$, supported in $\bar B(0,x_1)$, and is the radial
solution of $-\Delta u+1=u$ in $B(0,x_1)$ with $u=u'=0$ on the
boundary, which forces $\tan x_1=x_1$; the normalized minimizer $u_*:=t_*^{3/2}u(t_*\cdot)/\norm u_2$,
with $t_*$ as in the proof of (b), satisfies the Euler--Lagrange equation $-\Delta u_*+1=t_*^2u_*$ on its
support $\bar B(0,x_1/t_*)$. (Whether the minimizer itself is unique up to
translations is not used.)
\item One has
\[
  F[1]=\frac73\Bigl(\frac3{10}\Bigr)^{3/7}\Bigl(\frac{3\pi}{10}\Bigr)^{2/7}x_1^{6/7}
  =\frac74\Bigl(\frac23\Bigr)^{3/7}C_{\mathrm N}^{-3/7},\qquad
  C_{\mathrm N}=\frac{5^{5/3}}{2^{2/3}\cdot3\,x_1^2\,(4\pi/3)^{2/3}} ,
\]
where $C_{\mathrm N}$ is the sharp constant in the three-dimensional Nash inequality
$\norm\phi_2^{10/3}\le C_{\mathrm N}\norm{\nabla\phi}_2^2\norm\phi_1^{4/3}$ \cite[Theorem, (1)--(2)]{CL93}.
\end{enumerate}
\end{proposition}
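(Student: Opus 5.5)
The plan is to reduce $F[1]$ to the sharp Nash inequality by optimizing over dilations. After that, the equality case of Carlen and Loss \cite{CL93} gives both the attainment and the explicit minimizer.

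\emph{Step 1 (scaling reduction).} Fix $\phi\ge0$ with $\norm\phi_2=1$ and set $a:=\norm{\nabla\phi}_2^2$, $b:=\norm\phi_1$. Consider the $L^2$-preserving dilations $\phi_t:=t^{3/2}\phi(t\,\cdot)$. These give $\norm{\nabla\phi_t}^2=t^2a$ and $\norm{\phi_t}_1=t^{-3/2}b$, so
\[
  \inf_{t>0}\mathcal F_1(\phi_t)=\inf_{t>0}\Bigl(\tfrac12at^2+bt^{-3/2}\Bigr).
\]
The infimum is attained at $t^{7/2}=3b/(2a)$, where $bt^{-3/2}=\frac23at^2$. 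Its value is
\[
  \tfrac76\bigl(\tfrac32\bigr)^{4/7}a^{3/7}b^{4/7}=\tfrac74\bigl(\tfrac23\bigr)^{3/7}\bigl(ab^{4/3}\bigr)^{3/7}.
\]
Hence
\[
  F[1]=\tfrac74\bigl(\tfrac23\bigr)^{3/7}\inf\bigl\{(\norm{\nabla\phi}_2^2\norm\phi_1^{4/3})^{3/7}:\norm\phi_2=1\bigr\}.
\]
By Nash's inequality $\norm\phi_2^{10/3}\le C_{\mathrm N}\norm{\nabla\phi}_2^2\norm\phi_1^{4/3}$ with the sharp constant, the inner infimum equals $C_{\mathrm N}^{-3/7}$. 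This gives the second formula in (b). Moreover, $F[1]$ is attained exactly when Nash's inequality has an extremal, rescaled to the optimal $t_*$ and normalized. The restriction $\phi=\abs\phi$ costs nothing, since $\abs{\nabla\abs\phi}\le\abs{\nabla\phi}$.

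\emph{Step 2 (the extremal).} Next I quote \cite[Theorem]{CL93}: Nash extremals exist and, up to dilation, translation and multiplication by constants, are the functions $u$ of the statement. I will also verify the properties of $u$ directly.
\begin{itemize}
\item The radial solutions of $-\Delta u+1=u$ are $u=1+c\,\sin r/r$.
\item The condition $u(x_1)=0$ forces $c=-x_1/\sin x_1$.
\item The condition $u'(x_1)=0$ is equivalent to $\tan x_1=x_1$, which turns $c$ into $-1/\cos x_1$.
\item Since $\cos x_1<0$ on $(\pi,\frac32\pi)$, we need $\sin r/r\ge\cos x_1$ on $[0,x_1]$ for $u\ge0$. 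This holds because $\sin r/r\ge\sin x_1/x_1=\cos x_1$ there, using the minimality of $\sin r/r$ at its first critical point $x_1$.
\item $u$ is $C^1$ because both $u$ and $u'$ vanish at $r=x_1$.
\end{itemize}
The Euler--Lagrange equation for $u_*$ then follows by scaling. The uniqueness of the rearranged minimizer is quoted from \cite{Hac19}; it is not needed for anything else.

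\emph{Step 3 (the constant).} It remains to evaluate $\norm u_1$, $\norm u_2^2$ and $\norm{\nabla u}_2^2$ as radial integrals on $[0,x_1]$. I will use $\tan x_1=x_1$ to simplify the trigonometric terms. Testing the equation against $u$ and against $1$ gives $\norm{\nabla u}^2+\norm u_1=\norm u_2^2$ and $\abs{B(0,x_1)}=\norm u_1$, which avoids most explicit integration. Inserting these values into $a^{3/7}b^{4/7}$ reproduces the first closed form in (b) and the stated $C_{\mathrm N}$.

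I expect the main obstacle to be bookkeeping rather than ideas: matching the normalization of \cite{CL93} (they write $C_{\mathrm N}$ through the first Neumann eigenvalue of the unit ball, which equals $x_1^2$) and carrying the powers $3/7,4/7$ through without error. As a safeguard, I will check the final number $F[1]=4.9646\ldots$ numerically against both closed forms.
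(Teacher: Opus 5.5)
Your proposal is correct and follows essentially the paper's route. Optimizing over $L^2$-preserving dilations before applying the sharp Nash inequality is equivalent to the paper's minimization of $\frac{y^{-4/3}}{2C_{\mathrm N}}+y$ over $y=\norm\phi_1$, and the first closed form comes, as in the paper, from the radial integrals of $u$ simplified by $\tan x_1=x_1$.

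Your identities from testing $-\Delta u+1=u$ against $u$ and $1$ are a valid shortcut. The Pohozaev identity (multiply by $x\cdot\nabla u$) supplies the missing relation $\norm{\nabla u}_2^2=\frac32\norm u_1$, so no integral beyond $\abs{B(0,x_1)}$ is needed. It also gives $t_*^{7/2}=\norm u_2$, which your ``follows by scaling'' for the equation of $u_*$ tacitly uses.

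One wording fix: the eigenvalue in \cite{CL93} that equals $x_1^2$ is the first non-zero Neumann eigenvalue of the unit ball on radial functions. The first non-zero Neumann eigenvalue overall is non-radial and smaller.
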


The second expression for $F[1]$ is \cite[(1.29)]{BH22}, where it is derived from the
sharp Nash inequality of Carlen and Loss \cite{CL93}: Nash's inequality gives
$F[1]\ge\min_y(\frac{y^{-4/3}}{2C_{\mathrm N}}+y)$, with equality because the Nash
optimizer is $u$. We verify the first expression directly from the minimizer.

\begin{proof}[Proof of the first identity in (b)]
Write $f(r)=\sin r/r$, so that $(r^2f')'=-r^2f$ and $f'(x_1)=0$, and use
$\cos x_1=\sin x_1/x_1$. Then
$\int_0^{x_1}u\,r^2dr=\frac{x_1^3}{3}-\frac{\sin x_1-x_1\cos x_1}{\cos x_1}=\frac{x_1^3}{3}$,
$\int_0^{x_1}\abs{u'}^2r^2dr=\sec^2x_1\int_0^{x_1}\sin^2r\,dr=\frac{x_1}{2}\tan^2x_1=\frac{x_1^3}{2}$
(integrate $\int r^2f'^2$ by parts), and $\int_0^{x_1}u^2r^2dr=\frac{x_1^3}{3}+\frac{x_1^3}{2}=\frac56x_1^3$.
Hence $a:=\frac12\norm{\nabla u}^2/\norm u_2^2=\frac3{10}$ and
$b:=\norm u_1/\norm u_2=\frac{4\pi}{3}\sqrt{\frac{3}{10\pi}}\,x_1^{3/2}$. For the normalized
dilate $u_t:=t^{3/2}u(t\cdot)/\norm u_2$ one has $\mathcal F_1(u_t)=at^2+bt^{-3/2}$, minimized at $t=t_*$, $t_*^{7/2}=\frac{3b}{4a}=\norm u_2$,
with value $\frac73a^{3/7}(\frac34b)^{4/7}$.
\end{proof}

The function $u_*$ is $C^{1,1}=W^{2,\infty}$ ($u$ is analytic inside its support, $u=u'=0$ on the boundary,
and $u''$ jumps there), which is what Remark~\ref{rem:upper-rate} uses.

\subsection{Numerical values}
The following enclosures were obtained with interval arithmetic (mpmath, 200 bits); the
script is the ancillary file \texttt{constants\_interval.py} (also available at
\url{https://github.com/tabayashi0117/pauli-fierz-uv-limit-code}) and runs in less than a second.
\begin{align*}
  x_1\ (\text{first positive root of }\tan x=x)&\in[4.49340945790906417530788092727,\\
  &\qquad\ 4.49340945790906417530788092729],\\
  F[1]\ (\text{both expressions in Prop.~\ref{prop:F1}(b)})&\in[4.964615752497504363152959214,\\
  &\qquad\ 4.964615752497504363152959215],\\
  c_e:=(6\pi^2)^{-4/7}F[1]&\in[0.48201050176893950665949395426,\\
  &\qquad\ 0.48201050176893950665949395427],\\
  c_\alpha=(4\pi)^{2/7}c_e&\in[0.9933824009729449788048457056,\\
  &\qquad\ 0.9933824009729449788048457057].
\end{align*}
Here $c_e$ is the coefficient of $e^{4/7}\Lambda^{12/7}$ and $c_\alpha$ that of $\alpha^{2/7}\Lambda^{12/7}$; in the
working convention $F_\lambda[\beta]=c_Fe^2\ell^5$ with $c_F=(\pi/6)^{4/7}F[1]$. The vacuum trial state gives
$\frac{e^2}{2}\langle A_\Lambda(0)^2\rangle_\Omega=\frac{e^2\Lambda^2}{8\pi^2}=\frac{\alpha\Lambda^2}{2\pi}$, and the crossover ratio is
$\frac{\alpha\Lambda^2/(2\pi)}{c_\alpha\alpha^{2/7}\Lambda^{12/7}}=\frac{(\alpha^{5/2}\Lambda)^{2/7}}{2\pi c_\alpha}$ with $2\pi c_\alpha=6.2416\ldots$; in terms of $e$ the same ratio is $(e^5\Lambda)^{2/7}/(8\pi^2c_e)$ with $8\pi^2c_e=38.058\ldots$

\subsection{Absolute constants of the lower bound}
All constants of the lower bound are absolute: they do not depend on the coupling or on the side
$L$ of the periodic box (beyond the stated lower bounds on $L$), and the block side $s$ enters only as
stated. They are not needed numerically, since the main theorem has no explicit rate; each is defined where it
first appears, and those of the Neumann lower bound in Section~\ref{sec:neumann}.

\subsection{Conventions}
The three normalizations used in this paper are compared in Table~\ref{tab:conventions}.
\begin{table}[htbp]
\caption{Physical and working conventions, and the convention of \cite{BH22}.}\label{tab:conventions}
\centering
\renewcommand{\arraystretch}{1.2}
\small\begin{tabular}{p{2.3cm}p{3.0cm}p{1.6cm}p{2.1cm}p{4.4cm}}
\hline
convention & kinetic energy & cutoff & $\ell$ & leading term\\
\hline
physical (main theorem) & $\frac12(p+eA)^2$ & $\abs k\le\Lambda$ & $(\Lambda/(\pi e^2))^{1/7}$ & $c_\alpha\alpha^{2/7}\Lambda^{12/7}$\\
working (Sections~\ref{sec:upper}--\ref{sec:proof}) & $\frac\lambda2(p+eA)^2$, $\lambda=\Lambda/\pi$ & $\abs k<\pi$ & $(\lambda/e^2)^{1/7}$ & $c_Fe^2\ell^5$ (times $\Lambda/\pi$)\\
\cite{BH22} & $\frac12(i\nabla+\sqrt{\alpha_{\mathrm{BH}}}A_{\mathrm{BH}})^2$, $A_{\mathrm{BH}}=\sqrt2A$ & $\sigma\le\abs k<\Lambda$ & --- & $(\sqrt2/(6\pi^2))^{4/7}F[1]\alpha_{\mathrm{BH}}^{2/7}\Lambda^{12/7}$\\
\hline
\end{tabular}
\end{table}

%% file: appD_counterexample.tex
\section{The gradient condition cannot be dropped}\label{app:counterexample}

Theorem~\ref{thm:neumann} is stated only for fields with $\Upsilon_Q(B)\le\Upsilon$. We show that this
condition cannot be removed: the lower bound \eqref{eq:F} fails for suitable band-limited
fields. The construction concentrates vorticity in a small ball inside $Q$ and uses a
trial function that avoids the ball.

Let $Q$ be a cube of side $s\ge4$, $P'_Q$ the projection in $L^2(Q;\R^3)$ onto the orthogonal
complement of $\nabla H^1(Q)$, and $\hat\mu(b):=\norm{P'_Q(\mathbf 1_Qb)}^2/\abs Q$. A band-limited field is a real
field of the form $\Pi'f$ with $f\in L^2(\R^3;\R^3)$.

\begin{lemma}[Density]\label{lem:density}
The set $\{P'_Q(\mathbf 1_Q\Pi'f):f\in L^2(\R^3;\R^3)\}$ is dense in $\Ran P'_Q$. This remains true if $f$ is
restricted to real Schwartz fields whose Fourier transforms lie in
$C_c^\infty(\{0<\abs k<\pi\})$.
\end{lemma}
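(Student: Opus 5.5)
Density is equivalent to triviality of the orthogonal complement: we show that if $w\in\Ran P'_Q$ is orthogonal in $L^2(Q;\R^3)$ to $P'_Q(\mathbf 1_Q\Pi'f)$ for every admissible $f$, then $w=0$. Since $P'_Q w=w$, this orthogonality reads $\ip{\mathbf 1_Q w}{\Pi' f}_{L^2(\R^3)}=0$, with $\mathbf 1_Q w$ extended by zero to $\R^3$. As $\Pi'$ is self-adjoint, this is $\Pi'(\mathbf 1_Q w)\perp f$ for all $f$ in the class. The restricted class (real Schwartz fields with $\hat f\in C_c^\infty(\{0<\abs k<\pi\})$) is dense in $L^2$ of the band in Fourier space, and complex combinations do no harm by linearity. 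Hence the Fourier transform of $\mathbf 1_Q w$ has vanishing transverse part, $P_k\widehat{\mathbf 1_Q w}(k)=0$, for almost every $k$ with $\abs k<\pi$.

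Next we use analyticity. Since $\mathbf 1_Q w$ has compact support, $\hat F(k):=\widehat{\mathbf 1_Q w}(k)$ is entire in $k$. The transverse condition says $k\times\hat F(k)=0$ on the open ball $\abs k<\pi$. Now $k\times\hat F(k)$ is an entire vector function vanishing on an open set, so it vanishes identically on $\R^3$. In position space this means $\curl(\mathbf 1_Q w)=0$ in the distributional sense on $\R^3$.

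A compactly supported curl-free $L^2$ field on $\R^3$ is a gradient, $\mathbf 1_Q w=\nabla\chi$. Concretely one may take $\hat\chi=-ik\cdot\hat F/\abs k^2$; then $\chi\in H^1_{\mathrm{loc}}$ is harmonic outside $\bar Q$ and decays at infinity. Alternatively, the line integral of $\mathbf 1_Qw$ from a point far away defines $\chi$, and $\chi$ is constant on the connected set $\R^3\setminus\bar Q$, so after subtracting a constant $\chi$ is compactly supported. Either way $\chi\restriction_Q\in H^1(Q)$ and $w=\nabla\chi$ on $Q$. Thus $w\in\nabla H^1(Q)$, which together with $w\in\Ran P'_Q=(\nabla H^1(Q))^\perp$ forces $w=0$.

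The step needing the most care is the passage from \lq\lq curl-free on $\R^3$\rq\rq\ to \lq\lq gradient of an $H^1(Q)$ function\rq\rq. Here one must check that no distributional surface term spoils the argument: the curl is taken of the zero extension across $\partial Q$, and this is exactly where the tangential trace condition is encoded. One must also confirm that $\chi$ is genuinely in $H^1$, which follows from $\nabla\chi=\mathbf 1_Qw\in L^2$ and local integrability of $\chi$. The analyticity step, using that the band is an open ball around points other than just $0$, is routine, as is the reduction to the restricted smooth class, since excluding $k=0$ removes only a null set.
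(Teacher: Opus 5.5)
Your proof is correct and follows essentially the paper's argument: orthogonality gives $\Pi'(\mathbf 1_Qw)=0$, the entire function $k\times\widehat{\mathbf 1_Qw}$ then vanishes identically, and the compactly supported curl-free field is a gradient, so $w\in\nabla H^1(Q)\cap(\nabla H^1(Q))^\perp=\{0\}$. The only difference is bookkeeping. You run the argument directly with the restricted real class, so one pass proves both assertions. The paper proves the statement for all $f\in L^2$ first, then deduces the restricted version from the density of that class in $\Ran\Pi'$, the boundedness of $P'_Q\mathbf 1_Q$, and the fact that $\Pi'$ and $P'_Q$ commute with complex conjugation.
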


\begin{proof}
Let $g\in\Ran P'_Q$ be orthogonal to all $P'_Q\mathbf 1_Q\Pi'f$. Since $P'_Qg=g$, we get
$\ip{\mathbf 1_Qg}{\Pi'f}=0$ for all $f$, i.e.\ $\Pi'(\mathbf 1_Qg)=0$. As $\mathbf 1_Qg$ has compact support,
$\widehat{\mathbf 1_Qg}$ is entire, and its transverse component vanishes on $0<\abs k<\pi$:
$k\times\widehat{\mathbf 1_Qg}(k)=0$ there. The left side is entire and vanishes on an open set, hence
identically, so $\curl(\mathbf 1_Qg)=0$ as a distribution on $\R^3$. A compactly supported curl free
$L^2$ field is a gradient $\nabla\varphi$, so on $Q$, $g\in\nabla H^1(Q)$; since $g\perp\nabla H^1(Q)$, $g=0$. For the
second statement note that this family of $f$ is dense in $\Ran\Pi'$ (the sphere and the
origin have measure zero) and $P'_Q\mathbf 1_Q$ is bounded. Real fields suffice because $\Pi'$ and
$P'_Q$ commute with complex conjugation.
\end{proof}

\begin{proposition}\label{prop:counterexample}
Fix $\nu\in\N$, $\eta\in[0,1)$ and $e,u_b>0$.
(a) There is a band-limited real Schwartz field $B$ with
\[
  E_0^N\bigl(Q;\nu eB\bigr)\ <\ \nu^2(1-\eta)\,\frac{e^2\min\{u'_Q(B),u_b\}}{\abs Q}.
\]
(b) There is $L_0$ such that for every integer $L\ge L_0$ and every position of $Q$ in $\T_L$
there is a field built from the physical modes ($0<\abs k<\pi$, transverse) satisfying the same
inequality.
Moreover, the ratio $E_0^N/\hat\mu$ can be made arbitrarily small.
\end{proposition}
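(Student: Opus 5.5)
Since the target inequality is invariant under scaling the field (with $N=\nu^2e^2u'_Q$ and $\mathcal X=\abs QE_0^N$ it reads $\mathcal X<(1-\eta)\min(N,N_b)$), I will build a band-limited field $B$ for which $\mathcal X$ is tiny while $u'_Q(B)$ is comparable to any prescribed value. The construction follows the heuristic described before Theorem~\ref{thm:sup}: vorticity concentrated in a small ball inside $Q$, with a trial function that avoids the ball. Fix a small ball $K=\bar B(c,r)$ centred at the centre $c$ of $Q$ and a target field $g\in\Ran P'_Q$. The field $g$ is smooth, divergence free and normal-tangential as needed, and it is curl free on $Q\setminus K$. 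For instance, take $g=P'_Q(\mathbf 1_Q\curl\chi)$ with $\chi$ supported in $K$, so that outside $K$ the field $g$ is a gradient modulo a harmonic correction. Its properties are $\norm g^2=:N_0>0$ and, on the domain $Q\setminus K$, $g=\nabla\phi$ for a single-valued $\phi$. The last point holds because $Q\setminus K$ is simply connected for a ball, and $g$ is curl free there.

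By Lemma~\ref{lem:density}, choose real band-limited Schwartz fields $B_m$ with $P'_Q(\mathbf 1_QB_m)\to g$ in $L^2(Q)$. Then $u'_Q(B_m)\to N_0$. By gauge invariance $E_0^N(Q;\nu eB_m)=E_0^N(Q;\nu ea'_m)$ with $a'_m=P'_Q\mathbf 1_QB_m$. As the trial function I use $v=\zeta\,\mathrm e^{-i\nu e\phi}$, where $\zeta$ is a real cutoff equal to $1$ outside $B(c,2r)$ and $0$ on $K$, normalized. This gives
\[
  \int_Q\abs{(\nabla+i\nu ea'_m)v}^2\le 2\norm{\nabla\zeta}^2+2\nu^2e^2\norm{\zeta(a'_m-g)}_{L^2}^2\cdot\norm\zeta_\infty^2/\norm\zeta^2 .
\]
The first term is $O(r)$ in three dimensions, since $\norm{\nabla\zeta}^2\sim r$. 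The second term tends to $0$ as $m\to\infty$. Here $L^2$ convergence suffices because $\zeta$ is bounded.

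The Rayleigh quotient is therefore at most $Cr/s^3+o(1)$. Meanwhile the right side is $\nu^2(1-\eta)e^2\min\{N_0,u_b\}/\abs Q$. The main obstacle is to keep $N_0$ bounded below as $r\to0$, uniformly in the rescaling. I would handle this by scaling $g$: since $N_0$ can be set to any value by multiplying $g$ by a constant, I choose $t g$ with $\nu^2e^2t^2N_0\ge N_b$, which is allowed because the trial bound does not depend on the amplitude of $g$. Then the claimed ratio $E_0^N/\hat\mu$ is at most $Cr/\min(\dots)$, which is arbitrarily small as $r\to0$ and proves (a) together with the last claim. The phase $\phi$ must be well defined on $Q\setminus K$; since $\zeta$ kills $B(c,r)$ and the complement of a ball in a cube is simply connected, this holds.

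For (b) I would approximate the Schwartz field $B$ by torus trigonometric polynomials $\Pi'_LB$ restricted to $Q$. The Fourier transform of $B$ is supported in $\{0<\abs k<\pi\}$, so Riemann sums of the periodization converge in $C^1(\bar Q)$ uniformly in the position of $Q$, by translation invariance. Both sides of the strict inequality are continuous in this topology: the left side is upper semicontinuous via the fixed trial function, and the right side is continuous since $u'_Q$ is. So the inequality persists for all $L\ge L_0$.
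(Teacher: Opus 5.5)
Your proposal is correct and is essentially the paper's proof. The ingredients coincide: a target field in $\Ran P'_Q$ supported in a small ball $B_r(c)$, approximated through Lemma~\ref{lem:density}; a real cutoff trial function vanishing on the ball, with $\norm{\nabla\zeta}^2=O(r)$ and the amplitude fixed independently of $r$; and periodization by Poisson summation for (b). The one simplification you missed is that $\curl\chi$ with $\chi$ supported in $K$ already lies in $\Ran P'_Q$, so $g=\curl\chi$ vanishes on $Q\setminus K$ and your phase $\phi$ is constant; the gauge factor, the ``harmonic correction'' and the simple-connectivity argument are therefore unnecessary.
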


\begin{proof}
Let $y_b:=\nu^2e^2u_b/\abs Q$ and $0<Y\le y_b$.

(1) Target field. Let $c$ be the centre of $Q$, $0<r\le\frac12$, $\phi\in C_c^\infty(B_r(c))$ real and
non-zero, and $v_*:=\curl(\phi\,\mathbf e_3)$, normalized so that $\norm{v_*}^2=Y\abs Q$. Since $v_*$ is divergence
free and supported in the interior of $Q$, $\int_Qv_*\cdot\nabla\chi=0$ for all $\chi\in H^1(Q)$; so
$v_*\in\Ran P'_Q$ and $\hat\mu(v_*)=Y$.

(2) Trial function. Let $\zeta\in C_c^\infty(B_{2r}(c))$ with $0\le\zeta\le1$, $\zeta=1$ on $B_r(c)$ and
$\abs{\nabla\zeta}\le2/r$, and $\psi:=1-\zeta$ (real). Then $\int_Q\abs{\nabla\psi}^2\le\frac4{r^2}\abs{B_{2r}}=\frac{128\pi r}{3}$ and
$\int_Q\psi^2\ge\abs Q-\abs{B_{2r}}\ge\abs Q/2$ (as $r\le\frac12$, $s\ge4$).

(3) Upper bound. Let $b$ be a bounded field on $Q$ and $b':=P'_Q(\mathbf 1_Qb)$, so $b-b'=\nabla\chi$; by
gauge invariance $E_0^N(Q;b)=E_0^N(Q;b')$ (Section~\ref{sec:floor}; for smooth $B$, $\nabla\chi\in L^6$ by
Lemma~\ref{lem:L6}). Since $\psi$ is real, the cross term vanishes, and as $v_*\psi=0$,
\[
  E_0^N(Q;b')\ \le\ \frac{\int\abs{\nabla\psi}^2+\int\abs{b'}^2\psi^2}{\int\psi^2}\ \le\ \frac2{\abs Q}\Bigl(\frac{128\pi r}{3}+\norm{b'-v_*}^2\Bigr).
\]

(4) Choice. Take $r:=\min(\frac12,\frac{3(1-\eta)Y\abs Q}{1024\pi})$, so that the first term is
$\le\frac{(1-\eta)Y}{4}$. By Lemma~\ref{lem:density} choose a band-limited Schwartz field $B$ with
$\norm{P'_Q(\mathbf 1_Q\nu eB)-v_*}^2\le\epsilon\abs Q$. Then $E_0^N(Q;\nu eB)\le\frac{(1-\eta)Y}{2}+2\epsilon$, while
$\hat\mu(\nu eB)\ge(\sqrt Y-\sqrt\epsilon)^2$, so the right side of the claim is at least
$(1-\eta)\min(\hat\mu,y_b)\ge(1-\eta)(\sqrt Y-\sqrt\epsilon)^2$. For small $\epsilon$ this proves (a). Letting both $r$
and $\epsilon$ tend to zero, $E_0^N(Q;\nu eB)\le\frac{256\pi r}{3\abs Q}+2\epsilon\to0$ while $\hat\mu\to Y$, so the ratio
$E_0^N/\hat\mu$ becomes arbitrarily small.

(5) Periodic box. Periodize the field of (4), $B_L(x):=\sum_{m\in\Z^3}B(x+mL)$. By Poisson summation
the Fourier coefficients of $B_L$ are the values of $\hat B$ on $\frac{2\pi}{L}\Z^3$, so only transverse
modes with $0<\abs k<\pi$ occur. Since $B$ is a Schwartz field, $B_L\to B$ uniformly on $Q$ as
$L\to\infty$, independently of the position of $Q$ (translate $B$). The bound in (3) and $\hat\mu$
are continuous in $L^2(Q)$, so the strict inequality of (4) persists for large $L$.
\end{proof}

\begin{remark}[Structure of the counterexample]
The field is superoscillating: it has large amplitude outside $Q$, while inside $Q$ its
vorticity is concentrated in a small ball. The gradient condition
$\int_{\T_L}\mathrm w_Q\abs{\nabla B}^2\le\Upsilon u'_Q$ penalizes the amplitude outside $Q$ through the weight
$\mathrm w_Q$, which has a positive polynomial tail on the whole space; by
Theorem~\ref{thm:sup} it gives $\norm{\curl B}_{L^2(Q)}\le34D^6(u'_Q)^{1/2}/s$, so that the vorticity cannot
concentrate inside $Q$ on scales much smaller than $s/D^6$, whereas here
$\norm{\curl v_*}=\norm{\nabla v_*}\ge(\pi/r)\norm{v_*}$ (the target field is divergence free and supported in $B_r(c)$).
Configurations violating the condition are handled by the weight $\theta$ in Step~3 of
Section~\ref{sec:proof-abstractBO}; this is why Theorem~\ref{thm:abstractBO} and Corollary~\ref{thm:blockBO} need the
lower bound only under the gradient condition.
\end{remark}

%% file: backmatter.tex
\section*{Statements and Declarations}

\subsection*{Tool and computational resource disclosure}
This work was carried out with substantial assistance from Claude models
(Anthropic), accessed through the Claude Code agent in 2026. The AI system was
used to explore strategies of proof, to write out detailed arguments and
computations, to check them, and to prepare the code in the ancillary files.
The author has read and checked the entire paper and takes full
responsibility for its content. No AI system is an author.

\subsection*{Funding}
This research received no external funding.

\subsection*{Competing interests}
The author has no competing interests to declare that are relevant to the content of this article.

\subsection*{Data and code availability}
No data were generated or analysed in this study.

The code used to certify the numerical constants (Appendix~\ref{app:constants}) is available
as the ancillary file \texttt{anc/constants\_interval.py} of the arXiv submission and at
\url{https://github.com/tabayashi0117/pauli-fierz-uv-limit-code}.